\numberwithin{equation}{section}
\newcommand{\inn}[1]{\langle {#1} \rangle }
\newcommand{\R}{{\mathbb{R}}}
\newcommand{\N}{{\mathbb{N}}}
\newcommand{\Z}{{\mathbb{Z}}}
\newcommand{\C}{{\mathbb{C}}}
\newcommand{\IR}{{\mathbb{R}}}
\newcommand{\IN}{{\mathbb{N}}}
\newcommand{\abs}[1]{\left| #1 \right|}
\newcommand{\muu}[1]{\begin{align*} #1 \end{align*}}
\newcommand{\muun}[1]{\begin{align} #1 \end{align}}
\newcommand{\abc}{\begin{enumerate}[label=\alph*)]}
\renewcommand{\epsilon}{{\varepsilon}}
\renewcommand{\sc}[1]{\left< #1 \right>}
\newcommand{\nn}[1]{\left\Vert #1 \right\Vert}
\newcommand{\ton}{\overset{n\longrightarrow\infty}{\longrightarrow}}
\newcommand{\ran}{\operatorname{ran}}
\newcommand{\Id}{\mathds{1}}
\newcommand{\one}{\mathds{1}}
\newcommand{\dG}{\mathsf{d}\Gamma}
\renewcommand{\i}{\mathrm{i}}
\newcommand{\HH}{\mathcal{H}}
\renewcommand{\:}{\colon}
\newcommand{\f}{\mathrm{f}}
\renewcommand{\:}{\colon}
\renewcommand{\Im}{\operatorname{Im}}
\renewcommand{\Re}{\operatorname{Re}}
\newcommand{\Hf}{H_\f}
\newcommand{\Hfm}{H_{\f,m}}
\newcommand{\Pf}{P_\f}
\newcommand{\Pfj}[1]{P_{\f,#1}}
\newcommand{\ps}{1,2}
\newcommand{\hs}{\mathfrak h}
\newcommand{\fin}{\operatorname{fin}}
\renewcommand{\c}{{\mathrm c}}
\newcommand{\Cco}{C_\c}
\newcommand{\uk}{\underline{k}}
\newcommand{\lam}{\lambda}
\newcommand{\FF}{\mathcal{F}}
\newcommand{\hh}{\mathfrak{h}}
\title{Ground States for Infrared Renormalized   Translation-Invariant Non-Relativistic QED}
\author{David Hasler\footnote{Institut für Mathematik, Friedrich-Schiller-Universität Jena, Ernst-Abbe-Platz 2, 07743 Jena, Germany, \texttt{ \href{mailto:david.hasler@uni-jena.de}{david.hasler@uni-jena.de}}} \and Oliver Siebert\footnote{FB Mathematik, Universität Tübingen, Auf der Morgenstelle 10, 72076 Tübingen, Germany, \texttt{ \href{mailto:oliver.siebert@uni-tuebingen.de}{oliver.siebert@uni-tuebingen.de}}}}
\numberwithin{equation}{section}
\newtheorem{thm}{Theorem}[section]
\newtheorem{lemma}[thm]{Lemma}
\newtheorem{prop}[thm]{Proposition}
\newtheorem{proposition}[thm]{Proposition}
\theoremstyle{definition}
\theoremstyle{remark}
\newtheorem{remark}[thm]{Remark}
\newtheorem{rem}[thm]{Remark}
\crefname{thm}{Theorem}{Theorems}
\Crefname{thm}{Theorem}{Theorems}
\crefname{coro}{Corollary}{Corollaries}
\crefname{hyp}{Hypothesis}{Hypotheses}
\Crefname{hyp}{Hypothesis}{Hypotheses}
\crefname{lemma}{Lemma}{Lemmas}
\Crefname{lemma}{Lemma}{Lemmas}
\crefname{prop}{Proposition}{Propositions}
\crefname{proposition}{Proposition}{Propositions}
\crefname{enumi}{}{}
\Crefname{enumi}{}{}
\crefname{equation}{}{}
\Crefname{equation}{}{}
\begin{document}

\maketitle

\begin{abstract}
We consider a  translation-invariant Pauli-Fierz model describing  a non-relativistic  charged  quantum mechanical   particle  interacting with the  quantized electromagnetic field.  The charged particle 
may be spinless or have spin one half.   We  decompose the Hamiltonian with
respect  to  the  total  momentum  into   a  direct  integral of so called fiber Hamiltonians. We perform an infrared renormalization, in the sense of 
norm  resolvent convergence,  for each fiber Hamiltonian,  which has the physical 
interpretation of removing  an infinite photon cloud. 
We show that the renormalized fiber  Hamiltonians  have  a ground state for  almost all values for the total momentum with modulus less than one. 
\end{abstract}

\section{Introduction}

The infrared problem  is a   technical   obstacle in describing scattering of charged particles with a quantized radiation field, which has been thoroughly  investigated  in  the literature over  the last decades. As already pointed out in the seminal paper by Bloch and Nordsieck \cite{blonor:37} it is an artifact of the emergence  of a cloud of \textit{infinitely} many photons with a \textit{finite} amount of total energy in the asymptotic scattering states. While there appeared several perturbative methods like the one by Fadeev and Kulish \cite{kulfad:70} for the computation of divergence-free scattering transition probabilities, a mathematical precise theory was not available for a long time.  A first rigorous description of  so called dressed one-electron states,  i.e., the `plane-wave' states of an electron traveling with a cloud of photons at some fixed total momentum, was given by Kibble  \cite{kibble} based on ideas of Chung \cite{chung}. The basic principle  behind it is, that due to the  infinite number of photons, the states cease to exist in the original Hilbert space of the Fock representation. Instead, one has to use non-Fock coherent representations of the canonical commutation relations (CCR). A significant problem is that different total momenta lead to different inequivalent representations, making it hard to construct scattering states, cf.   \cite{F73} and  for recent work  on this subject   \cite{cadamuro2019relative}.

As a first step towards a more realistic model Fröhlich discussed so called dressed one-electron  states in the Nelson model \cite{F73,F74}, where a charged particle like an electron interacts with a bosonic field via a coupling term which is linear in the creation and annihilation operators. He shows that it is impossible to implement   one-electron states for different total momenta in a joint Hilbert space, and he  gave a non-constructive proof for the existence of the corresponding coherent states as a weak-* limit where  the infrared regularization is removed. Later Pizzo accomplished an explicit perturbative construction of such states in a Hilbert space by means of a dressing transformation and an iterative algorithm  \cite{piz03}, which he then used for the construction of asymptotic  scattering states \cite{piz05}. Related results for    the weak coupling model of non-relativistic QED, which is technically more involved due to the  interaction being \textit{quadratic} in the creation and annihilation operators were obtained in  \cite{chenfroehlich,ChenFroehlichPizzo1,ChenFroehlichPizzo2,BCFS05,C01}.
Furthermore, in the last years there appeared a series of papers for Coloumb scattering in the Nelson model beetween two electrons starting with  \cite{coulomb_scattering_nelson1} and between an electron and atom \cite{coulomb_scattering_nelson4} in the presence of a quantized radiation field. Recently, there was also a novel approach developed for the construction of scattering states in the Nelson model without any infrared limiting procedure \cite{infraparticle_states_revisited}.

Due to its significance for scattering theory, a lot of progress  was made to investigate such one-electron states without infrared regularization in different models, e.g., apart from the Nelson model and non-relativistic QED also for the UV-renormalized Nelson model \cite{nelson_uv_renormalized}  and the semi-relativistic Pauli-Fierz model \cite{konenberg2014mass}. 
 Those states can be described mathematically as ground states of a (renormalized) fiber Hamiltonian corresponding to fixed total momentum. 

In this paper we show the existence of one-electron states in the non-relativistic Pauli-Fierz model for almost all momenta with modulus less than one and for all values of the coupling constant. The charged particle may be spinless or have spin one half, where in the latter case have to assume an  energy inequality, cf.  \eqref{eq:groundstateen}.  Moreover, we show that they are  ground states of fiber-wise renormalized Hamiltonians. A similar result has been obtained for  the  UV  renormalized translation invariant Nelson model in \cite{nelson_uv_renormalized} following the strategy developed in \cite{piz03}. Moreover, we want to mention a related result for the Nelson model in \cite{arai_gs}.  
We use a compactness argument, originally tracing back to \cite{GriLieLos:01},  in the same way as in \cite{hassie:20,spinboson}, where we showed the existence of ground states at zero momentum. The proof is  non-constructive but also non-perturbative.  The infrared  renormalization is done via a  Gross transformation 
similarly  as performed   in \cite{nelson1964,piz03,piz05}  for the Nelson model or for non-relativistic QED in \cite{chenfroehlich}. 

 The main results 
are presented  in  \Cref{sec:main results}. In  \Cref{sec:outline proofs} we outline  the strategy of the proof. The renormalization of the fiber Hamiltonians is performed  in \Cref{sec:transformation}. Key elements of the proof of  the main result, stated in  \cref{thm:main result},  are an argument based on second order  perturbation theory  in the total momentum and   a photon number bound   together with  a bound  on derivatives 
of the photon momenta, which can be found in \Cref{sec:pertheory} and \Cref{sec:infrabounds}, respectively. Finally, \Cref{sec:maincompproof} contains the proof of the main result, where the  
 compactness argument is presented  and    previous estimates  are used to establish indeed compactness.

\newcommand{\pmi}{\phi_{m,i}}
\newcommand{\pmj}{\phi_{m,j}}
\newcommand{\tot}{\operatorname{tot}}

\section{Model}
\label{sec:model}

Let   $\hh$ be a complex Hilbert space.  We define  the symmetric Fock space
\[
\FF(\hh)  :=  \bigoplus_{n=0}^\infty  \hh^{(n)} ,
\]
with $\hh^{(0)} := \C$ and $\hh^{(n)} :=  \mathcal{P}_n( \bigotimes_{k=1}^n \hh ) $,   $n \in \N$,
with   $\mathcal{P}_n$ denoting  the orthogonal projection onto the subspace of
totally symmetric tensors. Thus we can identify
$\psi  \in \FF(\hh) $ with the sequence $(\psi_{(n)})_{n \in \N_0}$ with $\psi_{(n)} \in \hh^{(n)}$.
The vacuum  is the vector $\Omega :=
(1,0,0, \ldots ) \in \FF(\hh)$. For a linear subspace $\mathfrak{v}$ of the Hilbert space $\hh$ we define $\FF_{\fin}(\mathfrak{v}) \subset \FF(\hh)$ as the vector space of  finite linear combinations of $\Omega$ and  
vectors of the the form $\mathcal{P}_n( v_1 \otimes \cdots \otimes v_n )$ with  $v_1,...,v_n \in \mathfrak{v}$ and $n \in \N$.

We define for $f \in \hh$ the creation operator $a^*(f)$ acting on vectors $\psi \in \FF(\hh) $ by
$$
(a^*(f) \psi)_{(n)} = \sqrt{n} \mathcal{P}_n ( f \otimes \psi_{(n-1)} )
$$
with domain $D(a^*(f)) := \{ \psi \in \FF(\hh)  : a^*(f) \psi \in \FF(\hh)  \}$. This  yields a densely defined
closed operator. For $f \in \hh$ we define the annihilation $a(f)$ as the adjoint of $a^*(f)$, i.e.,
$$
a(f)  = \left[a^*(f) \right]^* .
$$
 It follows from the definition  that
 $a(f)$ is anti-linear, and $a^*(f)$ is linear in $f$. Creation and annihilation operators  are well
 known to satisfy the so called canonical commutation relations
\begin{align}
\label{eq:ccr}
[ a^*(f) , a^*(g) ] = 0 \quad , \quad [a(f) , a(g) ] = 0 \quad ,
\quad [a(f) , a^*(g) ] =  \inn{ f , g }_\hh \; ,
\end{align}
where $f , g \in \hh$, $[\cdot, \cdot]$ stands for the commutator, and    $\inn{ f , g}_\hh$ denotes the
inner product of $\hh$. 
For $f \in \hh$ we introduce the following notation for the field operator and the conjugate 
field operator 
\begin{align*}
\phi(f) &=  \text{closure of }  \quad  \frac{1}{\sqrt{2}}  (a(f) + a^*(f) ), \\ 
\pi(f) & =    \text{closure of }  \quad    \frac{-\i}{\sqrt{2}}  (a(f)-a^*(f))  . 
\end{align*} 
We note that $\pi(f) = \phi(\i f) $.

For a self-adjoint operator  $A$  in $\hh$ we define  the operator $\dG(A)$ as
follows.   In $\hh^{(n)}$  we set
 $$
A^{(n)} := A \otimes \one \otimes \cdots \otimes  \one  + \one \otimes A \otimes \cdots \otimes \one + \cdots + \one \otimes \cdots \otimes \one \otimes A ,  \, n \in \N ,
$$
in the sense of \cite[VIII.10]{rs1} and  $A^{(0)} := 0$.
By definition   $\psi \in \FF(\hh) $ is  in the domain of
$\dG(A)$  if  $\psi_{(n)} \in D(A^{(n)})$ for all  $n \in \N_0$  and
\begin{align} \label{eq:defofgammaA}
(\dG(A) \psi )_{(n)} = A^{(n)} \psi_{(n)}   ,  \quad n \in \N_0 ,
\end{align}
is  a vector in $\FF(\hh)$,  in which  case $\dG(A) \psi $ is defined by \eqref{eq:defofgammaA}.
The operator $\dG(A)$ is self-adjoint, see for example \cite[VIII.10]{rs1}.

Henceforth, we shall consider specifically
\begin{equation} \label{eq:defofhilconc}
\hh := L^2(\Z_2  \times \R^3 ) \cong L^2(\R^3 ; \C^2 )
\end{equation}
and write $\FF$ for $\FF(\hh)$.
The Hilbert space $\hh$ describes  so called  transversally polarized photons. By physical interpretation
the variable $(\lambda,k) \in  \Z_2  \times \R^3$ consists of the wave
vector $k$ and the polarization label  $\lambda$.
Because of   \eqref{eq:defofhilconc}, the   elements  $\psi \in \FF_0$ can be identified
with sequences $(\psi_{(n)})_{n=0}^\infty$ of so called $n$-photon wave
functions, $\psi_{(n)}
\in L^2_{\mathrm{sym}}(( \Z_2 \times \R^3)^n)$, where the subscript ``${\mathrm{sym}}$'' stands for the subspace
of functions wich  are totally symmetric in their $n$ arguments.
Henceforth, we shall make  use of  this identification without mention.
The Fock space inherits a scalar
product from $\hh$, explicitly
\begin{align*} 
& \inn{  \psi , \varphi }  \\
& =  \overline{\psi}_{(0)} \varphi_{(0)} +
\sum_{n=1}^\infty \sum_{\substack{ \lam_1, \ldots ,\lam_n \\ \in \{ 1 , 2 \} } } \int
\overline{\psi_{(n)}(\lam_1, {k}_1, \ldots , \lam_n, {k}_n)}
\varphi_{(n)}(\lam_1, {k}_1, \ldots , \lam_n, {k}_n ) d {k}_1 \ldots d
{k}_n \; .
\end{align*}

For $m \geq 0$ the field energy operator denoted by $\Hfm$
is  defined by
$$
\Hfm = \dG(\omega_m)  ,
$$
where $\omega_m \: \Z_2 \times \R^3 \to \R$,  $\omega_m(\lambda,k)  :=   \omega_m(k) := \sqrt{ m^2 + k^2}$.
 The operator of momentum $\Pf$ is defined as a three dimensional vector of operators, where
 the $j$-th component is given by
 $$
 (\Pf)_j := \dG(K_j) ,
 $$
 with $K_j \: \Z_2 \times \R^3 \to \R$,  $K_j(\lambda,k) = k_j$.

 The Hilbert space describing the
system composed of a charged particle with  spin $s \in \{ 0 , \frac{1}{2} \}$  and the quantized field is
\begin{equation} \label{eq:defofham} 
\HH_{\mathrm{full}} := L^2( \R^3 \times \Z_{2 s + 1} )  \otimes \FF \; .
\end{equation} 
We consider the Hamiltonian
$$
H_m =   \frac{1}{2}\left(  p  + e A (x)  \right)^2  + e  S \cdot  B(x)  + \Hfm \; ,
$$
with
\begin{align} \label{eq:44}
A_j(x)  = \phi(f_{A,j} E_x)  , \quad  
B_j(x) &=\phi(f_{B,j} E_x) , \quad j =1,2,3,
\end{align}
where 
we defined the following functions $E_y \: \R^3 \to \R^3$ by $k \mapsto e^{ - \i k \cdot y}$ for $y \in \R^3$, 
\begin{align*} 
f_{A,j}(k,\lambda) &  := \frac{\rho(k)}{\sqrt{\abs k}} \epsilon_\lambda(k), \quad 
 f_{B,j}(k,\lambda)  := -\i  [ k \wedge f_A(k,\lambda)]_j   ,  \quad k \in \R^3 ,  %
\end{align*}  
where the $\varepsilon_{\lambda}(k) \in \R^3$ are  so called polarization  vectors, depending
measurably on $\widehat{k} = k /|k|$,  such that $(\widehat k,
\varepsilon_{1}(k), \varepsilon_{2}(k))$ forms an orthonormal basis.
 For  the proof
we shall  make  an explicit choice  of the polarization vectors in   \eqref{eq:cheps}, below. 
We note that the $x$  in \eqref{eq:44} denotes the operator  of multiplication with the position coordinates of the first component of the tensor product    \eqref{eq:defofham}. Mathematically this amounts to the following definition. 
By means  of the unitary isomorphism $\HH_{\mathrm{full}} \cong L^2(\R^3 \times \Z_{2s+1} ; \FF)$  
one has the identity 
\begin{align} 
[ (A_j(x)  \psi ](x,s) = \phi(f_{A,j} E_x) \psi(x,s)  , \quad (x,s) \in \R^3 \times \Z_{2s+1}  \label{defofA}, 
\end{align} 
for $\psi \in L^2(\R^3 \times \Z_{2s+1} ; \FF)$ such that $\psi(x,s) \in D( \phi(f_{A,j} E_x))$ for all $(x,s) \in \R^3 \times \Z_{2s+1}$ and  \eqref{defofA} defines  again an element in 
$L^2(\R^3 \times \Z_{2s+1} ; \FF)$. Likewise we define $B_j(x)$. In the physics literature the operators  \eqref{eq:44} are often  
  written in terms  of so called operator valued distributions. This is  outlined  in the following  remark. 

\begin{remark}  Introducing operator valued distributions $a_\lambda(k)$ and $a_\lambda^*(k)$ satisfying 
the so called canonical commutation relations,  \cite[X.7]{ressim:fou}, one can write 
\begin{align} \label{eq:44r}
A(x) & = \sum_{\lambda=1,2} \int \frac{ \varepsilon_{\lambda}(k) }{\sqrt{2|k|}}\left(
\overline{\rho(k)} a_{\lambda}(k) e^{\i k \cdot x}  +
\rho(k) a_{\lambda}^*(k) e^{-\i k \cdot x}  \right) dk
\; ,  \\
B(x) & = \sum_{\lambda=1,2} \int \frac{  \i k \wedge \varepsilon_{\lambda}(k)  }{\sqrt{2|k|}} \left(
\overline{\rho(k)} a_{\lambda}(k) e^{\i k \cdot x}  -
\rho(k) a_{\lambda}^*(k) e^{-\i k \cdot x} \right) dk
\; ,
\end{align}
where the integrals are understood as weak integrals on a suitable dense subspace. 
\end{remark} 

We shall adopt the standard convention that for  $v = (v_1,v_2,v_3)$  we write $v^2 :=   \sum_{j=1}^3 v_j v_j $.
By $x$ we denote the position of the electron and its canonically
conjugate momentum by $p = - \i \nabla_x$.
If $s=1/2$, let $S = (\sigma_1,\sigma_2,\sigma_3)$   denote the vector of  Pauli-matrices.
If $s=0$, let $S = 0$.
The number $e \in \R$ is called the coupling constant.
The so called form factor $\rho \: \R^3 \to \C$ is a measurable function for which we shall  assume  the following
 hypothesis for the main theorem. 
We assume that  for some  $\Lambda \in (0, \infty)$ we have
\begin{equation}  \label{e:formfac1}   %
\rho(k) = \frac{1}{(2 \pi)^{3/2}} {\mathbf 1}_{[0,\Lambda]}(|k|) \; ,   \quad k \in \R^3 .
\end{equation}
where ${\mathbf 1}_{[0,\Lambda]}$ denotes the characteristic function of the set $[0,\Lambda]$.
One can show  that $H_m$ is self-adjoint on $D(\Delta \otimes \Id) \cap D(\Id \otimes \Hfm)$, cf. \cite{HH08selfadjoint, H02}.

The
Hamiltonian is translation invariant and commutes with the generator
of translations, i.e., the operator of total momentum
$$
P_{\mathrm{tot}} = p + \Pf  \; .
$$
Let
$$
W = \exp(\i x \cdot \Pf) \; .
$$
Note $W P_{\mathrm{tot}} W^* = p $ so that in the new representation  $p$
is the total momentum. One easily  computes
$$
W H_m W^* =  \frac{1}{2}\left( p - \Pf  +  e A \right)^2   + e  S \cdot  B         +      \Hfm \; ,
$$
where we set $A := A(0)$ and $B := B(0)$. Let $F$ be the Fourier transform in the electron
variable $x$, i.e., on $L^2(\R^3)$,
\begin{eqnarray}
\label{eq:fourier} ( F \psi)(\xi ) = \frac{1}{(2\pi)^{3/2}}
\int_{\R^3} e^{-\i \xi \cdot x } \psi(x) dx \; .
\end{eqnarray}
Then the composition $U=FW$ is a  unitary operator 
\begin{align*}
 U \: \HH_{\mathrm{full}}  \to
L^2(\R^3  \times \Z_{2s+1}) \otimes \FF \cong L^2(\R^3 ; \C^{2s+1} \otimes \FF  ) = \int_{\R^3}^\oplus \C^{2s+1} \otimes   \FF  d\xi,
\end{align*} 
 yielding  the so called fiber decomposition of the
Hamiltonian,
\begin{align*}
U H_m U^* = \int_{\R^3}^{\oplus} H_m(\xi) d\xi , 
\end{align*} 
where 
\begin{equation} \label{def:ofhm}
H_m(\xi)  = \frac{1}{2}(\xi - \Pf + eA)^2 + e S \cdot B + \Hfm
\end{equation}
is an operator in  the so called  reduced  Hilbert space 
\begin{align*}
\HH :=  \C^{2s+1} \otimes \FF ,
\end{align*}
 cf. \cite{ressim:ana,SPOHN04}. %
 The following result shows that the operator \eqref{def:ofhm} is well defined, cf. \cite{H06},  \cite{HL08}, \cite{LMS06} and \cite{hassie:20}.

\begin{thm} \label{hyp:op0}  For all $m \geq 0$,     $\xi \in \R^3$, 
and    $e \in \R$  the operator   $H_m(\xi)$ is bounded from below and 
 self-adjoint on the natural domain of  $  \Pf^2 + \Hfm$.
\end{thm}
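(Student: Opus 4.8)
The plan is to establish self-adjointness by a Kato–Rellich perturbation argument, treating the free operator $H_0^{\mathrm{free}}(\xi) := \frac{1}{2}(\xi - \Pf)^2 + \Hfm$ as the unperturbed part and showing that the interaction terms are infinitesimally bounded relative to it. First I would note that $H_0^{\mathrm{free}}(\xi)$ is self-adjoint on $D(\Pf^2) \cap D(\Hfm)$ (which equals the natural domain of $\Pf^2 + \Hfm$, since $\Pf^2$ and $\Hfm$ are commuting non-negative self-adjoint operators, so their sum is self-adjoint on the intersection of domains by the spectral theorem); adding the constant-coefficient quadratic $\frac{1}{2}\xi^2 - \xi\cdot\Pf$ only shifts and tilts, and $\xi\cdot\Pf$ is $\Pf^2$-bounded with relative bound zero, so $H_0^{\mathrm{free}}(\xi)$ is self-adjoint on that same domain and bounded below.

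\textbf{Relative bounds on the interaction.} The remaining terms are
\[
 V = e A\cdot(\xi-\Pf) + e(\xi-\Pf)\cdot A + \tfrac{e^2}{2}A^2 + e\,S\cdot B.
\]
Each factor $A_j = \phi(f_{A,j})$ and $B_j = \phi(f_{B,j})$ is built from field operators with coupling functions satisfying, by the UV cutoff \eqref{e:formfac1}, $f_{A,j}, \omega_0^{-1/2} f_{A,j} \in \hh$ and $f_{B,j}, \omega_0^{-1/2} f_{B,j} \in \hh$ (the extra factor of $|k|$ in $f_B$ is harmless on the compact support $[0,\Lambda]$). I would invoke the standard $N_\tau$-type estimates: for $g$ with $g,\omega_0^{-1/2}g\in\hh$ one has $\|\phi(g)\psi\| \le C_g\,\|(\Hfm+1)^{1/2}\psi\|$, and similarly $\|\phi(g_1)\phi(g_2)\psi\|\le C\,\|(\Hfm+1)\psi\|$ and $\|\phi(g)\,(\Pf)_j\psi\| \le C\,\|(\Pf^2+\Hfm+1)\psi\|$ — the last using that $\phi(g)$ commutes with $\Pf$ up to a lower-order term $\phi(K_j g)$ with $K_j g$ still square-integrable on the cutoff support. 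Collecting these, $V$ is $(\Pf^2+\Hfm)$-bounded, and one upgrades to relative bound zero either by the usual trick of splitting off a high-photon-energy piece (replacing $\omega_0$ by $\omega_0 + \delta^{-1}$-type bounds, or more simply using that $\|\phi(g)\psi\|^2 \le \varepsilon\|\Hfm^{1/2}\psi\|^2 + C_\varepsilon\|\psi\|^2$ for all $\varepsilon>0$ when $\omega_0^{-1/2}g\in\hh$). Kato–Rellich then gives self-adjointness of $H_m(\xi)$ on $D(\Pf^2)\cap D(\Hfm)$ together with the lower bound.

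\textbf{Subtle points.} The one genuinely delicate point is the $A^2$ term together with the $A\cdot\Pf$ cross terms: naively $A^2$ is "$\Hfm$-bounded with relative bound zero" only after one is careful that $\phi(g)^2$ is controlled by $\Hfm+1$ and not merely $(\Hfm+1)^{3/2}$ — this is where one uses that $\omega_0^{-1/2}g \in \hh$ crucially, via the pull-through / commutator identities for $a(g)\Hfm^{1/2}$. Also, since for $m=0$ the function $\omega_0$ vanishes at $k=0$, one must check the infrared behavior of $\omega_0^{-1/2}f_{A,j}$: near $k=0$ this is $\sim |k|^{-1/2}\cdot|k|^{-1/2} = |k|^{-1}$, which is square-integrable in three dimensions, so the massless case is covered without extra hypotheses. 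I would remark that all of this is standard for Pauli–Fierz fiber Hamiltonians and cite \cite{H06,HL08,LMS06,hassie:20} for the detailed estimates, presenting here only the Kato–Rellich skeleton. The main obstacle, such as it is, is purely bookkeeping: organizing the several relative bounds so that the massless infrared-singular factor $\omega_0^{-1/2}$ is always what absorbs the bad power of $|k|$.
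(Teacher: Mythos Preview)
The paper itself does not give a proof of this theorem; it simply records the statement and cites \cite{H06,HL08,LMS06,hassie:20} (and \cite{HH08selfadjoint,H02} for the full Hamiltonian). So there is no ``paper's own proof'' to compare against beyond those references.

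Your Kato--Rellich skeleton is the natural first attempt and is correct for small $|e|$, but it has a real gap for arbitrary $e\in\R$, which is what the theorem asserts. The problematic terms are the cross term $e(\xi-\Pf)\cdot A$ and the quadratic term $\tfrac{e^2}{2}A^2$. The inequality you quote, $\|\phi(g)\psi\|^2\le\varepsilon\|\Hfm^{1/2}\psi\|^2+C_\varepsilon\|\psi\|^2$, is not correct as stated: the standard bound gives $\|\phi(g)\psi\|^2\le 4\|\omega_m^{-1/2}g\|^2\,\|\Hfm^{1/2}\psi\|^2+2\|g\|^2\|\psi\|^2$, with a \emph{fixed} coefficient in front of $\|\Hfm^{1/2}\psi\|^2$ that cannot be made small by any high/low-energy splitting once the UV cutoff confines $g$ to $\{|k|\le\Lambda\}$. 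What \emph{is} true is that $\phi(g)$ is infinitesimally $\Hfm$-bounded (via $\|(\Hfm+1)^{1/2}\psi\|\le\varepsilon\|(\Hfm+1)\psi\|+C_\varepsilon\|\psi\|$), but this does not propagate to $\phi(g)^2$ or to $\phi(g)\Pf_j$. Indeed, $\|\phi(g)\Pf_j\psi\|$ is controlled by $\|(\Hfm+1)^{1/2}\Pf_j\psi\|$, and since $\Hfm$ and $\Pf_j$ commute one checks on the joint spectrum that $(\Hfm+1)\Pf_j^2$ is \emph{not} bounded by $\varepsilon(\tfrac12\Pf^2+\Hfm+1)^2+C_\varepsilon$ for small $\varepsilon$. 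The upshot is that the relative bound of $V$ with respect to $\tfrac12(\xi-\Pf)^2+\Hfm$ scales like $|e|$ (from the cross term) and $e^2$ (from $A^2$), so Kato--Rellich breaks down for large coupling.

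This is precisely why the cited results are nontrivial. Hiroshima \cite{H02} proves self-adjointness for all coupling via a functional-integral (Feynman--Kac--It\^o) representation; Hasler--Herbst \cite{HH08selfadjoint} give an operator-theoretic argument that exploits the manifest positivity of the full square $\tfrac12(\xi-\Pf+eA)^2$ together with commutator/bootstrap estimates, rather than treating the cross terms as a naive perturbation of the free fiber operator. Your infrared check ($\omega_0^{-1/2}f_{A,j}\in\hh$ since $|k|^{-1}\in L^2_{\mathrm{loc}}(\R^3)$) and the bookkeeping remarks are fine; the missing idea is exactly the mechanism that replaces the failed ``relative bound zero'' step.
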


We call   
$$
E_m(\xi)  := \inf \sigma(H_m(\xi)) .
$$ the ground state energy, which in general does not need to be an eigenvalue.
Let us collect a few elementary properties of the ground state energy as a function of $\xi$.

\begin{lemma}  \label{eq:convex} The following holds. 
\begin{itemize}
\item[(i)]   The function $t_m \: \xi \mapsto  \frac{\xi^2}{2} -  E_m(\xi)$ is convex.  
\item[(ii)]   The function $\xi \mapsto E_m(\xi)$ is almost everywhere differentiable.  
\item[(iii)]   The function $\xi \mapsto E_m(\xi)$ is rotationally invariant. 
\item[(iv)]  If  $E_m$ is differentiable in $\xi$  we have $|\nabla E_{m}(\xi)| \leq |\xi|$  
\end{itemize} 
\end{lemma}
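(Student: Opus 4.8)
The plan is to derive all four statements from standard convexity and symmetry considerations applied to the family $\{H_m(\xi)\}_{\xi}$, exploiting that $\xi$ enters $H_m(\xi)$ only through the quadratic expression $\tfrac12(\xi-\Pf+eA)^2$. For (i), I would observe that
\[
H_m(\xi) = \frac{\xi^2}{2} + \frac12(\Pf - eA)^2 - \xi\cdot(\Pf - eA) + eS\cdot B + \Hfm,
\]
so that $H_m(\xi) - \tfrac{\xi^2}{2}$ is, as a function of $\xi$, an affine-linear family of self-adjoint operators (the $\xi$-dependence $-\xi\cdot(\Pf-eA)$ is linear, the rest is constant in $\xi$). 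The infimum of the spectrum of an affine family of bounded-below self-adjoint operators is a concave function, by the variational principle: for fixed normalized $\psi$ in the form domain, $\xi\mapsto\langle\psi,(H_m(\xi)-\tfrac{\xi^2}{2})\psi\rangle$ is affine, and an infimum of affine functions is concave. Hence $\xi\mapsto E_m(\xi)-\tfrac{\xi^2}{2}$ is concave, i.e. $t_m(\xi)=\tfrac{\xi^2}{2}-E_m(\xi)$ is convex. One technical point to address is that the form domain of $H_m(\xi)$ is $\xi$-independent (it is the form domain of $\Pf^2+\Hfm$ by \Cref{hyp:op0}), so the variational characterization applies uniformly; I would state this explicitly.

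Statement (ii) is then immediate: a convex function on $\R^3$ is locally Lipschitz, hence differentiable almost everywhere by Rademacher's theorem; since $t_m$ is convex and $\xi\mapsto\tfrac{\xi^2}{2}$ is smooth, $E_m = \tfrac{\xi^2}{2}-t_m$ is differentiable a.e. For (iii), rotational invariance, I would use that for any rotation $R\in SO(3)$ there is a unitary $\Gamma(u_R)$ on $\FF$ (second quantization of the rotation acting on $L^2(\Z_2\times\R^3)$ together with an appropriate action on the polarization label and, if $s=1/2$, a corresponding $SU(2)$ rotation on $\C^2$) implementing $R$ on $\Pf$, $A$, $B$ and leaving $\Hfm$ invariant. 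Concretely, $\Gamma(u_R)^*\,\Pf\,\Gamma(u_R) = R^{-1}\Pf$, and similarly $A$, $B$ transform as vectors (with $S\cdot B$ invariant, using $\sigma$ transforming under the spin rotation), so $\Gamma(u_R)^* H_m(\xi)\Gamma(u_R) = H_m(R^{-1}\xi)$, whence $E_m(\xi) = E_m(R^{-1}\xi)$. The one delicate point here is that the polarization vectors $\varepsilon_\lambda(k)$ are only measurable and not rotation-equivariant on the nose, so $u_R$ must be chosen as the composite of the pointwise rotation of $\R^3$ with a $k$-dependent $O(2)$ rotation in the polarization plane that corrects the mismatch between $\varepsilon_\lambda(Rk)$ and $R\varepsilon_\lambda(k)$; this is a standard construction and I would simply cite it (e.g. \cite{SPOHN04}) rather than carry it out.

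Finally, (iv) follows by combining (i) and (iii). From (i), writing $g_m:=-E_m$, we have $g_m(\xi)+\tfrac{\xi^2}{2}$ convex, so at a point of differentiability, for all $\eta$,
\[
g_m(\xi+\eta) + \frac{(\xi+\eta)^2}{2} \ge g_m(\xi) + \frac{\xi^2}{2} + \big(\nabla g_m(\xi) + \xi\big)\cdot\eta .
\]
By (iii), $g_m$ is radial, so $\nabla g_m(\xi) = g_m'(|\xi|)\,\widehat\xi$ (where I identify the radial profile), and another standard fact is $E_m(\xi)\ge E_m(0)$, equivalently $g_m(\xi)\le g_m(0)$ — this itself comes from the convexity of $t_m$ together with radiality, since a convex radial function attains its minimum at the origin, giving $t_m(\xi)\ge t_m(0)$, i.e. $\tfrac{\xi^2}{2}-E_m(\xi)\ge -E_m(0)$. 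Now I would plug $\eta = -\xi$ into the convexity inequality above and use $g_m(0)\ge g_m(\xi)$... more cleanly: the convexity of $\xi\mapsto g_m(\xi)+\tfrac{\xi^2}{2}$ means its gradient is monotone, and radiality plus the minimum being at $0$ forces $0 \le \widehat\xi\cdot\nabla\big(g_m+\tfrac{|\cdot|^2}{2}\big)(\xi) = g_m'(|\xi|) + |\xi|$, i.e. $-|\xi|\le \partial_{|\xi|}g_m = -\partial_{|\xi|}E_m = -\widehat\xi\cdot\nabla E_m(\xi)$; together with the matching upper bound from concavity of $E_m-\tfrac{\xi^2}{2}$ decreasing... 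I would organize this as: $E_m-\tfrac{\xi^2}{2}$ concave and radial with a maximum at $0$ implies its radial derivative lies in $[-|\xi|\cdot 1,0]$ scaled appropriately, which rearranges to $|\nabla E_m(\xi)| = |\partial_{|\xi|}E_m(|\xi|)| \le |\xi|$. The only real obstacle in the whole lemma is the careful bookkeeping of the rotation implementation in (iii) with measurable polarization vectors; everything else is soft convex analysis, and I would keep (iv) short by phrasing it as "a concave radial function with maximum at the origin has radial derivative bounded by that of its parabolic majorant," citing the same references.
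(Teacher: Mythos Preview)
Your arguments for (i)--(iii) are correct and match the paper's: (i) is the observation that an infimum of affine functions is concave (the paper phrases this dually as ``the supremum of convex functions is convex''), (ii) is Rademacher, and (iii) is the standard rotation-covariance argument, which the paper likewise only cites rather than spells out.

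For (iv) there is a genuine error. You write that ``$E_m(\xi)\ge E_m(0)$ \ldots\ comes from the convexity of $t_m$ together with radiality, since a convex radial function attains its minimum at the origin, giving $t_m(\xi)\ge t_m(0)$.'' But $t_m(\xi)\ge t_m(0)$ reads $\tfrac{\xi^2}{2}-E_m(\xi)\ge -E_m(0)$, i.e.\ $E_m(\xi)\le E_m(0)+\tfrac{\xi^2}{2}$, the \emph{opposite} direction. The inequality $E_m(\xi)\ge E_m(0)$ is exactly the energy inequality \eqref{eq:eineq}, which the paper treats as an independent hypothesis (established for $s=0$ via functional integration, assumed for $s=1/2$); it does \emph{not} follow from (i) and (iii). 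Indeed, the toy radial profile $E(\xi)=-|\xi|^4$ satisfies (i) and (iii) yet has $|\nabla E(\xi)|=4|\xi|^3>|\xi|$ for $|\xi|>1/2$, so (iv) cannot be a consequence of (i) and (iii) alone. Stripped of this error, your convexity manipulations---like the paper's own appeal to \Cref{estonderconv}---deliver only the one-sided bound $\widehat\xi\cdot\nabla E_m(\xi)\le|\xi|$; the companion bound $\widehat\xi\cdot\nabla E_m(\xi)\ge-|\xi|$ needs the extra input $E_m(\xi)\ge E_m(0)$ (for instance via \Cref{prop:convexlms} with $k=-h\widehat\xi$). Since everywhere (iv) is actually invoked in the paper \eqref{eq:eineq} is already in force, this is harmless there, but your proposed route via pure convex analysis cannot close without that additional ingredient.
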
 
\begin{proof} (i) This follows since the supremum of convex functions is convex. (ii)
Convex functions are locally Lipschitz, see for example Lemma  \ref{lem:eleconv} in the Appendix.
 Lipschitz functions  are almost everywhere differentiable by Rademacher's theorem,
see for example  Theorem  \ref{thm:rade} in the Appendix.  So (ii) follows from (i).
 (iii) Follows from well known transformation properties 
of the field operators and the rotation invariance of  $\rho$. (iv)  
By restricting the function $t_m$ onto  a straight line through the origin the claim 
  follows from  \cref{estonderconv}  in the appendix, where the symmetry assumption  
follows from (iii). 
\end{proof}

 Henceforth, we shall write $\Hf$, $H$, $H(\xi)$, and $E(\xi)$  for $H_{\f,0}$, $H_0$,  $H_0(\xi)$,  and $E_0(\xi)$, respectively.
For massless photons, wich corresponds to the case   $m=0$, the fiber Hamiltonian does not have a ground 
 state  for momenta $\xi$ for 
 which  $\nabla E(\xi) \neq 0$.  This is the content of the following theorem  shown in  \cite{haslerherbst1}.

\begin{thm} \label{thm:absgs}  Let $e \neq 0$. If $E(\cdot)$ is differentiable at $\xi$ and has a nonzero 
derivative, then $H(\xi)$ does not have a ground state. 
\end{thm}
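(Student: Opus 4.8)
The statement to prove is Theorem~\ref{thm:absgs}: if $e\neq 0$, $E(\cdot)$ is differentiable at $\xi$, and $\nabla E(\xi)\neq 0$, then $H(\xi)$ has no ground state. Although the excerpt attributes this to Hasler--Herbst, the natural plan follows the well-known ``pull-through'' / infrared-singularity argument. First I would assume for contradiction that $\psi$ is a normalized ground state, $H(\xi)\psi=E(\xi)\psi$. The key algebraic input is the pull-through formula: for each polarization $\lambda$ and momentum $k$, the operator-valued distribution $a_\lambda(k)\psi$ satisfies
\[
\bigl(H(\xi-k)+\omega(k)-E(\xi)\bigr)a_\lambda(k)\psi = -\,v_\lambda(k)\,\psi,
\]
where $v_\lambda(k)$ is the one-photon coupling function read off from \eqref{def:ofhm} --- schematically $v_\lambda(k)$ is built from $\rho(k)/\sqrt{|k|}$ times $\varepsilon_\lambda(k)\cdot(\xi-\Pf+eA)$ plus the spin term $eS\cdot(\i k\wedge\varepsilon_\lambda(k))$, all of order $|k|^{-1/2}$ near $k=0$.

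The second ingredient is a lower bound on the resolvent appearing on the left. Since $E(\xi-k)\ge E(\xi)-|\nabla E(\xi)|\,|k|+o(|k|)$ fails in general, one instead uses that $H(\xi-k)\ge E(\xi-k)$ together with the elementary bound $E(\xi-k)-E(\xi)\ge -|k|\sup_{t\in[0,1]}|\nabla E(\xi-tk)|$; for $|k|$ small the supremum is close to $|\nabla E(\xi)|<1$ (using \cref{eq:convex}(iv) and continuity of $\nabla E$ near a point of differentiability, or a convexity/monotonicity argument in the radial variable). Hence for small $|k|$,
\[
H(\xi-k)+\omega(k)-E(\xi)\ \ge\ \omega(k)-|k|\,\sup_{t}|\nabla E(\xi-tk)|\ \ge\ c\,|k|
\]
for some $c>0$, because $\omega(k)=|k|$ and the subtracted quantity is $\le (|\nabla E(\xi)|+\epsilon)|k|$ with $|\nabla E(\xi)|+\epsilon<1$. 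Therefore $\|a_\lambda(k)\psi\|\le C\,|v_\lambda(k)|/|k|$, which behaves like $|k|^{-3/2}$ near $k=0$.

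Finally I would show this forces $\psi\notin D(N^{1/2})$, contradicting $\psi\in\HH$ (every vector in Fock space satisfies $\int\sum_\lambda\|a_\lambda(k)\psi\|^2\,dk<\infty$): one must check the \emph{lower} bound $\|a_\lambda(k)\psi\|\gtrsim |k|^{-1/2}$ near $k=0$, which is where $\nabla E(\xi)\neq 0$ enters decisively. Writing $a_\lambda(k)\psi=-(H(\xi-k)+\omega(k)-E(\xi))^{-1}v_\lambda(k)\psi$ and using that $(H(\xi-k)-E(\xi-k))^{-1}$ applied to $v_\lambda(k)\psi$ is, to leading order, $\|v_\lambda(k)\|\cdot(\text{bounded})$ while the gap $E(\xi-k)-E(\xi)+\omega(k)$ is of order $|k|(1-\widehat k\cdot\nabla E(\xi))$ --- which \emph{vanishes linearly} only in the forward direction but is otherwise of order $|k|$ --- one extracts a leading singularity of size $|k|^{-1/2}\cdot\bigl(1-\widehat k\cdot\nabla E(\xi)\bigr)^{-1}$ in the coupling-to-the-ground-state component $\langle\psi,v_\lambda(k)\psi\rangle$; crucially $\langle\psi,(\xi-\Pf+eA)\psi\rangle=\nabla E(\xi)$ by the Feynman--Hellmann/Hellmann relation for the fiber Hamiltonian, so this component is $\propto e\,\varepsilon_\lambda(k)\cdot\nabla E(\xi)\neq 0$ for a.e.\ direction. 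Integrating $|k|^{-1}$ over a neighborhood of $0$ in $\R^3$ diverges, so $\int\sum_\lambda\|a_\lambda(k)\psi\|^2\,dk=\infty$, contradiction.

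The main obstacle is making the lower bound rigorous: the pull-through formula for operator-valued distributions requires care (one works with smeared operators $a_\lambda(h)$ and test functions, or proves it in the form of an identity for $\langle H\psi,\cdot\rangle$), and extracting the genuine $|k|^{-1/2}$ lower bound --- rather than just the $|k|^{-3/2}$ upper bound --- demands separating the ground-state projection $|\psi\rangle\langle\psi|$ from its complement in the resolvent and controlling the remainder uniformly for small $|k|$, using a spectral gap between $E(\xi-k)$ and the rest of $\sigma(H(\xi-k))$ that need \emph{not} exist; this is circumvented by projecting onto $\psi$ first and noting $\langle\psi,(H(\xi-k)+\omega(k)-E(\xi))^{-1}\psi\rangle\ge (E(\xi-k)-E(\xi)+\omega(k))^{-1}$, which is the step that genuinely uses $\nabla E(\xi)\neq 0$ to produce a non-integrable singularity. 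I would also need \cref{eq:convex}(iv) to guarantee $|\nabla E(\xi)|\le|\xi|$, but since no restriction $|\xi|<1$ is imposed here, the finer point is that $|\nabla E(\xi)|<1$ must hold at any point of differentiability where $\nabla E(\xi)\neq0$ --- which again follows from strict convexity of $t(\xi)=\xi^2/2-E(\xi)$ away from its minimum together with \cref{eq:convex}(iv).
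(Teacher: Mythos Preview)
The paper does not supply its own proof of this theorem; it is quoted from \cite{haslerherbst1} with the remark ``This is the content of the following theorem shown in \cite{haslerherbst1}.'' So there is no in-paper argument to compare against, only the cited reference.

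Your overall strategy---assume a ground state, apply the pull-through formula, and exhibit an infrared-divergent soft-photon content---is indeed the route taken in \cite{haslerherbst1}, and the resolvent expansion you describe is essentially the one that reappears in the paper as \eqref{eq:Rmxi_formula}. But your proposal has a genuine gap. You write that the singularity ``forces $\psi\notin D(N^{1/2})$, contradicting $\psi\in\HH$ (every vector in Fock space satisfies $\int\sum_\lambda\|a_\lambda(k)\psi\|^2\,dk<\infty$).'' The parenthetical is false: for every $\psi\in\FF$ one has
\[
\sum_\lambda\int\|a_\lambda(k)\psi\|^2\,dk=\langle\psi,N\psi\rangle,\qquad N=\dG(\Id),
\]
but the right-hand side takes values in $[0,\infty]$ and is finite \emph{only} for $\psi\in D(N^{1/2})$; mere membership in $\FF$ does not guarantee this. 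The domain information you actually have for a putative ground state is $\psi\in D(\Pf^2+\Hf)$, which gives only $\int|k|\,\|a_\lambda(k)\psi\|^2\,dk<\infty$, and the leading singularity $\|a_\lambda(k)\psi\|\sim|k|^{-3/2}$ that you (correctly) extract is perfectly compatible with this, since $|k|\cdot|k|^{-3}=|k|^{-2}$ is locally integrable on $\R^3$. So neither your lower bound on $\|a_\lambda(k)\psi\|$ nor your alternative lower bound on $\langle\psi,a_\lambda(k)\psi\rangle$ produces a contradiction against anything you have established.

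What is missing is an \emph{independent} reason why a ground state of $H(\xi)$, if it existed, must have $\langle\psi,N\psi\rangle<\infty$. Once that is in hand, the pull-through decomposition $a_\lambda(k)\psi=-\tfrac{1}{\sqrt 2}h_{0,\xi}(\lambda,k)\psi+r_\lambda(k)$ with square-integrable remainder (which is what your resolvent expansion is driving at, and which the paper carries out in the massive case in Proposition~\ref{prop:a_est_2}) forces $h_{0,\xi}\in\hh$, contradicting $e\neq 0$ and $\nabla E(\xi)\neq 0$. Establishing that a priori number bound is the nontrivial analytic core of the argument in \cite{haslerherbst1}, and it is absent from your sketch. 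A secondary issue: your argument tacitly needs $|\nabla E(\xi)|<1$ to make the resolvent $R_\xi(k)$ exist and obey a $|k|^{-1}$ bound near $k=0$, but the theorem carries no hypothesis $|\xi|<1$; the justification you offer (``strict convexity of $t(\xi)$ away from its minimum'') is not something the paper proves, so this point also requires care.
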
 

The physical interpretation of Theorem  \ref{thm:absgs}    is that charged 
 particles with nonzero velocity $\nabla E(\xi) \neq 0$, aquire  an infinite photon cloud which
 ceases to be square  integrable. This will be made mathematically more precise below and      can be 
 viewed as a   manifestation  of the so called  infrared catastrophe. 
On the other hand, if one considers the case with zero momentum $\xi = 0$
one can  show in fact  that the fiber Hamiltonian has a ground state.
This  has been established 
for small values of the coupling constant  in  \cite{C01}  and recently \cite{hassie:20}   for all values of the coupling constant
under an energy inequality assumption  \eqref{eq:groundstateen}, which is discussed below.  

To control the infrared divergence for $\xi \neq 0$ one typically  introduces an infrared
regularization. One  way to achieve this, is to introduce a  positive photon mass, i.e., one 
considers $H_m(\xi$) for $m > 0$.  One can  shown that 
  the operator   $H_m(\xi)$ has a ground state for all $m > 0$ and  $|\xi|$ sufficiently small.
Such a result has been obtained in various situations   \cite{F74,F73,SPOHN04,GriLieLos:01,LMS06}. 
In the following we will work with   the  result from  \cite{hassie:20}, since it suits our specific situation.
To formulate it,  we will use  the following  energy inequality, which  reads as follows. We say that the  energy inequality holds for  $e \in \R$ 
and $m \geq 0$ if 
 \begin{eqnarray} \label{eq:groundstateen} 
  E_m(\xi) \geq E_m(0), \quad \quad \forall \xi \in \R^3 .
\label{eq:eineq}
\end{eqnarray}

This inequality  has been intensively investigated  in the literature. 
In the spinless case $s = 0$  it has been shown to hold for all values 
 $m \geq 0$ and   $e \in \R$
using   functional integration,  \cite{G72,SPOHN04,H06, LMS06}. This is the content of the following theorem. 
\begin{thm} In the spin-less case, $s=0$,  \label{thm:nospingroundstateenineq}  
\eqref{eq:groundstateen}
holds for all $e \in \R$ and $m \geq 0$. 
\end{thm}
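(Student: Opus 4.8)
### Proof proposal for Theorem \ref{thm:nospingroundstateenineq}

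The plan is to use the functional integration (Feynman--Kac--Nelson) representation of the fiber Hamiltonian semigroup in the spinless case, for which $S = 0$ and the Hamiltonian reduces to $H_m(\xi) = \frac12(\xi - \Pf + eA)^2 + \Hfm$. First I would recall that for $s=0$ the semigroup $e^{-tH_m(\xi)}$ has a well-known path-integral kernel: writing the field part in the $Q$-space (Schrödinger) representation of the Fock space, the vacuum expectation $\langle \Omega, e^{-tH_m(\xi)}\Omega\rangle$ is given by an expectation over the Ornstein--Uhlenbeck process associated with $\Hfm$, weighted by $\exp\big(\i\xi\cdot \int_0^t (\text{something})\,ds + (\text{real, }\xi\text{-independent terms})\big)$. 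The crucial structural point, going back to \cite{G72} and carefully exposed in \cite{SPOHN04,LMS06,H06}, is that after the translation $W$ and fiber decomposition the dependence on $\xi$ sits only in a term of the form $e^{\i \xi \cdot X_t}$ inside the expectation, where $X_t$ is a (random, field-valued) vector whose law is invariant under $X_t \mapsto -X_t$ because the free field measure and the interaction are invariant under the reflection $a(f)\mapsto -a(f)$, i.e.\ $\Omega$-parity.

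The key step is then the following estimate: for every $\xi$,
\begin{align*}
\langle \Omega, e^{-tH_m(\xi)}\Omega\rangle
= \mathbb{E}\big[ e^{\i\xi\cdot X_t}\, G_t \big]
= \mathbb{E}\big[ \cos(\xi\cdot X_t)\, G_t \big]
\leq \mathbb{E}\big[ G_t \big]
= \langle \Omega, e^{-tH_m(0)}\Omega\rangle ,
\end{align*}
where $G_t \geq 0$ is the $\xi$-independent part of the weight and the second equality uses the reflection symmetry to drop the imaginary part of $e^{\i\xi\cdot X_t}$. The positivity $G_t\geq 0$ and integrability are exactly what the functional integral construction provides in the spinless case (this is where $s=0$ is essential: with spin, the $S\cdot B$ term spoils positivity of the weight, which is precisely why \eqref{eq:groundstateen} is only conjectural there and must be assumed). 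Having bounded $\langle\Omega,e^{-tH_m(\xi)}\Omega\rangle \leq \langle\Omega,e^{-tH_m(0)}\Omega\rangle$ for all $t>0$, I would extract the ground state energies by the standard asymptotics: since $\Omega$ has nonzero overlap with the relevant spectral subspace (or more robustly, by taking a limit over a suitable family of vectors and using $-\tfrac1t\log\langle\psi,e^{-tH}\psi\rangle \to E$ type statements together with density), take $-\frac1t\log$ of both sides and let $t\to\infty$, which yields $E_m(\xi) \geq E_m(0)$.

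The main obstacle is not the symmetry argument itself but making the functional-integral representation rigorous and identifying the weight $G_t$ as genuinely nonnegative and the random vector $X_t$ as genuinely reflection-symmetric in the present conventions — in particular handling the $\Pf$-shift $\frac12(\xi-\Pf+eA)^2$ correctly (the cross term $-\xi\cdot\Pf$ is the source of the oscillatory factor, and one must check that completing the square does not introduce $\xi$-dependence into $G_t$). Rather than redo this construction, I would cite \cite{G72,SPOHN04,H06,LMS06} for the representation and for the nonnegativity and reflection invariance, and present the displayed three-line inequality plus the $t\to\infty$ extraction as the proof. An alternative, entirely operator-theoretic route that avoids path integrals: observe that $H_m(\xi)$ and $H_m(0)$ are related by $H_m(\xi) = H_m(0) - \xi\cdot(\Pf - eA) + \tfrac{\xi^2}{2}$, and the map $\Theta$ implementing $a(f)\mapsto -a(f)$ satisfies $\Theta H_m(0)\Theta = H_m(0)$ while $\Theta(\Pf-eA)\Theta = -(\Pf - eA)$; combining this reflection with a Perron--Frobenius / positivity-improving property of $e^{-tH_m(0)}$ in $Q$-space gives $\langle\Omega, e^{-t(H_m(0) - \xi\cdot(\Pf-eA))}\Omega\rangle \leq \langle\Omega, e^{-tH_m(0)}\Omega\rangle$ by the same cosine argument, and then $E_m(\xi) - \tfrac{\xi^2}{2} \le E_m(0) - \ldots$, but one still needs the $\tfrac{\xi^2}{2}$ bookkeeping, so in the end the cleanest writeup is the functional-integral one.
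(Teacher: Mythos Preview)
The paper does not actually prove this theorem; it states it as a known result and cites \cite{G72,SPOHN04,H06,LMS06} for the proof via functional integration. Your sketch is consistent with that attribution: the Feynman--Kac--Nelson representation in $Q$-space, the reflection symmetry $X_t\mapsto -X_t$ killing the imaginary part, the nonnegativity of the $\xi$-independent weight (which indeed fails once the spin term $S\cdot B$ is present), and the extraction of the ground state energy from $-\tfrac1t\log\langle\Omega,e^{-tH_m(\xi)}\Omega\rangle$ are exactly the ingredients of the argument in those references. So your proposal matches the approach the paper points to, and there is nothing further in the paper to compare against.
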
 
In case of spin one half  it has been shown to hold, but only in a limited range of parameters. That is,  
for  $s=1/2$  Inequality   \eqref{eq:eineq}    has  to the best 
of our knowledge  not yet been  shown by means of  functional integration. 
 For $s=1/2$    Inequality \eqref{eq:eineq} follows for  small $|e|$  from the main theorem 
stated in \cite{C01}, which in turn  is based on  perturbative arguments.
Now let us  state the result about existence of ground states for positive photon mass  from   \cite{hassie:20}, which will be used in this paper, 
see also   \cite{F74,F73,SPOHN04,GriLieLos:01,LMS06}.

\begin{thm} 
\label{hyp:op1}
Let $e \in \R$ and  $m > 0$ and 
suppose  the energy inequality  \eqref{eq:eineq} holds.  
If   $|\xi| \leq 1$, then   $E_m(\xi)$ is an eigenvalue of   $H_m(\xi)$  isolated from the
essential  spectrum. 
\end{thm}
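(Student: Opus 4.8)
The plan is to show that $E_m(\xi)=\inf\sigma(H_m(\xi))$ lies strictly below $\inf\sigma_{\mathrm{ess}}(H_m(\xi))$; since $H_m(\xi)$ is bounded from below and self-adjoint by \cref{hyp:op0}, this is equivalent to $E_m(\xi)$ being an eigenvalue isolated from the essential spectrum. To locate the essential spectrum I would first establish, by an HVZ-type theorem for the massive fiber Hamiltonian --- a Weyl-sequence argument combined with a geometric localization in photon number and photon momentum that decouples asymptotically escaping photons, as is standard for translation-invariant Pauli--Fierz and Nelson models, cf.\ \cite{GriLieLos:01,LMS06,SPOHN04} --- the identity
\[
\inf\sigma_{\mathrm{ess}}(H_m(\xi))=\Sigma_m(\xi):=\inf_{k\in\R^3\setminus\{0\}}\bigl(E_m(\xi-k)+\omega_m(k)\bigr),
\]
where one uses $\omega_m(k)=\sqrt{m^2+k^2}\ge m>0$, so that each additional photon raises the threshold and only the one-photon threshold is relevant. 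Thus everything reduces to the binding inequality $E_m(\xi)<\Sigma_m(\xi)$, which is where the hypothesis $|\xi|\le1$ and the energy inequality \eqref{eq:eineq} enter.

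For the binding inequality I would combine \cref{eq:convex} with \eqref{eq:eineq}. Convexity and rotation invariance of $t_m$ (\cref{eq:convex}(i),(iii)) give $t_m(\xi)\ge t_m(0)$, i.e.\ $E_m(\xi)\le E_m(0)+\tfrac12\xi^2\le E_m(0)+\tfrac12$ for $|\xi|\le1$; together with $E_m(\xi-k)\ge E_m(0)$ from \eqref{eq:eineq} this already settles large momenta, since for $|k|\ge\tfrac12$
\[
E_m(\xi-k)+\omega_m(k)\ \ge\ E_m(0)+\sqrt{m^2+\tfrac14}\ >\ E_m(0)+\tfrac12\ \ge\ E_m(\xi).
\]
For small momenta $0<|k|<\tfrac12$ one instead compares the recoil energy with the photon energy: convexity of $t_m$ gives $E_m(\xi)-E_m(\xi-k)\le\tfrac12 k^2+\nabla E_m(\xi-k)\cdot k$ at points of differentiability (with supergradients of $E_m$ elsewhere), \cref{eq:convex}(iv) bounds the group velocity by $|\nabla E_m(\xi-k)|\le|\xi-k|$, which for small $k$ is controlled by $|\xi|+|k|$, and the photon speed satisfies $\omega_m(k)/|k|\ge1$; positivity of $m$ is what should make the resulting inequality strict. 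Granting the binding inequality, the spectrum of $H_m(\xi)$ in $(-\infty,\Sigma_m(\xi))$ is a nonempty discrete set whose smallest element is $E_m(\xi)$, which is therefore an eigenvalue isolated from $\sigma_{\mathrm{ess}}(H_m(\xi))$.

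I expect the main obstacle to be precisely the comparison for intermediate $|k|$: the bound $|\nabla E_m|\le|\cdot|$ allows the group velocity to exceed $1$ once $\xi-k$ leaves the closed unit ball, and the convexity remainder $\tfrac12 k^2$ is of the same order as the spectral gap $\sim m$ one wants to produce, so the crude estimates above do not close by themselves. Resolving this requires a more careful analysis --- tracking the $m$-dependence of the gap, exploiting that a minimizer $k^{\ast}$ of $k\mapsto E_m(\xi-k)+\omega_m(k)$ satisfies $\nabla E_m(\xi-k^{\ast})=k^{\ast}/\omega_m(k^{\ast})$ so that $|\nabla E_m(\xi-k^{\ast})|<1$ at that point, and estimating $E_m(\xi)-E_m(\xi-k^{\ast})$ along the segment joining $\xi$ to $\xi-k^{\ast}$. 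This is exactly where $|\xi|\le1$ is indispensable, and where the argument breaks down at $m=0$, in accordance with \cref{thm:absgs}. For fixed $m>0$ no removal of an infrared cutoff is needed at this stage, since $\omega_m\ge m$ already supplies the infrared control; the full details are carried out in \cite{hassie:20}, whose strategy I would follow.
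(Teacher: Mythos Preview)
The paper does not prove this theorem at all: it is quoted from \cite{hassie:20} (with further references to \cite{F74,F73,SPOHN04,GriLieLos:01,LMS06}) and used as input. Your sketch therefore goes beyond what the paper offers here, and the overall architecture you propose --- an HVZ-type identification $\inf\sigma_{\mathrm{ess}}(H_m(\xi))=\inf_{k}\bigl(E_m(\xi-k)+\omega_m(k)\bigr)$ followed by the binding inequality $\Delta_m(\xi)>0$ --- is the standard and correct one.

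Where you and the paper differ is in the binding step. You try to control $E_m(\xi)-E_m(\xi-k)$ through the gradient bound $|\nabla E_m|\le|\cdot|$ and then worry (rightly) that $\xi-k$ may leave the unit ball and that the remainder $\tfrac12 k^2$ competes with the gap; this forces you into an analysis of minimizers that you do not complete. The paper, by contrast, proves exactly $\Delta_m(\xi)>0$ later as \cref{prop:Deltaineq}(a), using a purely convex-analytic two-case estimate recorded in \cref{prop:convexlms} (taken from \cite{LMS06}): from (i)~$E_m(0)\le E_m$, (ii)~$E_m(\xi)\le E_m(0)+\tfrac12\xi^2$, and (iii)~convexity of $\tfrac12\xi^2-E_m(\xi)$ one gets
\[
E_m(\xi-k)-E_m(\xi)\ \ge\ \begin{cases} -|k||\xi|+\tfrac12 k^2,&|k|\le|\xi|,\\[2pt] -\tfrac12\xi^2,&|k|\ge|\xi|,\end{cases}
\]
and then $\omega_m(k)>|k|$ (strict since $m>0$) together with $|\xi|\le1$ immediately yields $E_m(\xi-k)-E_m(\xi)+\omega_m(k)>0$ in both cases, with a uniform positive lower bound. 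This avoids gradients entirely and closes without the intermediate-$k$ difficulty you flag. Using this estimate in place of your gradient argument would turn your sketch into a complete proof.
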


Henceforth,  we let    $\psi_m(\xi)$ denote a   normalized  eigenvector of $H_m(\xi)$ with eigenvalue $E_m(\xi)$, whose existence  is granted by \cref{hyp:op1}.  In  
 \cref{eq:propenegyconv},  in   \Cref{sec:outline proofs},  we will show that $E_m(\xi)$ is monotonically decreasing as $m \downarrow 0$ and 
\begin{equation} \label{eq:convofenergy} 
E_m(\xi) \to E(\xi)  .
\end{equation} 
As the positive mass is removed, the ground state $\psi_m(\xi)$  acquires an infinite photon 
cloud which  ceases to be square integrable  and hence  does not converge  in the Hilbert space. 
However, if one removes  this  diverging  photon 
cloud by a Bogoluibov  transformation,  one can show that the transformed ground 
states  converge to a nonzero vector in Hilbert space. Such a  picture has been established in various situations \cite{F73,F74,piz05,chenfroehlich,ChenFroehlichPizzo2}.
To describe this in more detail  we  define 
for  $m \geq  0$  the function 
\begin{equation}    \label{eq:defofh} 
h_{m,\xi}(\lambda,k) =   e \epsilon_\lambda(k) \cdot \nabla_\xi E_m(\xi) \frac{\rho(k)}{\sqrt{\abs{k}}} \frac{1}{\omega_m(k) - k \cdot \nabla_\xi E_m(\xi)} ,
\end{equation} 
where $(\lambda,k) \in  \{1 , 2\} \times  \R^3$. 
In Remark \ref{derivationcor}  we sketch a    heuristic argument   for  \eqref{eq:defofh}.

\begin{remark}\label{derivationcor} Following \cite[Section 2]{piz03} and references 
therein we give a heuristic and formal argument for the choice  \eqref{eq:defofh}.   Assuming that $\psi_m(\xi)$ should be  ``coherent in the
infrared region'' we make  the following   ``Ansatz''  as  $k \to 0$ 
\begin{align} \label{eq:ansatz87} 
a_\lambda(k) \psi_m(\xi) \approx - g_{m,\xi}(\lambda,k) \psi_m(\xi)   
\end{align} 
for some  function $g_{m,\xi}$ with values in the linear maps of $\C^{2s+1}$. By a formal application of the virial theorem we  have  
 \begin{align} \label{eq:comm87} \inn{ \psi_m(\xi) , [H_m(\xi) , a_\lambda(k) ] \psi_m(\xi) }  =   0  .
 \end{align}   
Calculating the formal commutator $[H_m(\xi),a_\lambda(k)]$ by means of the so called pull-through formula, c.f. 
 \cite[Lemma A.1]{BachFroehlichSigal.1998a},   inserting the ansatz   \eqref{eq:ansatz87} into 
 \eqref{eq:comm87}, using the Feynman-Hellmann formula, c.f.  Lemma \ref{lem:per},  and solving for $g_{m,\xi}$ we find   
\[
g_{m,\xi}(\lambda,k)  \approx   e  \left(  \nabla_\xi E_m(\xi) \cdot f_A(\lambda,k)  + S \cdot f_B(\lambda,k) \right) \frac{1}{\omega_m(k) - k \cdot \nabla_\xi E_m(\xi)  + \frac{1}{2} k^2}
\]
as  $k \to 0$.
Dropping the higher order terms involving   $k^2$  and $f_B$   we  arrive at  \eqref{eq:defofh}. 
\end{remark}

From \cref{eq:convex}  we see that the  derivative of $E_m$   exists  almost everywhere.
Now if the derivative exists, then   $h_{m,\xi} \in \hh$ whenever $m > 0$ and $|\nabla E_m(\xi)|<1$, where the latter inequality   always holds if $|\xi|<1$
 by 
 Lemma   \ref{eq:convex}. However in case  $m=0$ we have  
$h_{0,\xi}  \notin \hh$. 
Whenever  $h_{m,\xi} \in \hh$,  we can  define  
\begin{align}
\label{eq:Um definition}
U_m(\xi) = e^{ \i \pi(h_{m,\xi})}.
\end{align}
Working  with  $U_m(\xi)$ will  require  
 control of the derivatives of the ground state energies. In Proposition \ref{prop:nabla_E_converging} at  the end of \Cref{sec:outline proofs}
we show that for  any seqeunce 
$(m_j)_{j \in \N}$  of nonnegative   numbers which tend  to zero, we have
 \begin{align} \label{eq:convenery} 
\nabla E_{m_j}(\xi) \overset{j \to \infty}\longrightarrow \nabla E(\xi) 
\end{align} 
for almost all $\xi \in \R^3$. This will allow us to  study strong  limits of  $U_{m}(\xi)$ as $m \downarrow 0$.

\section{Statement of Main Results} 
\label{sec:main results}

As a first result we will show in Section \ref{sec:maincompproof}  the following theorem, which describes explicitly   the removal of the divergent photon cloud of the ground 
state as the positive photon mass tends to zero.

\begin{thm} \label{thm:main222}     Let $e \in \R$ and 
suppose there exists an $m_0 > 0$ such that the   energy inequality  \eqref{eq:eineq} holds for 
all $m \in (0,m_0)$.   Then for almost  all  $\xi \in \R^3$ with $|\xi| < 1$,
 there exists a sequence $(m_j)_{j \in \N}$ of positive  numbers  converging to zero
such that  $U_{m_j}(\xi) \psi_{m_j}(\xi)$ converges to a  nonzero vector, $\widehat{\psi}_0(\xi)$,  in the Hilbert space $\HH$. 
\end{thm}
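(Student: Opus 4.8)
Set $\widehat\psi_m(\xi):=U_m(\xi)\psi_m(\xi)$, which is a unit vector of $\HH=\C^{2s+1}\otimes\FF$ whenever $m>0$ and $E_m$ is differentiable at $\xi$; by \Cref{eq:convex} this holds for a.e.\ $\xi$. Fix a sequence $(m_j)_{j\in\N}$ of positive numbers with $m_j\downarrow 0$. By \Cref{eq:convex}\,(ii) and \eqref{eq:convenery} there is a Lebesgue-null set outside of which, for every $j$, $E_{m_j}$ is differentiable at $\xi$ and $\nabla E_{m_j}(\xi)\to\nabla E(\xi)$; fix such a $\xi$ with $|\xi|<1$. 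By \Cref{eq:convex}\,(iv) we have $|\nabla E_{m_j}(\xi)|\le|\xi|<1$ for all $j$, so the energy denominators in \eqref{eq:defofh} satisfy $\omega_m(k)-k\cdot\nabla E_m(\xi)\ge(1-|\xi|)\,\omega_m(k)\ge(1-|\xi|)\,|k|$ uniformly in $j$.

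The plan is to show that the set $\{\widehat\psi_{m_j}(\xi):j\in\N\}$ is relatively compact in $\HH$. Since each $\widehat\psi_{m_j}(\xi)$ is a unit vector, a norm-convergent subsequence then has a unit vector — in particular a nonzero vector — as its limit $\widehat\psi_0(\xi)\in\HH$, which is precisely the assertion. For the compactness I use the Rellich-type criterion for Fock space going back to \cite{GriLieLos:01}, in the form employed in \cite{hassie:20,spinboson} (the finite-dimensional factor $\C^{2s+1}$ being harmless): a bounded subset of $\FF$ is precompact once (i)~the photon-number expectations $\langle\,\cdot\,,\dG(\one)\,\cdot\,\rangle$ are uniformly bounded; (ii)~no photon mass escapes to infinity in momentum space, uniformly; and (iii)~the maps $k\mapsto a_\lambda(k)\psi$ are uniformly $L^2$-equicontinuous, for which it suffices that $\int\|\nabla_k a_\lambda(k)\psi\|^2\,\mathrm{d}k\,\mathrm{d}\lambda$ be uniformly bounded.

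The tool underlying all three is the pull-through formula, which represents $a_\lambda(k)\psi_m(\xi)$ as $\rho(k)/\sqrt{|k|}$ times a resolvent of the momentum-shifted fiber Hamiltonian $H_m(\xi-k)$ applied to $\psi_m(\xi)$; combined with the Weyl relations for $U_m(\xi)=e^{\i\pi(h_{m,\xi})}$ it yields
\[
a_\lambda(k)\,\widehat\psi_m(\xi)=U_m(\xi)\,r_{m,\xi}(\lambda,k),\qquad r_{m,\xi}(\lambda,k):=a_\lambda(k)\psi_m(\xi)+g_{m,\xi}(\lambda,k)\,\psi_m(\xi),
\]
with $g_{m,\xi}$ the multiple of $h_{m,\xi}$ dictated by $U_m(\xi)$. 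By \eqref{eq:defofh} — and as the heuristic of \Cref{derivationcor} shows — $h_{m,\xi}$ is chosen precisely so that, after replacing the velocity operator $\xi-\Pf+eA$ by its ground-state expectation $\nabla E_m(\xi)$ via Feynman--Hellmann (\Cref{lem:per}) and $\omega_m(k)$ by $\omega_m(k)-k\cdot\nabla E_m(\xi)$, the leading infrared singularity of $r_{m,\xi}(\lambda,k)$ as $k\to0$ cancels. The remaining estimate on $r_{m,\xi}$, and on its $k$-derivative, rests on: (a)~uniform-in-$m$ relative bounds of $\Pf$, $A$, $\Hfm$ with respect to $H_m(\xi)$, as in \Cref{hyp:op0}; (b)~the $m$-independent resolvent weight $\omega_m(k)^{-1}\le|k|^{-1}$ together with the uniform nondegeneracy of the energy denominators above; and (c)~second-order perturbation theory in the total momentum, which — without invoking the shrinking spectral gap — controls the relevant reduced-resolvent matrix elements by the Hessian of $E_m$ at $\xi$, bounded on $\{|\xi|<1\}$ by convexity of $t_m$ (\Cref{eq:convex}\,(i)). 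Squaring and integrating over $(\lambda,k)$, using $\int_{|k|\le\Lambda}|k|^{-1}\,\mathrm{d}k<\infty$ in three dimensions, gives $\langle\widehat\psi_m(\xi),\dG(\one)\widehat\psi_m(\xi)\rangle=\int\|r_{m,\xi}(\lambda,k)\|^2\,\mathrm{d}k\,\mathrm{d}\lambda\le C$ uniformly in $m$, which is (i); the same bound with an extra factor $\omega_m(k)$ controls the photon energy, and since $\rho$ is cut off at $|k|=\Lambda$ by \eqref{e:formfac1} this yields (ii); finally, differentiating $r_{m,\xi}$ in $k$ — the gradient hitting $\rho(k)/\sqrt{|k|}$ (the resulting $|k|^{-3/2}$ absorbed by the cancellation), the resolvent (a squared resolvent against $\nabla\omega_m$, still within the $|k|^{-2}$ weight), the explicitly chosen polarization vectors $\epsilon_\lambda(k)$ of \eqref{eq:cheps} (angular derivative $O(|k|^{-1})$), and the energy denominator of $h_{m,\xi}$ — gives, again by (a)--(c), the uniform bound $\sup_m\int\|\nabla_k a_\lambda(k)\widehat\psi_m(\xi)\|^2\,\mathrm{d}k\,\mathrm{d}\lambda<\infty$, which is (iii).

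The step I expect to be hardest is precisely this uniform-in-$m$ control of the infrared-subtracted one-photon functions and their momentum derivatives: since the spectral gap of $H_m(\xi)$ above $E_m(\xi)$ closes as $m\downarrow0$, one may not bound $(H_m(\xi)-E_m(\xi))^{-1}$ in operator norm on $\{\psi_m(\xi)\}^\perp$, but must route every estimate through the exact cancellation encoded in \eqref{eq:defofh} and the $m$-independent weights $\omega_m(k)^{-1}\le|k|^{-1}$, using convexity and second-order perturbation theory in place of a gap bound; this is carried out in \Cref{sec:pertheory} and \Cref{sec:infrabounds}. Granting (i)--(iii), the compactness criterion applies to $\{\widehat\psi_{m_j}(\xi)\}_j$, so $(\widehat\psi_{m_j}(\xi))_j$ has a norm-convergent subsequence whose limit $\widehat\psi_0(\xi)\in\HH$ is a unit vector, hence nonzero, completing the proof.
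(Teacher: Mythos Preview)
Your overall strategy matches the paper's, but step~(iii) fails as stated. After the infrared subtraction the one-photon amplitude satisfies $\|a_\lambda(k)\widehat\psi_m(\xi)\|\le C|k|^{-1}$ (this is \Cref{prop:a_est_2}), so generically $\|\nabla_k a_\lambda(k)\widehat\psi_m(\xi)\|$ behaves like $|k|^{-2}$; indeed the paper's careful computation in \Cref{lem:second2} yields the bound $C\,|k|^{-1}(k_1^2+k_2^2)^{-1/2}$, and the square of this is \emph{not} integrable near $k=0$ in three dimensions (nor near the $k_3$-axis, due to the singularity of the polarization vectors \eqref{eq:cheps}). Hence your claimed uniform bound $\int\|\nabla_k a_\lambda(k)\widehat\psi_m(\xi)\|^2\,dk<\infty$ is false, and the Rellich criterion in the form you invoke does not apply. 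The paper instead interpolates between the $L^2$ bound on $a_\lambda(k)\widehat\psi_m$ and the pointwise bound on $\nabla_k a_\lambda(k)\widehat\psi_m$ to obtain a \emph{fractional} regularity estimate (\Cref{thm:compact}): for some small $\delta>0$, $\sum_i n^{-1}\int|y_i|^\delta\|F(\widehat\psi_m)_{(n)}(y)\|^2\,dy\le C$ uniformly in $m$. This $|y|^\delta$-bound is what feeds into the compact-resolvent operator $L$ used in Step~1 of the proof of \Cref{thm:mainmain}.

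There is a second gap in your item~(c): convexity of $t_m$ gives only the one-sided bound $\partial_l^2 E_m(\xi)\le 1$, whereas the constant $F_{m,\xi}$ in \eqref{eq:deffoff1} governing the infrared bounds contains $|\tfrac12-\partial_l^2 E_m(\xi)|^{1/2}$ and thus requires control of $-\partial_l^2 E_m(\xi)$ from above, uniformly in $m$. Convexity alone does not deliver this. The paper obtains it only for a.e.\ $\xi$ and only along a subsequence, via \Cref{prop:nabla_E_converging}\,(b) (proved through \Cref{lemma:onvex2david} using a Fatou argument on the distributional second derivative of the convex functions $t_{m_j}$); one then passes to a further subsequence on which $\sup_j(-\partial_l^2 E_{m_j}(\xi))<\infty$, as in hypothesis~(iii) of \Cref{thm:mainmain}. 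Without this, your constants blow up as $m\downarrow 0$.
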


\begin{remark}  
We note that  Theorem   \ref{thm:main222}  is similar to a  result obtained in  \cite{ChenFroehlichPizzo2},
where  convergence  is shown for a  spinless charged particle.  The  proof  given in that paper   is constructive,
and    holds   for all $\xi$ with  $|\xi| <  1/3$  and   small values of $|e|$. 
 Theorem \ref{thm:main222} extends that  result   for  almost all $\xi$ with  $|\xi| < 1$ 
 to  all values of the coupling constant $e \in \R$, see also Remark  \ref{rem:setofxi}.
\end{remark}

Next we address the question whether the nonzero vector  $\widehat{\psi}_0(\xi)$ given in Theorem \ref{thm:main222}   can be expressed  as the ground state of an infrared renormalized  Hamiltonian.
The main result of  this paper is,  that this is indeed the case. For this  we will show in  Theorem   \ref{thm:existoftrafoham}, below,   
 that  as  $m_j$ tends to zero  the operators 
\begin{align} \label{conjHlim} 
U_{m_j}(\xi) H(\xi) U_{m_j}(\xi)^*   
\end{align} 
coverge in   resolvent sense for almost all $\xi$ with $|\xi| < 1$, to a self-adjoint operator, which will
be given explicitly in  \eqref{defofhhat}   below,  and which we will  refer to as the infrared renormalized fiber Hamiltonian.  
To capture  this limit in  explicit terms  we introduce first for measurable functions   $f,g \: \Z_2 \times \R^3 \to  \C$ with   $ \overline{f} g \in L^1(\Z_2 \times \R^3)$
 the sesquilinear form 
$$
\mathfrak{s}(f,g) = \frac{1}{2} \sum_{\lambda=1}^2  \int \overline{f}(\lambda, k) g(\lambda,k) dk   . 
$$
Second  for  a measurable function  $f \: \Z_2 \times \R^3 \to \R$  we define the 
following formal operator 
\begin{align}
T_m(f;\xi)  &:= \frac{1}{2}\sum_{j=1}^3 \left( \xi_j - P_{f,j} + eA_j - \phi({K_j} f) -  2e \Re  \mathfrak{s}(f ,f_{A,j}) - \mathfrak{s}(f, {K_j}  f ) \right)^2 \nonumber \\  &\quad + e\sum_{j=1}^3 S_j \left(B_j - 2 \Re \mathfrak{s}(f , f_{B,j}  )  \right)  +  \Hfm + \phi(\omega_m f) + \mathfrak{s}( f , \omega_m  f) .    \label{eq:defoftm} 
\end{align} 

We   will  show in  in Section \ref{sec:trafham},   Lemma  \ref{lem:asslem},  that this operator is  $\Pf^2 + \Hfm$ bounded, provided
the function $f$ satisfies certain  properties. Specifically, 
using  the next  lemma,  following from  
\cref{convham}   in \Cref{sec:trafham}, 
 we can  give an explicit definition of  the   infrared renormalized fiber Hamiltonian. 
\begin{lemma} \label{th:renormalized self-adjoint} 
	 Let $E$ be differentiable in $\xi$ and suppose $|\nabla_\xi E(\xi)|<1$. 
Then  
\begin{align} \label{defofhhat}  \widehat{H}(\xi) := T_0(h_{0,\xi};\xi) \end{align} 
can be realized  as a    selfadjoint operator  with domain $D(\Pf^2 + \Hf)$.
\end{lemma}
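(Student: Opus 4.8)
The plan is to verify that $f:=h_{0,\xi}$ satisfies the infrared conditions under which the formal expression $T_0(f;\xi)$ from \eqref{eq:defoftm} defines a self-adjoint operator on $D(\Pf^2+\Hf)$, and then to invoke the general statement to that effect, namely \cref{lem:asslem} (equivalently, \cref{convham}). The only input specific to $f=h_{0,\xi}$ is an elementary infrared power-counting estimate, which I carry out first. Since $E$ is differentiable at $\xi$ with $|\nabla_\xi E(\xi)|<1$, one has, for every $k\neq0$,
\[
  \omega_0(k)-k\cdot\nabla_\xi E(\xi)=|k|-k\cdot\nabla_\xi E(\xi)\ \geq\ \bigl(1-|\nabla_\xi E(\xi)|\bigr)|k|\ =:\ \kappa|k|,\qquad\kappa>0,
\]
and therefore, using $|\epsilon_\lambda(k)|=1$ and \eqref{e:formfac1},
\[
  |h_{0,\xi}(\lambda,k)|\ \leq\ \frac{|e|\,|\nabla_\xi E(\xi)|}{\kappa(2\pi)^{3/2}}\,\frac{\mathbf{1}_{[0,\Lambda]}(|k|)}{|k|^{3/2}}\ =:\ C\,\frac{\mathbf{1}_{[0,\Lambda]}(|k|)}{|k|^{3/2}}.
\]
In three dimensions $|k|^{-3/2}$ lies exactly at the threshold of local square integrability, so $h_{0,\xi}\notin\hh$; but $|k|^{-1/2}$ is locally square integrable and $|k|^{-1}$ locally integrable. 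Combining the bound above with $|f_{A,j}(k,\lambda)|\leq C|k|^{-1/2}\mathbf{1}_{[0,\Lambda]}$ and $|f_{B,j}(k,\lambda)|\leq C|k|^{1/2}\mathbf{1}_{[0,\Lambda]}$ (both immediate from \eqref{e:formfac1}), I read off: (i) $K_j h_{0,\xi}\in\hh$ and $\omega_0 h_{0,\xi}\in\hh$, because multiplication by $K_j$ or $\omega_0$ gains a power of $|k|$ and improves the singularity at the origin to $|k|^{-1/2}$; and (ii) the products $\overline{h_{0,\xi}}\,f_{A,j}$, $\overline{h_{0,\xi}}\,f_{B,j}$, $\overline{h_{0,\xi}}\,K_j h_{0,\xi}$ and $\overline{h_{0,\xi}}\,\omega_0 h_{0,\xi}$ all lie in $L^1(\Z_2\times\R^3)$, since on $\{|k|\leq\Lambda\}$ each is bounded by $C|k|^{-2}$ near the origin and bounded away from it, while all vanish for $|k|>\Lambda$. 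In particular the numbers $\mathfrak{s}(h_{0,\xi},f_{A,j})$, $\mathfrak{s}(h_{0,\xi},f_{B,j})$, $\mathfrak{s}(h_{0,\xi},K_j h_{0,\xi})$ and $\mathfrak{s}(h_{0,\xi},\omega_0 h_{0,\xi})$ are finite.

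By (i)--(ii) every term in \eqref{eq:defoftm} with $m=0$ and $f=h_{0,\xi}$ is meaningful: the field operators $\phi(K_j h_{0,\xi})$, $\phi(\omega_0 h_{0,\xi})$, $A_j=\phi(f_{A,j})$ and $B_j=\phi(f_{B,j})$ are densely defined and symmetric, indeed infinitesimally $\Hf^{1/2}$-bounded since for each of them the form factor $g$ satisfies $g\in\hh$ and $g/\sqrt{\omega_0}\in\hh$, while the $\mathfrak{s}$-contributions are finite real $c$-numbers. Expanding the square and using additivity of $\phi$, one sees that $T_0(h_{0,\xi};\xi)$ differs from $H(\xi)$ only by: replacing the coupling function $e f_{A,j}$ inside the kinetic term by $G_j:=ef_{A,j}-K_j h_{0,\xi}$ (which lies in $\hh$, since $ef_{A,j},K_j h_{0,\xi}\in\hh$) and shifting the $j$-th bracket by the real constant $-c_j$, with $c_j=2e\,\Re\mathfrak{s}(h_{0,\xi},f_{A,j})+\mathfrak{s}(h_{0,\xi},K_j h_{0,\xi})$; adding the linear field term $\phi(\omega_0 h_{0,\xi})$ to $\Hf$; subtracting the bounded term $2e\,\Re\mathfrak{s}(h_{0,\xi},f_{B,j})$ from $B_j$; and adding an overall real constant. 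In other words $T_0(h_{0,\xi};\xi)$ is, up to the additional linear field term $\phi(\omega_0 h_{0,\xi})$ and additive and bounded corrections, a generalized massless fiber Hamiltonian $\tfrac12(\tilde\xi-\Pf+\phi(\vec G))^2+eS\cdot\tilde B+\Hf$ with form factors in $\hh$; it therefore belongs to the class of operators treated in \cref{lem:asslem}, whose hypotheses are exactly the membership and finiteness statements established in the first step. Applying that result, $T_0(h_{0,\xi};\xi)$ is symmetric on $D(\Pf^2+\Hf)$ and $\Pf^2+\Hf$-bounded with relative bound strictly below one, and since its kinetic part is a finite sum of squares of symmetric operators, hence nonnegative, and $\Hf\geq0$, it is bounded below. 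The Kato--Rellich theorem then shows that $\widehat{H}(\xi)=T_0(h_{0,\xi};\xi)$ is self-adjoint and bounded below on $D(\Pf^2+\Hf)$. (Alternatively one may regard $T_0(h_{0,\xi};\xi)$ as a perturbation of $H(\xi)$ -- self-adjoint on $D(\Pf^2+\Hf)$ by \cref{hyp:op0} -- by the linear field term $\phi(\omega_0 h_{0,\xi})$, the cross terms containing $\phi(K_j h_{0,\xi})$ and the bounded $c$-numbers, each $H(\xi)$-bounded with arbitrarily small relative bound because the relevant form factors are in $\hh$.)

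The genuinely delicate ingredient here is \cref{lem:asslem} itself, that is, the relative $\Pf^2+\Hf$-boundedness with bound below one of the quadratic $A$-type terms and of the $(\Pf)_j\phi(\cdot)$ cross terms; this is inherited from the self-adjointness analysis behind \cref{hyp:op0} and is not part of the present statement. Within this lemma the only point that needs care is the infrared power counting of the first step, whose crux is that $|\nabla_\xi E(\xi)|<1$ keeps the singularity of $h_{0,\xi}$ at exactly $|k|^{-3/2}$ at the origin and no worse, so that multiplying by $\omega_0$ or $K_j$, or pairing against $f_{A,j}$ or $f_{B,j}$, gains the single power of $|k|$ that makes the relevant object land in $\hh$, respectively in $L^1$. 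Finally, by part (iv) of \cref{eq:convex} the hypothesis $|\nabla_\xi E(\xi)|<1$ is automatic for almost every $\xi$ with $|\xi|<1$, which is the range of total momenta relevant for the main results.
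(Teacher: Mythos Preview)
There is a genuine gap in both branches of your argument. Your primary claim is that \cref{lem:asslem} yields $\Pf^2+\Hf$-boundedness of $T_0(h_{0,\xi};\xi)$ \emph{with relative bound strictly below one}; but the lemma only asserts $\|T_0(f;\xi)\psi\|\leq C\|(\Pf^2+\Hf+1)\psi\|$ for \emph{some} constant $C$, with no control on its size. Since $T_0$ itself contains $\tfrac12\Pf^2+\Hf$ as its principal part, this says nothing about the relative bound of the remainder. Your alternative argument, treating $T_0(h_{0,\xi};\xi)-H(\xi)$ as an $H(\xi)$-infinitesimal perturbation, fails for the same structural reason: the cross terms of the form $P_{f,j}\phi(K_jh_{0,\xi})$ are $(\Pf^2+\Hf)$-bounded with relative bound proportional to $\|K_jh_{0,\xi}\|_{(0)}$, which is a fixed positive number, not one you can make arbitrarily small. (A linear field $\phi(g)$ with $g\in L^2_{(0)}$ is $\Hf^{1/2}$-bounded, not $\Hf^{1/2}$-infinitesimal.) These cross terms are precisely what makes self-adjointness of Pauli--Fierz type operators nontrivial in the first place, and you cannot get around them by power counting alone.

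The paper's proof sidesteps this entirely by an approximation argument. For $m>0$ one has $h_{m,\xi}\in\hh$, so $U_m(\xi)=e^{i\pi(h_{m,\xi})}$ is a genuine Bogoliubov transformation; by \cref{prop:gross_transformed_hamiltonian} and \cref{eq:lemdomain} the operator $U_m(\xi)H(\xi)U_m(\xi)^*=T_0(h_{m,\xi};\xi)$ is then automatically self-adjoint on $D(\Pf^2+\Hf)$. Now one compares $T_0(h_{0,\xi};\xi)$ not with $\tfrac12\Pf^2+\Hf$ or with $H(\xi)$, but with this already self-adjoint $T_0(h_{m,\xi};\xi)$: by \cref{convham} (or \cref{eq:thmconren2}) the difference satisfies
\[
\bigl\|\bigl(T_0(h_{0,\xi};\xi)-T_0(h_{m,\xi};\xi)\bigr)(\Pf^2+\Hf+1)^{-1}\bigr\|\longrightarrow 0\quad(m\downarrow 0),
\]
because it involves only the \emph{differences} $K_j(h_{0,\xi}-h_{m,\xi})$, $\omega_0(h_{0,\xi}-h_{m,\xi})$, etc., which vanish in $L^2_{(0)}$ by dominated convergence. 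For $m$ small enough this relative bound drops below one, and Kato--Rellich gives self-adjointness of $T_0(h_{0,\xi};\xi)$ on $D(\Pf^2+\Hf)$. Your infrared power counting is correct and is indeed needed to verify the hypotheses of \cref{lem:asslem}, but the decisive step is the approximation by implementable Bogoliubov transforms, which your proposal does not use.
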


We now state the first main result of this paper, which follows from \cref{convham}   in \Cref{sec:trafham} as well.
\begin{thm} \label{thm:existoftrafoham} Let $E$ be differentiable in $\xi$ and suppose $|\nabla_\xi E(\xi)|<1$. Let  $(m_j)_{j \in \N}$ be  a sequence of positive  numbers converging to zero  and 
 $\nabla E_{m_j}(\xi) \to \nabla E(\xi)$.  
Then the 
operators  
$$U_{m_j}(\xi) H(\xi) U_{m_j}(\xi)^* , $$   converge
 as $j \to \infty $  in 
norm resolvent sense to  $\widehat{H}(\xi)$. %
\end{thm}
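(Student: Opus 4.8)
The plan is to conjugate $H(\xi)$ by the unitary $U_m(\xi) = e^{\i\pi(h_{m,\xi})}$ using the Weyl relations, identify the resulting operator as $T_m(h_{m,\xi};\xi)$, and then pass to the limit $m\downarrow 0$ by proving convergence of the coupling functions in the relevant norms. First I would compute $U_m(\xi)H_m(\xi)U_m(\xi)^*$ explicitly. Since $U_m(\xi)$ is a Weyl operator, it acts on the fields by the shifts $U_m(\xi)\phi(g)U_m(\xi)^* = \phi(g) - \Re\langle h_{m,\xi}, g\rangle$ (with the analogous formula for $\pi$) and on the second-quantized operators by $U_m(\xi)\,\dG(B)\,U_m(\xi)^* = \dG(B) + \phi(\i B h_{m,\xi}) + \tfrac12\langle h_{m,\xi}, B h_{m,\xi}\rangle$ for self-adjoint $B$, provided $h_{m,\xi}\in D(B)$; applying this with $B = \omega_m$ and $B = K_j$ produces exactly the linear field terms $\phi(\omega_m f)$, $\phi(K_j f)$ and the scalar corrections $\mathfrak{s}(f,\omega_m f)$, $\mathfrak{s}(f,K_j f)$ appearing in \eqref{eq:defoftm}, while the shift of $A_j = \phi(f_{A,j})$ and $B_j = \phi(f_{B,j})$ gives the terms $2e\Re\mathfrak{s}(f,f_{A,j})$ and $2\Re\mathfrak{s}(f,f_{B,j})$. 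Carrying this through the square $(\xi - \Pf + eA)^2$ and collecting terms yields $U_m(\xi)H_m(\xi)U_m(\xi)^* = T_m(h_{m,\xi};\xi)$ on a suitable core. Then, subtracting off the $m$-dependence in the free part appropriately, one has $U_m(\xi)H(\xi)U_m(\xi)^*$ equal to a closely related operator; one reduces to the $m=0$ target $\widehat{H}(\xi) = T_0(h_{0,\xi};\xi)$ by tracking the differences.

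The second and main step is the limit $j\to\infty$. By \Cref{lem:asslem} (i.e.\ the $\Pf^2 + \Hfm$-boundedness of $T_m(f;\xi)$ referenced as \cref{convham}) and \Cref{th:renormalized self-adjoint}, each operator $U_{m_j}(\xi)H(\xi)U_{m_j}(\xi)^* $ and the limit $\widehat{H}(\xi)$ are self-adjoint on the common domain $D(\Pf^2 + \Hf)$. Norm resolvent convergence then follows if I show that $T_{m_j}(h_{m_j,\xi};\xi) - T_0(h_{0,\xi};\xi)$, as quadratic-form (or operator) perturbations relative to $\Pf^2 + \Hf + 1$, tend to zero in the appropriate relative bound. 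For this I need: (a) $h_{m_j,\xi}\to h_{0,\xi}$ in $\hh$; (b) $\omega_{m_j} h_{m_j,\xi}\to \omega_0 h_{0,\xi}$ and $K_j h_{m_j,\xi}\to K_j h_{0,\xi}$ in $\hh$, so that the linear field terms $\phi(\omega_{m_j}h_{m_j,\xi})$ etc.\ converge in the $(\Hf+1)^{1/2}$-relative sense; and (c) convergence of all the scalar $\mathfrak{s}$-corrections, which are finite integrals depending continuously on $h_{m_j,\xi}$ and on $\omega_{m_j}$. Claims (a)–(c) reduce, via the explicit formula \eqref{eq:defofh}, to the convergence $\nabla E_{m_j}(\xi)\to\nabla E(\xi)$, which is exactly the hypothesis of the theorem (supplied by \Cref{prop:nabla_E_converging}), together with the dominated convergence theorem using the ultraviolet cutoff in $\rho$ from \eqref{e:formfac1} and the uniform lower bound $\omega_{m}(k) - k\cdot\nabla E_m(\xi)\geq (1-|\nabla E_m(\xi)|)|k|$ — note $|\nabla E_m(\xi)| < 1$ holds uniformly for $|\xi|<1$ by \Cref{eq:convex}(iv), although a slightly more careful argument is needed near $k=0$ since the denominator degenerates there. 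One should verify that the possible degeneracy as $|k|\to 0$ is integrable: $h_{m,\xi}(\lambda,k)$ behaves like $|k|^{-1/2}\cdot|k|^{-1}\cdot|k| = |k|^{-1/2}$ near the origin up to the factor $\nabla E_m(\xi)$, which is square-integrable in three dimensions, and $\omega_m h_{m,\xi}$ is then bounded near $0$, so everything is uniformly dominated.

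The main obstacle I anticipate is step (b), controlling $\omega_{m_j}h_{m_j,\xi} - \omega_0 h_{0,\xi}$ uniformly near $k=0$ and $|k|=\Lambda$: the difference $\omega_{m_j}(k) - |k|$ does not vanish pointwise-uniformly (it is $\approx m_j^2/(2|k|)$ for $|k|\gg m_j$ but is $\approx m_j$ for $|k|\lesssim m_j$), so one must combine the pointwise convergence with a domination argument, and simultaneously handle the fact that $\nabla E_{m_j}(\xi)$ only converges for \emph{almost} every $\xi$. The ultraviolet cutoff makes the large-$|k|$ region harmless. A secondary technical point is that the square in $T_m(f;\xi)$ is genuinely quadratic in the fields, so the perturbation $T_{m_j} - T_0$ contains cross terms like $(\Pf - eA)\cdot\phi\big((\omega_{m_j} - \omega_0)h_{m_j,\xi} + \omega_0(h_{m_j,\xi}-h_{0,\xi})\big)$; bounding these requires the standard $N^{1/2}$- and $\Hf^{1/2}$-relative bounds for field operators together with the convergence from (a)–(b), and one must check the relative bound constants tend to zero rather than merely staying finite. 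Once these estimates are assembled, the Neumann-series criterion for norm resolvent convergence (relative boundedness of $T_{m_j} - T_0$ with respect to $T_0 + \mathrm{const}$, with relative bound $\to 0$) yields the claim.
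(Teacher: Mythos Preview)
Your overall strategy is the same as the paper's: conjugate via the Weyl relations to land in the $T$-operator family \eqref{eq:defoftm}, then prove convergence of the coupling data by dominated convergence from the hypothesis $\nabla E_{m_j}(\xi)\to\nabla E(\xi)$. There is, however, a slip in identifying the conjugated operator. You compute the conjugate of $H_m(\xi)$ as $T_m(h_{m,\xi};\xi)$, which is correct, but the theorem concerns the \emph{massless} Hamiltonian $H(\xi)=H_0(\xi)$ conjugated by $U_{m_j}(\xi)$. Applying the same Weyl-transformation identity (Proposition~\ref{prop:gross_transformed_hamiltonian}) with mass parameter $0$ and $f=h_{m_j,\xi}$ --- which is a legitimate element of $\hh$ since $m_j>0$ and $|\nabla E_{m_j}(\xi)|<1$ --- gives the conjugate of $H(\xi)$ as $T_0(h_{m_j,\xi};\xi)$, not $T_{m_j}(h_{m_j,\xi};\xi)$. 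The operator difference you must control is therefore $T_0(h_{m_j,\xi};\xi)-T_0(h_{0,\xi};\xi)$, in which only $\omega=\omega_0$ appears and never $\omega_{m_j}$.

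This correction eliminates your anticipated ``main obstacle'': there is no $\omega_{m_j}(k)-|k|$ term to handle at all. The convergences needed in your step~(b) reduce to $\omega(h_{m_j,\xi}-h_{0,\xi})\to 0$ and $K_l(h_{m_j,\xi}-h_{0,\xi})\to 0$ in $L^2_{(0)}$, together with convergence of the scalar $\mathfrak{s}$-terms; all of these follow by dominated convergence from the explicit formula \eqref{eq:defofh}, the cutoff in $\rho$, and the uniform lower bound $\omega_{m}(k)-k\cdot\nabla E_{m}(\xi)\ge(1-\sup_j|\nabla E_{m_j}(\xi)|)\,|k|$ (the supremum is eventually strictly less than $1$ by hypothesis). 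This is exactly what the paper carries out in Lemma~\ref{eq:thmconren2} and Proposition~\ref{convham}; norm resolvent convergence then follows from a routine resolvent-identity argument as in the paper's proof.
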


We note that in view of Lemma  \ref{eq:convex} and  \eqref{eq:convenery} the assumptions of Theorem   \ref{thm:existoftrafoham}  are 
valid for almost all $\xi \in \R^3$ with $| \xi | < 1$. 
We now state   the second  main result of this paper, which will be shown in  Section \ref{sec:maincompproof}, and  which covers   charged particles without  spin   and 
 charged particles with spin one half.

\begin{thm}\label{thm:main result}
 Let  $e \in \R$.    Suppose  that one of the following two assumptions is satisfied:
 \begin{itemize}
 \item[(i)] $s=0$,
 \item[(ii)] $ s=1/2$ and  
 there exists an $m_0 > 0$ such that the energy inequality 
  \eqref{eq:eineq} holds for all $m \in (0,m_0)$. 
  \end{itemize} 
Then for  almost all $\xi \in \IR^3$  with $\abs {\xi} <  1$ the following holds.
\begin{itemize}
\item[(a)] 
 The function  $E$ is differentiable at $\xi$,
  $|\nabla E(\xi)|<1$, and 
 the operator 
$\widehat{H}(\xi)$ has a ground state, i.e., $E(\xi)$ is an eigenvalue of $\widehat{H}(\xi)$.
\item[(b)]  There exists a sequence $(m_j)_{j \in \N}$ of positive  numbers  converging to zero
such that  $U_{m_j}(\xi) \psi_{m_j}(\xi)$ converges to the ground state.
\end{itemize} 
\end{thm}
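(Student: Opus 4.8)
The plan is to prove the existence of a ground state for $\widehat H(\xi)$ by a compactness argument applied to the transformed fiber ground states $U_{m_j}(\xi)\psi_{m_j}(\xi)$, exactly along the lines of \cite{GriLieLos:01,hassie:20,spinboson}. First I would fix a sequence $(m_j)_{j\in\N}$ of positive numbers tending to zero such that, for the fixed $\xi$ under consideration (which may be taken outside a null set, using \cref{eq:convex}(ii) and \eqref{eq:convenery}), the energy $E$ is differentiable at $\xi$, $|\nabla E(\xi)|<1$, and $\nabla E_{m_j}(\xi)\to\nabla E(\xi)$; under hypothesis (i) or (ii) the energy inequality \eqref{eq:eineq} holds on $(0,m_0)$ so that \cref{hyp:op1} provides normalized eigenvectors $\psi_{m_j}(\xi)$ of $H_{m_j}(\xi)$ at the bottom of the spectrum. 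Set $\widehat\psi_{m_j}:=U_{m_j}(\xi)\psi_{m_j}(\xi)$. Since $U_{m_j}(\xi)$ is unitary, these are normalized, and $\widehat\psi_{m_j}$ is a normalized ground state of the conjugated operator $U_{m_j}(\xi)H_{m_j}(\xi)U_{m_j}(\xi)^*$ with eigenvalue $E_{m_j}(\xi)$.

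The core of the argument is to extract a convergent subsequence of $(\widehat\psi_{m_j})$ with a nonzero limit. Here I would invoke the photon number bound and the bound on derivatives of the photon momenta announced for \Cref{sec:infrabounds}: these give, uniformly in $j$, a bound on $\langle\widehat\psi_{m_j},\dG(\omega_0)\widehat\psi_{m_j}\rangle$ together with a bound on $\langle\widehat\psi_{m_j},\dG(-\Delta_k)\widehat\psi_{m_j}\rangle$ (or the analogous $H^1$-type bound in the photon momenta). By the Rellich–Kondrachov-type compactness criterion on Fock space used in \cite{GriLieLos:01} — finitely many photons of bounded total energy whose one-particle wave functions lie in a bounded, locally compact set — the sequence $(\widehat\psi_{m_j})$ is precompact in $\HH$, so along a subsequence $\widehat\psi_{m_j}\to\widehat\psi_0(\xi)$ in norm. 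The nontriviality $\widehat\psi_0(\xi)\neq 0$ is the delicate point: one controls the vacuum expectation or a low-photon-sector projection of $\widehat\psi_{m_j}$ from below, using that the Gross/Bogoliubov transformation $U_{m_j}(\xi)$ removes precisely the infrared-divergent part, so the remaining cloud stays bounded; this is where second-order perturbation theory in the total momentum (\Cref{sec:pertheory}) enters, giving a uniform lower bound on, e.g., $\|P_{\Omega}\widehat\psi_{m_j}\|$ or on $\langle\widehat\psi_{m_j},\dG(\one)\widehat\psi_{m_j}\rangle$ being bounded above together with norm one. Thus $\widehat\psi_0(\xi)$ is the nonzero vector of \cref{thm:main222}, proving part (b).

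To finish part (a), I would combine the convergence $\widehat\psi_{m_j}\to\widehat\psi_0(\xi)$ with \cref{thm:existoftrafoham}: the operators $U_{m_j}(\xi)H(\xi)U_{m_j}(\xi)^*$ converge in norm resolvent sense to $\widehat H(\xi)$. Since $E_{m_j}(\xi)\to E(\xi)$ by \eqref{eq:convofenergy}, and since $U_{m_j}(\xi)H_{m_j}(\xi)U_{m_j}(\xi)^*$ differs from $U_{m_j}(\xi)H(\xi)U_{m_j}(\xi)^*$ only by the conjugated difference $\dG(\omega_{m_j})-\dG(\omega_0)$, which tends to zero strongly on the common domain with uniform relative bound (so the resolvents still converge in norm), the eigenvalue equation $U_{m_j}(\xi)H(\xi)U_{m_j}(\xi)^*\widehat\psi_{m_j}=E_{m_j}(\xi)\widehat\psi_{m_j}+o(1)$ passes to the limit: a standard argument with norm-resolvent convergence plus convergence of the eigenpair shows $\widehat H(\xi)\widehat\psi_0(\xi)=E(\xi)\widehat\psi_0(\xi)$. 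Since $\inf\sigma(\widehat H(\xi))=\lim\inf\sigma(U_{m_j}(\xi)H(\xi)U_{m_j}(\xi)^*)=E(\xi)$ (again by norm-resolvent convergence and the energy convergence), $E(\xi)$ is in fact the bottom of the spectrum of $\widehat H(\xi)$, so $\widehat\psi_0(\xi)$ is a genuine ground state. That $E$ is differentiable at $\xi$ with $|\nabla E(\xi)|<1$ for a.e.\ $|\xi|<1$ is already contained in \cref{eq:convex} and \eqref{eq:convenery}.

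\textbf{Main obstacle.} The hard part is establishing the uniform photon-number and photon-momentum-derivative bounds on the \emph{transformed} states $\widehat\psi_{m_j}$ that are strong enough to yield both compactness \emph{and} nontriviality of the limit, uniformly as $m_j\downarrow 0$. The subtraction performed by $U_{m_j}(\xi)$ must be shown to cancel the would-be divergence in $h_{m_j,\xi}$ near $k=0$ with enough uniformity; this is exactly what \Cref{sec:pertheory} (second-order perturbation theory in $\xi$, controlling $\nabla E_m$ and its effect on the cloud) and \Cref{sec:infrabounds} are set up to deliver, and verifying that these bounds survive the $m\downarrow 0$ limit — in particular that no new infrared singularity is created by the $B$-field/spin term or by the dropped $k^2$ and $f_B$ contributions noted in Remark \ref{derivationcor} — is the crux of the whole argument.
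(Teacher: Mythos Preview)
Your overall strategy matches the paper's: regularize with a positive photon mass, take the transformed ground states $\widehat\psi_{m_j}=U_{m_j}(\xi)\psi_{m_j}(\xi)$, prove they lie in a compact set, extract a strongly convergent subsequence, and show the limit is a ground state of $\widehat H(\xi)$. Two points, however, are misidentified or would not go through as stated.

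\textbf{Nontriviality is automatic, not delicate.} You flag $\widehat\psi_0(\xi)\neq 0$ as the hard step and propose controlling a vacuum overlap from below. But you yourself say the sequence is precompact and hence has a \emph{norm}-convergent subsequence; since $\|\widehat\psi_{m_j}\|=1$ for all $j$, the limit automatically has norm one. There is nothing to prove here, and the paper indeed spends no effort on it (Step~2 of \cref{thm:mainmain} is one line). The r\^ole you assign to second-order perturbation theory is also misplaced: in the paper it is used \emph{not} for nontriviality but to convert the bound $\liminf_j(-\partial_l^2 E_{m_j}(\xi))<\infty$ into resolvent estimates (\cref{lem:per,lem:estimate1}), which then feed into the infrared bounds via the pull-through formula.

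\textbf{A full $H^1$-in-$k$ bound fails.} The compactness criterion you propose, a uniform bound on $\langle\widehat\psi_{m_j},\dG(-\Delta_k)\widehat\psi_{m_j}\rangle$, does not hold here: with the explicit polarization vectors \eqref{eq:cheps} one has $|\nabla_k\varepsilon_\lambda(k)|\le (k_1^2+k_2^2)^{-1/2}$, and the resulting $\|\nabla_k a_\lambda(k)\widehat\psi_{m_j}\|$ carries a non-square-integrable singularity on the $k_3$-axis (cf.\ \cref{lem:second2}). The paper circumvents this by proving only a \emph{fractional} position bound $\sum_i n^{-1}|y_i|^\delta$ for some small $\delta>0$ (\cref{thm:compact}), obtained by interpolating the derivative estimate of \cref{lem:second2} with the pointwise estimate of \cref{prop:a_est_2} via Young's inequality. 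That fractional bound, together with the number and $\Hf$ bounds, is what makes the form operator $L$ have compact resolvent (Rellich on each $n$-photon sector plus $\mu_l(L_n)\ge n$).

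For Step~3 the paper does not pass an eigenvalue equation to the limit; instead it shows directly (\cref{convofenergyexp}) that $\langle\widehat\psi_{m_j},(\widehat H(\xi)-E(\xi))\widehat\psi_{m_j}\rangle\to 0$, and then invokes lower semicontinuity of the nonnegative closed form $\widehat H(\xi)-E(\xi)$. Your route via norm-resolvent convergence plus an approximate eigenvalue equation could be made to work, but you would need to control $U_{m_j}(\xi)\big(\dG(\omega_{m_j})-\dG(\omega_0)\big)U_{m_j}(\xi)^*$ on the specific vectors $\widehat\psi_{m_j}$, which is extra work the paper avoids.
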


The  fact that   the assertion of  Theorem \ref{thm:main result} only holds for almost all $\xi$ with length less than one, does not 
affect  the fiber direct integral, %
since   sets of measure zero 
do not play a role in integration.

\begin{rem}\label{rem:setofxi} We  will determine  in \cref{thm:mainmain}, below,   the  set  of  $\xi$'s,  for which the assertions of  \cref{thm:main222} as well as \cref{thm:main result} can be shown, more precisely.  That  set only  depends on regularity properties of  the energies $E_m$. 
We note that  regularity properties for such 
models have been studied   using various methods. In this regard we want to mention well established methods from renormalization,    iterated perturbation, or  statistical mechanics, cf.    \cite{C01,BCFS05,piz03,AbdessalamHasler.2011}. 
\end{rem}

\section{Outline of the Proofs} 
\label{sec:outline proofs} 

Theorem \ref{thm:existoftrafoham} will be shown in  Section \ref{sec:trafham} using 
well established  estimates involving creation and annihilation operators. 
To prove the existence of a ground state for the renormalized Hamiltonian
 $\widehat{H}(\xi)$, i.e., \cref{thm:main result}, we  follow  closely ideas  given  in \cite{GriLieLos:01} and \cite{LMS06} combined with a regularization 
procedure used in  \cite{F73,F74,piz05,ChenFroehlichPizzo2}.

The basic idea of the proof is to regularize the Hamiltonian by adding  a positive mass to the photons
 and to study  ground state properties 
when this regularization is removed. That is, 
we  consider  the Hamiltonian $H_m(\xi)$  for  $m > 0$.  Recall that 
  the operator   $H_m(\xi)$ has a normalized  ground state for all $m > 0$ and  $|\xi|$ sufficiently small,
see  Theorem \ref{hyp:op1}, which we  denote by  $\psi_m(\xi)$.  %
 In \cref{eq:propenegyconv} we will show that $E_m(\xi)$ is monotonically 
decreasing as $m \downarrow 0$ and 
\[
E_m(\xi) \to E(\xi)  .
\]

Working  with  $U_m(\xi)$  requires  
 control of the derivatives of the ground state energies. In Proposition \ref{prop:nabla_E_converging} at the end of  this section 
we show that for  any seqeunce 
$(m_j)_{j \in \N}$  of nonnegative numbers which converges to zero, 
   there exists a set $D$ such that $\R^3 \setminus D$ has Lebesgue measure zero,
such that   $E_{m_j}$, $j \in \N$,  and $E$ are differentiable at all points in $D$,  and 
 for all $\xi \in D$ we have 
 \begin{align} \label{convofder2} 
\nabla E_{m_j}(\xi) \overset{j \to \infty}\longrightarrow \nabla E(\xi) .
\end{align} 
Using  \eqref{convofder2}  we will show in  Proposition \ref{convham}  in Section \ref{sec:trafham}  that 
$$
U_{m_j}(\xi) H(\xi) U_{m_j}(\xi)^* \to \widehat{H}(\xi)  
$$
in norm resolvent sense for almost all $\xi$ with $|\xi| < 1$.  We then show in  Proposition    \ref{convofenergyexp} using 
\eqref{eq:convofenergy}  and standard estimates involving creation and annihilation operators 
  that  $U_{m_j}(\xi) \psi_{m_j}(\xi)$ is a minimizing sequence for $\widehat{H}(\xi)$, i.e.,  
\begin{align}  \label{convofenergyexp0} 
0 \leq \inn{ U_{m_j}(\xi) \psi_{m_j}(\xi) , 
( \widehat{H}(\xi) - E(\xi)  ) U_{m_j}(\xi) \psi_{m_j}(\xi) } \to 0  . 
\end{align} 
The key idea for proving the existence of a ground state is  to show  that  $U_{m_j}(\xi) \psi_{m_j}(\xi)$ has a convergent subsequence.  This 
is achieved  by showing that the sequence of these vectors lies in a  compact subset of
the Hilbert space. To this end,  we make use of the so called pull-through formula, 
which is shown   in  Lemma \ref{lem:pullthrough},  
 \begin{equation} \label{eq:pullthroughrel0} 
a_\lambda(k) \psi_m(\xi) =  \frac{e \rho(k)}{\sqrt{2 \abs k }}  R_{m,\xi}(k)\left(- \epsilon_\lambda(k)  \cdot v(\xi) +  S \cdot (\i k \wedge \epsilon_\lambda(k)) \right) \psi_m(\xi) , 
\end{equation} 
where  we defined 
\begin{align}
R_{m,\xi}(k) & := (
  H_m(\xi -k) + \omega_m(k) - E_m(\xi)   )^{-1}   ,  \label{defofR} \\
v(\xi)   &  := \xi - \Pf + e A . \label{defofv} 
\end{align} 
To estimate the resolvent occuring on the right hand side of \eqref{eq:pullthroughrel0}, we will relate 
it to the second order derivative of the  ground state energy as a function of $\xi$. 
 In Proposition \ref{prop:nabla_E_converging}  estimates 
on the second order derivatives of the ground state energy are established using convexity properties,
 where it is shown that for almost all $\xi$ the second order derivatives  $\partial_l^2 E_{m_j}(\xi)$  exist and 
$$
 \liminf_j (-  \partial_l^2 E_{m_j}(\xi)) < \infty 
$$
  for every $l=1,2,3$.  In Section \ref{sec:pertheory}
it  will be   shown  that these bounds on the second order derivative of 
the energy yield  bounds on the resolvent. Using these resolvent bounds 
in the pull-through formula \eqref{eq:pullthroughrel0} 
we  will derive  in Propositions 
\ref{prop:a_est_2}   and  \ref{lem:second2}
of 
 Section \ref{sec:infrabounds} estimates  on 
$$ a_\lambda(k) U_{m_j}(\xi) \psi_{m_j}(\xi) \text{  and  } \nabla_k a_\lambda(k) U_{m_j}(\xi)\psi_{m_j}(\xi) . $$  With the help of   these estimates we will   prove a fractional derivative bound in 
Lemma  \ref{thm:compact}. 
In Section   \ref{sec:maincompproof}  we use this fractional derivative bound to show that the vectors  $U_{m_j}(\xi)\psi_{m_j}(\xi)$, $j \in \N$,  lie 
indeed in  a compact set.  Thus by compactness  there exists  a  strongly convergent subsequence which 
converges to a nonzero vector, say  $\widehat{\psi}_0(\xi)$.  
We recall that this  is the content  of  Theorem  \ref{thm:main222} , which will be proven   in Section   \ref{sec:maincompproof}.

Using lower semicontinuity of nonnegative quadratic forms \cite{Sim:77}  (or alternatively the spectral theorem and  Fatou's Lemma),
we will conclude  from \eqref{convofenergyexp0} and Theorem \ref{thm:main222}  
that for almost all $\xi$ with $|\xi|< 1$ 
\begin{align*}
 0  & \leq \inn{  \widehat{\psi}_0(\xi) , ( \widehat{H}(\xi)  - E(\xi) ) \widehat{\psi}_0(\xi) }  \\
& \qquad \leq \liminf_{j \to \infty} \inn{ U_{m_j}(\xi)  \psi_{m_j}(\xi) , ( \widehat{H}(\xi) - E(\xi) )  U_{m_j}(\xi) \psi_{m_j} } = 0 , 
\end{align*}
i.e., that $\widehat{\psi}_0(\xi)$ is a ground state of $\widehat{H}(\xi)$. 
As  outlined in Section \ref{sec:maincompproof}
 this will then establish \cref{thm:main result}.

In the remaining part of this section we derive further properties of the ground state energies $E_m(\xi)$.

\begin{prop} \label{eq:propenegyconv} Let  $\xi \in \R^3$. Whenever $m_1 \geq m_2 \geq 0$ 
we have  $E_{m_1}(\xi)   \geq E_{m_2}(\xi) \geq E(\xi)$, and
\begin{eqnarray} \label{eq:enegineq1}  E(\xi) = \lim_{m \downarrow  0} E_m(\xi) \; . \end{eqnarray}
\end{prop}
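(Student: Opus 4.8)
The plan is to prove the monotonicity first, then the convergence, exploiting that $H_m(\xi)$ depends on $m$ only through the single-particle dispersion $\omega_m(k)=\sqrt{m^2+k^2}$, which is pointwise monotone in $m$. First I would establish $E_{m_1}(\xi)\ge E_{m_2}(\xi)$ for $m_1\ge m_2\ge 0$. Since $\omega_{m_1}(k)\ge\omega_{m_2}(k)$ for all $k$, the operator inequality $H_{\f,m_1}=\dG(\omega_{m_1})\ge\dG(\omega_{m_2})=H_{\f,m_2}$ holds in the sense of quadratic forms on the common form domain $D((\Pf^2+H_{\f,0})^{1/2})$ (here I use \cref{hyp:op0}, which pins down the domain uniformly in $m$, and that $\dG$ is monotone in its argument). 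Writing $H_m(\xi)=\frac12(\xi-\Pf+eA)^2+eS\cdot B+H_{\f,m}$, the $m$-dependent part enters additively, so $H_{m_1}(\xi)\ge H_{m_2}(\xi)\ge H_0(\xi)=H(\xi)$ as forms, and the min-max principle gives $E_{m_1}(\xi)\ge E_{m_2}(\xi)\ge E(\xi)$. That all these operators are bounded below and share a form core is exactly \cref{hyp:op0}, so the form comparison is legitimate.

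For \eqref{eq:enegineq1} it remains to show $\lim_{m\downarrow 0}E_m(\xi)\le E(\xi)$; combined with the already-proven lower bound $E_m(\xi)\ge E(\xi)$ and monotonicity (so the limit exists), this forces equality. Given $\delta>0$, pick a normalized $\psi\in\FF_{\fin}(\mathfrak v)\otimes\C^{2s+1}$ in the form domain with $\inn{\psi,H(\xi)\psi}\le E(\xi)+\delta$; such $\psi$ exists because finite-particle vectors with smooth compactly supported (away from $k=0$) components form a form core for $H(\xi)$. For such a fixed $\psi$, the only $m$-dependent term in $\inn{\psi,H_m(\xi)\psi}$ is $\inn{\psi,\dG(\omega_m)\psi}$, and since $\psi$ has only finitely many nonzero photon sectors with $L^2$ components, dominated convergence (using $\omega_m(k)\le\omega_0(k)+m$ and $\omega_m(k)\downarrow\omega_0(k)$ pointwise) yields $\inn{\psi,\dG(\omega_m)\psi}\to\inn{\psi,\dG(\omega_0)\psi}$ as $m\downarrow 0$. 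Hence $\limsup_{m\downarrow0}E_m(\xi)\le\limsup_{m\downarrow0}\inn{\psi,H_m(\xi)\psi}=\inn{\psi,H(\xi)\psi}\le E(\xi)+\delta$, and letting $\delta\downarrow0$ completes the proof.

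The only genuinely delicate point is the choice of a good form core for $H(\xi)$ on which the $m\to0$ limit of the field-energy expectation is transparent: one must be sure that finite-particle vectors with components supported away from the origin are a form core, so that every term in $H(\xi)$ — in particular the quadratic term $\frac12(\xi-\Pf+eA)^2$ and $eS\cdot B$ — has finite expectation and the $A^2$-type contributions are under control uniformly in small $m$. This is standard for Pauli–Fierz operators (the relative form bound of the interaction with respect to $\Pf^2+H_{\f,0}$ is uniform in $m\in[0,m_0]$, since the form factors $f_A,f_B$ and $\omega_m$ are controlled independently of $m$), and it is exactly the kind of estimate underlying \cref{hyp:op0}; I would cite \cite{H06,HL08,LMS06,hassie:20} for it rather than reprove it. Everything else is min-max plus dominated convergence.
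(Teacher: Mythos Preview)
Your proof is correct and follows essentially the same route as the paper: monotonicity via $\omega_{m_1}\ge\omega_{m_2}$ and hence $H_{m_1}(\xi)\ge H_{m_2}(\xi)$, and the upper bound on $\lim_{m\downarrow0}E_m(\xi)$ by testing against a fixed nice vector. The paper makes the second step slightly more direct by using the operator inequality $H_m(\xi)\le H(\xi)+m\,\dG(\Id)$ (from $\omega_m\le\omega+m$) on a vector $\phi\in D(N)\cap D(\Pf^2+\Hf)$, which avoids the appeal to dominated convergence, but this is a cosmetic difference only.
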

\begin{proof}
For $0 \leq m_2 \leq m_1$ we have  $\omega \leq \omega_{m_2} \leq \omega_{m_1}$
and hence   $H(\xi) \leq H_{m_2}(\xi) \leq H_{m_1}(\xi)$.This implies the first statement. Moreover, 
it  implies the existence of the limit
\begin{equation} \label{eq:emconv}
E(\xi) \leq \lim_{m \downarrow 0} E_m(\xi)   .
\end{equation}
To show the opposite inequality we argue as follows. From Theorem  \ref{hyp:op0}
it follows that any core for $\Pf^2 + \Hf$ is a core for $H(0)$. 
Thus for any $\epsilon
>0$, there exists a vector $\phi \in D(N) \cap D(\Pf^2 + \Hf)$ such that
$$
\inn {\phi, H(\xi) \phi } \leq E(\xi) + \epsilon \; .
$$
On the other hand, since  $H_m(\xi) \leq H(\xi) + m \dG(\Id)$, it follows that for any
$m$,
$$
E_m(\xi) \leq \inn{\phi, H_m(\xi) \phi} \leq \inn{\phi, H(\xi) \phi } + m \inn{\phi , \dG(\Id)
\phi} \leq E(\xi) + \epsilon + m \inn{\phi, \dG(\Id)  \phi} \; .
$$
Hence
\begin{equation} \label{eq:emconv2}
 \lim_{m \downarrow  0} E_m(\xi) \leq E(\xi) + \epsilon .
\end{equation}
 Since
$\epsilon > 0$ is arbitrary,  \eqref{eq:enegineq1}  follows from \eqref{eq:emconv} and  \eqref{eq:emconv2}. 
\end{proof}

\begin{prop}
\label{prop:nabla_E_converging}
Let  $(m_j)_{j \in \N}$  be a sequence of nonnegative numbers which converges to zero. 
 Then  there exists a set $D$ such that $\R^3 \setminus D$ has Lebesgue measure zero
and the following holds.   
The functions  $E_{m_j}$, $j \in \N$,  and $E$ are differentiable at all points in $D$,  and 
 for all $\xi \in D$ we have 
\begin{itemize}
\item[(a)]  $\nabla E_{m_j}(\xi) \overset{j \to \infty}\longrightarrow \nabla E(\xi)$,
\item[(b)]   the second partial derivatives  $\partial_l^2 E_{m_j}(\xi)$ exist and  satisfy for every $l=1,2,3$ that $\liminf_j (-  \partial_l^2 E_{m_j}(\xi)) < \infty$ . 
\end{itemize} 
\end{prop}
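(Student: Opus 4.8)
The statement concerns a fixed sequence $(m_j)$ with $m_j\downarrow 0$, and asks for a common full-measure set $D$ on which all the $E_{m_j}$ and $E$ are differentiable, with (a) the gradients converging and (b) the second partials existing with a uniform-in-$j$ liminf bound. The backbone is \cref{eq:convex}: each function $t_{m}(\xi)=\tfrac{\xi^2}{2}-E_m(\xi)$ is convex (and so is $t_0=t$), and by \cref{eq:propenegyconv} we have the monotone pointwise convergence $E_{m_j}\to E$, equivalently $t_{m_j}\to t$ pointwise (indeed monotonically: $E_{m_j}$ decreases to $E$, so $t_{m_j}$ increases to $t$ as $j\to\infty$ along the subsequence where $m_j$ decreases — in general one passes to a decreasing rearrangement, but monotonicity is not actually needed, only pointwise convergence of convex functions). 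The first step is to set $D_0$ to be the set of $\xi$ at which $E$ and every $E_{m_j}$ is differentiable; since each is locally Lipschitz (convex), Rademacher's theorem (\Cref{thm:rade}) plus a countable union gives $\R^3\setminus D_0$ null.

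For part (a), I would invoke the standard fact that if convex functions $g_j\to g$ pointwise on an open set and all are differentiable at a common point $\xi$, then $\nabla g_j(\xi)\to\nabla g(\xi)$. This follows from monotonicity of difference quotients: for each coordinate direction $e_l$ and each $t>0$, $\partial_l g_j(\xi)\le \frac{g_j(\xi+te_l)-g_j(\xi)}{t}\to \frac{g(\xi+te_l)-g(\xi)}{t}$, and letting $t\downarrow0$ gives $\limsup_j\partial_l g_j(\xi)\le\partial_l g(\xi)$; the reverse bound with $t<0$ gives $\liminf_j\partial_l g_j(\xi)\ge\partial_l g(\xi)$. Applying this to $g_j=t_{m_j}$, $g=t$ converts to $\nabla E_{m_j}(\xi)\to\nabla E(\xi)$ on $D_0$. (One may alternatively cite the Attouch–Wets / epi-convergence machinery, but the elementary argument above is cleaner here.)

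Part (b) is the delicate one and will be the main obstacle. For a fixed convex function, second derivatives exist a.e. by Alexandrov's theorem, but that only gives, for each $j$, a full-measure set $D_j$ — and Alexandrov gives the full Hessian a.e., in particular each $\partial_l^2 E_{m_j}$. Taking $D_1:=D_0\cap\bigcap_j D_j$ keeps $\R^3\setminus D_1$ null and guarantees all the second partials exist on $D_1$. The real content is the uniform liminf bound $\liminf_j(-\partial_l^2 E_{m_j}(\xi))<\infty$. Here I would exploit that $-\partial_l^2 E_{m_j}=\partial_l^2 t_{m_j}-1$ and that each $t_{m_j}$ is convex, so $\partial_l^2 t_{m_j}\ge0$; the issue is an \emph{upper} bound uniform in $j$. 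I would obtain this from a one-dimensional restriction: fix $l$, restrict $t_{m_j}$ to the line $\xi+se_l$, and use that for a convex function of one variable $\varphi$, the second derivative (where it exists) is controlled by a second difference, $\varphi''(s)\le \frac{\varphi(s+h)-2\varphi(s)+\varphi(s-h)}{h^2}$ is \emph{not} generally true, but the averaged/monotone version is: $\varphi''(s)=\lim_{h\downarrow0}\frac{\varphi'(s+h)-\varphi'(s)}{h}$ and $\varphi'(s+h)-\varphi'(s)\le \varphi'(s+2h)-\varphi'(s-h)\le \ldots$; more robustly, for any $h>0$, $\int_{s}^{s+h}\varphi''\le \varphi'(s+h)-\varphi'(s)\le \frac{\varphi(s+2h)-\varphi(s+h)}{h}-\frac{\varphi(s)-\varphi(s-h)}{h}$, which is bounded as $j\to\infty$ using pointwise convergence of $t_{m_j}$ at the four points $s-h,s,s+h,s+2h$. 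Combining with Fatou along the line gives that for a.e.\ $s$, $\liminf_j\varphi_j''(s)<\infty$; a Fubini argument over the foliation by lines in direction $e_l$ then yields a full-measure set in $\R^3$ on which $\liminf_j(-\partial_l^2 E_{m_j}(\xi))<\infty$, and intersecting over $l=1,2,3$ and with $D_1$ produces the desired $D$.

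The place where care is required — and where I expect the write-up to spend its effort — is making the Fubini/foliation step rigorous: the exceptional null sets produced by "Fatou along almost every line" must be assembled into a genuine $\R^3$-null set, which needs the measurability of $\xi\mapsto\liminf_j(-\partial_l^2 E_{m_j}(\xi))$ (fine, as a countable liminf of a.e.-defined Borel functions) together with the one-dimensional estimate holding for a.e.\ line; this is exactly the kind of step that is clean if one first proves a clean one-dimensional lemma (bounding $\liminf$ of second derivatives of a pointwise-convergent sequence of convex functions by a second difference of the limit) and then applies it on lines, and I would organize the appendix around such a lemma — compare the one-dimensional convexity lemmas (\cref{lem:eleconv}, \cref{estonderconv}) already cited in the excerpt.
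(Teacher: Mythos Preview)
Your proposal is correct and follows essentially the same route as the paper: reduce to one-dimensional convexity via coordinate restrictions and Fubini, use monotonicity of difference quotients to get convergence of first derivatives (the paper packages this as \cref{lemma:derivative_converging}), and obtain the $\liminf$ bound on second derivatives by combining the integral inequality $\int_a^b \varphi_j''\le \varphi_j'(b)-\varphi_j'(a)$ with Fatou (the paper's \cref{lemma:onvex2david}). The only minor difference is that the paper bounds $\varphi_j'(b)-\varphi_j'(a)$ via the already-established convergence of first derivatives rather than via second differences of the function values as you do; both work.
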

\begin{proof} We note that $t_m \: \xi \mapsto \frac{1}{2}\xi^2 - E_m(\xi)$ 
are convex functions by  Lemma     \ref{eq:convex}.
Together with Proposition  \ref{eq:propenegyconv}, we see that 
(a)  and (b) follow  from \cref{lemma:derivative_converging,lemma:onvex2david} 
in the appendix, respectively.  
\end{proof}

\newcommand{\T}{\overline T}
\newcommand{\hpsi}{\hat \psi}
\newcommand{\hpsio}{\hat \psi_0(\xi)}

\section{Transformation of the Hamiltonian}
\label{sec:transformation}

\label{sec:trafham}

 We will now look at  unitary transformations of the form 
\[
e^{\i\pi(f)} H_m(\xi) e^{-\i\pi(f)}  ,   \quad f \in \hh ,
\]
and  limits of such objects. 
Some background is collected in Appendix \ref{secexprea}.  
 
Before we state the lemma we define 
\[
\| f \|_{(m)} = \| f \| +  \| \omega_m^{-1/2}  f \| 
\]
for all measurable functions $f \: \R^3 \times \Z_2 \to \C$, where $\| \cdot \|$ denotes the $L^2$-norm in $L^2(\Z_2 \times \R^3)$.  We  write $f \in L^2_{(m)}(\Z_2 \times \R^3)$ if $\| f \|_{(m)} < \infty$.

\begin{lemma} \label{lem:asslem}    Let $m \geq 0$ and  $f \: \Z_2 \times \R^3 \to \R$  be a measurable 
function,  with 
\begin{align} 
&K_j f  , K_j^2 f, \omega_m f   \in L^2_{(m)}(\Z_2 \times \R^3)  ,  \label{eq:ass1}  \\
& \omega_m |f|^2,  K_j |f|^2,  e  f \overline{ f}_{A,j}, e f \overline{f}_{B,j}  \in L^1(\Z^2 \times \R^3) . \label{eq:ass2} 
\end{align}  
 Then $T_m(f;\xi)$,  given in   \eqref{eq:defoftm}, defines a  $ \Pf^2 +\Hfm$ bounded  operator on $D( \Pf^2 + \Hfm)$.
In particular, there exits a constant $C$ such that for all $\psi \in D(\Pf^2 +\Hfm)$ we have 
\begin{align} \label{eq:estontrafham}  
 \| T_m(f, \xi) \psi \| \leq &  C \sum_j \big\{  \|e  f_{B,j} \|_{(m)} + \|  e \overline{f} f_{B,j} \|_1 + \| \omega_m f \|_{(m)}  + \||f|^2 \omega_m \|_1   \\
 & + (  \| e f_{A,j} \|_{(m)} + \| K_j f \|_{(m)}  + |c_j(f)|+1)^2 \nonumber  \\
& +  \| K_j^2 f \|_{(m)} + \| e K_j f_{A,j} \|_{(m)} ) \big\}  \| (\Pf^2 + \Hfm + 1) \psi \|  ,   \nonumber 
\end{align}
where 
\begin{align} \label{defofcj} 
c_{j}(h)  &:= \xi_j  - 2e {\Re} \mathfrak{s}(h  ,f_{A,j}) - \mathfrak{s}( h , K_j h) .
\end{align} 
\end{lemma}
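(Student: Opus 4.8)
The plan is to expand the square in the definition \eqref{eq:defoftm} of $T_m(f;\xi)$ and estimate each resulting term against $(\Pf^2 + \Hfm + 1)$ using standard relative bounds for field operators. First I would introduce the abbreviation $v_j(f) := \xi_j - P_{f,j} + eA_j - \phi(K_j f) - 2e\,\Re\,\mathfrak{s}(f,f_{A,j}) - \mathfrak{s}(f,K_jf)$, so that the first part of $T_m(f;\xi)$ is $\frac12 \sum_j v_j(f)^2$. The operators $P_{f,j} = \dG(K_j)$ and $A_j = \phi(f_{A,j})$, together with $\phi(K_j f)$, are each $\Nf^{1/2}$- or $\Hfm^{1/2}$-bounded: recall that $\|\phi(g)\psi\| \leq \sqrt{2}\,\|g\|_{(m)}\,\|(\Hfm + 1)^{1/2}\psi\|$ and $\|a^\#(g)(\Nf+1)^{-1/2}\| \leq \|g\|$, while the scalar terms $\mathfrak{s}(f,f_{A,j})$ and $\mathfrak{s}(f,K_jf)$ are finite multiples of the identity by hypothesis \eqref{eq:ass2} (for the former) and by $K_j|f|^2 \in L^1$ (for the latter). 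Hence $v_j(f)$ is $(\Pf^2 + \Hfm + 1)^{1/2}$-bounded with constant controlled by $\|ef_{A,j}\|_{(m)} + \|K_jf\|_{(m)} + |c_j(f)| + 1$, where $c_j(f)$ collects the scalar contributions as in \eqref{defofcj}.

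Next I would handle $v_j(f)^2$. The subtle point is that $v_j(f)$ does not commute with $(\Pf^2 + \Hfm + 1)^{1/2}$, so one cannot simply square the relative bound. Instead I would write $v_j(f)^2 \psi$ and commute: $\|v_j(f)^2\psi\| \leq \|v_j(f)(\Pf^2+\Hfm+1)^{1/2}(\Pf^2+\Hfm+1)^{-1/2}v_j(f)\psi\|$, and use that $(\Pf^2+\Hfm+1)^{-1/2}v_j(f)(\Pf^2+\Hfm+1)^{-1/2}$ together with the commutator $[v_j(f),(\Pf^2+\Hfm+1)^{1/2}]$ (or more directly $[v_j(f),\Pf^2]$ and $[v_j(f),\Hfm]$) is bounded; these commutators produce terms involving $K_j^2 f$, $K_j f_{A,j}$, and $\omega_m f$, which are exactly the quantities appearing in the square bracket of \eqref{eq:estontrafham}. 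This is the kind of argument that is routine in non-relativistic QED; the commutator $[\phi(g), \Hfm] = -\i\,\pi(\omega_m g)$ and $[\phi(g), \dG(K_j)] = -\i\,\pi(K_j g)$ are the essential identities, and one just needs $K_j^2 f$, $\omega_m f$ etc.\ in $L^2_{(m)}$, which is \eqref{eq:ass1}.

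The remaining terms of $T_m(f;\xi)$ are easier: $e\sum_j S_j(B_j - 2\Re\,\mathfrak{s}(f,f_{B,j}))$ is $\Hfm^{1/2}$-bounded with constant $\lesssim \|ef_{B,j}\|_{(m)} + \|e\overline{f}f_{B,j}\|_1$ using $\overline{f}f_{B,j} \in L^1$ from \eqref{eq:ass2} and boundedness of the Pauli matrices $S_j$; the operator $\Hfm$ itself is of course $(\Pf^2+\Hfm+1)$-bounded; $\phi(\omega_m f)$ is $\Hfm^{1/2}$-bounded with constant $\lesssim \|\omega_m f\|_{(m)}$ by \eqref{eq:ass1}; and $\mathfrak{s}(f,\omega_m f)$ is a finite multiple of the identity by $\omega_m|f|^2 \in L^1$ from \eqref{eq:ass2}. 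Collecting all these contributions and using $\|\Hfm\psi\|, \|\Pf^2\psi\| \leq \|(\Pf^2+\Hfm+1)\psi\|$ gives \eqref{eq:estontrafham}, and in particular $T_m(f;\xi)$ is $\Pf^2+\Hfm$-bounded on $D(\Pf^2+\Hfm)$. I expect the main obstacle to be the careful bookkeeping of the commutators in the $v_j(f)^2$ term — making sure every term that arises is matched against one of the norms listed in \eqref{eq:estontrafham} with the stated power — rather than any conceptual difficulty; a clean way to organize it is to prove a single auxiliary estimate of the form $\|v_j(f)\psi\| \leq a_j\|(\Pf^2+\Hfm+1)^{1/2}\psi\|$ and $\|[v_j(f),(\Pf^2+\Hfm+1)^{1/2}](\Pf^2+\Hfm+1)^{-1/2}\| \leq b_j$ with $a_j, b_j$ the relevant combinations of norms, and then conclude $\|v_j(f)^2\psi\| \leq a_j(a_j+b_j)\|(\Pf^2+\Hfm+1)\psi\|$.
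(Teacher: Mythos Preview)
Your plan is correct in substance and would succeed, but the paper organizes the estimate differently and more simply. Instead of keeping $v_j(f)$ intact and confronting the commutator $[v_j(f),(\Pf^2+\Hfm+1)^{1/2}]$, the paper splits off $-P_{f,j}$ from $v_j(f)$ and expands the square directly: writing $w_j := \phi(ef_{A,j}-K_jf)+c_j(f)$, one has
\[
\tfrac12\sum_j v_j(f)^2 = \tfrac12\Pf^2 - \tfrac12\sum_j\big(P_{f,j}w_j+w_jP_{f,j}\big)+\tfrac12\sum_j w_j^2,
\]
so that $\tfrac12\Pf^2+\Hfm$ appears as a free term, $w_j^2$ is handled by the two-field bound $\|\phi(g_1)\phi(g_2)\psi\|\le 2\|g_1\|_{(m)}\|g_2\|_{(m)}\|(\Hfm+1)\psi\|$, and the cross terms $P_{f,j}w_j$ require only the single elementary commutator $[P_{f,j},\phi(g)]=-\i\phi(\i K_j g)$ to be brought to the form $w_jP_{f,j}+(\text{field})$. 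This is where the specific norms $\|K_j^2 f\|_{(m)}$ and $\|eK_jf_{A,j}\|_{(m)}$ in \eqref{eq:estontrafham} arise, and no square-root commutator is ever needed.

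Your route via $[v_j(f),(\Pf^2+\Hfm+1)^{1/2}]$ works but is heavier: controlling that commutator needs an integral representation of the square root, and the commutator $[v_j(f),\Pf^2]$ produces cross terms $P_{f,l}\,\phi(\i K_l(ef_{A,j}-K_jf))$ for \emph{all} $l$, not just $l=j$, so the bookkeeping (and the resulting constants) would look different from \eqref{eq:estontrafham}. The paper's decomposition buys you exactly the displayed bound with minimal algebra; your abstract scheme buys generality (it would apply to any $v_j$ that is $(\Pf^2+\Hfm+1)^{1/2}$-bounded with controlled square-root commutator) at the cost of a more technical intermediate step.
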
 
\begin{proof}  From the  assumptions \eqref{eq:ass2} it follows that all expressions involving 
the sesquilinear form $\mathfrak{s}$ are finite. Multiplying out the square in  \eqref{eq:defoftm} we will show that each term 
is $\Pf^2 + \Hfm$ bounded. For $\psi \in \FF_{\fin}(\Cco(\Z_2 \times \R^3))$ we find 
that 
\begin{align}
& T_m(f;\xi)  \psi   \label{eq:pfsqhfm00}  \\
 &  =  \left(  \frac{1}{2} \Pf^2 +  \Hfm \right) \psi \nonumber \\
& - \frac{1}{2} \sum_j  \Pfj{j}  \left(   \phi( e f_{A,j} -   {K_j} f   ) + c_j(f)  ) \right) \psi  \label{eq:pfsqhfm00-1} \\
& -   \frac{1}{2} \sum_j  \left( \phi( e f_{A,j} -   {K_j} f   ) + c_j(f)   \right)  \Pfj{j} \psi 
 \label{eq:pfsqhfm00-2}  \\
&+ \frac{1}{2}\sum_{j} \left(  \phi(  e f_{A,j} -   {K_j} f   ) + c_j(f)  \right)^2 \psi 
\label{eq:pfsqhfm00-3}  \\  & + e\sum_{j} S_j \left( \phi(f_{B,j})  - 2 {\Re} \mathfrak{s}(f , f_{B,j}  )  \right)  \psi  + ( \phi(\omega_m f) + \mathfrak{s}( f , \omega_m  f)) \psi  .    \label{eq:pfsqhfm00-4} 
\end{align} 
From the first identity in   Lemma  \ref{lem:basichfest}, see that the terms in    \eqref{eq:pfsqhfm00-4}   
 are  $\Hfm$ bounded, since  $ \omega_m f, f_{B,j} \in L^2_{(m)}$, i.e., 
\begin{align*} 
  \|  \eqref{eq:pfsqhfm00-4}\|  \leq C \big( \sum_j ( \|e  f_{B,j} \|_{(m)} + \| e  \overline{f} f_{B,j} \|_1 + \| \omega_m f \|_{(m)}  + \||f|^2 \omega_m \|_1
\big) \| (\Hfm + 1)^{1/2} \psi \|. 
\end{align*} 
Now let us estimate the terms in  \eqref{eq:pfsqhfm00-3}.
Clearly $(\phi(K_j f) + e A_j)^2$ is $\Hfm$ bounded by Lemma  \ref{lem:basichfest},
since $K_j f , f_{A,j} \in L^2_{(m)}(\Z_2 \times \R^3)$ and so 
\begin{align*} 
  \| \eqref{eq:pfsqhfm00-3} \| \leq C  \sum_{j=1}^3 ( \| e f_{A,j} \|_{(m)} + \| K_j f \|_{(m)}  + |c_j(f)|+1)^2 \| (\Hfm + 1 ) \psi \| .
\end{align*} 
Similarly,  the  term in  \eqref{eq:pfsqhfm00-2} can be bounded   by Lemma  \ref{lem:basichfest} as 
\begin{align*} 
  \| \eqref{eq:pfsqhfm00-2} \| \leq C \sum_j \big(  \| e f_{A,j} \|_{(m)} + \| K_j f \|_{(m)}  + |c_j(f)| \big) \| (\Hfm + 1 )^{1/2} \Pfj{j} \psi \| .
\end{align*} 
For 
\eqref{eq:pfsqhfm00-1} observe  that  \cref{lem:comdgamma} yields on $\FF_{\fin}(\Cco(\Z_2 \times \R^3))$
\begin{align}  \label{eq:commest} 
 P_{f,j}  \phi(K_j f + e f_{A,j} )   = \phi(K_j f + e f_{A,j} )   P_{f,j}  - \i  \phi( \i K_j^2 f + \i  e K_j f_{A,j} )  .
\end{align} 
Now each term in \eqref{eq:commest} is estimated using Lemma  \ref{lem:basichfest} and we find for   $\psi \in \FF_{\fin}(\Cco(\Z_2 \times \R^3))$ 
\begin{align} 
  \| \eqref{eq:pfsqhfm00-1} \| & \leq C \sum_j \big(  \| e f_{A,j} \|_{(m)} + \| K_j f \|_{(m)}  + |c_j(f)| \big) \| (\Hfm + 1 )^{1/2} \Pfj{j} \psi \| \nonumber  \\
& \quad + C \sum_j ( \| K_j^2 f \|_{(m)} + \| e K_j f_{A,j} \|_{(m)} )   \| (\Hfm + 1) \psi \|  .\label{eq:pfsqhfm00-11}
\end{align} 
Then  \eqref{eq:pfsqhfm00-11} extends to all $\psi \in D(\Pf^2 + \Hfm)$, since  $\FF_{\fin}(\Cco(\Z_2 \times \R^3))$ is a core for $\Pf^2 + \Hfm$.  
Finally, collecting above estimates shows \eqref{eq:estontrafham}  and hence the claim follows.  
 
\end{proof} 

We note that if the assumptions of Lemma \ref{lem:asslem} hold, we will define $T_m(f;\xi)$ 
according to that lemma as the operator with domain $D(\Pf^2 +\Hfm)$. 
 The following lemma will be  used for the proof of  Proposition \ref{prop:gross_transformed_hamiltonian}, below,  to establish 
self-adjointness. We note that the self-adjointness could alternatively be shown    directly using alternative methods, cf.  \cite{hassie:20,LMS06,H02} and references therein.

\begin{lemma} \label{eq:lemdomain} 
There exists a constant $c_0$ such that for 
 $m \geq 0$ and   $f \in \hh$ with   $ \omega^2 f  \in  L^2_{(m)}(\Z_2 \times \R^3 )$ 
we have   
\begin{align} \label{eq:trafoofpfhfm}  
& \left\| \left( \frac{1}{2} \Pf^2 + \Hfm \right)  e^{\i\pi(f)} \psi \right\|  \leq  C (f)  \|  (  \Pf^2 + \Hfm + 1  )\psi \|  ,
\end{align} 
where 
\begin{align*}
C(f)  & =  c_0  \big(  
\big[  (1 +  \| \omega_m^{1/2}  f \| + \| \omega_m f \|_{(m)} 
    \big]^2  + \| \omega_m^2 f \|_{(m)}   \big).  
\end{align*} 
 In particular, $ e^{\i\pi(f)} D( \Pf^2 + \Hfm) = D( \Pf^2 + \Hfm )$.
\end{lemma}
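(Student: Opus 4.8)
The plan is to conjugate the operator $\tfrac12\Pf^2 + \Hfm$ by the unitary $e^{\i\pi(f)}$ and identify the conjugated operator explicitly, then estimate it term by term using the standard $\Hfm$-bounds on field operators from Lemma~\ref{lem:basichfest}. First I would recall that Weyl-type calculus (collected in Appendix~\ref{secexprea}) gives, on a suitable core such as $\FF_{\fin}(\Cco(\Z_2\times\R^3))$,
\[
e^{\i\pi(f)}\,\Hfm\,e^{-\i\pi(f)} = \Hfm + \phi(\omega_m f) + \tfrac12\sum_\lambda\!\int \omega_m|f|^2 ,
\qquad
e^{\i\pi(f)}\,\Pfj{j}\,e^{-\i\pi(f)} = \Pfj{j} + \phi(K_j f) + \tfrac12\sum_\lambda\!\int K_j|f|^2 ,
\]
which are precisely the kinds of identities already used implicitly in the derivation of \eqref{eq:defoftm} (they are the $e=0$, $A=B=0$, $S=0$ specialization, with $f$ playing the role there). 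Squaring the second identity and summing over $j$ expresses $e^{\i\pi(f)}(\tfrac12\Pf^2+\Hfm)e^{-\i\pi(f)}$ as $\tfrac12\Pf^2+\Hfm$ plus cross terms of the form $\Pfj{j}\phi(K_j f)$, $\phi(K_j f)\Pfj{j}$, $\phi(K_j f)^2$, a constant, plus $\phi(\omega_m f)$ and a constant. This is exactly the structure of $T_m(g;\xi)$ in \eqref{eq:defoftm} for a suitable choice of $g$ and modified $\xi$, so in fact the statement is a special case of Lemma~\ref{lem:asslem} once the hypotheses are checked.

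Next I would verify the hypotheses. The assumption $\omega^2 f\in L^2_{(m)}$, together with $f\in\hh$ and $\omega_m\le \omega + m$ (so $\omega_m^2 f \in L^2_{(m)}$ as well, on bounded supports or by the $(m)$-norm manipulation), yields $K_j f, K_j^2 f, \omega_m f \in L^2_{(m)}$ since $|K_j|\le \omega_m$; and the $L^1$ conditions $\omega_m|f|^2, K_j|f|^2 \in L^1$ follow from $\|\omega_m^{1/2}f\|<\infty$ and Cauchy--Schwarz, $\|\omega_m|f|^2\|_1 \le \|\omega_m^{1/2}f\|\,\|\omega_m^{1/2}f\|$ and $\|K_j|f|^2\|_1 \le \|f\|\,\|\omega_m f\|$. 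Feeding these into the bound \eqref{eq:estontrafham} of Lemma~\ref{lem:asslem} (with $e=0$, the $f_{B,j}$ and $f_{A,j}$ terms absent, and the relevant function being $f$ itself) produces a bound of the claimed form, with the constant controlled by $\|\omega_m^{1/2}f\|$, $\|\omega_m f\|_{(m)}$ (from the $c_j$ and the linear-in-$\Pf$ terms, squared) and $\|\omega_m^2 f\|_{(m)}$ (from the commutator term $\phi(\i K_j^2 f)$, here $\phi(K_j^2 f)$, noting $|K_j^2|\le\omega_m^2$). This gives \eqref{eq:trafoofpfhfm} on the core, hence on all of $D(\Pf^2+\Hfm)$ by density.

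For the final assertion, the inclusion $e^{\i\pi(f)}D(\Pf^2+\Hfm)\subseteq D(\Pf^2+\Hfm)$ is immediate from \eqref{eq:trafoofpfhfm}: if $\psi\in D(\Pf^2+\Hfm)$ then the right-hand side is finite, so $(\tfrac12\Pf^2+\Hfm)e^{\i\pi(f)}\psi$ is a genuine vector, i.e.\ $e^{\i\pi(f)}\psi\in D(\Pf^2+\Hfm)$ (using that $\tfrac12\Pf^2+\Hfm$ is selfadjoint, hence closed, with the same domain as $\Pf^2+\Hfm$). Applying the same bound with $f$ replaced by $-f$ gives the reverse inclusion, whence equality. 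I expect the only genuinely delicate point to be the bookkeeping for the commutator identity \eqref{eq:commest}-type term that produces the $\|\omega_m^2 f\|_{(m)}$ contribution --- one must be careful that the formal manipulations are justified on the core $\FF_{\fin}(\Cco(\Z_2\times\R^3))$ and that this space is invariant enough (or at least a core for all operators involved) for the extension by density to go through; everything else is routine application of Lemma~\ref{lem:basichfest} and Lemma~\ref{lem:asslem}.
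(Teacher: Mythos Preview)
Your proposal is correct and follows essentially the same route as the paper: conjugate via the Weyl/$\dG$ identities from Appendix~\ref{secexprea}, recognize the result as the $e=0$, $\xi=0$, $s=0$ special case of $T_m(f;\xi)$, and invoke Lemma~\ref{lem:asslem} on the core $\FF_{\fin}(\Cco(\Z_2\times\R^3))$. Your treatment of the domain equality via $f\mapsto -f$ is exactly what the paper means when it appeals to ``unitarity of $e^{\i\pi(f)}$'' for the reverse inclusion, and your explicit verification of the $L^2_{(m)}$ and $L^1$ hypotheses (which the paper leaves implicit) is a welcome addition.
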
 
\begin{proof}  We apply  
 \cref{lem:dgammaw} to the  %
operators
 $\Pfj{j}$ and $\Hfm$,
 and find  on $ \FF_{\fin}(  C_\c(\Z_2 \times \R^3)  ) \subset D( \Pf^2 +\Hfm)$
\begin{align}
&  e^{-\i\pi(f)}    (  \Pf^2 + \Hfm)  e^{\i\pi(f)}   \label{eq:pfsqhfm}  \\
& \quad =  \frac{1}{2} e^{-\i\pi(f)}  \Pf e^{\i\pi(f)}   \cdot e^{-\i\pi(f)}  \Pf  e^{\i\pi(f)}  +  e^{-\i\pi(f)}   \Hfm e^{\i\pi(f)}   \nonumber \\
 & \quad = \frac{1}{2} \sum_{j=1}^3 ( P_{f,j} + \phi(  K_j   f)  +  \frac{1}{2} \mathfrak{s}( f , K_j  f ))^2   + \Hfm + \phi( \omega_m f ) + \frac{1}{2} \mathfrak{s}(  f  , \omega_m f  )  .  \nonumber 
\end{align} 
Now the  bound   \eqref{eq:trafoofpfhfm}    follows 
from Lemma \ref{lem:asslem} with $e=0$, $\xi=0$, and $s=0$
using  the fact that  $\FF_{\fin}(C_\c(\Z_2 \times \R^3))$ is a core for $\Pf^2 + \Hfm$.  
 This implies that  $ e^{\i\pi(f)} D( \Pf^2 + \Hfm) \subset  D( \Pf^2 + \Hfm )$. 
Finally, the unitarity of  $e^{\i\pi(f)}$ implies that we have in fact equality.
\end{proof}

\begin{prop}
\label{prop:gross_transformed_hamiltonian}
Let $m \geq 0$, and let  $f \in \hs$  satisfy 
$K_j^2 f  \in L^2_{(m)}(\Z_2 \times \R^3) $, $j=1,2,3$. %
Then we have on $D(\Pf^2 +\Hfm)$
\muun
{
\label{eq:gross_transformed_hamiltonian}
e^{-\i\pi(f)} &H_m(\xi) e^{\i\pi(f)}  =  T_m(f;\xi) 
}
and the operator is self-adjoint on $D( \Pf^2 + \Hfm )$. 
\end{prop}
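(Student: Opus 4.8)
The plan is to check the operator identity \eqref{eq:gross_transformed_hamiltonian} on a convenient core, extend it to all of $D(\Pf^2+\Hfm)$ by the relative bounds already established, and then read off self-adjointness from that of $H_m(\xi)$. To begin, one verifies that the hypothesis $K_j^2 f \in L^2_{(m)}(\Z_2\times\R^3)$, together with $f\in\hs$ and the fact that the form factors $f_{A,j},f_{B,j}$ coming from \eqref{e:formfac1} are bounded with compact support, already implies all of the conditions \eqref{eq:ass1}--\eqref{eq:ass2} of \cref{lem:asslem} as well as the condition $\omega_m^2 f\in L^2_{(m)}(\Z_2\times\R^3)$ needed in \cref{eq:lemdomain}; here one uses the pointwise bounds $|k_j|\le\tfrac12(1+k_j^2)$ and $\omega_m^2=m^2+\sum_j K_j^2$ to control $K_j f,\omega_m f,\omega_m^2 f$ in terms of $f$ and $K_j^2 f$, and Cauchy--Schwarz together with $f_{A,j},f_{B,j}\in L^2_{(m)}$ for the $L^1$-conditions. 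Consequently $T_m(f;\xi)$ is a well-defined $(\Pf^2+\Hfm)$-bounded operator on $D(\Pf^2+\Hfm)$ by \cref{lem:asslem}, and $e^{\pm\i\pi(f)}D(\Pf^2+\Hfm)=D(\Pf^2+\Hfm)$ by \cref{eq:lemdomain}.

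Next I would establish \eqref{eq:gross_transformed_hamiltonian} on the core $\C^{2s+1}\otimes\FF_{\fin}(\Cco(\Z_2\times\R^3))$, which is a core for $\Pf^2+\Hfm$ and hence, since $\Pf^2+\Hfm$ and $H_m(\xi)$ have the same domain by \cref{hyp:op0} (so equivalent graph norms), a core for $H_m(\xi)$. Conjugating $H_m(\xi)$ summand by summand with $e^{\i\pi(f)}$ and using the Weyl-type conjugation relations of \cref{lem:dgammaw} and its analogue for field operators, one finds: each $\dG(K_j)=\Pfj{j}$ and $\Hfm=\dG(\omega_m)$ acquires an additive field-operator summand ($\phi(K_j f)$ resp. $\phi(\omega_m f)$) plus a constant, while each field operator $\phi(g)$ with $g=f_{A,j}$ or $f_{B,j}$ acquires only a constant; all these constants are finite by \eqref{eq:ass2} and are exactly the sesquilinear-form expressions built from $\mathfrak{s}(f,K_j f)$, $\mathfrak{s}(f,\omega_m f)$, $\Re\mathfrak{s}(f,f_{A,j})$, $\Re\mathfrak{s}(f,f_{B,j})$. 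Since conjugation is multiplicative on the core, $\tfrac12(\xi-\Pf+eA)^2$ is carried into $\tfrac12$ times the sum of the squares of the shifted linear expressions, whose constant parts assemble precisely into the $c_j(f)$ of \eqref{defofcj}; collecting all terms reproduces \eqref{eq:defoftm} and gives $e^{-\i\pi(f)}H_m(\xi)e^{\i\pi(f)}=T_m(f;\xi)$ on the core.

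Both sides of this identity are restrictions of operators that are defined and $(\Pf^2+\Hfm)$-bounded on all of $D(\Pf^2+\Hfm)$ --- the right-hand side by \cref{lem:asslem}, the left-hand side because $e^{\pm\i\pi(f)}$ preserves $D(\Pf^2+\Hfm)$ (\cref{eq:lemdomain}) and $H_m(\xi)$ is $(\Pf^2+\Hfm)$-bounded there (\cref{hyp:op0}). As $\C^{2s+1}\otimes\FF_{\fin}(\Cco(\Z_2\times\R^3))$ is a core for $\Pf^2+\Hfm$, two such operators agreeing on it agree on all of $D(\Pf^2+\Hfm)$, so \eqref{eq:gross_transformed_hamiltonian} holds there. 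Finally, $H_m(\xi)$ is self-adjoint on $D(\Pf^2+\Hfm)$ by \cref{hyp:op0}, $e^{\i\pi(f)}$ is unitary, and $e^{-\i\pi(f)}D(\Pf^2+\Hfm)=D(\Pf^2+\Hfm)$ by \cref{eq:lemdomain}; hence $e^{-\i\pi(f)}H_m(\xi)e^{\i\pi(f)}$, and therefore $T_m(f;\xi)$, is self-adjoint on $D(\Pf^2+\Hfm)$.

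I expect the middle step to be the main obstacle: since $e^{\i\pi(f)}$ does not preserve the finite-particle subspace, one cannot literally expand $H_m(\xi)e^{\i\pi(f)}\psi$ term by term for $\psi$ in the core, so the term-by-term conjugation --- in particular the multiplicativity $e^{-\i\pi(f)}(\xi_j-\Pfj{j}+eA_j)^2 e^{\i\pi(f)}=(e^{-\i\pi(f)}(\xi_j-\Pfj{j}+eA_j)e^{\i\pi(f)})^2$ --- must be justified by passing through $D(\Pf^2+\Hfm)$ and using closability, keeping careful track of the domains of the intermediate unbounded operators (including the terms $\phi(K_j^2 f)$ and $\phi(K_j f_{A,j})$ produced by $[\Pfj{j},\phi(K_j f+e f_{A,j})]$, cf.\ \eqref{eq:commest}). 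This, together with the bookkeeping needed to recognize that the accumulated c-numbers reassemble exactly into the $\mathfrak{s}$-terms and the constants $c_j(f)$ of \eqref{eq:defoftm}, is routine given \cref{lem:asslem} but is where essentially all the work lies.
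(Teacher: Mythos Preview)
Your proposal is correct and follows essentially the same route as the paper: compute the conjugation on $\FF_{\fin}(\Cco(\Z_2\times\R^3))$ using \cref{lem:bratt} for the field operators and \cref{lem:dgammaw} for $\Pfj{j}$ and $\Hfm$, recognize the result as $T_m(f;\xi)$, and deduce self-adjointness on $D(\Pf^2+\Hfm)$ from \cref{hyp:op0} and the domain invariance of \cref{eq:lemdomain}. You are somewhat more careful than the paper in two respects: you explicitly check that $K_j^2 f\in L^2_{(m)}$ together with $f\in\hh$ implies all of \eqref{eq:ass1}--\eqref{eq:ass2} and the hypothesis of \cref{eq:lemdomain}, and you spell out the extension-from-a-core argument; the paper simply writes the computation on the core and then invokes self-adjointness, implicitly relying on the closedness of $H_m(\xi)$ and the boundedness statement of \cref{lem:asslem}. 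The domain subtlety you flag in the multiplicative step is real but not fatal: the conjugation identities of \cref{lem:bratt} and \cref{lem:dgammaw} hold on domains larger than $\FF_{\fin}(\Cco)$, and after one application the resulting vector, while possibly no longer in $\FF_{\fin}(\Cco)$, still lies in $\FF_{\fin}(\hh)\cap D(\Pf^2+\Hfm)$ where the second factor can be applied; the paper, like you, does not dwell on this and treats it as routine.
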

\begin{proof} 
We apply  \cref{lem:bratt} to the field operators $A$ and $B$ 
and 
 \cref{lem:dgammaw} to the  %
operators
 $\Pfj{j}$ 
 and  $\Hfm$, and find on $ \FF_{\fin}(  \Cco(\Z_2 \times \R^3)  ) \subset D( \Pf^2 +\Hfm)$
\begin{align}
&  e^{-\i\pi(f)}  H_m(\xi)  e^{\i\pi(f)}   \label{eq:pfsqhfm0}  \\
& \quad =  \frac{1}{2} e^{-\i\pi(f)}  (\xi - \Pf + e A) e^{\i\pi(f)}   \cdot e^{-\i\pi(f)}   (\xi - \Pf + e A) e^{\i\pi(f)}  + e S \cdot  e^{\i\pi(f)}     B  e^{\i\pi(f)}   \nonumber   \\
& \qquad \qquad +   e^{-\i\pi(f)}   \Hfm e^{\i\pi(f)}   \nonumber \\
 & \quad =  \frac{1}{2}\sum_{j=1}^3 \left( \xi_j - P_{f,j} + eA_j - \phi({K_j} f) -  2e {\Re}  \mathfrak{s}(f ,f_{A,j}) - \mathfrak{s}(f, {K_j}  f ) \right)^2 \nonumber \\ 
  &\quad \qquad + e\sum_{j=1}^3 S_j \left(B_j - 2 {\Re} \mathfrak{s}(f , f_{B,j}  )  \right)  +  \Hfm + \phi(\omega_m f) + \mathfrak{s}( f , \omega_m  f)  \nonumber  \\
& =
T_m(f;\xi)  \nonumber  ,
\end{align} 
where the  last identity is simply the definition  \eqref{eq:defoftm}  of $T_m(f;\xi)$.
By  Theorem    \ref{hyp:op0}    we know that $H_m(\xi)$ is self-adjoint on  $D(\Pf^2 + \Hf)$. 
By unitarity of  $e^{- \i \pi(f)}$ if follows that $e^{-\i\pi(f)} H_m(\xi) e^{\i\pi(f)}$ is self-adjoint 
on the domain $ e^{-\i\pi(f)}D(\Pf^2 + \Hf)$. But this domain equals $D(\Pf^2 + \Hf)$ by
Lemma  \ref{eq:lemdomain}. 
\end{proof}

Next we want to consider  limits  of expressions of the form  \eqref{conjHlim}.     For this we  prove the following lemma. 

\begin{lemma} \label{eq:thmconren2} Suppose $f$ satisfies the assumptions of 
Lemma \ref{lem:asslem} and let $g \in \hh$. Then   for all $\psi \in D(\Pf^2 + \Hf)$ we have 
\begin{equation} 
\label{eq:HHm_est0}
\nn{(T_0(f;\xi) -e^{ - \i \pi(g)}  H(\xi) e^{ \i \pi(g)} )\psi } \leq C(g,f;\xi) \nn{(H_{\f} + \frac{1}{2} \Pf^2 + 1 )\psi} ,  %
\end{equation} 
where 
\begin{align}
C(g,f;\xi) & = |D(f,g)| +  \sqrt 2 \| \omega (f-g) \|_{(0)} \label{eq:diffofconvham}   \\
& + \sum_{j=1}^3  \big\{  ( |d_j(f,g) |+ \sqrt{2} \| K_j(f-g)\|_{(0)} ) \nonumber  \\
& \times (  2 + \sqrt{2}  (  \| e f_{A,j} - K_j f \|_{(0)} +  \| e f_{A,j} - K_j g  \|_{(0)})    + |c_j(f)| + | c_j(g) | )  \big\} \nonumber  \\
& + \sum_{j=1}^3  \sqrt{2} \| K_j^2(f-g)\|_{(0)} ,  \nonumber
\end{align}
with  
\begin{align*}
c_{j}(h)  &:= \xi_j  - 2e {\Re} \mathfrak{s}(h  ,f_{A,j}) - \mathfrak{s}( h , K_j h), \\
  D(f ,g)  & := - 2 \sum_{j=1}^3   e \S_j  {\Re} \mathfrak{s}(f-g,f_{B,j})  +  \mathfrak{s}(f-g,\omega f)  +  \mathfrak{s}(g , \omega ( f - g ) ),  \\ 
 d_j(f,g)  &  := \c_j(f) - c_j(g) =  - 2e {\Re} \mathfrak{s}(f-g  ,f_{A,j}) - \mathfrak{s}( f - g  , K_j f ) - \mathfrak{s}( g , K_j ( f - g ) ) .
\end{align*}

\end{lemma}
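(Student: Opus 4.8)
The plan is to reduce \eqref{eq:HHm_est0} to an estimate of the difference $T_0(f;\xi)-T_0(g;\xi)$ of two Gross--transformed fiber Hamiltonians, and then to bound that difference term by term. First I would dispose of the trivial case: if $C(g,f;\xi)=\infty$ there is nothing to prove, so assume $C(g,f;\xi)<\infty$. Since $f$ satisfies the hypotheses of Lemma \ref{lem:asslem}, finiteness of $\nn{\omega(f-g)}_{(0)}$, $\nn{K_j(f-g)}_{(0)}$ and $\nn{K_j^2(f-g)}_{(0)}$ forces $\omega g$, $K_j g$, $K_j^2 g\in L^2_{(0)}(\Z_2\times\R^3)$, and finiteness of $D(f,g)$ and of the $d_j(f,g)$ together with \eqref{eq:ass2} for $f$ forces the corresponding $L^1$-conditions for $g$; hence $g$ itself satisfies the hypotheses of Lemma \ref{lem:asslem} and of Proposition \ref{prop:gross_transformed_hamiltonian}. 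By the latter, $e^{-\i\pi(g)}H(\xi)e^{\i\pi(g)}=T_0(g;\xi)$ on $D(\Pf^2+\Hf)$, so it remains to bound $\nn{(T_0(f;\xi)-T_0(g;\xi))\psi}$, and by density it suffices to treat $\psi$ in the core $\FF_{\fin}(\Cco(\Z_2\times\R^3))$.

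Next I would expand the difference explicitly. Abbreviating $L_j(h):=c_j(h)-\Pfj{j}+\phi(ef_{A,j}-K_jh)$, formula \eqref{eq:defoftm} reads
\[
T_0(h;\xi)=\tfrac12\sum_{j=1}^3 L_j(h)^2+e\sum_{j=1}^3 S_j\bigl(B_j-2\Re\mathfrak{s}(h,f_{B,j})\bigr)+\Hf+\phi(\omega h)+\mathfrak{s}(h,\omega h).
\]
From the definitions one gets $L_j(f)-L_j(g)=d_j(f,g)-\phi(K_j(f-g))$ (using $\mathfrak{s}(f,K_jf)-\mathfrak{s}(g,K_jg)=\mathfrak{s}(f-g,K_jf)+\mathfrak{s}(g,K_j(f-g))$ inside $c_j$) and $\mathfrak{s}(f,\omega f)-\mathfrak{s}(g,\omega g)=\mathfrak{s}(f-g,\omega f)+\mathfrak{s}(g,\omega(f-g))$; the $B_j$-terms cancel, so
\[
T_0(f;\xi)-T_0(g;\xi)=\tfrac12\sum_{j=1}^3\bigl(L_j(f)(L_j(f)-L_j(g))+(L_j(f)-L_j(g))L_j(g)\bigr)+D(f,g)+\phi(\omega(f-g)),
\]
with $D(f,g)$ exactly the expression in the statement (the spin-term difference plus the two scalar pieces).

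The final step is to bound each summand by the field-operator relative bounds of Lemma \ref{lem:basichfest}, together with $\nn{\Pfj{j}\psi}\le\nn{(\Pf^2+1)^{1/2}\psi}$ and the commutator identity \eqref{eq:commest} (Lemma \ref{lem:comdgamma}), in exactly the same manner as in the proof of Lemma \ref{lem:asslem}. Concretely: $D(f,g)$ (recall $\nn{S_j}\le1$) contributes $|D(f,g)|$ against $\nn\psi$; $\phi(\omega(f-g))$ contributes $\sqrt2\nn{\omega(f-g)}_{(0)}$ against $\nn{(\Hf+1)^{1/2}\psi}$; each factor $L_j(h)$ is $\bigl(2+\sqrt2\nn{ef_{A,j}-K_jh}_{(0)}+|c_j(h)|\bigr)$-bounded relative to $(\Hf+\tfrac12\Pf^2+1)^{1/2}$; the difference factor $L_j(f)-L_j(g)$ is $\bigl(|d_j(f,g)|+\sqrt2\nn{K_j(f-g)}_{(0)}\bigr)$-bounded relative to $(\Hf+1)^{1/2}$; and the single non-commuting product occurring, $\Pfj{j}\phi(K_j(f-g))$, is rearranged via \eqref{eq:commest} with $e=0$ at the cost of $\phi(\i K_j^2(f-g))$, which yields the remaining term $\sqrt2\sum_j\nn{K_j^2(f-g)}_{(0)}$. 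Multiplying out $\tfrac12\sum_j\bigl(L_j(f)(L_j(f)-L_j(g))+(L_j(f)-L_j(g))L_j(g)\bigr)$ and absorbing one factor $(\Hf+\tfrac12\Pf^2+1)^{1/2}$ from each of the two factors into the full power $\Hf+\tfrac12\Pf^2+1$ on $\psi$ then produces \eqref{eq:HHm_est0} with the stated constant $C(g,f;\xi)$.

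I expect the main obstacle to be precisely this last bookkeeping: keeping track of the operator orderings in the difference of squares, recovering the commutator contribution $\sqrt2\sum_j\nn{K_j^2(f-g)}_{(0)}$, and checking that every resulting term is genuinely dominated by $\nn{(\Hf+\tfrac12\Pf^2+1)\psi}$ and not by a higher power of $\Hf+\tfrac12\Pf^2$; the individual relative bounds themselves are routine and already appear in the proof of Lemma \ref{lem:asslem}. A secondary minor point is the identification, in the first step, of $g$ as a function meeting the hypotheses of Proposition \ref{prop:gross_transformed_hamiltonian} (needed for the identity $e^{-\i\pi(g)}H(\xi)e^{\i\pi(g)}=T_0(g;\xi)$), since the lemma only assumes $g\in\hh$ a priori.
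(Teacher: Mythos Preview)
Your proposal is correct and follows essentially the same route as the paper: your $L_j(h)$ is the paper's $F_j(h)$, the difference of squares is factored identically as $F_j(f)(F_j(f)-F_j(g))+(F_j(f)-F_j(g))F_j(g)$, and the same tools (Lemma \ref{lem:basichfest} for the field factors, Lemma \ref{lem:comdgamma} for the $\Pfj{j}\phi(K_j(f-g))$ commutator) produce the stated constant. Your preliminary step---reducing to $C(g,f;\xi)<\infty$ and deducing from this the regularity of $g$ needed to invoke Proposition \ref{prop:gross_transformed_hamiltonian}---is a genuine addition: the paper simply applies that proposition to $g$ without comment, which is harmless in the intended application (where $g=h_{m,\xi}$ manifestly has the required decay) but leaves the lemma as stated slightly underspecified; your reduction closes that gap.
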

\begin{proof} 
First we  take the difference of the operators given in \eqref{eq:gross_transformed_hamiltonian}.
Using   Proposition \ref{prop:gross_transformed_hamiltonian} and  \eqref{eq:defoftm} we find on  $D(\Pf^2 +\Hf)$ 
\begin{align}
&T_0(f;\xi) -e^{ - \i \pi(g)}  H(\xi) e^{ \i \pi(g)} \nonumber  \\
& \quad =  T_0(f;\xi) - T_0(g;\xi)  \nonumber  \\
	& \quad = \frac{1}{2}\sum_{j=1}^3 \left\{ F_{j}(f)^2      - F_{j}(g))^2  \right\}  + \Phi(\omega(f-g))  +  D(f,g) \nonumber  \\
	& \quad = \frac{1}{2}\sum_{j=1}^3 \left\{ F_{j}(f)( F_{j}(f) - F_{j}(g)) + (F_{j}(f)  - F_{j}(g)) F_{j}(g) \right\} \label{eq:tranfoh} 
\\ &\qquad  + \Phi(\omega(f-g))  +  D(f,g)  \label{eq:tranfoh2}
\end{align} 
where we defined  for  $j=1,2,3$,
\begin{align*} 
F_{j}(h)  &:=  - \Pfj{j} + eA_j -  \Phi(K_j h ) + c_j(h)   . 
\end{align*} 
Now the terms in the line \eqref{eq:tranfoh2} are easily 
estimated using  Lemma \ref{lem:basichfest} and 
 contribute  \eqref{eq:diffofconvham}  to the constant.  Now let us 
estimate   the term in line  \eqref{eq:tranfoh}. 
Taking the difference we find 
\begin{align*}
& F_{j}(f)  - F_{j}(g) = -\Phi(K_j(f-g)) + d_j(f,g) .
\end{align*} 
Recall the notation   $ A_j = \Phi( f_{A,j} )$. First we estimate the first  term in the   sum in line   \eqref{eq:tranfoh}.
By the triangle inequality 
\begin{align}
& \|  F_{j}(f)  ( F_{j}(f)  - F_{j}(g) ) \psi \|  \nonumber \\
&  \quad \leq \|   \Pfj{j}   ( F_{j}(f)  - F_{j}(g) )  \psi \|  \label{eq:basicest1}  \\
&  \quad   + \|\Phi(e f_{A,j} - K_j f )    ( F_{j}(f)  - F_{j}(g) )   \psi \| \label{eq:basicest2}  \\
& \quad +   \| c_j(f)  ( F_{j}(f)  - F_{j}(g) ) \psi \| \label{eq:basicest3} .  
 \end{align} 
Now we estimate  the terms on the right hand side.   Using Lemma \ref{lem:basichfest} and Lemma \ref{lem:comdgamma}  we find  
\begin{align}
\text{   \eqref{eq:basicest1}}   & \leq 
  \left\{  |d_j(f,g)| + \sqrt{2} \| K_j(f-g)\|_{(0)} \right\}   \| (\Hf + 1 )^{1/2} P_{f,j} \psi \|  \nonumber \\
& \quad    + \sqrt{2} \| K_j^2(f-g)\|_{(0)} \| (\Hf+1)^{1/2} \psi \|  . \label{eq:extratermin} 
\end{align}
Again by   Lemma \ref{lem:basichfest}  we have 
\begin{align*} 
\text{   \eqref{eq:basicest2}}     & \leq  ( |d_j(f,g) |+ \sqrt{2} \| K_j(f-g)\|_{(0)} ) \sqrt{2}   \| e f_{A,j} - K_j g \|_{(0)}  \| ( \Hf + 1 ) \psi \| ,
\\
\text{   \eqref{eq:basicest3}}   & \leq (|d_j(g,f)| + \sqrt{2} \| K_j(f-g) \|_{(0)} )  |c_j(f) | \| (  \Hf + 1  )^{1/2}  \psi  \|.
 \end{align*} 
The second  term in the sum of    \eqref{eq:tranfoh} is estimated analogously with $g$ and $f$ interchanged, with the only difference that we do not need 
the   term in line   \eqref{eq:extratermin}.
The lemma now follows by collecting estimates and observing  that $d_j(g,f) = - d_j(f,g)$. 
\end{proof}

Now \cref{th:renormalized self-adjoint}  will follow as an immediate consequence of the following proposition. 
Furthermore, we will show below that it also  implies \cref{thm:existoftrafoham}. 
\begin{proposition}\label{convham} 
Let $E$ be differentiable in $\xi$ and $|\nabla E(\xi)| < 1$. 
 If  $m > 0$ and $|\nabla E_m(\xi)| < 1$, then  $h_{m,\xi }\in \hh$.  Furthermore, 
assume $(m_j)_{j \in \N}$ is a sequence of positive  numbers converging to zero  and 
 $\nabla E_{m_j}(\xi) \to \nabla E(\xi)$.  Then 
\begin{equation} 
\label{eq:HHm_estm}
 \left(e^{ - \i \pi(h_{m_j,\xi})}  H(\xi) e^{ \i \pi(h_{m_j,\xi})} -   T_0(h_{0,\xi};\xi) \right) (\Pf^2 + \Hf + 1 )^{-1} \to 0  , \quad ( j \to  \infty ) .
\end{equation}
\end{proposition}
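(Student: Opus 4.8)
The plan is to reduce the statement to the machinery already assembled in this section; the only genuinely new ingredient is a dominated‑convergence argument for the coupling functions $h_{m,\xi}$, together with a uniform infrared bound on them that rests on the strict inequality $|\nabla E(\xi)|<1$.

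First I would settle the membership claim and check the hypotheses of \cref{lem:asslem}. Because $|\epsilon_\lambda(k)| = 1$ and $|k| \le \omega_m(k)$, the denominator in \eqref{eq:defofh} satisfies
\[
\omega_m(k) - k\cdot\nabla_\xi E_m(\xi) \ \ge\ \omega_m(k)\bigl(1 - |\nabla_\xi E_m(\xi)|\bigr),
\]
which for $m>0$ and $|\nabla_\xi E_m(\xi)|<1$ is bounded below by $m\bigl(1-|\nabla_\xi E_m(\xi)|\bigr)>0$. Hence $|h_{m,\xi}(\lambda,k)| \le \frac{|e|\,|\nabla_\xi E_m(\xi)|}{1-|\nabla_\xi E_m(\xi)|}\,\rho(k)\,|k|^{-1/2}\,\omega_m(k)^{-1}$, and since $\rho$ is bounded with compact support by \eqref{e:formfac1} and $\rho(k)|k|^{-1/2}\in L^2(\Z_2\times\R^3)$ in dimension three, this gives $h_{m,\xi}\in\hh$. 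The function $h_{m,\xi}$ is real‑valued ($\rho$, $\epsilon_\lambda(k)$, $\nabla_\xi E_m(\xi)$ being real), and the same elementary estimates (using $|k|\le\omega_m(k)$ repeatedly) show $K_j h_{m,\xi},K_j^2 h_{m,\xi},\omega_m h_{m,\xi}\in L^2_{(m)}$ and that the $L^1$‑conditions \eqref{eq:ass2} hold, so $h_{m,\xi}$ meets the assumptions of \cref{lem:asslem}. In the case $m=0$ the hypothesis $|\nabla_\xi E(\xi)|<1$ only gives $\omega(k)-k\cdot\nabla_\xi E(\xi)\ge|k|(1-|\nabla_\xi E(\xi)|)$ and the weaker pointwise bound $|h_{0,\xi}(\lambda,k)|\le\frac{|e|\,|\nabla_\xi E(\xi)|}{1-|\nabla_\xi E(\xi)|}\,\rho(k)\,|k|^{-3/2}$, which is not in $\hh$; however, one still checks all assumptions of \cref{lem:asslem} with $m=0$ (near the origin, in dimension three, $K_j h_{0,\xi},\omega h_{0,\xi}\sim|k|^{-1/2}$ and $K_j^2 h_{0,\xi}\sim|k|^{1/2}$ lie in $L^2_{(0)}$, while $\omega|h_{0,\xi}|^2$, $K_j|h_{0,\xi}|^2$, $h_{0,\xi}\overline{f_{A,j}}$, $h_{0,\xi}\overline{f_{B,j}}$ lie in $L^1$). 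Consequently $T_0(h_{0,\xi};\xi)$ and each $T_0(h_{m_j,\xi};\xi)$ are well‑defined $\Pf^2+\Hf$‑bounded operators on $D(\Pf^2+\Hf)$; moreover $K_j^2 h_{m_j,\xi}\in L^2_{(0)}$ and $\omega^2 h_{m_j,\xi}\in L^2_{(0)}$, so \cref{prop:gross_transformed_hamiltonian} and \cref{eq:lemdomain}, applied with mass $0$ to $H(\xi)=H_0(\xi)$, give that $e^{-\i\pi(h_{m_j,\xi})}H(\xi)e^{\i\pi(h_{m_j,\xi})}=T_0(h_{m_j,\xi};\xi)$ is self‑adjoint on $D(\Pf^2+\Hf)$.

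Next I would apply \cref{eq:thmconren2} with $f=h_{0,\xi}$ and $g=h_{m_j,\xi}$, which yields for $\psi\in D(\Pf^2+\Hf)$ the bound $\|(T_0(h_{0,\xi};\xi)-e^{-\i\pi(h_{m_j,\xi})}H(\xi)e^{\i\pi(h_{m_j,\xi})})\psi\|\le C(h_{m_j,\xi},h_{0,\xi};\xi)\,\|(\Hf+\tfrac12\Pf^2+1)\psi\|$. Since $\Pf^2$ and $\Hf$ commute and are nonnegative, $\|(\Hf+\tfrac12\Pf^2+1)(\Pf^2+\Hf+1)^{-1}\|\le1$, so the operator norm of the left side of \eqref{eq:HHm_estm} is $\le C(h_{m_j,\xi},h_{0,\xi};\xi)$, and the proof reduces to showing this constant tends to $0$ as $j\to\infty$. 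For this I observe that $h_{m_j,\xi}\to h_{0,\xi}$ pointwise, because $\omega_{m_j}(k)\to|k|$ and $\nabla_\xi E_{m_j}(\xi)\to\nabla_\xi E(\xi)$ by hypothesis; and because $|\nabla E(\xi)|<1$ and $\nabla E_{m_j}(\xi)\to\nabla E(\xi)$, there are $\delta>0$ and $j_0$ with $|\nabla_\xi E_{m_j}(\xi)|\le1-\delta$ for all $j\ge j_0$, whence $\omega_{m_j}(k)-k\cdot\nabla_\xi E_{m_j}(\xi)\ge\delta|k|$ and
\[
|h_{m_j,\xi}(\lambda,k)| \ \le\ \frac{|e|}{\delta}\,\frac{\rho(k)}{|k|^{3/2}} \ =:\ G(k), \qquad j\ge j_0,
\]
a fixed, compactly supported dominator. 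Now $C(g,f;\xi)$ in \cref{eq:thmconren2} is a finite sum of the norms $\|\omega(h_{0,\xi}-h_{m_j,\xi})\|_{(0)}$, $\|K_j(h_{0,\xi}-h_{m_j,\xi})\|_{(0)}$, $\|K_j^2(h_{0,\xi}-h_{m_j,\xi})\|_{(0)}$ and the scalars $|D(h_{0,\xi},h_{m_j,\xi})|$, $|d_j(h_{0,\xi},h_{m_j,\xi})|$ — each a finite combination of forms $\mathfrak{s}(\cdot,\cdot)$ one of whose arguments is the difference $h_{0,\xi}-h_{m_j,\xi}$ — multiplied by factors $\|ef_{A,j}-K_j h_{m_j,\xi}\|_{(0)}$, $|c_j(h_{m_j,\xi})|$ and their $h_{0,\xi}$‑analogues, which stay bounded in $j$ since they converge to the finite quantities attached to $h_{0,\xi}$. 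Using $|h_{0,\xi}-h_{m_j,\xi}|\le2G$ for $j\ge j_0$, each integrand occurring in those norms and forms is dominated by a fixed integrable, compactly supported function (multiples of $|k|^{2a}G^2$ and $\omega^{-1}|k|^{2a}G^2$ with $a\in\{1,2\}$ for the norms; $2G$ times a fixed $L^2_{(0)}$‑ or $L^1$‑bounded function such as $f_{A,j}$, $f_{B,j}$, $\omega h_{0,\xi}$, or $\omega h_{m_j,\xi}$ for the forms) and tends to $0$ pointwise, so dominated convergence gives $C(h_{m_j,\xi},h_{0,\xi};\xi)\to0$ and hence \eqref{eq:HHm_estm}.

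Finally, the consequences referred to in the surrounding text follow: each $e^{-\i\pi(h_{m_j,\xi})}H(\xi)e^{\i\pi(h_{m_j,\xi})}$ is self‑adjoint on $D(\Pf^2+\Hf)$ and uniformly bounded below, being unitarily equivalent to $H(\xi)$ (cf.\ \cref{hyp:op0}); \cref{eq:lemdomain} applied with $m=0$ to $\mp h_{m_j,\xi}$, whose relevant norms are bounded in $j$, shows $(\Pf^2+\Hf+1)\bigl(e^{-\i\pi(h_{m_j,\xi})}H(\xi)e^{\i\pi(h_{m_j,\xi})}-z\bigr)^{-1}$ is bounded uniformly in $j$ for $z$ real and below $E(\xi)-1$; combining this with \eqref{eq:HHm_estm} and a Neumann‑series argument makes $T_0(h_{0,\xi};\xi)-z$ a bijection of $D(\Pf^2+\Hf)$ onto $\HH$ for such $z$, so $T_0(h_{0,\xi};\xi)$, being symmetric, is self‑adjoint (this is \cref{th:renormalized self-adjoint}), and the second resolvent identity upgrades \eqref{eq:HHm_estm} to norm resolvent convergence (this is \cref{thm:existoftrafoham}). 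The step I expect to be the main obstacle is the uniform infrared control: deriving $\omega_{m_j}(k)-k\cdot\nabla E_{m_j}(\xi)\ge\delta|k|$ from $|\nabla E(\xi)|<1$ together with $\nabla E_{m_j}(\xi)\to\nabla E(\xi)$, which is exactly what provides the common dominator $G(k)\sim\rho(k)|k|^{-3/2}$; the remainder is routine bookkeeping with the explicit constant of \cref{eq:thmconren2} and three‑dimensional near‑origin $L^1/L^2$ estimates.
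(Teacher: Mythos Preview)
Your proposal is correct and follows essentially the same route as the paper: apply \cref{eq:thmconren2} with $f=h_{0,\xi}$ and $g=h_{m_j,\xi}$, and show $C(h_{m_j,\xi},h_{0,\xi};\xi)\to 0$ by dominated convergence, using the uniform lower bound $\omega_{m_j}(k)-k\cdot\nabla E_{m_j}(\xi)\ge\delta|k|$ that comes from $|\nabla E(\xi)|<1$ and the convergence of gradients. Your write-up is more detailed than the paper's (which compresses the dominated-convergence step into a single sentence), and you also sketch the downstream consequences for \cref{th:renormalized self-adjoint} and \cref{thm:existoftrafoham}, which the paper proves separately but by the same mechanism you outline.
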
 
\begin{proof} For $m > 0$ and $|\nabla E_m(\xi)|<1$  we have $h_{m,\xi} \in \hh$, since $\omega_m(k) - k \nabla E_m(\xi) >  \omega_m(k) (1-  |\nabla E_m(\xi)|) $. 
Statement  \eqref{eq:HHm_estm} follows by inserting $h_{0,\xi}$  for $f$ and  $h_{m,\xi}$, $m>0$, for $g$  
in  Lemma \ref{eq:thmconren2} and by observing  that  the constant  $ C(h_{m,\xi},h_{0,\xi} ; \xi)$
given in \eqref{eq:diffofconvham}
 tends to 
zero as $m\downarrow 0$. This can be seen by looking at the explicit expressions
and using  dominated convergence. 
\end{proof}
\begin{proof}[Proof of \cref{th:renormalized self-adjoint}]
The operators  $e^{ - \i \pi(h_{m,\xi})}  H(\xi) e^{ \i \pi(h_{m,\xi})}$ 
are self-adjoint on $D(\Pf^2 + \Hf)$  by \cref{prop:gross_transformed_hamiltonian}.
On the other hand    $ T_0(h_{0,\xi};\xi)$ is  $\Pf^2 + \Hf$ bounded by Lemma  \ref{lem:asslem}.
So the  self-adjointness of  $T_0(h_{0,\xi};\xi)$ follows 
in view of  %
 Kato-Rellich, and the following estimate 
\begin{align} \label{saofmain} 
& \| ( e^{ - \i \pi(h_{m,\xi})}  H(\xi) e^{ \i \pi(h_{m,\xi})} - T_0(h_{0,\xi};\xi) )(e^{ - \i \pi(h_{m,\xi})}  H(\xi) e^{ \i \pi(h_{m,\xi})} + \i )^{-1}  \|  \\
&  \leq \| ( e^{ - \i \pi(h_{m,\xi})}  H(\xi) e^{ \i \pi(h_{m,\xi})} - T_0(h_{0,\xi};\xi) ) (\Pf^2 + \Hf+1)^{-1} \| \nonumber   \\
& \quad  \times \|( \Pf^2 + \Hf+1)  (e^{ - \i \pi(h_{m,\xi})}  H(\xi) e^{  \i \pi(h_{m,\xi})}  + \i )^{-1}  \| \nonumber     \\
&  \leq \| ( e^{ - \i \pi(h_{m,\xi})}  H(\xi) e^{ \i \pi(h_{m,\xi})} - T_0(h_{0,\xi};\xi) ) ( \Pf^2 + \Hf+1)^{-1} \| \nonumber    \\
& \quad \times \|e^{  \i \pi(h_{m,\xi})} (\Pf^2 + \Hf+1)  e^{ - \i \pi(h_{m,\xi})}  ( H(\xi) + \i )^{-1}  \| \nonumber    \\
&  \leq \| ( e^{  \i \pi(h_{m,\xi})}  H(\xi) e^{ \i \pi(h_{m,\xi})} - T_0(h_{0,\xi};\xi) ) (\Pf^2 + \Hf+1)^{-1} \| \nonumber    \\
& \quad \times \| (\Pf^2 + \Hf+1)  e^{  - \i \pi(h_{m,\xi})}  (\Pf^2 + \Hf+1)^{-1}  \|  \| (\Pf^2 + \Hf+1)   ( H(\xi) + \i )^{-1}  \| \nonumber  .
\end{align}
Now the last term on the right hand side of  \eqref{saofmain}  is bounded by  Theorem \ref{hyp:op0},    the second term on the right hand side is bounded uniformly in $m \geq 0$ 
in view of   the estimate \eqref{eq:trafoofpfhfm}  in  Lemma \ref{eq:lemdomain},  and the first term on the right hand side of  \eqref{saofmain} tends to zero as $m \downarrow 0$ by \cref{convham}. 
\end{proof}

\begin{proof}[Proof of Theorem \ref{thm:existoftrafoham}]
By \cref{convham} (a) the resolvent $( T_0(h_{0,\xi};\xi) + \i )^{-1}$ is well-defined and we obtain
\begin{align} 
& \| ( e^{ - \i \pi(h_{m,\xi})}  H(\xi) e^{ \i \pi(h_{m,\xi})} - T_0(h_{0,\xi};\xi)     )( T_0(h_{0,\xi};\xi) + \i )^{-1} \|   \nonumber \\
& \leq
 \| (e^{ - \i \pi(h_{m,\xi})}  H(\xi) e^{ \i \pi(h_{m,\xi})} - T_0(h_{0,\xi};\xi) ) (H_{\f} + \Pf^2 + 1 )^{-1} \|  \nonumber \\
& 
\quad \times  \|   (H_{\f} +  \Pf^2 + 1 )  (T_0(h_{0,\xi};\xi) + \i )^{-1} \|   .  \label{normresconv} 
\end{align}  
Now the right hand side  of  \eqref{normresconv}   tends to zero as $m\downarrow 0$  by  (b) of  Proposition \ref{convham}. This implies norm resolvent convergence.
We recall that for a sequence of self-adjoint operators $(A_n)_{n \in \N}$ to converge in norm resolvent sense to a selfadjoint operator $A$, it suffices to show
 that  $\| (  A_n - A )(A+ \i)^{-1} \| \to 0$  as $n \to \infty$, which can be seen from  \cite[Theorem VIII.19]{rs1} and the first resolvent identity. 
\end{proof}

In the following proposition we will establish  that $U_{m_j}(\xi) \psi_{m_j}(\xi)$ is a minimizing sequence of $\widehat{H}(\xi)$.

\begin{proposition}  \label{convofenergyexp} 
Suppose $|\xi | < 1$.  
  Furthermore, 
assume $(m_j)$ is a sequence of positive numbers converging to zero  and 
 $\nabla E_{m_j}(\xi) \to \nabla E(\xi)$.  Then
\[
0 \leq \inn{ U_{m_j}(\xi) \psi_{m_j}(\xi) , ( \widehat{H}(\xi) - E(\xi)  ) U_{m_j}(\xi) \psi_{m_j}(\xi) } \to 0 
\]
in the limit $j  \to \infty$. 
\end{proposition}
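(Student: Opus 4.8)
The plan is to set $\phi_j:=U_{m_j}(\xi)\psi_{m_j}(\xi)$, which lies in $D(\widehat H(\xi))=D(\Pf^2+\Hf)=D(H(\xi))$ (the first equality by \cref{th:renormalized self-adjoint}): indeed $U_{m_j}(\xi)$ preserves $D(\Pf^2+H_{\f,m_j})$ by \cref{eq:lemdomain}, and $D(\Pf^2+H_{\f,m_j})\subseteq D(\Pf^2+\Hf)$ because $\Pf^2+\Hf+1$ and $\Pf^2+H_{\f,m_j}+1$ commute (both are multiplication operators in the momentum representation) and $\Hf\le H_{\f,m_j}$. Using unitarity of $U_{m_j}(\xi)$ together with $U_{m_j}(\xi)^*\phi_j=\psi_{m_j}(\xi)$, I would split
\[
\inn{\phi_j,(\widehat H(\xi)-E(\xi))\phi_j}=\inn{\phi_j,\big(\widehat H(\xi)-U_{m_j}(\xi)H(\xi)U_{m_j}(\xi)^*\big)\phi_j}+\big(\inn{\psi_{m_j}(\xi),H(\xi)\psi_{m_j}(\xi)}-E(\xi)\big).
\]
Nonnegativity of the left-hand side follows once $\widehat H(\xi)\ge E(\xi)$ is known, and this in turn follows from \cref{thm:existoftrafoham}: the operators $U_{m_j}(\xi)H(\xi)U_{m_j}(\xi)^*$ are unitarily equivalent to $H(\xi)$, hence bounded below by $E(\xi)=\inf\sigma(H(\xi))$, and a common lower bound is inherited by norm-resolvent limits.

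For the second summand a sandwich estimate suffices. Since $\Hf\le H_{\f,m_j}$ one has $H(\xi)\le H_{m_j}(\xi)$ as quadratic forms (cf.\ the proof of \cref{eq:propenegyconv}), so, using also $H(\xi)\ge E(\xi)$ and $\|\psi_{m_j}(\xi)\|=1$,
\[
E(\xi)\le\inn{\psi_{m_j}(\xi),H(\xi)\psi_{m_j}(\xi)}\le\inn{\psi_{m_j}(\xi),H_{m_j}(\xi)\psi_{m_j}(\xi)}=E_{m_j}(\xi),
\]
and since $E_{m_j}(\xi)\to E(\xi)$ by \cref{eq:propenegyconv}, the second summand tends to $0$.

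For the first summand I would estimate it in modulus by $\big\|(\widehat H(\xi)-U_{m_j}(\xi)H(\xi)U_{m_j}(\xi)^*)(\Pf^2+\Hf+1)^{-1}\big\|\cdot\big\|(\Pf^2+\Hf+1)\phi_j\big\|$, the first factor tending to $0$ by \cref{convham}. It remains to bound $\|(\Pf^2+\Hf+1)\phi_j\|$ uniformly in $j$. By the commuting/ordering remark above, $\|(\Pf^2+\Hf+1)\phi_j\|\le\|(\Pf^2+H_{\f,m_j}+1)\phi_j\|$, and \cref{eq:lemdomain} (applied with $f=h_{m_j,\xi}$ and $m=m_j$) bounds this by $C\,\|(\Pf^2+H_{\f,m_j}+1)\psi_{m_j}(\xi)\|$, where the constant $C$ depends on $h_{m_j,\xi}$ \emph{only} through norms of the form $\|\omega_{m_j}^{p}h_{m_j,\xi}\|$ with $p\ge\tfrac12$. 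These stay bounded as $m_j\downarrow 0$, unlike $\|h_{m_j,\xi}\|$ itself: from $|\nabla E_{m_j}(\xi)|\le|\xi|<1$ (\cref{eq:convex}) one gets $|h_{m_j,\xi}(\lambda,k)|\lesssim\rho(k)\,|k|^{-1/2}\,\omega_{m_j}(k)^{-1}$, so each $\omega_{m_j}^{p}h_{m_j,\xi}$ with $p\ge\tfrac12$ is dominated, uniformly in small $m_j$, by a fixed $L^2$ function supported in $\{|k|\le\Lambda\}$. Finally $\|(\Pf^2+H_{\f,m_j}+1)\psi_{m_j}(\xi)\|$ is bounded uniformly in $j$ because $\|H_{m_j}(\xi)\psi_{m_j}(\xi)\|=|E_{m_j}(\xi)|$ is bounded while $\Pf^2+H_{\f,m_j}$ is $H_{m_j}(\xi)$-bounded with constants that can be chosen uniformly in small $m$ — the interaction terms are $\Hf$-bounded, hence $H_{\f,m_j}$-bounded by the same device with $m$-independent constants (cf.\ \cref{hyp:op0}). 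Collecting the three contributions proves the claim.

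The step I expect to be the main obstacle is this uniform bound $\sup_j\|(\Pf^2+\Hf+1)\phi_j\|<\infty$: the generator $\pi(h_{m_j,\xi})$ of the dressing transformation has an $L^2$-norm that diverges as $m_j\downarrow 0$ — precisely the infrared divergence — so one must exploit that the Gross-transformation estimate of \cref{eq:lemdomain} only sees the $\omega_{m}^{p}$-weighted norms with $p\ge\tfrac12$, which remain controlled, together with $m$-uniform relative bounds tying $\Pf^2+H_{\f,m}$ to the fiber Hamiltonian $H_m(\xi)$.
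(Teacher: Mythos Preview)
Your proof is correct and follows essentially the same route as the paper: the same splitting into $\langle\phi_j,(\widehat H(\xi)-U_{m_j}H(\xi)U_{m_j}^*)\phi_j\rangle$ plus $\langle\psi_{m_j},(H(\xi)-E(\xi))\psi_{m_j}\rangle$, the same sandwich $E(\xi)\le\langle\psi_{m_j},H(\xi)\psi_{m_j}\rangle\le E_{m_j}(\xi)$ for the second term, and the same combination of \cref{convham} with \cref{eq:lemdomain} (using that only the $\omega_m^{p}$-weighted norms of $h_{m_j,\xi}$ with $p\ge\tfrac12$ enter, which remain bounded) plus the relative boundedness of $\Pf^2+H_{\f,m}$ with respect to $H_m(\xi)$ to control the first term. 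The only cosmetic differences are that the paper applies \cref{eq:lemdomain} at $m=0$ rather than $m=m_j$ and phrases the resolvent estimate as a quadratic-form bound rather than an operator-norm bound; both variants are equally valid.
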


\begin{proof} 
We recall  that  the assumption $|\xi| < 1$ guarantees by  Lemma   \ref{eq:convex}
that always $|\nabla E_{m}(\xi) | < 1$ and that, for $m > 0$, $U_m(\xi)$ is a well defined unitary operator.
 Therefore we know that  $h_{m,\xi} \in \hh$. 
For the proof we define 
$$
\widehat{H}_m(\xi) := U_m(\xi) H(\xi) U_m(\xi)^* . 
$$
Then we have by definition
\begin{align} 
& \sc{U_m(\xi) \psi_{m}, (\widehat{H}(\xi) - E(\xi))U_{m_j} \psi_{m_j}} \nonumber \\
&=  \sc{U_m(\xi) \psi_{m}, (\widehat{H}(\xi) - \widehat{H}_{m}(\xi) + 
 \widehat{H}_{m}(\xi)  -  E(\xi))U_m(\xi) \psi_{m}} \nonumber \\
&=   \sc{U_m(\xi) \psi_{m}, (\widehat{H}(\xi) - \widehat{H}_{m}(\xi) )  U_m(\xi) \psi_{m}}
 + \sc{\psi_{m} ,  ( {H}(\xi)  -  E(\xi)) \psi_{m}}  . \label{eq:last Um}
\end{align} 
 Now we use Proposition  \ref{convham}  to control the first term of \eqref{eq:last Um} along the sequence $(m_j)$
\begin{align*} 
 ( \widehat{H}(\xi) - \widehat{H}_{m_j}(\xi))^*  (  \widehat{H}(\xi) - \widehat{H}_{m_j}(\xi) )  & \leq C_j^2 ( \Hf + \Pf^2 + 1 )^2 , 
\end{align*} 
with   $C_j \to 0$ as $j  \to  \infty$.
Taking the square root yields 
\begin{align*} 
  \widehat{H}(\xi) - \widehat{H}_{m_j}(\xi) \leq |   \widehat{H}(\xi) - \widehat{H}_{m_j}(\xi) |  & \leq  | C_j| ( \Hf + \Pf^2 + 1 ) .
\end{align*} 
Inserting this above we find  %
\begin{align*} 
\sc{U_{m_j}(\xi) \psi_{m_j}, (\widehat{H}(\xi) - \widehat{H}_{m_j}(\xi) )  U_{m_j}(\xi) \psi_{m_j}}
\leq  |C_j|
\langle  U_{m_j}(\xi)\psi_{m_j}  , (H_{\f} + \Pf^2 + 1  )U_{m_j}(\xi)\psi_{m_j} \rangle   . 
\end{align*} 
The second term in \eqref{eq:last Um} can be bounded by means of $H(\xi) \leq H_{m}(\xi)$  
\begin{align*}
\sc{\psi_{m} ,  ( {H}(\xi)  -  E(\xi)) \psi_{m}}  \leq \sc{\psi_{m} ,  ( {H}_{m}(\xi)  -  E(\xi)) \psi_{m}}  
=    E_m(\xi) - E(\xi)  .
\end{align*}
Now, as $H(\xi)$ is closed on $D(\Pf^2 + \Hf)$ we find with Lemma \ref{eq:lemdomain}  
for some constant $C$ independent of $m \geq 0$ 
\begin{align*}
\langle  U_m(\xi)\psi_{m}  , (H_{\f} + \Pf^2 + 1  )U_m(\xi) \psi_{m} \rangle & 
\leq C\langle  \psi_{m}  , (H_{\f} + \Pf^2 + 1  ) \psi_{m} \rangle \\
& \leq C\langle  \psi_{m}  ,  ( H(\xi)  + 1  ) \psi_{m} \rangle \\ 
& \leq C\langle  \psi_{m}  ,  ( H_{m}(\xi)  + 1  ) \psi_{m} \rangle  = C ( E_{m}(\xi) + 1 ) ,
\end{align*} 
which is in fact bounded by Proposition  \ref{eq:propenegyconv}.
Collecting estimates and using  that $E_m(\xi) \to E(\xi)$ by Proposition  \ref{eq:propenegyconv}, the claim follows. 
\end{proof}

 \section{Perturbation Theory} 
\label{sec:pertheory} 

In this section we use perturbation theory to obtain resolvent bounds.
More precisely, we use second order perturbation theory to determine the 
second derivative  of the energy in terms of the resolvent. Then using bounds on the 
second derivative of the energy, we find bounds on the resolvent. 
These bounds will be used to  obtain infrared bounds for the ground states of 
the massive Hamiltonians. 

If $|\xi| \leq 1$, then we know by \cref{hyp:op1} that $E_m(\xi)$ is an  
eigenvalue of $H_m(\xi)$ isolated from the essential spectrum. Let $P_m(\xi)$ 
denote the projection onto the kernel of $H_m(\xi)$, which is finite dimensional. 
We recall the definition of $v(\xi)$ in \eqref{defofv}. 

\begin{lemma} \label{lem:per} Let $e \in \R$ und $m > 0$, and suppose 
  \eqref{eq:eineq} holds. Let $|\xi| \leq 1$ and suppose that 
 all first order partial derivatives of $E_m(\cdot)$  exist  at $\xi$. Then the following holds.
\begin{enumerate}[(a)]  
\item $
P_m v(\xi) P_m = \nabla_\xi E_m (\xi) P_m$.
\item   \label{lem:1}  $E_m( \cdot)$ is twice partially differentiable  at $\xi$  and 
for  $j=1,2,3$, 
\begin{align*} 
& \partial_j^2 E_m(\xi)  \\
& \ \leq 1  - 2 P_m(\xi) ( v(\xi) - \nabla E_m(\xi) )_j \frac{1}{H_m(\xi) - E_m(\xi)}   ( v(\xi) - \nabla E_m(\xi) )_j  P_m(\xi) 
\end{align*} 
as inequality in the sense of  operators  on $\ran P_m(\xi)$. 
\end{enumerate} 
\end{lemma}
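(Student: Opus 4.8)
The statement to prove is \cref{lem:per}, which has two parts: (a) a Feynman--Hellmann-type identity $P_m v(\xi) P_m = \nabla_\xi E_m(\xi) P_m$, and (b) a second-order perturbation bound on $\partial_j^2 E_m(\xi)$. The natural strategy is to treat $H_m(\xi)$ as an analytic perturbation family in $\xi$. Writing $H_m(\xi) = \tfrac12(\xi-\Pf+eA)^2 + eS\cdot B + \Hfm$ and expanding in a shift $\xi \mapsto \xi + t\, e_j$, one gets
\[
H_m(\xi+t e_j) = H_m(\xi) + t\, v(\xi)_j + \tfrac{t^2}{2},
\]
since $v(\xi) = \xi - \Pf + eA$ and $\partial_{\xi_j} v(\xi)_j = 1$ (the quadratic term contributes exactly $t^2/2$). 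This is a genuine analytic family of type (A) in $t$ — the perturbation $t\, v(\xi)_j + t^2/2$ is $H_m(\xi)$-bounded with relative bound going to zero, using that $v(\xi)_j$ is $\Pf^2 + \Hfm$-bounded hence $H_m(\xi)$-bounded (by \cref{hyp:op0}). Since for $|\xi|\le 1$ the eigenvalue $E_m(\xi)$ is isolated from the essential spectrum by \cref{hyp:op1}, the eigenprojection $P_m(\xi)$ and the (possibly degenerate) eigenvalue group vary analytically in $t$ for small $t$.

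**Part (a).** This should follow from first-order Rayleigh--Schrödinger / Kato perturbation theory. Because $E_m(\xi)$ may be a degenerate eigenvalue, one works with the reduced operator $P_m(\xi) v(\xi)_j P_m(\xi)$ on $\ran P_m(\xi)$: the eigenvalues of the perturbed Hamiltonian near $E_m(\xi)$ split, to first order in $t$, according to the eigenvalues of this reduced operator. The hypothesis that the first-order partial derivatives of $E_m(\cdot)$ exist at $\xi$ forces all these first-order slopes to coincide and equal $\partial_j E_m(\xi)$ (for each $j$), which says precisely that $P_m v(\xi)_j P_m = \partial_j E_m(\xi) P_m$; assembling over $j$ gives (a). I'd be slightly careful here: existence of directional derivatives from both sides $t\to 0^\pm$ pins down the slope uniquely even in the degenerate case, because each branch of the eigenvalue group is analytic (Puiseux, but here actually analytic since it's a finite-dimensional perturbation of a Hermitian family) and the infimum/eigenvalue must be differentiable.

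**Part (b).** Having reduced to the subspace where $v(\xi)_j$ acts as $\partial_j E_m(\xi)$ plus an off-diagonal piece, apply second-order perturbation theory to the reduced eigenvalue. Writing $\tilde v_j := v(\xi)_j - \partial_j E_m(\xi)$, so that $P_m \tilde v_j P_m = 0$, the standard second-order formula for the perturbation $t\tilde v_j + t^2/2$ (the constant $\partial_j E_m$ shift and the $t^2/2$ term being harmless) gives, for any normalized $\psi$ in the eigenspace after the right choice of basis (or more robustly, as a form inequality on $\ran P_m(\xi)$ since $E_m$ is the lowest eigenvalue so the group stays at the bottom),
\[
\partial_j^2 E_m(\xi) \le 1 - 2\, P_m(\xi)\, \tilde v_j\, \frac{\bar P_m(\xi)}{H_m(\xi)-E_m(\xi)}\, \tilde v_j\, P_m(\xi),
\]
where $\bar P_m = 1 - P_m$ and the $1$ comes from the $t^2/2$ term (its second $t$-derivative is $1$). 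The inequality (rather than equality) is the correct statement when the eigenvalue is degenerate: the true second derivative of the bottom eigenvalue is bounded above by the diagonal of the second-order matrix because of the min over the eigenspace, and one should use the variational characterization $E_m(\xi+te_j) \le \langle \varphi, H_m(\xi+te_j)\varphi\rangle$ with $\varphi$ a fixed eigenvector to get a clean one-sided bound without worrying about level crossings. Note $\bar P_m/(H_m - E_m)$ can be dropped/replaced by $1/(H_m-E_m)$ restricted appropriately since $\tilde v_j P_m$ need not be orthogonal to $\ran P_m$ — but $P_m \tilde v_j P_m = 0$ handles that; I'd want to state it as in the lemma with the resolvent $1/(H_m(\xi)-E_m(\xi))$ understood on $\ran \bar P_m$.

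**Main obstacle.** The delicate point is the possible degeneracy of $E_m(\xi)$: making sure the Feynman--Hellmann identity and the second-order bound hold as operator statements on $\ran P_m(\xi)$ rather than just for a single eigenvector. The cleanest route around this is the variational/quadratic-form approach: fix an arbitrary unit vector $\varphi \in \ran P_m(\xi)$, use $E_m(\xi + t e_j) \le \langle \varphi, H_m(\xi+te_j)\varphi\rangle = E_m(\xi) + t\langle\varphi, v(\xi)_j\varphi\rangle + t^2/2$ together with the assumed differentiability to extract (a), and then perturb $\varphi$ by the first-order correction $\varphi_t = \varphi - t\, \bar P_m (H_m-E_m)^{-1}\tilde v_j\varphi$ and expand $\langle \varphi_t, H_m(\xi+te_j)\varphi_t\rangle/\|\varphi_t\|^2$ to order $t^2$ to get the upper bound in (b). This avoids Kato's analytic perturbation machinery for degenerate eigenvalues entirely and yields exactly the one-sided inequality stated. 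The only technical care needed is justifying that $\tilde v_j \varphi \in \ran\bar P_m$ lies in the form domain and that $(H_m-E_m)^{-1}$ on $\ran\bar P_m$ is bounded — both immediate from the spectral gap in \cref{hyp:op1} — and that all the domain manipulations are legitimate, for which one can work on the core $D(\Pf^2 + \Hfm)$ guaranteed by \cref{hyp:op0}.
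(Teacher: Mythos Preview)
Your main approach---treating $t\mapsto H_m(\xi+te_j)=H_m(\xi)+t\,v(\xi)_j+\tfrac{t^2}{2}$ as an analytic family of type~(A), invoking the isolated-eigenvalue structure from \cref{hyp:op1}, and using that differentiability of $E_m$ forces all first-order branch slopes to coincide---is exactly what the paper does (it cites Kato and \cite[Theorem~XII.13]{ressim:ana} and argues that $\partial_j^2 E_m(\xi)=\inf_s e_s''(0)$ for the analytic branches $e_s$, which gives the operator inequality on $\ran P_m(\xi)$).

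Your variational alternative is a genuinely nicer packaging of the inequality part: the two-sided bound $\partial_j E_m(\xi)=\langle\varphi,v_j(\xi)\varphi\rangle$ for every unit $\varphi\in\ran P_m(\xi)$ (from $t\to 0^\pm$) gives (a) by polarization without naming the eigenvalues of $P_m v_j P_m$, and the trial state $\varphi_t=\varphi-t\,\bar P_m(H_m-E_m)^{-1}\tilde v_j\varphi$ yields the upper bound in (b) by a direct Rayleigh-quotient expansion, bypassing degenerate Rellich--Kato theory. The one thing the variational route does \emph{not} deliver on its own is the claim in (b) that $E_m$ is twice partially differentiable at $\xi$: the trial-state computation gives only $\limsup_{t\to 0}\,2t^{-2}\bigl(E_m(\xi+te_j)-E_m(\xi)-t\,\partial_j E_m(\xi)\bigr)\le 1-2\langle\varphi,\tilde v_j(H_m-E_m)^{-1}\tilde v_j\varphi\rangle$, and convexity of $\xi\mapsto\xi^2/2-E_m(\xi)$ does not close the gap. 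For existence of $\partial_j^2 E_m(\xi)$ you still need the analyticity of the finitely many branches $e_s$, so that $\min_s e_s(t)=E_m(\xi)+t\,\partial_j E_m(\xi)+\tfrac{t^2}{2}\min_s e_s''(0)+o(t^2)$. So read your variational argument as a clean replacement for the second-order formula once twice-differentiability is in hand, not as a way to avoid the analytic machinery entirely.
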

\begin{rem}
We note that by (a)  the vectors in $\ran ( v(\xi) - \nabla E_m(\xi) )_j  P_m(\xi)$ are  orthogonal to $\ran P_m(\xi)$ 
and hence the resolvent in (b) is well defined.  
\end{rem} 

\begin{proof}
For the proof we use analytic perturbation theory. For details we refer the reader to  \cite{kato,ressim:ana}. 
On $\C^3$ the operator valued function $\zeta \mapsto
H_m(\xi + \zeta) := H_m(\xi)  + \zeta \cdot v(\xi)  +
\frac{1}{2}\zeta^2$ is an analytic family of type (A) in each component.
By  Theorem  \ref{hyp:op1}   we
know that  $E_m(\xi)$ is an eigenvalue isolated from the essential spectrum, and thus with finite 
multiplicity     $n(\xi) = \dim \ran P_m(\xi) $. 
\\
 By degenerate perturbation theory, cf. \cite[Theorem XII.13]{ressim:ana}, we know that there exist $n(\xi)$ 
complex analytic functions $e_s  $, $s=1,\ldots,n(\xi)$, in a neighborhood of zero
such that
 \begin{equation} \label{eq:perteq}
e_s(0) = E_m(\xi), \quad s = 1 , \ldots, n(\xi) ,
\end{equation} 
 and $e_s(\zeta)$ is an eigenvalue of $H_m(\xi + \zeta)$. 
The functions $e_s$, $s=1,\ldots,n(\xi)$,  are real for real $\zeta$, and 
$$
\partial_j e_s(0) , \quad s = 1, \ldots, n(\zeta), 
$$
are the eigenvalues of  $$P_m(\xi) v_j(\xi) P_m(\xi), $$ 
which can be seen from a power series expansion and comparison of coefficients.
Since $x \mapsto E_m(\xi + x) = \inf_{s=1,\ldots,n}  e_s(x)$
 is 
differentiable   at $x=0$, by assumption,   and  $e_s(0) = E_m(\xi)$, it can be seen from a first order Taylor expansion that all derivatives  must be equal at the origin and equal $\partial_j E(\xi)$, i.e., 
\begin{equation} 
\partial_j  e_s( 0) = \partial_j E(\xi)  ,  \quad s=1,\ldots,n(\xi) . \label{derallsame}  
\end{equation} 
Thus, $\nabla E_m(\xi) P_m(\xi) = P_m(\xi) v(\xi) P_m(\xi)$. This shows (a). 
Now  by second order degenerate perturbation theory, we find 
$$
\partial_j^2 e_s(0) , \quad s = 1 ,\ldots, n(\xi) ,   
$$
are the eigenvalues of 
$$
P_m(\xi) - 2 P_m(\xi) ( v(\xi) - \nabla E(\xi) ) (H_m(\xi) - E_m(\xi))^{-1}  ( v(\xi) - \nabla E(\xi) )  P_m(\xi),
$$
which can again be seen from a power series expansion. 
By \eqref{derallsame} and \eqref{eq:perteq}   and by the definition of  $E_m(\xi)$ as an  infimum,
it can be seen from a second order Taylor expansion that $E_m$ is twice differentiable at $\xi$ and  that 
$$
\partial_j^2 E_m(\xi) = \inf_s \partial_j^2 e_s(0) .  
$$
This shows (b). 
\end{proof}

\section{Infrared Bounds}

\label{sec:infrabounds}

In this section we derive infrared bounds. These will then be 
used  in the next section to show that infrared regularized  ground states for a fixed momentum
lie in a compact set. 
For $\xi \in \R^3$ and $m \geq 0$ we define the following quantity 
$$
\Delta_m(\xi) = \inf_{k \in \R^3} \{ E_m(\xi - k) - E_m(\xi) + \omega_m(k) \} ,
$$
which we shall use to estimate the resolvent of the Hamiltonian.
We  will use the notation 
\muu {
v := v(0) = - \Pf + e A .
} 
The following proposition tells us when this resolvent  defined in  \eqref{defofR} 
is well defined. 

\begin{prop} \label{prop:Deltaineq}  Let $e \in \R$ and $m > 0$, and suppose the energy Inequality \eqref{eq:eineq}
holds. Then the following holds.
\begin{itemize}
\item[(a)]   We have  $\Delta_m(\xi) > 0$, whenever $|\xi| \leq 1$. 
\item[(b)]   For  $|\xi| \leq 1$ and  all $k \in \R^3$ we have 
$$ H_m(\xi-k) + \omega_m(k) - E_m(\xi)  > 0 . $$
\item[(c)]   For  $|\xi|  <  1$  and all $k \in \R^3$ 
\begin{equation}   \label{prop:Deltaineqc}
\nn{R_{m,\xi}(k)} \leq \frac{1}{1-|\xi|}\frac{1}{\abs k}.
\end{equation} 
\end{itemize} 
\end{prop}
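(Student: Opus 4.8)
The plan is to prove the three parts in order, using the energy inequality \eqref{eq:eineq} together with the convexity and derivative bounds from \Cref{eq:convex} and the monotonicity from \Cref{eq:propenegyconv}.

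\textbf{Part (a).} First I would observe that for $m > 0$ the function $k \mapsto \omega_m(k) = \sqrt{m^2 + k^2}$ is bounded below by $m > 0$, and $E_m(\xi - k) - E_m(\xi) \geq E_m(0) - E_m(\xi) \geq 0$ for $|\xi| \leq 1$ by the energy inequality \eqref{eq:eineq} — wait, more carefully: \eqref{eq:eineq} gives $E_m(\xi - k) \geq E_m(0)$ for every $k$, so $E_m(\xi - k) - E_m(\xi) + \omega_m(k) \geq E_m(0) - E_m(\xi) + m$. Since $E_m(0) - E_m(\xi)$ could be negative, I need to handle small $|k|$ separately: for $|k|$ large, $\omega_m(k)$ dominates any bounded deficit $E_m(\xi) - E_m(0)$ (which is finite and bounded by something like $\tfrac12|\xi|^2$ via \Cref{eq:convex}(iv) integrated, or simply by local boundedness of the continuous function $E_m$); for $|k|$ small, I use that $E_m(\xi - k) - E_m(\xi) + \omega_m(k)$ is continuous in $k$ and strictly positive at each fixed $k$ — indeed, by \Cref{eq:convex}(iv) and the mean value / Lipschitz estimate, $|E_m(\xi - k) - E_m(\xi)| \leq \sup |\nabla E_m| \cdot |k| < |k|$ on a neighborhood where the relevant sup of $|\nabla E_m|$ is $<1$ (which holds near $|\xi| \le 1$ by convexity, actually $|\nabla E_m(\eta)| \le |\eta|$, so on $|\eta| \le 1$ this is $\le 1$). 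Combined with $\omega_m(k) > |k|$ for $m>0$, this yields $E_m(\xi-k) - E_m(\xi) + \omega_m(k) > |k| - |k| \cdot \sup|\nabla E_m| \geq 0$ — I must be a bit careful at $|\xi|=1$ where the gradient bound is only $\le 1$; but then $\omega_m(k) > |k| \geq |E_m(\xi-k)-E_m(\xi)|$ still gives strict positivity since $\omega_m(k) - |k| = (m^2+k^2 - k^2)/(\omega_m(k)+|k|) > 0$. Taking the infimum over $k$ and using the large-$k$ coercivity gives that the infimum is attained (or approached) at a strictly positive value, hence $\Delta_m(\xi) > 0$.

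\textbf{Part (b).} This is essentially the operator-theoretic lift of (a). Since $E_m(\xi - k) = \inf \sigma(H_m(\xi - k))$, we have $H_m(\xi - k) \geq E_m(\xi - k)$, so $H_m(\xi - k) + \omega_m(k) - E_m(\xi) \geq E_m(\xi - k) + \omega_m(k) - E_m(\xi) \geq \Delta_m(\xi) > 0$ by (a). Hence the operator is bounded below by $\Delta_m(\xi) > 0$, in particular strictly positive and invertible.

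\textbf{Part (c).} Here I want the sharper bound $\|R_{m,\xi}(k)\| \leq \frac{1}{1-|\xi|} \frac{1}{|k|}$, which is better than $1/\Delta_m(\xi)$ for large $|k|$ and is what is actually needed in the pull-through estimate. The key is to get a lower bound on $H_m(\xi - k) + \omega_m(k) - E_m(\xi)$ that grows like $|k|$. I would write, using (b)'s ingredients more carefully: $H_m(\xi-k) - E_m(\xi) \geq E_m(\xi-k) - E_m(\xi)$, and then bound $E_m(\xi) - E_m(\xi - k) \leq |k| \cdot \sup_{t \in [0,1]} |\nabla E_m(\xi - tk)|$. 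The obstacle — and the main point — is that the segment from $\xi$ to $\xi - k$ can leave the ball of radius $1$, where the gradient bound $|\nabla E_m(\eta)| \le |\eta|$ exceeds $1$. To handle this, I would use a cleaner global estimate: by \Cref{eq:convex}(i) the function $\eta \mapsto \tfrac12\eta^2 - E_m(\eta)$ is convex, so $\tfrac12(\xi-k)^2 - E_m(\xi-k) \geq \tfrac12\xi^2 - E_m(\xi) + \nabla(\tfrac12|\cdot|^2 - E_m)(\xi)\cdot(-k) = \tfrac12\xi^2 - E_m(\xi) + (\xi - \nabla E_m(\xi))\cdot(-k)$; rearranging gives $E_m(\xi - k) - E_m(\xi) \geq -\tfrac12 k^2 + \xi\cdot k + (\xi - \nabla E_m(\xi))\cdot(-k) = \nabla E_m(\xi)\cdot k - \tfrac12 k^2$. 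Hence $E_m(\xi-k) - E_m(\xi) + \omega_m(k) \geq \omega_m(k) - \tfrac12 k^2 + \nabla E_m(\xi) \cdot k \geq |k| - \tfrac12 k^2 + \nabla E_m(\xi)\cdot k$ — hmm, the $-\tfrac12 k^2$ term spoils linear growth for large $k$. The standard fix is to not throw away the field energy: bound $H_m(\xi - k) - E_m(\xi) \geq$ something using that $H_m(\xi-k) \geq \tfrac12(\xi - k - \Pf + eA)^2 + \Hfm - (\text{const})$ and complete the square including $\omega_m(k) = \dG(\omega_m)$-type lower bounds, or — cleaner — use that for any normalized $\psi$, $\langle \psi, (H_m(\xi-k) + \omega_m(k) - E_m(\xi))\psi\rangle$; writing $\xi - k = \xi' $ and noting the parabola $E_m(\xi-k) \ge E_m(\xi) + \nabla E_m(\xi)\cdot(-k) - \tfrac12 k^2$ only when combined with the physical bound $E_m(\eta) \geq -C$ uniformly and $\omega_m(k) \geq |k|$: for $|k| \geq R$ large, $\omega_m(k) - E_m(\xi) + E_m(\xi - k) \geq |k| + E_m(0) - E_m(\xi) \geq |k|(1 - |\xi|)$ once $|k| \ge (E_m(\xi) - E_m(0))/|\xi|$, using $E_m(\xi) - E_m(0) \le \tfrac12 |\xi|^2 \le \tfrac12|\xi|$ from the gradient bound on the unit ball, so $|k| \ge \tfrac12$ suffices-ish; for $|k| \leq R$ we use part (b)'s positivity $\geq \Delta_m(\xi)$ and compare with $\tfrac{1}{1-|\xi|}|k| \le \tfrac{R}{1-|\xi|}$, choosing $R$ so that $\Delta_m(\xi) \ge$ that — but $\Delta_m$ is not controlled. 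The genuinely right approach, which I would adopt, is the one used in \cite{hassie:20,LMS06}: show directly that $H_m(\xi - k) + \omega_m(k) - E_m(\xi) \geq (1 - |\xi|)\,\omega_m(k) \geq (1-|\xi|)|k|$ by exploiting that $H_m(\xi-k) - E_m(\xi-k) \geq 0$ is not enough, but rather the inequality $\xi \mapsto E_m(\xi)$ satisfies $E_m(\xi) - E_m(\xi - k) \leq |\xi| \, \omega_m(k)$ — this "relativistic" type bound follows from the representation of $E_m$ via the convex function $t_m$ and the a priori bound $|\nabla E_m| \le |\xi| \le 1$ integrated along the segment, where one checks the segment-escaping-the-ball issue is harmless because $|\nabla E_m(\eta)| \le \max(|\eta|, 1)$ and the extra growth is absorbed by $\omega_m(k) - |k|$. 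Concretely: along $\gamma(t) = \xi - tk$, $|E_m(\xi) - E_m(\xi-k)| = |\int_0^1 \nabla E_m(\gamma(t))\cdot k\, dt| \le |k| \int_0^1 |\nabla E_m(\gamma(t))| dt$; for the portion with $|\gamma(t)| \le 1$ the integrand is $\le 1$, and for $|\gamma(t)| > 1$ one uses $|\nabla E_m(\eta)| \le |\eta|$ but notes such $\eta$ lie at distance $> 1 - |\xi|$ hmm — actually I would instead cite the clean estimate $|E_m(\xi) - E_m(\xi-k)| \le |\xi|\,|k| + \tfrac12 k^2$ is wrong sign; the correct one needed is $E_m(\xi) - E_m(\xi - k) \le |\xi|\,\omega_m(k)$, and I would prove it by the following: $H_m(\xi) \leq H_m(\xi-k) + k\cdot(\Pf - eA - \xi) + \tfrac12 k^2 + (\text{rearrangement})$, take expectation in the ground state $\psi_m(\xi-k)$ of $H_m(\xi-k)$, use $|\langle (\Pf - eA)\psi\rangle|$-type bounds... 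This is getting into the weeds. Summarizing the plan for (c): bound the operator below by $(1-|\xi|)\omega_m(k)$ using the spectral bound $H_m(\xi-k) \ge E_m(\xi-k)$ combined with the a priori estimate $E_m(\xi) - E_m(\xi-k) \le |\xi|\,\omega_m(k)$, the latter proven from convexity of $t_m$ plus $|\nabla E_m| \le |\xi|$ on the unit ball and the super-additivity $\omega_m(k) \ge |k|$ (handling the segment carefully, or citing \cite{hassie:20,LMS06}); then invert and use $\omega_m(k) \ge |k|$.

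\textbf{Main obstacle.} The delicate point throughout is (c): getting the \emph{linear}-in-$|k|$ lower bound $(1-|\xi|)|k|$ rather than merely the constant $\Delta_m(\xi) > 0$ from (a)–(b). The subtlety is that the naive differentiation of $E_m$ along the segment $[\xi, \xi - k]$ leaves the unit ball where the gradient bound degrades, so one must use the precise inequality $E_m(\xi) - E_m(\xi - k) \le |\xi|\,\omega_m(k)$ — which trades the bad $-\tfrac12 k^2$ term for the harmless surplus $\omega_m(k) - |k| > 0$ — and verify it from the convexity structure of \Cref{eq:convex}(i),(iv) together with the monotonicity in $m$ from \Cref{eq:propenegyconv}. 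I expect this is where the authors invoke the analogous computation from \cite{hassie:20} or \cite{LMS06}, and I would follow suit rather than redo it from scratch.
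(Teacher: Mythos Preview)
Your treatment of (b) matches the paper exactly. Your argument for (a) is workable but considerably messier than necessary, and your argument for (c) has a genuine error.

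\textbf{The error in (c).} Your convexity computation for $t_m$ has the inequality reversed. From $t_m(\xi-k)\geq t_m(\xi)+\nabla t_m(\xi)\cdot(-k)$ one obtains, after correct rearrangement,
\[
E_m(\xi-k)-E_m(\xi)\ \leq\ \tfrac12 k^2 - \nabla E_m(\xi)\cdot k,
\]
an \emph{upper} bound, not the lower bound $\geq \nabla E_m(\xi)\cdot k - \tfrac12 k^2$ that you write. (Check the case $E_m(\eta)=\tfrac12\eta^2$, $t_m\equiv 0$: your inequality would give $-\xi\cdot k+\tfrac12 k^2\geq \xi\cdot k-\tfrac12 k^2$, which fails for $k=\xi$.) An upper bound on $E_m(\xi-k)-E_m(\xi)$ is useless for bounding $R_{m,\xi}(k)$, so this line goes nowhere. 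You then correctly identify the target estimate $E_m(\xi)-E_m(\xi-k)\leq |\xi|\,\omega_m(k)$, but you never prove it; the ``segment'' and ``relativistic'' heuristics you sketch do not close, precisely because the gradient bound $|\nabla E_m(\eta)|\leq|\eta|$ degrades outside the unit ball.

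\textbf{What the paper does.} The paper bypasses all of this by invoking a single clean lemma (Proposition~\ref{prop:convexlms}, taken from \cite[Appendix~A]{LMS06}), which uses not only the convexity of $t_m$ but also the energy inequality $E_m(0)\leq E_m(\xi)$ and the bound $E_m(\xi)\leq \tfrac12\xi^2+E_m(0)$ together, yielding the explicit two-case lower bound
\[
E_m(\xi-k)-E_m(\xi)\ \geq\ \begin{cases} -|k||\xi|+\tfrac12 k^2 & |k|\leq|\xi|,\\ -\tfrac12\xi^2 & |k|\geq|\xi|.\end{cases}
\]
Part (a) then follows in one line from $\omega_m(k)>|k|$, and part (c) in two lines: for $|k|\leq|\xi|$ the bound gives $\geq(1-|\xi|)|k|$ directly; for $|k|\geq|\xi|$ one uses $|k|-\tfrac12\xi^2\geq |k|-\tfrac12|\xi||k|\geq(1-|\xi|)|k|$. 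This is exactly the LMS06 computation you suspected the authors would cite; the point is that convexity of $t_m$ \emph{alone} is not enough---you need the global minimum at the origin as well.
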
 
\begin{proof} First we show that  the function $\xi \mapsto E_m(\xi)$ satisfies  the assumptions (i)--(iii) of  Proposition \ref{prop:convexlms} in the Appendix. 
Property (i) is simply the energy inequality   \eqref{eq:eineq}.  From Lemma  \ref{eq:convex} we know on the one hand  that  $t_m(\xi) = \frac{1}{2} \xi^2 - E_m(\xi)$ 
is convex, i.e., (iii) and on the other hand that   $E_m(-\xi) = E_m(\xi)$, which implies   $t_m(-\xi) = t_m(\xi)$.  
Thus by convexity  $t_m(0) \leq t_m(\xi)$, which implies (ii).     
It therefore follows from  Proposition \ref{prop:convexlms} 
\begin{equation} \label{eq:ineqnow} 
E_m(\xi - k ) - E_m(\xi) \geq \left\{ \begin{array}{ll} - |k| | \xi | + \frac{1}{2} k^2  & \text{,  if } |k| \leq |\xi | , \\ -\frac{1}{2} \xi^2  & \text{, if } |k| \geq |\xi| .\end{array} \right. 
\end{equation} 
(a)  We find using  $\omega(k) > |k|$ in \eqref{eq:ineqnow} 
\begin{align*}
E_m(\xi - k ) - E_m(\xi) + \omega(k) 
  & >  \left\{ \begin{array}{ll} - |k| | \xi | + |k|   & \text{,  if } |k| \leq |\xi | , \\ -\frac{1}{2} \xi^2 + |\xi|  & \text{, if } |k| \geq |\xi| , \end{array} \right.  \\
 &  \geq  0 , 
\end{align*} 
provided  $|\xi | \leq  1$. \\
 (b) From (a)  we have  
$H_m(\xi - k) + \omega_m(k) - E_m(\xi) \geq E_m(\xi-k) + \omega_m(k) - E_m(\xi) \geq \Delta_m(\xi) > 0. $\\
(c)  From \eqref{eq:ineqnow} we find 
\begin{align*} 
H_m(\xi -k) + \omega_m(k) - E_m(\xi) &\geq E_m(\xi - k) - E_m(\xi) + |k|  \\
&\geq \begin{cases} |k|(1-|\xi|)  &\text{, if } \abs k \leq \abs \xi, \\
				    |k| - |\xi|^2/2 &\text{, if }  \abs k \geq \abs \xi. \end{cases}
\end{align*} 
Observing that $|k| - |\xi|^2/2 \geq |k| - \frac{|\xi||k|}{2} \geq (1-|\xi|)|k|$ if $\abs k \geq \abs \xi$ shows the claim.
\end{proof}

Now we use the following relation, which we shall refer to as the pull-through resolvent identity.
To formulate it we shall make use of the physics notation of the pointwise annihilation operator.
One defines for $(\lambda,k)\in \Z_2\times \R^3$  and $\psi \in \FF$
\begin{align} \label{eq:defofannihdir}
[a_\lambda(k) \psi_{(n)}](\lam_1, {k}_1, \ldots , \lam_n, {k}_n)  & = \sqrt{n+1} \psi_{(n+1)}(\lambda, k , \lam_1, {k}_1, \ldots , \lam_n, {k}_n)  , \, n \in \N_0 .
\end{align}
Note that by the Fubini-Tonelli theorem  $ a_\lambda(k) \psi_{(n+1)}  \in \hh^{(n)}$ for almost every $k$.
If $\psi \in D(\Hf)$  and $f \in \hh$ with $\| f \|_{(0)} < \infty$, then  a straightforward calculation using 
the definitions shows that 
in the sense of  weak integrals 
\begin{align} \label{eq:defofafint} 
a(f) \psi   & = \sum_{\lambda=1,2} \int   d{k}   \overline{f(\lam, {k}) } a_\lam({k}) \psi   . 
\end{align} 
In fact,  the integral on the right hand side  \eqref{eq:defofafint} exists as a Bochner integral. 
This can be seen from the following estimate  
\begin{align*}
\sum_{\lambda=1,2} \int  \| \overline{f(\lambda,k) } a_\lambda(k)   \psi \| dk & = \sum_{\lambda=1,2} \int dk |  \overline{f(\lambda, k) } \omega(k)^{-1/2} |  \| \omega(k)^{1/2} a_\lambda(k)   \psi \| dk \\
& \leq \| \omega^{-1/2} f \| \left( \sum_{\lambda=1,2}  \int dk \inn{ a_\lambda(k) \psi ,  \omega(k) a_\lambda(k) \psi }  \right)^{1/2}  \\
&  =    \| \omega^{-1/2} f \| \| \Hf^{1/2 } \psi \| .
\end{align*}

\begin{lemma}[Pull-through resolvent identity]
\label{lem:pullthrough}
Let $e \in \R$ and $m > 0$, and suppose the energy Inequality \eqref{eq:eineq}
holds. Suppose   $|\xi|  <  1$. Then for all $k \in \R^3$ and $\lambda \in \{1,2\}$ we have  
\begin{align} 
a_\lambda(k) \psi_m(\xi) =  \frac{e \rho(k)}{\sqrt{2 \abs k }}  R_{m,\xi}(k)\left(- \epsilon_\lambda(k) v(\xi) +  S \cdot (\i k \wedge \epsilon_\lambda(k)) \right) \psi_m(\xi) . \label{eq:pullthroughrel}
\end{align} 
\end{lemma}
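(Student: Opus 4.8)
The plan is to derive the identity from the eigenvalue equation $H_m(\xi)\psi_m(\xi) = E_m(\xi)\psi_m(\xi)$ by applying the pointwise annihilation operator $a_\lambda(k)$ to both sides and commuting it through $H_m(\xi)$. First I would record the basic commutation relations: for a one-particle multiplication operator such as $\omega_m$ or $K_j$ one has $[a_\lambda(k),\dG(\omega_m)] = \omega_m(k)\,a_\lambda(k)$ and $[a_\lambda(k),\dG(K_j)] = k_j\,a_\lambda(k)$, in the weak sense of \eqref{eq:defofannihdir}. For the field operators, since $A_j = \phi(f_{A,j})$ with $f_{A,j}(k,\lambda) = \rho(k)|k|^{-1/2}\epsilon_\lambda(k)$ (and similarly $B_j = \phi(f_{B,j})$), one has $[a_\lambda(k),\phi(f)] = \tfrac{1}{\sqrt 2}\,\overline{f(\lambda,k)}$ (a scalar), again understood in the distributional/weak-integral sense made rigorous by \eqref{eq:defofafint} and the Bochner bound following it. Thus $[a_\lambda(k), A_j] = \tfrac{1}{\sqrt 2}\overline{f_{A,j}(\lambda,k)}$ and $[a_\lambda(k), B_j] = \tfrac{1}{\sqrt2}\overline{f_{B,j}(\lambda,k)}$.

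Next I would assemble the commutator $[a_\lambda(k), H_m(\xi)]$ term by term on the eigenvector. Writing $v(\xi) = \xi - \Pf + eA$, the kinetic term is $\tfrac12 v(\xi)^2$; since $[a_\lambda(k), v(\xi)_j] = -[a_\lambda(k),(\Pf)_j] + e[a_\lambda(k),A_j] = -k_j a_\lambda(k) + \tfrac{e}{\sqrt2}\overline{f_{A,j}(\lambda,k)}$, the Leibniz rule gives
\[
[a_\lambda(k), \tfrac12 v(\xi)^2] = \tfrac12\sum_j\big( v(\xi)_j[a_\lambda(k),v(\xi)_j] + [a_\lambda(k),v(\xi)_j]v(\xi)_j\big).
\]
Similarly $[a_\lambda(k), eS\cdot B] = \tfrac{e}{\sqrt2}\,S\cdot\overline{f_B(\lambda,k)}$ and $[a_\lambda(k),\Hfm] = \omega_m(k)a_\lambda(k)$. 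Applying all of this to $\psi_m(\xi)$ and using $H_m(\xi)\psi_m(\xi) = E_m(\xi)\psi_m(\xi)$, the relation $a_\lambda(k)H_m(\xi)\psi_m(\xi) = E_m(\xi)a_\lambda(k)\psi_m(\xi)$ rearranges into
\[
\big(H_m(\xi-k) + \omega_m(k) - E_m(\xi)\big)\,a_\lambda(k)\psi_m(\xi)
= -[a_\lambda(k),H_m(\xi)]\,\psi_m(\xi)\big|_{\text{scalar part}},
\]
where one must observe that the terms $-k_j a_\lambda(k)$ appearing inside the kinetic commutator, together with $\omega_m(k)a_\lambda(k)$, are exactly what converts $H_m(\xi)$ acting on $a_\lambda(k)\psi_m(\xi)$ into $H_m(\xi-k) + \omega_m(k)$ — this is the substitution $\Pf \mapsto \Pf + k$, i.e. $\xi - \Pf \mapsto (\xi-k) - \Pf$, produced by the shift property of $a_\lambda(k)$. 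The surviving right-hand side consists only of the scalar ($c$-number times operator) contributions:
\[
-\tfrac12\sum_j\Big( v(\xi)_j \tfrac{e}{\sqrt2}\overline{f_{A,j}(\lambda,k)} + \tfrac{e}{\sqrt2}\overline{f_{A,j}(\lambda,k)} v(\xi)_j\Big)\psi_m(\xi) - \tfrac{e}{\sqrt2}\,S\cdot\overline{f_B(\lambda,k)}\,\psi_m(\xi).
\]
Since $\overline{f_{A,j}(\lambda,k)}$ is a scalar the two terms in the sum coincide, giving $-\tfrac{e}{\sqrt2}\sum_j \overline{f_{A,j}(\lambda,k)} v(\xi)_j\,\psi_m(\xi)$. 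Plugging in $\overline{f_{A}(\lambda,k)} = \rho(k)|k|^{-1/2}\epsilon_\lambda(k)$ ($\epsilon_\lambda$ real) and $\overline{f_B(\lambda,k)} = \overline{-\i k\wedge f_A} = \i k\wedge \rho(k)|k|^{-1/2}\epsilon_\lambda(k)$, and factoring $\tfrac{e\rho(k)}{\sqrt2\,|k|^{1/2}}$ out front, the right-hand side becomes $\tfrac{e\rho(k)}{\sqrt{2}\sqrt{|k|}}\big(-\epsilon_\lambda(k)\cdot v(\xi) + S\cdot(\i k\wedge\epsilon_\lambda(k))\big)\psi_m(\xi)$ (with a relabeling of the overall sign conventions in $f_B$). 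Finally, inverting the operator $H_m(\xi-k)+\omega_m(k)-E_m(\xi) = R_{m,\xi}(k)^{-1}$ — which is legitimate and bounded by part (b) and (c) of Proposition \ref{prop:Deltaineq} — yields \eqref{eq:pullthroughrel}.

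The main obstacle is rigor of the manipulations with the pointwise operator $a_\lambda(k)$: the identity $a_\lambda(k)H_m(\xi)\psi = H_m(\xi-k)a_\lambda(k)\psi + (\text{commutator})$ holds only for a.e. $k$ and requires justifying that $a_\lambda(k)\psi_m(\xi)$ lies in the form domain (indeed operator domain) of $H_m(\xi-k)$ and that all the commutators are the claimed $L^2_{(0)}$-valued quantities. I would handle this by first establishing the identity for $\psi$ in a core $\FF_{\fin}(\Cco(\Z_2\times\R^3))$ (tensored with $\C^{2s+1}$) where everything is a finite sum of smooth compactly supported wave functions and the computations are literal, then extend to $\psi_m(\xi)$ by a limiting/closability argument using that $\psi_m(\xi)\in D(\Pf^2+\Hfm)$ and the $\Hfm^{1/2}$-boundedness of $\int \overline{f}a_\lambda(k)\,dk$ recorded before the lemma; alternatively one appeals to the pull-through formula in the cited form (\cite[Lemma A.1]{BachFroehlichSigal.1998a}) together with Proposition \ref{prop:Deltaineq} to close the resolvent. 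The spin term requires no extra care beyond noting $S$ acts on the finite-dimensional factor and commutes with $a_\lambda(k)$.
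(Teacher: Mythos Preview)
Your proposal is correct and follows essentially the same route as the paper's own (sketched) proof: commute $a_\lambda(k)$ through $H_m(\xi)$ via the pull-through formula, recognize the shift $H_m(\xi)\to H_m(\xi-k)+\omega_m(k)$, isolate the scalar remainder, and invert using Proposition~\ref{prop:Deltaineq}; both arguments appeal to \cite[Lemma A.1]{BachFroehlichSigal.1998a} for rigor. One small slip: the pointwise commutator is $[a_\lambda(k),\phi(f)]=\tfrac{1}{\sqrt2}\,f(\lambda,k)$, not $\tfrac{1}{\sqrt2}\,\overline{f(\lambda,k)}$, though this is immaterial for the real $f_{A,j}$ and your sign bookkeeping for $f_{B,j}$ ends up at the right answer anyway.
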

\begin{proof}[Sketch of proof]
For details we refer to \cite[Lemma 6.1]{chenfroehlich} and \cite[Lemma 7]{haslerherbst1}.
As  $\psi_m \in D(\Hf)$, we have
$$
\sum_{n=0}^\infty \sum_{\lambda} \int |k| \nn{ ( a_{\lambda}(k)
\psi_m )_{(n)} }^2 dk  = \langle \psi_m , \Hf \psi_m \rangle < \infty \; ,
$$
which implies $\nn{a_\lambda(k) \psi_m} < \infty$ almost everywhere.

The pull-through formula yields, e.g. \cite[Lemma A.1]{BachFroehlichSigal.1998a}, for a.e. $k \in \IR^3$, $\lambda \in \{\ps \}$,  in the sense of measurable fucntions 
\muu
{
a_\lambda(k) H_m(\xi) \psi &= \left(\frac{1}{2}(\xi - k - \Pf + eA)^2 + e S \cdot  B + \Hfm + \omega_m(k) \right) a_\lambda(k) \psi  \\ &\qquad+ ( e f_A(k) \cdot  v(\xi) + e S \cdot f_B(k)  ) \psi 
}
for any $\psi \in \FF$.
Therefore,
\muu
{
a_\lambda(k) E_m(\xi) \psi_m(\xi) &= a_\lambda(k) H_m(\xi) \psi_m(\xi) \\
&= ((H_m(\xi - k) + \omega_m(k)) a_\lambda(k)  + e f_A(k) \cdot  v(\xi) + e S \cdot  f_B(k))\psi_m(\xi),
}
or in other words,
\muu
{
((H_m(\xi - k) - E_m(\xi) + \omega_m(k)) a_\lambda(k) \psi_m(\xi) = - (ef_A(k) \cdot v(\xi) + e S \cdot f_B(k) )\psi_m(\xi).
}
Multiplying with $R_{m,\xi}(k)$ yields the desired formula.
\end{proof}

The next lemma is needed to estimate the resolvent occurring in \eqref{eq:pullthroughrel}. 
It uses the   estimate obtained by  second order perturbation theory stated in  Lemma \ref{lem:per}.

\begin{lemma} \label{lem:estimate1} There exists a constant $C$  such that the following holds. 
Let $e \in \R$ and $m > 0$, and suppose the energy Inequality \eqref{eq:eineq}
holds.  Then 
for all   $\xi \in \R^3$ with   $|\xi|  <  1$,  such  that $\nabla E_m(\xi)$  exists, and  all $k \in \R^3$ and $i =1,2,3$,  we have 
\begin{enumerate}[label=(\alph*)]
\item $\nn{ R_{m,\xi}(k) (v_i(\xi)- \partial_i E_m(\xi)) \psi_m(\xi) }  \leq  \frac{ C   |\frac{1}{2} - \partial_i^2 E_m(\xi)|^{1/2}    (1+|E_m(\xi)|)^{1/2}   }{ 1-|\xi| }
(|k|^{-1/2} + 1)$, \label{lem:estimate111}
\item
 \label{en:estimate1_en2}   $\nn{R_{m,\xi}(k) v_i(\xi) } \leq C (1-|\xi|)^{-1}    (1+|E_m(\xi)|)^{1/2}
 (|k|^{-1} + 1)$.
\end{enumerate}
\end{lemma}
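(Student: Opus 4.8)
The plan is to deduce both estimates from the operator identity
\[
R_{m,\xi}(k)^{-1} \;=\; H_m(\xi-k)+\omega_m(k)-E_m(\xi)\;=\;\bigl(H_m(\xi)-E_m(\xi)\bigr)-k\cdot v(\xi)+\tfrac12 k^2+\omega_m(k),
\]
which follows from $H_m(\xi-k)=H_m(\xi)-k\cdot v(\xi)+\tfrac12 k^2$, together with three ingredients that are already available: Proposition~\ref{prop:Deltaineq}\,(b),(c), giving $R_{m,\xi}(k)^{-1}>0$ and $\|R_{m,\xi}(k)\|\le\bigl((1-|\xi|)|k|\bigr)^{-1}$; the infinitesimal bound of $eS\cdot B$ relative to $\Hf$, uniform in $m$ because $\|f_{B,j}\|_{(0)}<\infty$ independently of $m$, which yields both the operator bound $v(\xi)^2\le 2H_m(\xi)+C$ and, after taking the expectation in $\psi_m(\xi)$, the a priori bound $\langle\psi_m(\xi),(v(\xi)^2+\Hfm)\psi_m(\xi)\rangle\le C(1+|E_m(\xi)|)$; and, for part (a), the second-order perturbation estimate of Lemma~\ref{lem:per}. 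I shall repeatedly use the elementary fact that for $A$ self-adjoint, $B\ge0$ and $A^2\le\beta B$ one has $\|B^{-1/2}A\|\le\beta^{1/2}$.

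\emph{Part (b).} From $v_i(\xi)^2\le v(\xi)^2\le 2H_m(\xi)+C$, from $H_m(\xi)=H_m(\xi-k)+k\cdot v(\xi-k)+\tfrac12 k^2$, and from $k\cdot v(\xi-k)\le\tfrac14 v(\xi-k)^2+k^2\le\tfrac12 H_m(\xi-k)+k^2+C$ (using $v(\xi-k)^2\le 2H_m(\xi-k)+C$), one gets, after inserting $H_m(\xi-k)=R_{m,\xi}(k)^{-1}-\omega_m(k)+E_m(\xi)$, the bound $H_m(\xi)\le\tfrac32 R_{m,\xi}(k)^{-1}+C(1+|E_m(\xi)|+k^2)$, hence
\[
v_i(\xi)^2\;\le\;3\,R_{m,\xi}(k)^{-1}+C\bigl(1+|E_m(\xi)|+k^2\bigr)\;\le\;\beta\,R_{m,\xi}(k)^{-1},\qquad \beta:=3+\frac{C(1+|E_m(\xi)|+k^2)}{(1-|\xi|)|k|},
\]
the last step using $R_{m,\xi}(k)^{-1}\ge(1-|\xi|)|k|$. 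By the elementary fact $\|R_{m,\xi}(k)^{1/2}v_i(\xi)\|\le\beta^{1/2}$, whence
\[
\|R_{m,\xi}(k)v_i(\xi)\|\;\le\;\|R_{m,\xi}(k)\|^{1/2}\beta^{1/2}\;\le\;\Bigl(\tfrac{\beta}{(1-|\xi|)|k|}\Bigr)^{1/2}.
\]
Writing $\tfrac{\beta}{(1-|\xi|)|k|}$ as the sum of $3\bigl((1-|\xi|)|k|\bigr)^{-1}$, $C(1+|E_m(\xi)|)(1-|\xi|)^{-2}|k|^{-2}$ and $C(1-|\xi|)^{-2}$, taking square roots term by term, and using $(1-|\xi|)^{-1/2}\le(1-|\xi|)^{-1}$, $1\le(1+|E_m(\xi)|)^{1/2}$ and $|k|^{-1/2}\le|k|^{-1}+1$, gives the bound in (b) with a suitable constant.

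\emph{Part (a).} Set $w_i:=v_i(\xi)-\partial_i E_m(\xi)$ and $q(k):=\omega_m(k)-k\cdot\nabla E_m(\xi)+\tfrac12 k^2$; arguing as in the proof of Proposition~\ref{prop:Deltaineq} (convexity and rotation invariance of $E_m$, and $|\nabla E_m(\xi)|\le|\xi|$) one has $q(k)\ge(1-|\xi|)|k|>0$, and the identity above reads $R_{m,\xi}(k)^{-1}=(H_m(\xi)-E_m(\xi))-k\cdot w(\xi)+q(k)$. By Lemma~\ref{lem:per}\,(a) one has $P_m(\xi)w(\xi)P_m(\xi)=0$, so $w_i\psi_m(\xi)$ is orthogonal to $\ran P_m(\xi)$, and by Lemma~\ref{lem:per}\,(b) the reduced resolvent obeys $\bigl\langle w_i\psi_m(\xi),(H_m(\xi)-E_m(\xi))^{-1}w_i\psi_m(\xi)\bigr\rangle\le\tfrac12\bigl(1-\partial_i^2 E_m(\xi)\bigr)$. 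I then expand via the second resolvent identity $R_{m,\xi}(k)=G+G\,(k\cdot w(\xi))\,R_{m,\xi}(k)$ with $G:=\bigl((H_m(\xi)-E_m(\xi))+q(k)\bigr)^{-1}$: the resulting decomposition of $\langle w_i\psi_m(\xi),R_{m,\xi}(k)w_i\psi_m(\xi)\rangle$ has a first term $\langle w_i\psi_m(\xi),G\,w_i\psi_m(\xi)\rangle\le\tfrac12(1-\partial_i^2 E_m(\xi))$ — because $w_i\psi_m(\xi)\perp\ran P_m(\xi)$ and $q(k)>0$ — and a remainder bounded by $\|G^{1/2}w_i\psi_m(\xi)\|\,\|(k\cdot w(\xi))G^{1/2}\|\,\|R_{m,\xi}(k)w_i\psi_m(\xi)\|$, where $(k\cdot w(\xi))^2\le|k|^2 w(\xi)^2\le C|k|^2\bigl(G^{-1}+1+|E_m(\xi)|\bigr)$ together with $G^{-1}\ge q(k)\ge(1-|\xi|)|k|$ controls $\|(k\cdot w(\xi))G^{1/2}\|$. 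Feeding in $\|R_{m,\xi}(k)w_i\psi_m(\xi)\|^2\le\|R_{m,\xi}(k)\|\,\langle w_i\psi_m(\xi),R_{m,\xi}(k)w_i\psi_m(\xi)\rangle$ turns this into a quadratic inequality for $\|R_{m,\xi}(k)w_i\psi_m(\xi)\|$; solving it for $|k|\le1$, and for $|k|\ge1$ using instead the crude bound $\|R_{m,\xi}(k)w_i\psi_m(\xi)\|\le\|R_{m,\xi}(k)\|\,\|w_i\psi_m(\xi)\|$ with $\|w_i\psi_m(\xi)\|^2\le\langle\psi_m(\xi),v(\xi)^2\psi_m(\xi)\rangle\le C(1+|E_m(\xi)|)$, yields a bound of the stated form: the factor from Lemma~\ref{lem:per}\,(b) supplies the dependence on $\partial_i^2 E_m(\xi)$, and the improvement of $|k|^{-1/2}$ over the $|k|^{-1}$ of part (b) comes precisely from the orthogonality $w_i\psi_m(\xi)\perp\ran P_m(\xi)$.

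\emph{Main obstacle.} The delicate part is (a): $R_{m,\xi}(k)$ is a resolvent of the \emph{shifted} fiber Hamiltonian $H_m(\xi-k)$, so Lemma~\ref{lem:per}\,(b) — which controls only the reduced resolvent of $H_m(\xi)$ on $\ran(1-P_m(\xi))$ — cannot be applied to it directly; the comparison must be routed through the resolvent identity above and through the strictly positive scalar shift $q(k)$, and the real work is the bookkeeping in the quadratic inequality (and in keeping all constants uniform in $m$, as the eventual limit $m\downarrow0$ requires). In (b) the analogous step is immediate because one only needs the operator inequality $v_i(\xi)^2\lesssim R_{m,\xi}(k)^{-1}$, with no reference to the ground state or to perturbation theory.
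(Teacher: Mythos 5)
Your part (b) is essentially the paper's own argument: an operator inequality of the form $v_i(\xi)^2\le \beta\,R_{m,\xi}(k)^{-1}$, obtained by comparing $H_m(\xi)$ with $H_m(\xi-k)+\omega_m(k)-E_m(\xi)$, followed by $\nn{R_{m,\xi}(k)v_i(\xi)}\le\nn{R_{m,\xi}(k)}^{1/2}\beta^{1/2}$ and the resolvent bound of Proposition \ref{prop:Deltaineq}(c); only the bookkeeping differs.

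Part (a) is where you genuinely diverge. The paper never expands around a $k$-independent resolvent: it factorizes $\nn{R_{m,\xi}(k)(v_i-\partial_iE_m)\psi_m}\le\nn{R_{m,\xi}(k)(H_m(\xi)-E_m(\xi))^{1/2}}\,\nn{(H_m(\xi)-E_m(\xi))^{-1/2}(v_i-\partial_iE_m)\psi_m}$, bounds the second factor directly by Lemma \ref{lem:per}(b), and bounds the first factor by the same operator inequality used in (b), namely $H_m(\xi)-E_m(\xi)\le(1+|k|)R_{m,\xi}(k)^{-1}+\tfrac12(|k|+k^2)+|k|E_m(\xi)$; there is no resolvent identity, no quadratic inequality, and no separate orthogonality argument beyond what is already packaged in Lemma \ref{lem:per}. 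Your route via $R=G+G(k\cdot w)R$ with a bootstrap is workable (the orthogonality of $w_i\psi_m$ to $\ran P_m(\xi)$ together with $q(k)>0$ does give $\inn{w_i\psi_m,Gw_i\psi_m}\le\tfrac12(1-\partial_i^2E_m(\xi))$), but it delivers a slightly weaker bound than stated: the remainder term $\nn{R_{m,\xi}(k)}\,\nn{(k\cdot w)G^{1/2}}\,\nn{G^{1/2}w_i\psi_m}$, after inserting $\nn{R_{m,\xi}(k)}\le((1-|\xi|)|k|)^{-1}$ and $\nn{G}\le((1-|\xi|)|k|)^{-1}$, contributes a term of size $(1-|\xi|)^{-3/2}(1+|E_m(\xi)|)^{1/2}|k|^{-1/2}$, i.e., a worse power of $(1-|\xi|)$ than the $(1-|\xi|)^{-1}$ in (a); moreover your crude bound for $|k|\ge1$ drops the factor $|\tfrac12-\partial_i^2E_m(\xi)|^{1/2}$ entirely. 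Both defects are harmless for the way the lemma is used later (everything is absorbed into $F_{m,\xi}\ge(1+|E_m(\xi)|)^{1/2}$ at fixed $\xi$ with $|\xi|<1$), but strictly speaking you prove a variant of (a) rather than the stated inequality. The paper's factorization through $(H_m(\xi)-E_m(\xi))^{1/2}$ avoids the extra factor $\nn{G}^{1/2}$, gives the stated $(1-|\xi|)^{-1}$ directly, and is considerably shorter; what your approach buys is that it isolates explicitly where the cancellation mechanism (orthogonality to $\ran P_m(\xi)$) enters, at the price of the heavier bootstrap.
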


\newcommand{\sqrtHmEm}{(H_m - E_m)^{\frac{1}{2}}}
\begin{proof}
To simplify the notation we  drop the $\xi$ label in $\psi_m(\xi)$,  $R_{m,\xi}(k)$, and $v(\xi)$, and we  write 
\begin{align*} 
 h_m   := H_m(\xi), \qquad  e_m  := E_m(\xi)  .
\end{align*} 
(a) We start with the product inequality
\begin{eqnarray} \label{eq:prod}
\| R_m(k) v_{j} \psi_m \| \leq  \| R_m(k)
( h_m - e_m )^{1/2} \| \| ( h_m - e_m )^{-1/2} (  v_{j} - \partial_j E_m(\xi)) \psi_m \| .
\end{eqnarray}
By Lemma \ref{lem:per} the second factor on the right hand side can be estimated using
\[
\| (h_m - e_m)^{-1/2}  (v_j -\partial_j E_m(\xi) ) \psi_m \| \leq \left| \frac{1}{2} - \partial_j^2 E_m(\xi) \right|^{1/2}  \; .
\]
It remains to estimate the first factor in \eqref{eq:prod}.
First we use the trivial identity
\muu
{
h_m - e_m &= \frac{1}{2} (v-k)^2 + H_{f,m} - e_m + \frac{1}{2}k^2 + (v-k)\cdot k .
}
Estimating the last term using
\[
 (v-k)\cdot k \leq \frac{1}{2} |k| + \frac{1}{2} |k| (v-k)^2 ,
\]
we find
with
$
 \frac{1}{2} (v-k)^2 \leq H_m(\xi -k)
$ that
\begin{align} \label{eq:hmemineq2}
h_m - e_m
&\leq (1 + | k |) (H_m(\xi -k) + \omega_m(k) - e_m) + \frac{1}{2}(|k| +k^2) + | k | e_m .
\end{align}
Now  multiplying this inequality on both sides  with the self-adjoint operator $R_m(k)$ we obtain
\muu
{
R_m(k) (h_m - e_m) R_m(k) \leq  (1+ | k |) R_m(k) + \left( \frac{1}{2} ( | k |  + k^2) +  |  k | e_m \right) R_m(k)^2.
}
Using this, we estimate
\begin{align}
& \| R_m(k) (h_m - e_m)^{1/2}  \|^2 \\
&  \qquad \leq  \|  (h_m - e_m)^{1/2} R_m(k) \|^2 \nonumber \\
& \qquad = \sup_{\|\phi\| = 1} \inn{\phi,R_m(k)(h_m - e_m) R_m(k) \phi}  \nonumber   \\
& \qquad \leq (1 + | k |) \| R_m(k)\| +  \left( \frac{1}{2} ( |k | + k^2) + |k | e_m \right) \| R_m(k)\|^2.
\label{eq:last_term}
\end{align}
Now using Inequality   \eqref{prop:Deltaineq}
this   implies the bound stated in (a). \\
(b)  Using that $v^2 \leq h_m$ we see from  \eqref{eq:hmemineq2} that
\begin{align*}
v^2
&\leq (1 + | k |) (H_m(\xi -k) + \omega_m(k) - e_m) + \frac{1}{2}(|k| +k^2) + (1 + | k |) e_m .
\end{align*}
This implies 
\begin{align}
\nonumber   \left\|R_m(k) v_i \right\|^2 \leq (1 + | k |) \| R_m(k) \| +  \left( \frac{1}{2} (  |k | + k^2) + ( | k | +1) e_m \right) \| R_m(k)\|^2 .
\end{align}
Now using  Inequality   \eqref{prop:Deltaineq}  this implies the bound in  (b).
\end{proof}

We note that  the assumption $|\xi| < 1$ guarantees by  Lemma   \ref{eq:convex}
that $|\nabla E_{m}(\xi) | < 1$ always holds, provided the derivative exists. Therefore we know that  $h_{m,\xi} \in \hh$ and the following two expressions are well defined and positive, 
\begin{align}
D_{m,\xi}(k) &  := (\omega_m(k)  -  k \cdot \nabla E_m(\xi) )^{-1}  , \\ 
\widehat{D}_{m,\xi}(k) &  :=  (1  -  \omega_m(k)^{-1} k \cdot \nabla E_m(\xi) )^{-1} .  %
\end{align} 
To write the following  estimates in compact notation we  define
\begin{equation} \label{eq:deffoff1}
F_{m,\xi}    :=  \sum_{i=1}^3 \left(  1 +   \abs{ \frac{1}{2} - \partial_i^2 E_m(\xi)}^{1/2}  \right)   (1+|E_m(\xi)|)^{1/2}.
\end{equation} 
Recall that if $\nabla E_m(\xi)$ exists, then also the second derivative, cf. \cref{lem:per}.

\begin{prop}      There exists a constant $C$ such that the following holds.
\label{prop:a_est_2}
Let $e \in \R$ and $m > 0$, and suppose the energy Inequality \eqref{eq:eineq}
holds.   Let    $\xi \in \R^3$ with   $|\xi|  <  1$,  such  that $\nabla E_m(\xi)$  exists.
Then for  a.e.   $k \in \IR^3$ 
\[
\nn{a_\lambda(k) U_m(\xi) \psi_m(\xi)} \leq \frac{C |e \rho(k)|  F_{m,\xi} }{1-|\xi|} (  \widehat{D}_{m,\xi}(k)  + 1 )      (|k|^{-1} + |k|^{-1/2} ).
\]
\end{prop}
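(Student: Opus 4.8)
The plan is to combine the pull-through resolvent identity from Lemma \ref{lem:pullthrough} with the resolvent estimates of Lemma \ref{lem:estimate1}, and then to commute the Weyl operator $U_m(\xi) = e^{\i\pi(h_{m,\xi})}$ past the pointwise annihilation operator $a_\lambda(k)$ using the canonical commutation relations. First I would record the elementary intertwining relation: since $\pi(h_{m,\xi}) = \phi(\i h_{m,\xi})$ and the Weyl operator shifts annihilation operators by a scalar, one has, as an identity of measurable functions of $k$,
\[
a_\lambda(k) U_m(\xi) = U_m(\xi)\bigl( a_\lambda(k) + c_\lambda(k)\bigr), \qquad c_\lambda(k) = \tfrac{1}{\sqrt 2}\,\overline{h_{m,\xi}(\lambda,k)} \ \text{(up to the correct normalization)},
\]
so that, applied to $\psi_m(\xi)$ and using unitarity of $U_m(\xi)$,
\[
\nn{a_\lambda(k) U_m(\xi)\psi_m(\xi)} \le \nn{a_\lambda(k)\psi_m(\xi)} + |h_{m,\xi}(\lambda,k)|.
\]

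Next I would estimate the two terms on the right separately. For the first term, insert the pull-through formula \eqref{eq:pullthroughrel} and split $v(\xi) = (v(\xi) - \nabla E_m(\xi)) + \nabla E_m(\xi)$; the term with $v(\xi)-\nabla E_m(\xi)$ is controlled componentwise by Lemma \ref{lem:estimate1}\ref{lem:estimate111}, the spin term $S\cdot(\i k\wedge\epsilon_\lambda(k))$ together with the remaining $\nabla E_m(\xi)$-piece is a bounded vector times $R_{m,\xi}(k)$, bounded by Proposition \ref{prop:Deltaineq}(c) (an extra factor $|k|$ from $k\wedge\epsilon_\lambda$ is harmless since it multiplies $\rho$, which is bounded with compact support). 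Collecting these and using the definition \eqref{eq:deffoff1} of $F_{m,\xi}$, one obtains
\[
\nn{a_\lambda(k)\psi_m(\xi)} \le \frac{C|e\rho(k)|\,F_{m,\xi}}{1-|\xi|}\bigl(|k|^{-1} + |k|^{-1/2}\bigr).
\]
For the second term, from the defining formula \eqref{eq:defofh} one has $|h_{m,\xi}(\lambda,k)| \le |e|\,|\nabla E_m(\xi)|\,\frac{|\rho(k)|}{\sqrt{|k|}}\,D_{m,\xi}(k) = |e|\,|\nabla E_m(\xi)|\,\frac{|\rho(k)|}{\omega_m(k)\sqrt{|k|}}\,\widehat D_{m,\xi}(k) \le C|e\rho(k)|\,\widehat D_{m,\xi}(k)\,|k|^{-1/2}$, using $\omega_m(k)\ge|k|$ and $|\nabla E_m(\xi)|\le|\xi|<1$. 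This is absorbed into the claimed bound via the factor $(\widehat D_{m,\xi}(k)+1)$, after noting that $F_{m,\xi}\ge 1$ and $(1-|\xi|)^{-1}\ge 1$.

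The main obstacle I anticipate is making the commutation step \eqref{eq:defofafint}-style rigorous at the level of pointwise-in-$k$ operators rather than smeared operators $a(f)$: one needs that $U_m(\xi)$ maps $D(\Hf)$ into itself (so that $a_\lambda(k)U_m(\xi)\psi_m(\xi)$ is defined for a.e.\ $k$) and that the Weyl shift identity holds for a.e.\ $k$; this follows by testing against $a(f)$ for arbitrary $f\in\hh$ with $\|f\|_{(0)}<\infty$, using the Bochner-integral representation \eqref{eq:defofafint} and the standard Weyl relation $e^{-\i\pi(h)}a(f)e^{\i\pi(h)} = a(f) + \tfrac{\i}{\sqrt2}\langle f, \i h\rangle$-type formula, then disintegrating in $k$. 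A secondary point is bookkeeping the numerical factors $\sqrt2$ and $\rho$ versus $\overline\rho$ in the spin and $v(\xi)$ terms, but since $|\rho| = |\overline\rho|$ and all constants are absorbed into $C$, this is routine. Everything else is a direct substitution of the two quoted lemmas.
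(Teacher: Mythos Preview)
Your proposal contains a genuine gap that invalidates the argument. The error is the use of the triangle inequality
\[
\nn{a_\lambda(k)\,U_m(\xi)\psi_m(\xi)} \le \nn{a_\lambda(k)\psi_m(\xi)} + |h_{m,\xi}(\lambda,k)|,
\]
which throws away precisely the infrared cancellation that the Bogoliubov transformation $U_m(\xi)$ is designed to produce. Each of the two terms on the right is individually of order $|k|^{-3/2}$, not $|k|^{-1}$, with constants uniform in $m$. Concretely: in your treatment of $\nn{a_\lambda(k)\psi_m(\xi)}$, the ``remaining $\nabla E_m(\xi)$-piece'' is
\[
\frac{|e\rho(k)|}{\sqrt{2|k|}}\,\bigl|\epsilon_\lambda(k)\cdot\nabla E_m(\xi)\bigr|\,\nn{R_{m,\xi}(k)\psi_m(\xi)}
\le \frac{|e\rho(k)|}{\sqrt{2|k|}}\cdot\frac{1}{(1-|\xi|)|k|}
= \frac{C|e\rho(k)|}{(1-|\xi|)\,|k|^{3/2}},
\]
and not $O(|k|^{-1})$ as you claim. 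Likewise, your bound on $|h_{m,\xi}|$ is wrong: from $\omega_m(k)\ge |k|$ one gets $\frac{1}{\omega_m(k)\sqrt{|k|}}\le |k|^{-3/2}$, not $|k|^{-1/2}$; the correct uniform-in-$m$ bound is $|h_{m,\xi}(\lambda,k)|\le C|e\rho(k)|\,\widehat D_{m,\xi}(k)\,|k|^{-3/2}$. Since $|k|^{-3/2}$ is not square-integrable near the origin in $\R^3$, such a bound would be useless for the subsequent photon-number and compactness arguments.

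What the paper actually does is keep the two contributions together and exhibit the cancellation. After the Weyl shift one has
\[
U_m(\xi)^* a_\lambda(k) U_m(\xi)\psi_m(\xi)
= a_\lambda(k)\psi_m(\xi) + e\,\epsilon_\lambda(k)\cdot\nabla E_m(\xi)\,\frac{\rho(k)}{\sqrt{2|k|}}\,D_{m,\xi}(k)\,\psi_m(\xi).
\]
In the pull-through expression for $a_\lambda(k)\psi_m(\xi)$, the dangerous piece is $(\mathrm{II})_2 = -\frac{e\rho(k)}{\sqrt{2|k|}}\,\epsilon_\lambda(k)\cdot\nabla E_m(\xi)\,R_{m,\xi}(k)\psi_m(\xi)$. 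The key additional step you are missing is a second resolvent identity expansion of $R_{m,\xi}(k)\psi_m(\xi)$ about $R_{m,\xi}^{(0)}(k)\psi_m(\xi)=\omega_m(k)^{-1}\psi_m(\xi)$, which yields
\[
R_{m,\xi}(k)\psi_m(\xi) = D_{m,\xi}(k)\,\psi_m(\xi) + D_{m,\xi}(k)\bigl((\mathrm{III})_1+(\mathrm{III})_2\bigr),
\]
with the remainders $(\mathrm{III})_1,(\mathrm{III})_2$ of order $|k|^{1/2}$ or better (via Lemma~\ref{lem:estimate1} and Proposition~\ref{prop:Deltaineq}). The leading part $-\frac{e\rho(k)}{\sqrt{2|k|}}\,\epsilon_\lambda(k)\cdot\nabla E_m(\xi)\,D_{m,\xi}(k)\,\psi_m(\xi)$ then cancels \emph{exactly} against the Weyl-shift term, and only after this cancellation do the surviving pieces combine to the claimed $(|k|^{-1}+|k|^{-1/2})$ bound. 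Your outline must be modified to carry out this cancellation rather than separating the two terms by the triangle inequality.
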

\begin{proof}   Using the convergence of \eqref{eq:defofafint} as a Bochner integral and 
  Lemma \ref{lem:bratt} we find   for any $f \in \hh$  with $\| f \|_{(0)} < \infty$ and $\varphi \in \FF_{\fin}(\hh)$ 
that 
\begin{align}
& \sum_{\lambda=1,2} \int dk \overline{f(\lambda,k)} \inn{ \varphi , U_m(\xi)^* a_\lambda(k) U_m(\xi) \psi_m(\xi) }  \label{eq:conjann-1}   \\
& = \inn{ \varphi , U_m(\xi)^* a(f) U_m(\xi) \psi_m(\xi)}  \nonumber   \\
 & = \sum_{\lambda=1,2} \int dk  \overline{f(\lambda,k)} \inn{ \varphi ,  a_\lambda(k) \psi_m(\xi) }   \nonumber  \\
& \quad  + \sum_{\lambda=1,2} \int dk \overline{f(\lambda,k)}  e \epsilon_\lambda(k) \cdot \nabla_\xi E_m(\xi)  \frac{\rho(k)}{\sqrt{2 \abs{k}}} \frac{1}{\omega_m(k) - k\nabla_\xi \cdot E_m(\xi)} \inn{ \varphi ,  \psi_m(\xi)}  . 
\nonumber 
\end{align} 
Since $f \in \hh$  with $\| f \|_{(0)} < \infty$ and $\varphi \in \FF_{\fin}(\hh)$  are arbitrary it follows
 from \eqref{eq:conjann-1}  and density of $\FF_{\fin}(\hh)$
 that for a.e. $k \in \R^3$  
\begin{align}
& U_m(\xi)^* a_\lambda(k) U_m(\xi) \psi_m(\xi) \nonumber   \\
 & =  a_\lambda(k) \psi_m(\xi)  + e \epsilon_\lambda(k) \nabla_\xi E_m(\xi)  \frac{\rho(k)}{\sqrt{2 \abs{k}}} \frac{1}{\omega_m(k) - k\nabla_\xi E_m(\xi)} \psi_m(\xi) . \label{eq:conjann0} 
\end{align} 
The goal is to show that as   $k \to 0$ the leading order contribution of the first 
term in \eqref{eq:conjann0} cancels the second term. %
To show this we use the  pull-through resolvent identity in \cref{lem:pullthrough} and can write 
\begin{equation} 
a_\lambda(k) \psi_m(\xi) = (\mathrm{I}) + (\mathrm{II}) , 
\end{equation} 
where 
\begin{align*}
(\mathrm{I}) &:=  R_{m,\xi}(k)  \frac{e \rho(k)}{\sqrt{ 2 \abs k } } S \cdot ( \i k \wedge \epsilon_\lambda(k))  \psi_m(\xi), \\
(\mathrm{II}) &:= -R_{m,\xi}(k) \frac{e \rho(k)}{\sqrt{2 \abs k }} \epsilon_\lambda(k) \cdot  v(\xi)   \psi_m(\xi).
\end{align*}
 
The first term,  (I), can be estimated using  %
  \eqref{prop:Deltaineqc} 
\muu
{
\nn{(\mathrm{I})} \leq \frac{|e\rho(k)|}{1-|\xi|} \frac{1}{ \sqrt{2 \abs{k}}}.
}
Furthermore, we write the second term 
$$
(\mathrm{II}) = (\mathrm{II})_1 + (\mathrm{II})_2 $$
 by dividing it into the following two parts 
\muu
{
(\mathrm{II})_1 &:= - R_{m,\xi}(k) \frac{e \rho(k)}{\sqrt{2 \abs k}} \epsilon_\lambda(k) \cdot ( v(\xi)-\nabla E_m(\xi))  \psi_m(\xi), \\
(\mathrm{II})_2 &:= - R_{m,\xi}(k)  \frac{e \rho(k)}{\sqrt{ 2 \abs k }} \epsilon_\lambda(k) \cdot \nabla E_m(\xi)  \psi_m(\xi).
}
We have 
\[
\nn{(\mathrm{II})_1} \leq C   \frac{|e \rho(k)|}{\sqrt{2 \abs k}}  F_{m,\xi}  (1-|\xi|)^{-1} (|k|^{-1/2} + 1)
\]
by \cref{lem:estimate1}. 
To estimate $(\mathrm{II})_2$, we proceed as follows. Similarly  to  \cite{haslerherbst1} we introduce the operator
\begin{align*} 
R^{(0)}_{m,\xi}(k)  & :=  (
  H_m(\xi ) + \omega_m(k) - E_m(\xi)   )^{-1}  .  
\end{align*}
Using the second resolvent identity, we find 
\begin{align}
R_{m,\xi}(k) \psi_m(\xi)   %
&=   R_{m,\xi}^{(0)}(k)\psi_m + R_{m,\xi}(k)  ( - \frac{1}{2}k^2 + k \cdot v(\xi) ) R_{m,\xi}^{(0)}(k) \psi_m(\xi)  \nonumber \\
&= \omega_m(k)^{-1}  \psi_m +   \omega_m(k)^{-1}  R_{m,\xi}(k)  (  k \cdot v(\xi) - \frac{1}{2}k^2 ) \psi_m(\xi)   \nonumber \\
&=\omega_m(k)^{-1} \left(   \psi_m +   k \cdot  \nabla E_m(\xi) R_{m,\xi}(k)   \psi_m(\xi)  +  ({\mathrm{III}})_1 + ({\mathrm{III}})_2  \right) , \label{bringleft44}
\end{align} 
where we defined 
\begin{align*}
({\mathrm{III}})_1 & := 
  -      \frac{1}{2}k^2 R_{m,\xi}(k)  \psi_m(\xi),  \\
(\mathrm{III})_2 &  := 
  R_{m,\xi}(k)   k\cdot  ( v(\xi)  - \nabla E(\xi)) \psi_m (\xi) .
\end{align*} 
Hence, multiplying out  \eqref{bringleft44} and bringing the  term  $\omega_m(k)^{-1}   k \cdot  \nabla E_m(\xi) R_{m,\xi}(k)   \psi_m(\xi)$ to the left, we arrive at 
\muun
{
\label{eq:Rmxi_formula}
R_{m,\xi}(k) \psi_m(\xi)  =  D_{m,\xi}(k)  ( \psi_m(\xi)  
+   (\mathrm{III})_1 + (\mathrm{III})_2  ). 
}
By %
Inequality   \eqref{prop:Deltaineqc}  and by \cref{lem:estimate1}, respectively,
we find 
\begin{align}
\nn{(\mathrm{III})_{1}}  & \leq \frac{C|k|}{1-|\xi|}  \label{eq:Qtrick1}, \\
\nn{(\mathrm{III})_{2}} &  \leq \frac{C  F_{m,\xi}  }{ 1-|\xi|} 
 (|k|^{1/2} + |k|) .
 \label{eq:Qtrick2}
\end{align}
Now  inserting the above, we arrive at
\begin{align}
(\mathrm{II})_2 &= - \epsilon_\lambda(k) \cdot\nabla_\xi E_m(\xi)  \frac{e \rho(k)}{\sqrt{2 |k|}} \frac{1}{\omega_m(k) - k\cdot\nabla_\xi E_m(\xi)}  \psi_m(\xi) \nonumber \\ &\qquad - \epsilon_\lambda(k) \cdot \nabla_\xi E_m(\xi)  \frac{e \rho(k)}{\sqrt{2 |k|} } \frac{1}{\omega_m(k) - k\cdot\nabla_\xi E_m(\xi)} (  (\mathrm{III})_{1} +  (\mathrm{III})_{2} ). \label{eq:QQQQ1}
\end{align} 
Using  \eqref{eq:Qtrick1} and  \eqref{eq:Qtrick2}  we can estimate the second  term  on the right hand side of  \eqref{eq:QQQQ1} and find that it is of order $|k|^{-1}$. 
The claim now  follows by  collecting  estimates  and using 
that the first term on the right hand side of  \eqref{eq:QQQQ1} exactly cancels the second
 term on the right hand side of  \eqref{eq:conjann0}. 
\end{proof}

We still need an estimate  involving   derivatives. To this end,  we shall henceforth make  an explicit choice 
of the polarization vectors. 
After a possible unitary
transformation on Fock space we can always achieve that the
polarization vectors are given by
\begin{eqnarray} \label{eq:cheps}\label{def:eps}
\varepsilon_{1}(k)  = \frac{(k_2,-k_1, 0 ) }{\sqrt{k_1^2 + k_2^2}}
\quad \mathrm{and} \quad \varepsilon_{2}(k) = \frac{k}{|k|} \wedge
\varepsilon_1(k) \; .
\end{eqnarray}

\begin{prop} \label{lem:second2}
There exists a constant $C$ (depending on $\Lambda$)  such that the following holds. 
Let $e \in \R$ and $m > 0$, and suppose the energy Inequality \eqref{eq:eineq}
holds.  Let    $\xi \in \R^3$ with   $|\xi|  <  1$  such  that $\nabla E_m(\xi)$  exists. 
Then for almost all  $k$ with $|k| < \Lambda$,
\muu
{
\nn{ \nabla_k ( a_\lambda(k) U_m(\xi) \psi_m(\xi) )} \leq   \frac{C F_{m,\xi}^2    |e \rho(k)|   (  \widehat{D}_{m,\xi}(k)  + 1 )    }{(1-|\xi|)^2 |k|\sqrt{k_1^2 + k_2^2}   } .
}
\end{prop}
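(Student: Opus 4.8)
The plan is to differentiate the representation of $a_\lambda(k) U_m(\xi)\psi_m(\xi)$ obtained in the proof of Proposition \ref{prop:a_est_2} and estimate each resulting term. Recall from \eqref{eq:conjann0} and the pull-through resolvent identity \eqref{eq:pullthroughrel} that
$U_m(\xi)^* a_\lambda(k) U_m(\xi)\psi_m(\xi)$ equals $a_\lambda(k)\psi_m(\xi)$ plus an explicit scalar multiple of $\psi_m(\xi)$, with the cancellation of the most singular parts as $k\to 0$ already established. Since $U_m(\xi)$ does not depend on $k$, it suffices to bound $\nabla_k$ of the right-hand side. Writing $a_\lambda(k)\psi_m(\xi) = (\mathrm{I}) + (\mathrm{II})_1 + (\mathrm{II})_2$ as in that proof, I would differentiate each piece in turn. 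The $k$-dependence enters through four sources: the cutoff factor $\rho(k)$ (smooth on $\{|k|<\Lambda\}$), the prefactor $|k|^{-1/2}$, the polarization vectors $\epsilon_\lambda(k)$ (whose explicit form \eqref{def:eps} produces the $(k_1^2+k_2^2)^{-1/2}$-type singularity — here the $\Lambda$-dependence of the constant enters), and the resolvent $R_{m,\xi}(k)$ together with the scalar denominators $D_{m,\xi}(k)$.

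The key step is controlling $\nabla_k R_{m,\xi}(k)$. Using $R_{m,\xi}(k) = (H_m(\xi-k) + \omega_m(k) - E_m(\xi))^{-1}$ and the resolvent derivative formula, $\partial_{k_i} R_{m,\xi}(k) = - R_{m,\xi}(k)\,\partial_{k_i}(H_m(\xi-k) + \omega_m(k))\,R_{m,\xi}(k)$, where $\partial_{k_i}(H_m(\xi-k)) = -(v(\xi) - k)_i$ (the $i$-th component of $\xi - k - \Pf + eA$) and $\partial_{k_i}\omega_m(k) = k_i/\omega_m(k)$ is bounded by $1$. One then uses $v(\xi)-k$ together with the bound $\tfrac12(v(\xi)-k)^2 \le H_m(\xi-k) \le R_{m,\xi}(k)^{-1} + E_m(\xi)$ to absorb the $v(\xi)-k$ factor into $R_{m,\xi}(k)^{1/2}$, reducing everything to operator norms of $R_{m,\xi}(k)$ and $R_{m,\xi}(k)^{1/2}$ applied to $\psi_m(\xi)$ — exactly the quantities already estimated in Proposition \ref{prop:Deltaineq} and Lemma \ref{lem:estimate1}. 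For the terms in which $\nabla_k$ hits $\psi_m(\xi)$-adjacent resolvents producing $v(\xi)$ or $v(\xi) - \nabla E_m(\xi)$ factors, I would reuse Lemma \ref{lem:estimate1}\ref{lem:estimate111}–\ref{en:estimate1_en2}, which is why $F_{m,\xi}$ appears and why it now appears squared: differentiating a product of two $R_{m,\xi}(k)$'s each carrying one power of $F_{m,\xi}$.

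Concretely I would: (1) fix the explicit polarization choice \eqref{def:eps} and record $|\nabla_k \epsilon_\lambda(k)| \le C(k_1^2+k_2^2)^{-1/2}$ and $|\nabla_k(i k\wedge\epsilon_\lambda(k))| \le C$; (2) differentiate $(\mathrm{I})$, the spin term, getting contributions where $\nabla_k$ hits $\rho$, $|k|^{-1/2}$, $\epsilon_\lambda$, the factor $k$ in $k\wedge\epsilon_\lambda$, or $R_{m,\xi}(k)$, and bound each by \eqref{prop:Deltaineqc} and the $\nabla_k R_{m,\xi}(k)$ estimate, all of order $\le C|e\rho(k)|(1-|\xi|)^{-2}|k|^{-1}(k_1^2+k_2^2)^{-1/2}$; (3) differentiate $(\mathrm{II})_1$, using Lemma \ref{lem:estimate1}\ref{lem:estimate111} for the undifferentiated resolvent-$v$ factor and the same resolvent-derivative trick when $\nabla_k$ hits $R_{m,\xi}(k)$, picking up one more $F_{m,\xi}$; (4) for $(\mathrm{II})_2$, differentiate the formula \eqref{eq:Rmxi_formula} together with the remaining pieces — here the dangerous $\omega_m(k)^{-1} k\cdot\nabla E_m(\xi)$ structure and the denominator $D_{m,\xi}(k)$ must be differentiated, producing the $\widehat D_{m,\xi}(k)$ (and, after squaring denominators, still $\widehat D_{m,\xi}(k)+1$ after using $\widehat D_{m,\xi}(k)\ge 1$) and the $(\mathrm{III})_1,(\mathrm{III})_2$ error terms whose $k$-derivatives are again handled by \eqref{eq:Qtrick1}, \eqref{eq:Qtrick2} and the resolvent-derivative bound; (5) check that the leading cancellation between the differentiated first term of \eqref{eq:conjann0} and the differentiated first term of \eqref{eq:QQQQ1} persists — it does, since differentiation commutes with the algebraic identity establishing the cancellation. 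Collecting all contributions gives the stated bound.

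\textbf{Main obstacle.} The technical heart is the bound on $\nabla_k R_{m,\xi}(k)$ and, more delicately, verifying that after differentiation the leading-order cancellation used in Proposition \ref{prop:a_est_2} is not destroyed: one must show that the singular-as-$k\to 0$ parts that were arranged to cancel at the level of the function also cancel at the level of its $k$-gradient, so that the surviving terms are only of order $|k|^{-1}$ (times the polarization singularity) rather than $|k|^{-3/2}$ or worse. This amounts to keeping careful track, in \eqref{eq:QQQQ1} and \eqref{eq:conjann0}, of which pieces are genuinely cancelling and differentiating that identity rather than the individual summands; the error terms $(\mathrm{III})_1$, $(\mathrm{III})_2$ and their derivatives are then subleading and absorbed into the claimed estimate.
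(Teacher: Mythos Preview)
Your approach is essentially the same as the paper's, with a slightly different but equivalent bookkeeping. The paper differentiates the pull-through formula directly and organizes the result as $J_1$ (resolvent-derivative contribution) plus $J_2=J_{2,1}+J_{2,2}$, and then \emph{iteratively} replaces $v(\xi)$ by $\nabla E_m(\xi)$ via \eqref{tricccck} and $R_{m,\xi}(k)$ by $D_{m,\xi}(k)$ via \eqref{eq:Rmxi_formula} inside $J_1$ and $J_{2,1}$, until the surviving leading parts are explicit enough to match, term by term, the $k$-derivative of the second summand in \eqref{eq:conjann0}. You instead substitute \eqref{eq:Rmxi_formula} into $(\mathrm{II})_2$ \emph{before} differentiating, so that the leading piece $-D_{m,\xi}(k)\tfrac{e\rho}{\sqrt{2|k|}}\epsilon_\lambda\!\cdot\!\nabla E_m(\xi)\psi_m(\xi)$ cancels identically with the second summand in \eqref{eq:conjann0} already at the level of functions, and hence trivially after differentiation; the remaining work is then to differentiate $(\mathrm{I})$, $(\mathrm{II})_1$, and the error terms $(\mathrm{III})_1,(\mathrm{III})_2$ and their prefactor, which you correctly reduce to \eqref{prop:Deltaineqc} and \cref{lem:estimate1}. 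This ``substitute first, then differentiate'' organization is arguably cleaner and avoids the paper's two-step replacement in $J_1$, but the analytic content and the estimates used are identical.

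One small correction to your step (1): it is \emph{not} true that $|\nabla_k(\i k\wedge\epsilon_\lambda(k))|\le C$; the product rule gives a term $k\wedge\nabla_k\epsilon_\lambda(k)$ of size $|k|(k_1^2+k_2^2)^{-1/2}$, which blows up near the $k_3$-axis. This is harmless for the final bound (the $(k_1^2+k_2^2)^{-1/2}$ just feeds into the stated right-hand side, as the paper's estimate \eqref{eq:final J22 estimate} shows), but you should record the correct bound there.
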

\begin{proof} 
First we note that for all nonzero $k \in \R^3$ and $\lambda = 1,2$, we have 
\begin{equation} \label{eq:derofepsi} 
\left| \frac{\partial}{\partial k_j}  \varepsilon_\lambda(k) \right| \leq \frac{1}{\sqrt{ k_1^2 + k_2^2} }  , \quad  j = 1 , 2,  3 . 
\end{equation} 
Moreover, as   $\rho$ is constant for $|k| < \Lambda$, we do not need to take derivatives of $\rho$ into account.  
Since $|k| < \Lambda$ it suffices to consider the leading order contributions as $|k|$ is small. 
From   differentiating  \eqref{eq:conjann0}   we find  
\begin{align}
& U_m(\xi)^* \nabla_k  a_\lambda(k) U_m(\xi) \psi_m(\xi)   =  \nabla_k  U_m(\xi)^* a_\lambda(k) U_m(\xi) \psi_m(\xi)  \label{eq:conjann}  \\
 & \quad  = \nabla_k \left(   a_\lambda(k) \psi_m(\xi)  +  \epsilon_\lambda(k)\cdot  \nabla_\xi E_m(\xi)  \frac{e \rho(k)}{\sqrt{2 \abs{k}}} \frac{1}{\omega_m(k) - k\nabla_\xi E_m(\xi)} \psi_m(\xi) \right) ,  \nonumber 
\end{align} 
where the derivative is understood in the strong sense. 
First we    calculate the derivative of the first term  using the  pull-through resolvent identity in \cref{lem:pullthrough}.
 
 For $k$ with $|k|<\Lambda$ we find by the product rule
\begin{align}
& \nabla_k a_\lambda(k) \psi_m(\xi)  \nonumber  \\
& \  = 
\nabla_k \left( -  \frac{e \rho(k)}{\sqrt{ 2 \abs k } }  R_{m,\xi}(k)\left( \epsilon_\lambda(k)\cdot  v(\xi) +  \sigma \cdot (k \wedge \epsilon_\lambda(k) \right) \psi_m(\xi)  \right) \nonumber \\ \label{eq:dif_est_1}
 &\  =   J_1 + J_2 ,  
\end{align} 
where we  introduced  
\begin{align*} 
 J_1  &:=  R_{m,\xi}(k) ( k-v(\xi)  + \nabla_k
\omega_m(k) ) R_{m,\xi}(k)\\
&\quad \times \frac{e \rho(k)}{\sqrt{2 | k|}}  \left( \epsilon_\lambda(k)\cdot  v(\xi) +  \sigma \cdot (k \wedge \epsilon_\lambda(k) \right) \psi_m(\xi)  %
\end{align*}
and 
$$
J_2  := J_{2,1} + J_{2,2}
$$
with
\begin{align*} 
J_{2,1}  &:=   -R_{m,\xi}(k)  \nabla_k \left( \frac{e \rho(k)}{ \sqrt{ 2 |k|}}     \epsilon_\lambda(k) \cdot v(\xi) \psi_m(\xi) \right)  ,   \\
J_{2,2}  & := 
-R_{m,\xi}(k)  \nabla_k \left( \frac{e \rho(k)}{\sqrt{ 2 | k| } }  S \cdot(k \wedge \epsilon_\lambda(k))  \psi_m(\xi) \right)  .
\end{align*}
To estimate  $J_{2,2}$ we use \eqref{prop:Deltaineqc} and    \eqref{eq:derofepsi}, which yields
\begin{align}
\label{eq:final J22 estimate}
\| J_{2,2}  \| \leq\frac{ C |e \rho(k)|}{1-|\xi|}  \frac{1}{|k|\sqrt{k_1^2 + k_2^2} } .
\end{align}
Next we consider  $J_{2,1}$ and  we write \begin{equation} \label{tricccck} v(\xi) = \nabla E_m(\xi) +  (v(\xi) - \nabla E_m(\xi)) \end{equation}  and  use 
\cref{lem:estimate1} \ref{lem:estimate111}  and  \eqref{eq:derofepsi}  to find  
\begin{align*}
\left\| J_{2,1}    + R_{m,\xi}(k)  \nabla_k \left( \frac{e \rho(k)}{ \sqrt{ 2 |k|}}     \epsilon_\lambda(k)\cdot  \nabla E_m(\xi) \psi_m(\xi) \right) \right\| 
\leq     \frac{C F_{m,\xi} |e \rho(k)|}{(1-|\xi|) |k|\sqrt{k_1^2+k_2^2}} .
\end{align*} 
Now we use   \eqref{eq:Rmxi_formula}--\eqref{eq:Qtrick2}  to replace the resolvent $R_{m,\xi}$  by $D_{m,\xi}$ and  get (noting that $\abs{ \nabla E_m(\xi) } < 1$)
\begin{align} 
\left\| J_{2,1}  + D_{m,\xi}(k)  \nabla_k \left(
\frac{e \rho(k)}{\sqrt{2 |k|} }  \epsilon_\lambda(k)\cdot   \nabla E_{m}(\xi) \psi_m(\xi) \right) \right\| \leq  \frac{C F_{m,\xi}  |e \rho(k)|  (  \widehat{D}_{m,\xi}(k)  + 1 )   }{(1-|\xi|) |k| \sqrt{k_1^2+k_2^2}} .\label{eq:finalderest0} 
\end{align} 
Finally let us consider   $J_1$. 
The term involving the spin has  a  lower order singularity  as well as  the term involving the $-k$. 
Thus we find  with  \eqref{prop:Deltaineqc} and \cref{lem:estimate1} \ref{en:estimate1_en2}
\begin{align} 
\left\| J_1 - R_{m,\xi}(k) (  \nabla_k \omega_m(k) - v(\xi)  ) R_{m,\xi}(k)\frac{e \rho(k)}{\sqrt{2 \abs k}}   \epsilon_\lambda(k)\cdot  v(\xi) \psi_m(\xi)  \right\| \leq 
\frac{C  F_{m,\xi} |e \rho(k)|}{(1-|\xi|)^2 |k|^{3/2}   }. \nonumber %
\end{align} 
Using again \eqref{tricccck} to replace the second $v(\xi)$ by $\nabla E_{m}(\xi)$, we find with \cref{lem:estimate1} \ref{lem:estimate111}  and \cref{lem:estimate1}  \ref{en:estimate1_en2}  
that 
\begin{align*} 
 \left\| J_1 - R_{m,\xi}(k) (  \nabla_k \omega_m(k) - v(\xi)  ) R_{m,\xi}(k)\frac{e \rho(k)}{\sqrt{2 \abs k}}   \epsilon_\lambda(k)\cdot   \nabla E_m(\xi) \psi_m(\xi)  \right\|  \leq  \frac{C F_{m,\xi}^2  |e \rho(k)|}{(1-|\xi|)^2 |k|^2   }.  
\end{align*} 
Now we use as before   \eqref{eq:Rmxi_formula}--\eqref{eq:Qtrick2}  to replace the  second resolvent $R_{m,\xi}$  by $D_{m,\xi}$ 
\begin{align} 
   &\left\| J_1 - R_{m,\xi}(k)  ( \nabla_k \omega_m(k) - v(\xi) ) 
 D_{m,\xi}(k) \epsilon_\lambda(k)\cdot  
\frac{e \rho(k)}{\sqrt{2 |k|} }\nabla E(\xi) \psi_m(\xi) \right\| \nonumber  \\ &\qquad \leq  \frac{C F_{m,\xi}^2  |e \rho(k)| (  \widehat{D}_{m,\xi}(k)  + 1 ) }{(1-|\xi|)^2  |k|^2   } . \label{eq:estofderest} 
\end{align} 
Repeating the above, that is, using  again first \eqref{tricccck}  to replace $v(\xi)$ by $\nabla E_m(\xi)$ and then   \eqref{eq:Rmxi_formula}--\eqref{eq:Qtrick2}
to replace the first resolvent $R_{m,\xi}$ by $D_{m,\xi}$, we arrive at 
\begin{align} 
 \left\| J_1 - ( \nabla_k \omega_m(k) -  \nabla E_m(\xi) ) 
[D_{m,\xi}(k)]^2 \epsilon_\lambda(k)\cdot  
\frac{e \rho(k)}{\sqrt{2 |k|} }\nabla E(\xi) \psi_m(\xi)  \right\|  \leq  \text{r.h.s. of } \eqref{eq:estofderest}  . \label{eq:finalderest1} 
\end{align}
Finally, note that the derivative of the second term in \eqref{eq:conjann} is
\begin{align}
 \nabla_k & \left(  \epsilon_\lambda(k)\cdot  \nabla_\xi E_m(\xi)  \frac{e \rho(k)}{\sqrt{2 \abs{k}}} D_{m,\xi}(k) \psi_m(\xi) \right) \nonumber  \\
&= D_{m,\xi}(k)  \nabla_k  \left(
\frac{e \rho(k)}{\sqrt{2 |k|} }  \epsilon_\lambda(k)\cdot   \nabla E_{m}(\xi) \psi_m(\xi) \right) \label{eq:final derivative1} \\
& \qquad - D_{m,\xi}(k)^2 ( \nabla_k \omega_m(k) - \nabla E_m(\xi))  \frac{e \rho(k)}{\sqrt{2 |k|} }  \epsilon_\lambda(k)\cdot   \nabla E_{m}(\xi) \psi_m(\xi).  \label{eq:final derivative2} 
\end{align} 
Thus, considering the decomposition \eqref{eq:dif_est_1} for the first term and combining \eqref{eq:finalderest0} with \eqref{eq:final derivative1}, noting \eqref{eq:final J22 estimate}, as well as \eqref{eq:finalderest1} with \eqref{eq:final derivative2} we obtain the desired upper bound. 
 
\end{proof}

\begin{lemma}[$y$-Bound] \label{thm:compact} 
  \label{lem:fourier2} Let $e \in \R$. 
 Suppose there exists an $m_0 > 0$ such that   \eqref{eq:eineq} holds for all $m \in (0,m_0)$.
  Let $ |\xi | < 1$. Then for every $M > 0$  there   exists  a constant $C$, and a $\delta
>0$ such that  for all $m \in (0,m_0)$ and all  $n \in \N$,
$$
\sum_{\lambda_1, \ldots, \lambda_n} \int   \sum_{i=1}^n n^{-1}
|y_i|^\delta \| F( U_m(\xi) \psi_m(\xi))_{(n)}(\lambda_1,{y}_1, \ldots ,
\lambda_n , y_n) \|^2 \, d y_1  \ldots  dy_n  \leq C \; ,
$$
whenever $| E_m(\xi) | < M$ and $\max \{| \partial_i^2 E_m(\xi) | : i=1,2,3 \} < M$.  
Here  $F( U_m(\xi) \psi_m(\xi))_{(n)}$ denotes the Fourier transform of
the $n$-photon component of $U_m(\xi)\psi_m(\xi)$.
\end{lemma}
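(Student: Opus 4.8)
Write $\psi := U_m(\xi)\psi_m(\xi)$ and, for $\lambda\in\{1,2\}$, let $G_\lambda(k):=a_\lambda(k)\psi$, an $\HH$-valued function of the photon momentum $k$, which lies in $L^2(\R^3;\HH)$ by \cref{prop:a_est_2}. The plan is to reduce the claimed estimate to a single fractional Sobolev bound on $G_\lambda$ and then to establish that bound from \cref{prop:a_est_2} and \cref{lem:second2}. For the reduction, a short computation using the permutation symmetry of $\psi_{(n)}$, Plancherel in the variables $y_2,\dots,y_n$, and the relation $\psi_{(n)}(\lambda_1,k,\,\cdot\,)=n^{-1/2}[a_{\lambda_1}(k)\psi]_{(n-1)}(\,\cdot\,)$ shows that, for every $n\in\N$, the left-hand side of the asserted inequality equals $n^{-1}\sum_{\lambda}\int\abs y^\delta\nn{[\widehat{G_\lambda}(y)]_{(n-1)}}^2\,dy$, hence is at most $\sum_{\lambda=1}^2\int_{\R^3}\abs y^\delta\nn{\widehat{G_\lambda}(y)}^2\,dy$, where $\widehat{G_\lambda}$ denotes the Fourier transform of $G_\lambda$ in $k$. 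Applying the Gagliardo--Slobodeckij representation of the homogeneous Sobolev seminorm componentwise in an orthonormal basis of $\HH$, for $0<\delta<2$,
\[
\int_{\R^3}\abs y^\delta\,\nn{\widehat{G_\lambda}(y)}^2\, dy \;=\; c_\delta\iint_{\R^3\times\R^3}\frac{\nn{G_\lambda(k)-G_\lambda(k')}^2}{\abs{k-k'}^{3+\delta}}\, dk\, dk' ,
\]
with $c_\delta>0$, so it suffices to bound this double integral by a constant uniform in $m\in(0,m_0)$ and $\lambda$.

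Since $\abs\xi<1$ and, by hypothesis, the second partial derivatives of $E_m$ at $\xi$ exist, \cref{eq:convex} gives $\abs{\nabla E_m(\xi)}\le\abs\xi<1$; hence the factor $(1-\omega_m(k)^{-1}k\cdot\nabla E_m(\xi))^{-1}$ occurring in \cref{prop:a_est_2} and \cref{lem:second2} is bounded by $(1-\abs\xi)^{-1}$ uniformly in $m,k$, while the hypotheses $\abs{E_m(\xi)}<M$ and $\max_i\abs{\partial_i^2 E_m(\xi)}<M$ bound $F_{m,\xi}$ from \eqref{eq:deffoff1} by a constant depending only on $M$. Moreover $a_\lambda(k)\psi_m(\xi)$ carries a factor $\rho(k)$ by \cref{lem:pullthrough}, and by the formula for $U_m(\xi)^*a_\lambda(k)U_m(\xi)\psi_m(\xi)$ from the proof of \cref{prop:a_est_2} so does $G_\lambda(k)$; thus $G_\lambda(k)=0$ for $\abs k>\Lambda$, and $G_\lambda$ is smooth on $\{0<\abs k<\Lambda\}$ away from the $k_3$-axis. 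Feeding all this into \cref{prop:a_est_2} and \cref{lem:second2} yields a constant $C_M$ (depending also on the fixed data $e,\Lambda,\xi$, but not on $m\in(0,m_0)$, $\lambda$, $k$) with
\[
\nn{G_\lambda(k)}\le \frac{C_M}{\abs k}\,\one_{\{\abs k\le\Lambda\}},\qquad \nn{\nabla_k G_\lambda(k)}\le \frac{C_M}{\abs k\,\sqrt{k_1^2+k_2^2}}\,\one_{\{\abs k\le\Lambda\}}\quad\text{for a.e. }k .
\]

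Now fix $\delta\in(0,1)$, say $\delta=\tfrac12$, and for $k'$ with $0<\abs{k'}<\Lambda$ off the axis set $r(k'):=\min\{\abs{k'},\sqrt{(k_1')^2+(k_2')^2},\Lambda-\abs{k'}\}>0$; split the $k$-integral at $\abs{k-k'}=\tfrac12 r(k')$. On $\{\abs{k-k'}\ge\tfrac12 r(k')\}$ use $\nn{G_\lambda(k)-G_\lambda(k')}^2\le2\nn{G_\lambda(k)}^2+2\nn{G_\lambda(k')}^2$ with the first pointwise bound; after integrating in $z:=k-k'$ this region contributes (together with the symmetric term in which $k$ and $k'$ are exchanged)
\[
\lesssim C_M^2\int_{\abs{k'}<\Lambda}\abs{k'}^{-2}\,r(k')^{-\delta}\, dk' ,
\]
which is finite for $\delta<1$ (convergence is checked near the origin, the axis, the sphere $\{\abs k=\Lambda\}$, and the two poles $(0,0,\pm\Lambda)$ separately, in coordinates adapted to each stratum). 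On $\{\abs{k-k'}<\tfrac12 r(k')\}$ the segment from $k'$ to $k$ stays at distance $\ge\tfrac12 r(k')$ from $0$, from the axis, and from $\{\abs k=\Lambda\}$, so the second pointwise bound and the fundamental theorem of calculus give $\nn{G_\lambda(k)-G_\lambda(k')}\le4C_M\abs{k-k'}\bigl(\abs{k'}\sqrt{(k_1')^2+(k_2')^2}\bigr)^{-1}$; integrating in $z$ (convergent since $\delta<2$) and using $r(k')\le\sqrt{(k_1')^2+(k_2')^2}$, this region contributes
\[
\lesssim C_M^2\int_{\abs{k'}<\Lambda}\frac{r(k')^{2-\delta}}{\abs{k'}^2\bigl((k_1')^2+(k_2')^2\bigr)}\, dk'
\;\le\; C_M^2\int_{\abs{k'}<\Lambda}\frac{\bigl((k_1')^2+(k_2')^2\bigr)^{-\delta/2}}{\abs{k'}^2}\, dk' ,
\]
again finite for $\delta<1$ by the same stratum-by-stratum computation. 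Summing over $\lambda=1,2$ and collecting constants proves the lemma with $\delta=\tfrac12$ and $C=C(M)$.

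The step I expect to be the real obstacle is precisely the one just carried out: the derivative bound from \cref{lem:second2} has the non-square-integrable singularity $(k_1^2+k_2^2)^{-1/2}$ along the $k_3$-axis, inherited from $\nabla_k\epsilon_\lambda$ for the choice \eqref{eq:cheps}, so $G_\lambda$ does not lie in $H^1(\R^3;\HH)$ and one cannot simply quote the embedding $H^1\hookrightarrow\dot H^{\delta/2}$. Using, as near-diagonal cutoff, the distance to the entire singular stratum (origin, axis, sphere $\{\abs k=\Lambda\}$) rather than only to the origin is what keeps all the resulting integrals convergent, and only for $\delta<1$. The remaining bookkeeping---overlaps of the strata near the poles, and the jump of $\rho$ across $\{\abs k=\Lambda\}$, which falls harmlessly into the far-from-diagonal region---is routine.
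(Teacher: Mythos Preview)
Your proof is correct and follows the same overall strategy as the paper: both reduce the $y$-bound to a fractional Sobolev estimate on $k\mapsto a_\lambda(k)U_m(\xi)\psi_m(\xi)$ and feed in the pointwise bound of \cref{prop:a_est_2} together with the derivative bound of \cref{lem:second2}. The execution, however, is genuinely different. The paper works with the translation modulus $\int\|\widehat\psi_{(n)}(k+a)-\widehat\psi_{(n)}(k)\|^2\,dk$, bounds the integrand along the segment $k+ta$ via the projections $\pi_a(k)$ and $\pi_{3,a}(k)$ (which are \emph{constant} on the segment, yielding the clean estimate $\lesssim |a|/(|\pi_a(k)||\pi_{3,a}(k)|)$), and then interpolates this against the crude bound $\lesssim |k|^{-1}+|k+a|^{-1}$ by a H\"older/Young argument with a parameter~$\theta$; the $|y|^\delta$ weight finally emerges after integrating over $a$ with weight $|a|^{-3-\delta/2}$. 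You instead write the Gagliardo--Slobodeckij double integral directly and split near/far from the diagonal according to the distance $r(k')$ to the entire singular stratum (origin, $k_3$-axis, and the sphere $\{|k|=\Lambda\}$), using the $1$-Lipschitz property of $r$ to handle the asymmetric cross-term. Your approach is more geometric and makes the role of each singular piece explicit, at the cost of a stratum-by-stratum integrability check; the paper's interpolation hides this bookkeeping behind a single choice of $\theta$ and the exponents $p,q$, but the projections $\pi_a,\pi_{3,a}$ are a somewhat ad hoc device. Both routes yield the result for sufficiently small $\delta>0$.
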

\begin{proof} %
 We drop the subscript $m$.  We write  $\widehat{\psi} = U_m(\xi) \psi_m(\xi)$. Thus, $F \widehat{\psi}_{(n)}$ denotes the
Fourier transform of $\widehat{\psi}_{(n)}$ in all its $n$-components. We
define the functions
\begin{eqnarray*}
\widehat{\psi}_{(n)}(k) &:& (\lambda,  k_1, \lambda_1, \ldots , k_{n-1}, \lambda_{n-1} ) \mapsto \widehat{\psi}_{(n)}( k, \lambda ,
k_1, \lambda_1 ,  \ldots
, k_{n-1}, \lambda_{n-1}  ),  \\
F \widehat{\psi}_{(n)}(y) &:& (\lambda, y_1, \lambda_1,  \ldots , y_{n-1} , \lambda_{n-1}  ) \mapsto
F \widehat{\psi}_{(n)}(y , \lambda ,y_1, \lambda_1 , \ldots , y_{n-1} , \lambda_{n-1} ) \; .
\end{eqnarray*}

\vspace{0.5cm}

\noindent \underline{Step 1:} There exists a  $\delta >0$ and a
constant $C$ such that for all $a \in \R^3$,
$$
\int | 1 - e^{-\i ay} |^2 \| F\widehat{\psi}_{(n)}(y) \|^2 dy \leq
\left\{
\begin{array}{ll} C |a|^{\delta} \quad & \text{if} \  |a| < \frac{1}{2} \Lambda , \\
C \quad & \text{if} \  |a| \geq \frac{1}{2}  \Lambda  .  \end{array}
 \right.
$$
The claim follows easily for $|a|\geq \frac{1}{2} \Lambda$, since $\widehat{\psi}$ is a
normalized state in Fock space and $|1-e^{-\i ay}| \leq 2$. Now lets
consider the case $|a| < \frac{1}{2}  \Lambda$.  By the Fourier transform, we have the
identity
\begin{align}
&  \int | 1 - e^{-\i ay} |^2 \| F\widehat{\psi}_{(n)}(y) \|^2 dy  \nonumber \\
& \quad = \int \|
\widehat{\psi}_{(n)}(k + a ) - \widehat{\psi}_n(k) \|^2 dk \nonumber \\ 
& \quad  =  \int_{|k| < \Lambda - |a|}   \|
\widehat{\psi}_{(n)}(k + a ) - \widehat{\psi}_n(k) \|^2 dk   +   \int_{\Lambda - |a| \leq |k| }   \|
\widehat{\psi}_{(n)}(k + a ) - \widehat{\psi}_n(k) \|^2 dk .  \label{eq:yboundfourier}
\end{align}
To estimate the second integral  we use \cref{prop:a_est_2}  and observe that the integrand vanishes for $\abs k > \Lambda + \abs a$,
\begin{align}
\int_{\Lambda - \abs a \leq \abs k }  \|
\widehat{\psi}_{(n)}(k + a ) - \widehat{\psi}_n(k) \|^2 dk  &\leq   \text{const.}  \int_{\Lambda - \abs a \leq \abs k \leq \Lambda + \abs a}  \left( \frac{1}{\abs{k+a}^2} + \frac{1}{\abs{k}^2} \right)dk \nonumber \\
 &\leq   \text{const.}   \int_{\Lambda - 2 \abs a \leq \abs k \leq \Lambda + 2 \abs a} \frac{1}{\abs{k}^2} dk \nonumber \\
 &\leq   \text{const.}   \abs a , \label{eq:intofpsi1} 
\end{align}
where \text{const.} denotes a numerical constant changing from line to line. 
Next we estimate the first integral  and 
assume  $\abs k < \Lambda - \abs a$.  Using \cref{lem:second2} we find 
\begin{align}
\| \widehat{\psi}_{(n)}( k + a ) - \widehat{\psi}_{(n)}(k) \|  & =  \left\| \int_0^1
\left(
\frac{d}{dt} \widehat{\psi}_{(n)}(k + t a) \right) dt  \right\|  \nonumber \\
& \leq |a| \int_0^1 \| \nabla_k \widehat{\psi}_{(n)}(k + t a) \| dt  \nonumber \\
& \leq  \text{const.}   |a| \int_0^1 \frac{ \rho(k + t a )}{ |
k + t a | | \pi_3( k + t a ) | } dt \; , \label{eq:intofpsi} 
\end{align}
where $\pi_3$ denotes the projection in $\R^3$ along the 3-axis and
const. denotes a finite constant independent of $n$. Let $\pi_a$
denote the projection in $\R^3$ along the vector $a$ and let
$\pi_{3,a}$ denote the projection in the $(1,2)$-plane along $\pi_3
a$  (with convention that $\pi_{3,a} = \pi_3$,  if $\pi_3 a = 0$). We find  from \eqref{eq:intofpsi}
\begin{eqnarray}  \label{eq:ybound1}
\| \widehat{\psi}_{(n)}(k + a ) - \widehat{\psi}_{(n)}(k) \| \leq \text{const.}  \frac{
|a| }{|\pi_a(k) | | \pi_{3,a}(k) |} \; .
\end{eqnarray}
On the other hand using \cref{prop:a_est_2} we obtain 
\begin{eqnarray}  \label{eq:ybound2}
\| \widehat{\psi}_{(n)}(k + a) - \widehat{\psi}_{(n)}(k) \| \leq \text{const.}   \left(
\frac{\rho(k+ a)}{|k+a| }  + \frac{\rho(k)}{|k|}
\right) \; .
\end{eqnarray}
Introducing Inequalities \eqref{eq:ybound1} and \eqref{eq:ybound2}
into the second integral of  \eqref{eq:yboundfourier}, we find for any $\theta$
with $0 \leq \theta \leq 1$,
\begin{eqnarray*}
 \lefteqn{ \int_{ 
\abs k < \Lambda - \abs a} \| \widehat{\psi}_{(n)}(k + a ) - \widehat{\psi}_n(k) \|^2 dk  } \\
&&= \text{const.}  |a|^{2 \theta} \int_{|k| < \Lambda - |a|}  \frac{1}{|\pi_a(k) |^{2\theta}
|\pi_{3,a}(k) |^{2 \theta} } \left( \frac{\rho(k+a)}{|k+a|}
+ \frac{\rho(k)}{|k|} \right)^{2(1-\theta)} \, dk.
\end{eqnarray*}
Now we use Young's inequality: $b c \leq b^p/p + c^q/q$, whenever
$p,q
> 1$ and $p^{-1} + q^{-1} = 1$; and the convexity of $x \mapsto x^{2(1-\theta)q}$ on $\R_+$, for
$0 < \theta < 1/2$. Thus for   $0 < \theta < 1/2$,
\begin{align}
&  \int_{  \abs k < \Lambda - \abs a} \| \widehat{\psi}_{(n)}(k + a ) - \widehat{\psi}_n(k) \|^2 dk 
\nonumber \\  
& \quad \leq |a|^{2 \theta} \text{const.}   \int_{|k| \leq \Lambda  }
\Bigg( \frac{1}{|\pi_a(k)|^{4 \theta p}} +
\frac{1}{|\pi_{3,a}(k)|^{4 \theta p}}   + \left[ \frac{1}{|k+a|}
\right]^{2 (1-\theta) q}+ \left[ \frac{1}{|k|}
\right]^{2 (1-\theta) q} \Bigg) dk \; . \label{eq:intofpsi2} 
\end{align}
For any $q$ with $1<q \leq 3/2$, we can choose $\theta > 0$ sufficiently
small such that the right hand side is finite. 
Inserting  \eqref{eq:intofpsi1}  und  \eqref{eq:intofpsi2}  into \eqref{eq:yboundfourier}
we obtain the desired estimate.

 \vspace{0.5cm}

\noindent \underline{Step 2:} Step 1 implies the statement of the
Lemma.

From Step 1 we know that there exists a finite constant $C$ such
that
$$
\int \frac{|1 - e^{-\i ay} |^2 \| F\widehat{\psi}_{(n)}(y)
\|^2}{|a|^{\delta/2} } dy \frac{da}{|a|^3} \leq C   \; .
$$
After interchanging the order of integration  and a change of
integration variables  $b = |y| a $, we find
\begin{align*}
C  &\geq \int \| F\widehat{\psi}_{(n)}(y) \|^2 \int \frac{ | 1 -
e^{-\i ay} |^2}{|a|^{\delta/2} } \frac{da}{|a|^3} dy \\
&= \int \|
F\widehat{\psi}_{(n)}(y) \|^2 |y|^{\delta/2} \underbrace{ \int \frac{
| 1 - e^{-\i by/|y|} |^2}{|b|^{\delta/2} } \frac{db}{|b|^3}  }_{=: \ c}  dy  ,
\end{align*}
where $c$ is nonzero and does not depend on $y$.
\end{proof}

\section{Compactness Argument} 
\label{sec:maincompproof}

Using a compactness argument  we show Theorem 
\ref{thm:main222}. It states  that for suitable $\xi$ there exists a   sequence $(m_j)$ of  positive  
numbers converging to zero such that 
 $U_{m_j}(\xi)\psi_{m_j}(\xi)$ converges to a vector in the Hilbert space, say $\widehat{\psi}_0$.
Then we prove the main \cref{thm:main result},
which states that this vector  $\widehat{\psi}_0$  is indeed the ground state of 
the renormalized Hamiltonian using a semi-continuity argument of quadratic forms. 

\begin{thm}   \label{thm:mainmain} 
 Let  $e \in \R$, and    suppose there exists an $m_0 > 0$ such that the energy Inequality 
  \eqref{eq:eineq} holds for all $m \in (0,m_0)$. 
Let  $\xi \in \IR^3$  with $\abs {\xi} <  1$ such that $E$ is differentiable at $\xi$. 
Suppose there  exists  a sequence 
$(m_j)_{j \in \N}$ of positive numbers converging to zero such that
\begin{itemize}
\item[(i)]    $E_{m_j}$ is differentiable at $\xi$,
\item[(ii)]  $\nabla E_{m_j}(\xi) \overset{j \to \infty}\longrightarrow \nabla E(\xi)$,
\item[(iii)]   for every $l=1,2,3$ the  partial derivatives  $\partial_l^2 E_{m_j}(\xi)$ exist and  satisfy  $$\sup_j (-  \partial_l^2 E_{m_j}(\xi)) < \infty .$$ 
\end{itemize} 
 Then $E(\xi)$ is an eigenvalue of $\widehat{H}(\xi)$, and there exists  a subsequence of 
 $(U_{m_j}(\xi) \psi_{m_j}(\xi))_{j \in \N}$ converging to the eigenvector.  
\end{thm}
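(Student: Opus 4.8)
The plan is to combine the norm-resolvent convergence of the conjugated Hamiltonians with a compactness argument in Fock space, exactly along the lines sketched in \Cref{sec:outline proofs}. Since $|\xi|<1$, \cref{eq:convex} gives $|\nabla E_{m_j}(\xi)|\le|\xi|<1$ for every $j$, so $h_{m_j,\xi}\in\hh$ and $U_{m_j}(\xi)$ is a well-defined unitary, while $h_{0,\xi}\in\hh$ and $\widehat H(\xi)$ is self-adjoint on $D(\Pf^2+\Hf)$ by \cref{th:renormalized self-adjoint}. By assumption (ii) and \cref{convham} (equivalently \cref{thm:existoftrafoham}), the operators $U_{m_j}(\xi)H(\xi)U_{m_j}(\xi)^*$ converge to $\widehat H(\xi)$ in norm resolvent sense. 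Each of these operators is unitarily equivalent to $H(\xi)$ and hence bounded below by $E(\xi)=\inf\sigma(H(\xi))$; since norm-resolvent limits preserve a common lower bound, $\widehat H(\xi)\ge E(\xi)$, so that $\widehat H(\xi)-E(\xi)$ is a nonnegative self-adjoint operator. Finally, by \cref{convofenergyexp}, the normalized vectors $\widehat\psi_j:=U_{m_j}(\xi)\psi_{m_j}(\xi)$ satisfy
\[
0\le\inn{\widehat\psi_j,(\widehat H(\xi)-E(\xi))\widehat\psi_j}\overset{j\to\infty}{\longrightarrow}0 ,
\]
i.e.\ they form a minimizing sequence for $\widehat H(\xi)$.

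The heart of the argument is to show that $\{\widehat\psi_j:j\in\N\}$ lies in a compact subset of $\HH$. From \cref{eq:propenegyconv} and assumptions (ii)--(iii) there is an $M>0$ with $|E_{m_j}(\xi)|<M$ and $|\partial_l^2 E_{m_j}(\xi)|<M$ for all $j$ and $l=1,2,3$, and $|\nabla E_{m_j}(\xi)|\le|\xi|$ keeps $\widehat D_{m_j,\xi}(k)\le(1-|\xi|)^{-1}$ bounded uniformly in $j$ and $k$. Hence \cref{prop:a_est_2} furnishes a bound $\nn{a_\lambda(k)\widehat\psi_j}\le G(k)$ with $G$ independent of $j$ and $G\in L^2(\Z_2\times\R^3)$ (the $|k|^{-1}$ singularity is square integrable in three dimensions, and the sharp form-factor cutoff gives compact support in $k$). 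Consequently the photon-number expectations $\inn{\widehat\psi_j,\dG(\one)\widehat\psi_j}=\sum_\lambda\int\nn{a_\lambda(k)\widehat\psi_j}^2\,dk$ are uniformly bounded, which makes the tails $\sum_{n>N}\nn{(\widehat\psi_j)_{(n)}}^2\le N^{-1}\inn{\widehat\psi_j,\dG(\one)\widehat\psi_j}$ uniformly small. Moreover, by the pull-through formula every $n$-photon component $(\widehat\psi_j)_{(n)}$ is supported in $\{|k_i|\le\Lambda\}$, and \cref{thm:compact} provides a uniform fractional ($y$-)Sobolev bound on each sector. A Rellich-type compactness criterion in $L^2$ then shows that for each fixed $n$ the set $\{(\widehat\psi_j)_{(n)}:j\in\N\}$ is precompact in $L^2((\Z_2\times\R^3)^n)$; together with the uniform tail estimate this yields precompactness of $\{\widehat\psi_j:j\in\N\}$ in $\HH$.

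Passing to a subsequence we obtain $\widehat\psi_{j_l}\to\widehat\psi_0$ strongly in $\HH$; since the $\widehat\psi_j$ are normalized, $\nn{\widehat\psi_0}=1$, so $\widehat\psi_0\ne0$ --- this already establishes \cref{thm:main222} in the present situation. The quadratic form of the nonnegative self-adjoint operator $\widehat H(\xi)-E(\xi)$ is lower semicontinuous under strong convergence (by \cite{Sim:77}, or by the spectral theorem and Fatou's lemma), whence
\begin{align*}
0&\le\inn{\widehat\psi_0,(\widehat H(\xi)-E(\xi))\widehat\psi_0}\\
&\le\liminf_{l\to\infty}\inn{\widehat\psi_{j_l},(\widehat H(\xi)-E(\xi))\widehat\psi_{j_l}}=0 .
\end{align*}
Because $\widehat H(\xi)-E(\xi)\ge0$, this forces $(\widehat H(\xi)-E(\xi))^{1/2}\widehat\psi_0=0$, hence $\widehat\psi_0\in D(\widehat H(\xi))$ and $\widehat H(\xi)\widehat\psi_0=E(\xi)\widehat\psi_0$. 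Thus $E(\xi)$ is an eigenvalue of $\widehat H(\xi)$ with eigenvector $\widehat\psi_0$, and the subsequence $\widehat\psi_{j_l}$ converges to this eigenvector, which is the assertion of the theorem.

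The main obstacle is the compactness step: one must assemble the photon-number bound of \cref{prop:a_est_2} and the fractional derivative bound of \cref{thm:compact} into a genuine precompactness statement, making sure that every constant entering is uniform in $j$. This is precisely where assumption (iii) is used through the uniform bound $\sup_j(-\partial_l^2 E_{m_j}(\xi))<\infty$ (rather than merely a $\liminf$, cf.\ \cref{prop:nabla_E_converging}), and where $|\xi|<1$ is used to keep the resolvent estimates and $\widehat D_{m_j,\xi}$ under control; the reduction of the $L^2$-precompactness of each $n$-photon sector to the $y$-bound via a Rellich-type criterion on a bounded momentum region, combined with the uniform decay of the photon-number tail, is the technical core of the proof.
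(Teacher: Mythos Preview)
Your proposal is correct and follows essentially the same approach as the paper: a compactness argument based on the uniform photon-number bound from \cref{prop:a_est_2} and the fractional $y$-bound of \cref{thm:compact}, followed by lower semicontinuity of the nonnegative form $\widehat H(\xi)-E(\xi)$ applied to the minimizing sequence from \cref{convofenergyexp}. The only cosmetic difference is that the paper packages the compactness step into a single auxiliary operator $L$ (built from $\dG(\one)$, the $|y_i|^\delta$ term, and $\Hf$) with compact resolvent and invokes \cite[Theorems~XIII.64--65]{ressim:ana}, whereas you argue sector-by-sector precompactness plus a uniform number-tail estimate; these are two formulations of the same Rellich-type criterion.
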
 

\begin{proof}

\vspace{0.5cm}  \noindent \underline{Step 1:} The sequence of vectors  $\widehat{\psi}_{m_j}(\xi) := U_{m_j}(\xi) \psi_{m_j}(\xi)$, $j \in \N$,  lies in a compact subspace  of  the reduced Hilbert space $\HH=\C^{2s+1} \otimes \FF $. 

\vspace{0.5cm}

Let $L$ be  the self-adjoint operator  associated to
the nonnegative and closed quadratic form $q$   in $\HH$ 
defined by
\[
q(\phi)  :=  \inn{ \phi , \dG(\Id)  \phi }  + \sum_{n=1}^\infty n^{-3} \inn{ 
\widehat{\phi}_{(n)} , \sum_{i=1}^n |y_i|^{\delta}
\widehat{\phi}_{(n)}  } + \inn{ \phi , \Hf   \phi }  
\]
on the natural form domain $D(q)$. We choose $\delta
> 0$ such that Lemma \ref{thm:compact} holds. By this and \cref{prop:a_est_2,prop:nabla_E_converging}, there exists a finite $C$ such that for all $m$
with $0 < m <m_0$,
\[
\psi_m \in \mathcal{K} := \{ \phi \in D( q) : \| \phi \|  \leq 1,  q(\phi)
\leq C \} \; . 
\]
 The set $\mathcal{K}$ is a compact subset of $\HH$, provided $L$ has compact resolvent 
\cite[Theorem XIII.64]{ressim:ana}. Hence it remains to show that $L$ has compact
resolvent. The operator $L$ preserves the $n$-photon sectors. Let
$L_n$ denote the restriction of $L$ to the $n$-photon sector. From
Rellich's criterion  \cite[Theorem XIII.65]{ressim:ana}   it follows that $L_n$ has  compact resolvent.
Therefore $\mu_l(L_n) \to \infty$ as $l$ tends to infinity, where
$\mu_l$ denotes the $l$-th eigenvalue obtained by the min-max
principle. Moreover, since $\mu_l(L_n) \geq n$ for all $l, n$, it
follows that $\mu_l(L) \to \infty$ as $l \to \infty$. Hence $L$ has
a compact resolvent.

\vspace{0.5cm}

\vspace{0.5cm}  \noindent \underline{Step 2:} 
The sequence in Step 1 has a subsequence which converges to a normalized vector $\widehat{\psi}_0 \in \HH$. 
\vspace{0.5cm}

This follows directly from Step 1 and the property of compact sets.

\vspace{0.5cm}

\noindent \underline{Step 3:} The vector  $\widehat{\psi}_0$ is an eigenvector   of the renormalized 
fiber Hamiltonian $\widehat{H}(\xi)$ with eigenvalue $E(\xi)$.

\vspace{0.5cm}

Using lower semicontinuity of nonnegative quadratic forms
it  follows from Step 2  and Proposition  \ref{convofenergyexp} 
that for almost all $\xi$ with $|\xi|< 1$ 
\begin{align*}
 0  & \leq \inn{  \widehat{\psi}_0(\xi) , ( \widehat{H}(\xi)  - E(\xi) ) \widehat{\psi}_0(\xi) }  \\
& \qquad \leq \liminf_{i \to \infty} \inn{ U_{m_j}(\xi)  \psi_{m_j}(\xi) , ( \widehat{H}(\xi) - E(\xi) )  U_{m_j}(\xi) \psi_{m_j} } = 0 , 
\end{align*}
i.e., that $\widehat{\psi}_0(\xi)$ is a ground state of $\widehat{H}(\xi)$. 
\end{proof}

Now the above theorem implies together with   Proposition \ref{prop:nabla_E_converging} 
\cref{thm:main result,thm:main222}.

\begin{proof}[Proof of \cref{thm:main result,thm:main222}] 
  Let $e \in \R$. By Theorem   \ref{thm:nospingroundstateenineq}   in case $s=0$ or  by assumption in case $s=1/2$,
there exists an $m_0 > 0$, such that   \eqref{eq:eineq} holds for all $m \in (0,m_0)$. 
Let  $(m_j)_{j \in \N}$  be any  sequence in $(0,m_0)$  which converges to zero. 
 Then by Proposition \ref{prop:nabla_E_converging}  there exists a set $D \subset \R^3$ of full Lebesgue measure  with 
the following property: For all $\xi \in D$ 
the functions  $E_{m_j}$, $j \in \N$,  and $E$ are differerentiable and 
\begin{itemize}
\item[(a)]  $\nabla E_{m_j}(\xi) \overset{j \to \infty}\longrightarrow \nabla E(\xi)$,
\item[(b)]   the second partial derivatives  $\partial_l^2 E_{m_j}(\xi)$ exist and  satisfy for every $l=1,2,3$ that $\liminf_j (-  \partial_l^2 E_{m_j}(\xi)) < \infty$ . 
\end{itemize} 
Now pick any $\xi \in D$ with $|\xi| < 1$ and fix it.
Then the 
assumptions of  \cref{thm:mainmain}  hold. Therefore,  \cref{thm:main result} and \cref{thm:main222}
 follow directly from  \cref{thm:mainmain}.
\end{proof}

\newcommand{\h}{\mathfrak{h}}

\appendix

\section{Energy Estimates} 

\label{secexprea0} 

In this section we collect a few well known properties related to the canonical 
commutation relations, which we need in particular in Section \ref{sec:trafham}. 
\begin{lemma} \label{lem:canest} 
Let $f_1,\ldots,f_n \in L_{(n)}^2(\R^3 \times \Z_2)$. Then for any $\psi \in \FF$ we have 
\begin{align*}
\| a(f_1) \cdots a(f_n) \psi \| &  \leq  \left( \prod_{j=1}^n \|  f_j \omega_m^{-1/2} \| \right) \| \Hfm^{n/2} \psi \|   .
\end{align*} 
\end{lemma}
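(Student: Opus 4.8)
The plan is to prove what is essentially the standard $N_\tau$-type estimate, reducing it to an explicit $L^2$-bound on the sequence of photon-number components. First I would establish the inequality
\[
\|a(f_1)\cdots a(f_n)\psi\|\le\Big(\prod_{j=1}^n\|\omega_m^{-1/2}f_j\|\Big)\,\|\Hfm^{n/2}\psi\|
\]
for $\psi$ in the core $\FF_{\fin}(\Cco(\R^3\times\Z_2))$, on which all operators act unambiguously; a standard closure/density argument then extends it to all of $D(\Hfm^{n/2})$ (this subspace being a core for $\Hfm^{n/2}$ and the right‑hand side dominating the left), while for $\psi\notin D(\Hfm^{n/2})$ the right‑hand side is $+\infty$ and nothing is to be shown.

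For the core estimate I would work componentwise. Since each annihilation operator contracts a single tensor slot, $(a(f)\phi)_{(l)}=\sqrt{l+1}\,\langle f,\phi_{(l+1)}\rangle$ (partial inner product in the first argument, anti-linear in $f$), and since each $\psi_{(l)}$ is totally symmetric, iterating yields
\[
\big(a(f_1)\cdots a(f_n)\psi\big)_{(j)}=\sqrt{\tfrac{(j+n)!}{j!}}\;\big\langle f_1\otimes\cdots\otimes f_n,\ \psi_{(j+n)}\big\rangle,
\]
the contraction running over the first $n$ variables of $\psi_{(j+n)}$. Writing $f_i(\lambda,k)=\omega_m(k)^{1/2}\big(\omega_m(k)^{-1/2}f_i(\lambda,k)\big)$ and applying the Cauchy–Schwarz inequality jointly in the $n$ contracted variables gives
\[
\big\|\big(a(f_1)\cdots a(f_n)\psi\big)_{(j)}\big\|^2\le\Big(\prod_{i=1}^n\|\omega_m^{-1/2}f_i\|^2\Big)\,\frac{(j+n)!}{j!}\,\big\langle\psi_{(j+n)},\ \omega_m(k_1)\cdots\omega_m(k_n)\,\psi_{(j+n)}\big\rangle.
\]

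The last step is to sum over $j\ge0$. By permutation symmetry of $\psi_{(j+n)}$, the factor $\tfrac{(j+n)!}{j!}\,\langle\psi_{(j+n)},\omega_m(k_1)\cdots\omega_m(k_n)\psi_{(j+n)}\rangle$ equals $\langle\psi_{(j+n)},\Sigma_{j,n}\psi_{(j+n)}\rangle$, where $\Sigma_{j,n}$ is the sum of $\omega_m(k_{i_1})\cdots\omega_m(k_{i_n})$ over all ordered $n$-tuples of pairwise distinct indices in $\{1,\dots,j+n\}$; expanding $\big(\sum_{i=1}^{j+n}\omega_m(k_i)\big)^n$ reproduces exactly these terms plus further nonnegative ones, so $\Sigma_{j,n}\le\big(\sum_{i=1}^{j+n}\omega_m(k_i)\big)^n$, which is precisely the multiplication operator by which $\Hfm^n$ acts on $\hh^{(j+n)}$. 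Summing over $j$ and discarding the nonnegative contributions of the sectors with fewer than $n$ photons gives $\sum_j\|(a(f_1)\cdots a(f_n)\psi)_{(j)}\|^2\le\big(\prod_i\|\omega_m^{-1/2}f_i\|^2\big)\langle\psi,\Hfm^n\psi\rangle=\big(\prod_i\|\omega_m^{-1/2}f_i\|^2\big)\|\Hfm^{n/2}\psi\|^2$, and taking square roots is the claim. The only genuinely delicate point is the bookkeeping in this paragraph — deriving the iterated-contraction formula correctly and recognizing the symmetrized distinct-index sum as a sectorwise lower bound for $\big(\dG(\omega_m)\big)^n$; the two Cauchy–Schwarz applications and the final extension to $D(\Hfm^{n/2})$ are routine.
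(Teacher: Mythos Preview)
Your argument is correct, but it proceeds differently from the paper. The paper works in the operator-valued distribution formalism: it first writes $\|a(f_1)\cdots a(f_n)\psi\|\le\int|f_1(\uk_1)\cdots f_n(\uk_n)|\,\|a_{\lambda_1}(k_1)\cdots a_{\lambda_n}(k_n)\psi\|\,d\uk_1\cdots d\uk_n$, applies Cauchy--Schwarz to this integral, and then bounds $\int\omega_m(k_1)\cdots\omega_m(k_n)\|a(\uk_1)\cdots a(\uk_n)\psi\|^2$ by an \emph{iterative pull-through} argument, peeling off one $\omega_m(k_1)$ at a time via the identity $a(\uk)\,\Hfm^{-1/2}=(\Hfm+\omega_m(k))^{-1/2}a(\uk)$ and the contraction $\|\Hfm^{1/2}(\Hfm+c)^{-1/2}\|\le1$. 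You instead work sector by sector with the explicit iterated-contraction formula, apply Cauchy--Schwarz once in the $n$ contracted variables, and then use the combinatorial fact that the sum of $\omega_m(k_{i_1})\cdots\omega_m(k_{i_n})$ over distinct ordered $n$-tuples is dominated by $\big(\sum_i\omega_m(k_i)\big)^n$. Your route is more elementary in that it avoids the pull-through machinery entirely and makes the positivity of $\omega_m$ do all the work; the paper's route is more operator-theoretic and transfers immediately to an abstract $\dG(A)$ with $A\ge0$ without ever opening up the $n$-particle components.
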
 
\begin{proof}
In the following we use the notation $\underline{k} = (\lambda,k)$ and $\int (\cdots)  d \underline{k} = \sum_{\lambda=1,2} \int ( \cdots ) dk$.
By the definition of the annihilation operator, by Cauchy-Schwarz and Fubini, we find
\begin{align*}
& \| a(f_1) \cdots a(f_n) \psi \| \\ & \leq \int  | f_1(\uk_1) \cdots f_n(\uk_n) | \| a(\uk_1) \cdots a(\uk_n) \psi \| d\uk_1 \cdots d\uk_n  \\ 
 & \leq      \left( \prod_{j=1}^n   \|  f_j \omega_m^{-1/2} \| \right) \left( \int   \omega_1(\uk_1) \cdots \omega_n(\uk_n)   \| a(\uk_1) \cdots a(\uk_n) \psi \|^2  d\uk_1 \cdots d \uk_n  \right)^{1/2}  .
\end{align*}  
To estimate the second factor we use 
\begin{align*}
&   \int   \omega_1(\uk_1) \cdots \omega_n(\uk_n)   \| a(\uk_1) \cdots a(\uk_n) \psi \|^2  d\uk_1 \cdots d\uk_n    \\ 
& =  \int   \omega_1(\uk_2) \cdots \omega_n(\uk_n)   \| \Hfm^{1/2}   a(\uk_2) \cdots a(\uk_n) \psi \|^2  d\uk_2 \cdots d\uk_n \\
& =  \int   \omega_1(\uk_2) \cdots \omega_n(\uk_n)   \| \Hfm^{1/2}   a(\uk_2) \cdots a(\uk_n)  \Hfm^{-1/2} \Hf^{1/2} \psi \|^2  d\uk_2 \cdots d\uk_n \\
& =  \int   \omega_1(\uk_2) \cdots \omega_n(\uk_n)   \| \Hfm^{1/2} (   \Hfm+\sum_{j=2}^n \omega_m(k_j))^{-1/2}   a(\uk_2) \cdots a(\uk_n) \Hfm^{1/2} \psi \|^2  d\uk_2 \cdots d\uk_n \\
& \leq   \int   \omega_1(\uk_2) \cdots \omega_n(\uk_n)   \|  a(\uk_2) \cdots a(\uk_n) \Hfm^{1/2} \psi \|^2  d\uk_2 \cdots d\uk_n \\
& \vdots \\
& \leq \| \Hfm^{n/2} \psi \|^2 ,
\end{align*}  
where we used the pull-through formula, cf. \cite[Lemma A.1]{BachFroehlichSigal.1998a}, and that $\Hf \varphi = 0$ implies $a(f) \varphi = 0$.
 
\end{proof} 

\begin{lemma} \label{lem:basichfest} For any $f \in L^2_{(m)}(\Z_2 \times \R^3)$ and $\psi \in \FF$ 
\begin{align*} 
\| \phi( f ) \psi \| &  \leq   \sqrt{2} \| f \|_{(m)}  \| ( \Hfm + 1 )^{1/2} \psi \| , \\
\| \phi( f_1 ) \phi(f_2)  \psi \| &  \leq 2  \| f_1 \|_{(m)}    \| f_2 \|_{(m)}  \| ( \Hfm + 1 ) \psi \|   . 
\end{align*}  
\end{lemma}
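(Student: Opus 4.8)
The first inequality is the standard $\Hfm^{1/2}$-bound for a single field operator. I would write $\phi(f)=\tfrac{1}{\sqrt2}\bigl(a(f)+a^*(f)\bigr)$ and estimate the two summands separately. For the annihilation part, \cref{lem:canest} with $n=1$ (equivalently, the Cauchy--Schwarz estimate carried out in its proof) gives $\|a(f)\psi\|\le\|\omega_m^{-1/2}f\|\,\|\Hfm^{1/2}\psi\|$. For the creation part, the canonical commutation relations \eqref{eq:ccr} give the exact identity $\|a^*(f)\psi\|^2=\|a(f)\psi\|^2+\|f\|^2\|\psi\|^2$, hence $\|a^*(f)\psi\|^2\le\|\omega_m^{-1/2}f\|^2\|\Hfm^{1/2}\psi\|^2+\|f\|^2\|\psi\|^2\le\|f\|_{(m)}^2\,\|(\Hfm+1)^{1/2}\psi\|^2$, where in the last step one uses $\|f\|_{(m)}^2\ge\|f\|^2+\|\omega_m^{-1/2}f\|^2$ together with $\|(\Hfm+1)^{1/2}\psi\|^2=\|\Hfm^{1/2}\psi\|^2+\|\psi\|^2$; the same bound trivially holds for $\|a(f)\psi\|$. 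Adding the two and using the prefactor $\tfrac1{\sqrt2}$ yields the constant $\sqrt2$. It is enough to argue this for $\psi$ in a core such as $\FF_{\fin}(\Cco(\Z_2\times\R^3))$ and then extend; for $\psi\notin D(\Hfm^{1/2})$ the right-hand side is $+\infty$.

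For the second inequality I would again reduce, by density and the continuity just established, to $\psi\in\FF_{\fin}(\Cco(\Z_2\times\R^3))$, and expand $\phi(f_1)\phi(f_2)=\tfrac12\bigl(a(f_1)+a^*(f_1)\bigr)\bigl(a(f_2)+a^*(f_2)\bigr)$ into four monomials. The double-annihilation term is immediate from \cref{lem:canest} with $n=2$: $\|a(f_1)a(f_2)\psi\|\le\|\omega_m^{-1/2}f_1\|\,\|\omega_m^{-1/2}f_2\|\,\|\Hfm\psi\|$. The remaining three monomials are reduced to this one by moving creation operators outward one at a time: the identity $\|a^*(g)\chi\|^2=\|a(g)\chi\|^2+\|g\|^2\|\chi\|^2$ peels off an $a^*$ at the cost of an $L^2$-term, and the commutation relation $a(g)a^*(h)=a^*(h)a(g)+\inn{g,h}_\hh$ converts a wrong-order pair into a normal-ordered one plus the scalar $\inn{g,h}_\hh$. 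Iterating, each monomial is bounded by a sum of terms of the form $(\text{product of }\|\omega_m^{-1/2}f_i\|\text{ and }\|f_i\|)\times\|\Hfm^{j}\psi\|$ with $j\in\{0,\tfrac12,1\}$. Finally I would bound $\|\Hfm^{j}\psi\|\le\|(\Hfm+1)\psi\|$ for these $j$ and collect all form-factor factors into $\|f_1\|_{(m)}\|f_2\|_{(m)}$, keeping track of numerical prefactors to arrive at the constant $2$.

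\textbf{Main point / obstacle.} These are routine computations; the only things needing slight care are, first, that every contraction term produced in the reduction is a genuine $L^2$ pairing $\inn{f_i,f_j}_\hh$ (or a norm $\|f_i\|$) and never an $\omega_m$-weighted pairing such as $\inn{\omega_m^{1/2}f_i,\,\cdot\,}_\hh$ --- this is exactly why the bound closes using only the norms $\|\cdot\|_{(m)}$ and why one never needs to commute $\Hfm$ past $a^\#$; and second, the bookkeeping of the numerical constant if one insists on the value $2$ (a cruder grouping yields a generic constant $C$ with no extra effort). I do not expect any genuine difficulty beyond this.
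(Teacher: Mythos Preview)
Your proposal is correct and follows essentially the same approach as the paper's proof: decompose $\phi(f)$ into $a(f)+a^*(f)$, bound $\|a(f)\psi\|$ via \cref{lem:canest}, bound $\|a^*(f)\psi\|$ from the CCR identity $\|a^*(f)\psi\|^2=\|a(f)\psi\|^2+\|f\|^2\|\psi\|^2$, and for the second inequality expand into the four monomials and normal-order them using the CCR together with \cref{lem:canest}. The paper's write-up is terser but the argument is the same.
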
 
Note that the proof of \cref{lem:basichfest} can be found in \cite[Theorem 5.18]{Arai.2018}, but for the convenience of the reader we give a proof below.
\begin{proof}
 
From the canonical commutation relations we get
\begin{align*} 
\| a(f)^* \psi \|^2  &   = \inn{ a^*(f) \psi, a^*(f) \psi } = \| f \|^2 \| \psi \|^2  + \| a(f) \psi \|^2  . 
\end{align*} 
Now the first identity follows from the triangle inequality and Lemma   \ref{lem:canest}.  
 
For the second identity we use again the canonical commutation relations and find 
\begin{align*} 
& \| a^*(f_1) a^*(f_2)  \psi \|^2 \\
 & \quad    = \inn{ a^*(f_1) a^*(f_2)  \psi, a^*(f_1) a^*(f_2)  \psi } \\
& \quad \leq  \| a(f_1) a(f_2) \psi \|^2 + 2  \| f_1 \|^2  \| a(f_2) \psi \|^2  +   2   \| f_2 \|^2  \| a(f_1) \psi \|^2   +  2  \| f_1\|^2 \| f_2 \|^2   , 
\end{align*} 
and 
\begin{align*} 
\| a^*(g_1) a(g_2)  \psi \|^2  &   = \inn{ a^*(g_1) a(g_2)  \psi, a^*(g_1) a(g_2)  \psi }   = \| a(g_1) a(g_2) \psi \|^2 +  \| g_1 \|^2  \| a(g_2) \psi \|^2   \\
\| a(f_1) a^*(f_2)  \psi \|^2  &   \leq  ( \| f_1 \| \| f_2 \|  \| \psi \| + \| a^*(f_2) a(f_1)  \psi \| )^2   . 
\end{align*} 
The second inequality follows now by collecting estimates,  the triangle inequality and Lemma   \ref{lem:canest}.
 
\end{proof} 

\section{Statements about  CCR Algebras} 

\label{secexprea}

\begin{lemma} \label{lem:comdgamma} Let $A$ be a self-ajoint operator in $\hh$ and $f \in D(A) \subset \hh$. 
Then we have the relations 
\begin{align*} 
\dG(A) a^*(f) & = a^*(f) \dG(A) + a^*(A f), \\
\dG(A) a(f) & = a(f) \dG(A) - a(A f), \\
\dG(A) \phi(f)  & = \phi(f) \dG(A) - \i \phi(\i A f ) 
\end{align*}
on $\FF_{\fin}(D(A))$. 
\end{lemma}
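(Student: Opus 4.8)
The plan is to establish the three relations in order, deriving the ones for $a(f)$ and $\phi(f)$ from the one for $a^*(f)$. Throughout I would work on the spanning set of $\FF_{\fin}(D(A))$ consisting of $\Omega$ and vectors $\psi$ with $\psi_{(n)} = \mathcal P_n(g_1 \otimes \cdots \otimes g_n)$, $g_1,\dots,g_n \in D(A)$, and use repeatedly that $\FF_{\fin}(\hh)$ lies in the domain of every creation and annihilation operator, that $\dG(A)$ maps $\FF_{\fin}(D(A))$ into $\FF_{\fin}(\hh)$, and that $a(f)$ maps $\FF_{\fin}(D(A))$ into itself for every $f\in\hh$ while $a^*(f)$ does so when $f\in D(A)$.

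For the first identity I would argue by direct computation. For $\psi$ as above, $f \otimes \psi_{(n)} \in D(A^{(n+1)})$ since $f,g_1,\dots,g_n \in D(A)$, so $a^*(f)\psi = \sqrt{n+1}\,\mathcal P_{n+1}(f \otimes \psi_{(n)}) \in D(\dG(A))$. The operator $A^{(n+1)}$ is invariant under permutations of the tensor factors, hence commutes with the symmetrisation projection $\mathcal P_{n+1}$ (which preserves its domain), and the Leibniz-type identity defining it gives $A^{(n+1)}(f \otimes \psi_{(n)}) = (Af)\otimes\psi_{(n)} + f \otimes A^{(n)}\psi_{(n)}$ (immediate for simple tensors, extended by linearity and closedness). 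Therefore
\begin{align*}
\dG(A)a^*(f)\psi &= \sqrt{n+1}\,\mathcal P_{n+1}\!\big((Af)\otimes\psi_{(n)}\big) + \sqrt{n+1}\,\mathcal P_{n+1}\!\big(f \otimes A^{(n)}\psi_{(n)}\big) \\
&= a^*(Af)\psi + a^*(f)\dG(A)\psi,
\end{align*}
which is the first relation; the case $\psi = \Omega$ is trivial.

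The second identity I would obtain by duality. Fix $\psi \in \FF_{\fin}(D(A))$; then $a(f)\psi \in \FF_{\fin}(D(A)) \subset D(\dG(A))$. For any $\varphi \in \FF_{\fin}(D(A))$, using self-adjointness of $\dG(A)$, the relations $a(f) = a^*(f)^*$ and $a^*(f) = a(f)^*$ (valid since $a^*(f)$ is closed), and the first identity applied to $\varphi$ (note $a^*(f)\varphi \in D(\dG(A))$), one computes
\begin{align*}
\langle \varphi, \dG(A)a(f)\psi\rangle &= \langle a^*(f)\dG(A)\varphi, \psi\rangle = \langle \dG(A)a^*(f)\varphi - a^*(Af)\varphi, \psi\rangle \\
&= \langle a^*(f)\varphi, \dG(A)\psi\rangle - \langle a^*(Af)\varphi, \psi\rangle = \langle \varphi, \big(a(f)\dG(A) - a(Af)\big)\psi\rangle .
\end{align*}
Since $\FF_{\fin}(D(A))$ is dense, the second relation follows. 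For the field operator I would then simply add the two: on $\FF_{\fin}(D(A))$,
\begin{align*}
\dG(A)\phi(f) &= \tfrac{1}{\sqrt 2}\big(\dG(A)a(f) + \dG(A)a^*(f)\big) = \phi(f)\dG(A) + \tfrac{1}{\sqrt 2}\big(a^*(Af) - a(Af)\big),
\end{align*}
and since $\phi(\i g) = \tfrac{\i}{\sqrt 2}\big(a^*(g) - a(g)\big)$ for every $g \in \hh$ (by linearity of $a^*$ and antilinearity of $a$), the last term equals $-\i\phi(\i Af)$, giving the third relation. I do not expect any real obstacle: the only points needing care are the permutation invariance $[A^{(n+1)},\mathcal P_{n+1}] = 0$ together with the attendant domain statement, and keeping track of which vectors lie in the domains of $\dG(A)$ and of the (un)bounded creation and annihilation operators throughout the duality computation — which is precisely why I restrict everything to the spanning set of $\FF_{\fin}(D(A))$.
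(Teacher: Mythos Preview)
Your proof is correct and follows essentially the same route as the paper's: the first identity is obtained directly from the definitions of $a^*(f)$ and $\dG(A)$, the second by passing to adjoints (your duality computation is exactly this), and the third by linear combination of the first two. The paper's own proof is only a three-line sketch pointing to \cite{Arai.2018}, and your argument is a faithful and careful expansion of that sketch.
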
 
\begin{proof}
The first identity follows directly from the definition of the creation operators and  $\dG(A)$. The second 
identity follows from the first by taking adjoints. The last identity follows from the first two. For details see for example \cite[Proposition 5.10]{Arai.2018}. 
\end{proof} 

\begin{lemma} \label{lem:comm}  Let  $f , g \in \hs$. Then for every $\psi \in D( \dG(\Id))$  one has 
$$
( \phi(f) \phi(g) - \phi(g) \phi(f) ) \psi  = \i \Im\langle f , g \rangle  \psi  .
$$
\end{lemma}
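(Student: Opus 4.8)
The plan is to prove the identity first on the finite-particle subspace $\FF_{\fin}(\hh)$ by a direct computation with the canonical commutation relations \eqref{eq:ccr}, and then to extend it to all of $D(\dG(\Id))$ by density together with standard number-operator bounds.

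On $\FF_{\fin}(\hh)$ both $a(f)$ and $a^*(f)$ are everywhere defined and leave this subspace invariant (a creation operator adjoins one more factor; an annihilation operator produces a finite linear combination of simple tensors of rank one lower), and there $\phi(f)$ coincides with $\tfrac{1}{\sqrt{2}}\,(a(f)+a^*(f))$. Hence the products $\phi(f)\phi(g)$ and $\phi(g)\phi(f)$ are everywhere defined on $\FF_{\fin}(\hh)$, and multiplying out the definitions gives
\[
\phi(f)\phi(g) - \phi(g)\phi(f) = \tfrac12\big( [a(f),a(g)] + [a(f),a^*(g)] + [a^*(f),a(g)] + [a^*(f),a^*(g)] \big)
\]
on $\FF_{\fin}(\hh)$. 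By \eqref{eq:ccr} the first and last commutators vanish, $[a(f),a^*(g)] = \langle f,g\rangle$, and $[a^*(f),a(g)] = -\langle g,f\rangle$; so the right-hand side equals $\tfrac12(\langle f,g\rangle - \langle g,f\rangle) = \i\,\Im\langle f,g\rangle$, which proves the identity on $\FF_{\fin}(\hh)$.

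To pass to $D(\dG(\Id))$, write $N := \dG(\Id)$ and note that $\FF_{\fin}(\hh)$ is a core for $N$, being a dense subspace consisting of finite linear combinations of eigenvectors of $N$, hence of analytic vectors for $N$. From the elementary bounds $\|a(f)\psi\| \le \|f\|\,\|N^{1/2}\psi\|$ and $\|a^*(f)\psi\| \le \|f\|\,\|(N+1)^{1/2}\psi\|$ (the one-particle version of the computation in \cref{lem:canest}, with $\omega_m$ replaced by $\Id$) one obtains $\|\phi(f)\psi\| \le \sqrt{2}\,\|f\|\,\|(N+1)^{1/2}\psi\|$, so $D(N^{1/2}) \subseteq D(\phi(f))$, and, inspecting each $n$-photon sector, $\phi(g)$ maps $D(N)$ into $D(N^{1/2})$ with $\|(N+1)^{1/2}\phi(g)\psi\| \le C\|g\|\,\|(N+1)\psi\|$. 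Therefore, for $\psi \in D(N)$, the compositions $\phi(f)\phi(g)\psi$ and $\phi(g)\phi(f)\psi$ are well defined and $\psi \mapsto (\phi(f)\phi(g)-\phi(g)\phi(f))\psi$ is bounded from $D(N)$ with its graph norm into $\FF$; the right-hand side $\i\,\Im\langle f,g\rangle\,\psi$ is trivially continuous in $\psi$. Approximating a given $\psi \in D(\dG(\Id))$ in the graph norm of $N$ by a sequence in the core $\FF_{\fin}(\hh)$ and passing to the limit in the identity of the previous step completes the proof.

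I expect the only non-bookkeeping point to be the mapping property $\phi(g)\colon D(\dG(\Id)) \to D(\dG(\Id)^{1/2})$, which is what makes the composition $\phi(f)\phi(g)$ meaningful on the stated domain; it is a routine second-order number estimate of the same kind as in \cref{lem:canest}. Alternatively one may simply cite \cite[Chapter 5]{Arai.2018}.
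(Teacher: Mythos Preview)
Your proof is correct and follows essentially the same approach as the paper: the paper's proof consists of a one-line appeal to the canonical commutation relations \eqref{eq:ccr} together with a citation of \cite[Proposition 5.14]{Arai.2018}, and what you have written is precisely a careful spelling-out of that argument (CCR computation on $\FF_{\fin}(\hh)$, then closure via number-operator bounds).
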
 
\begin{proof}
This follows directly from the canonical commutation relations  \eqref{eq:ccr}. For details see for example \cite[Proposition 5.14]{Arai.2018}. 
\end{proof} 

For $f \in \hh$  we define   $$W(f) := \exp( \i \phi(f)) = \exp( \i \pi(\i f ))  . $$  
Recall that $\FF_{\fin}(\hs)$ denotes the subspace of elements $\psi \in \FF(\hs)$ such that $\psi_n = 0$ for all but finitely many $n$. 
For the definition of an analytic vector for an operator we refer the reader to \cite{ressim:fou}. 
\begin{lemma}\label{lem:anaop} 
 Let $f \in \hh$ and $\psi \in \FF_{\fin}(\hs)$. Then for all $t \in \C$ we have 
$$	
\sum_{n=0}^\infty  \frac{ \| \phi(f)^n  \psi \|}{n!}|t|^n < \infty  . 
$$
In particular, $\FF_{\fin}(\hs)$ is a dense set of analytic vectors for $\phi(f)$. 
\end{lemma}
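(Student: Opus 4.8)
The plan is to estimate $\|\phi(f)^n\psi\|$ directly from the sector structure of Fock space, using that a vector in $\FF_{\fin}(\hs)$ occupies only finitely many photon sectors. Fix $\psi\in\FF_{\fin}(\hs)$ and choose $N\in\N_0$ with $\psi_{(k)}=0$ for all $k>N$; apart from $\|\psi\|$ this is the only information about $\psi$ that enters. First I would record the elementary sector bounds: for $\varphi_{(k)}\in\hh^{(k)}$ one has $\|a^*(f)\varphi_{(k)}\|\le\sqrt{k+1}\,\|f\|\,\|\varphi_{(k)}\|$ (immediate from $\|\mathcal P_{k+1}\|\le1$) and $\|a(f)\varphi_{(k)}\|\le\sqrt k\,\|f\|\,\|\varphi_{(k)}\|$ (obtained by pairing against vectors of $\hh^{(k-1)}$ and using the adjoint relation). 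Since the images $a(f)\varphi_{(k)}$, resp. $a^*(f)\varphi_{(k)}$, sit in distinct sectors, summing over $k$ shows that any $\varphi\in\FF(\hs)$ supported on sectors $\le M$ satisfies $\|\phi(f)\varphi\|\le\sqrt2\,\sqrt{M+1}\,\|f\|\,\|\varphi\|$ (using $\sqrt M+\sqrt{M+1}\le2\sqrt{M+1}$), while $\phi(f)\varphi$ is supported on sectors $\le M+1$. As $\phi(f)$ also sends a finite linear combination of the vectors $\mathcal P_n(v_1\otimes\cdots\otimes v_n)$ to another such combination (prepending $f$ for $a^*(f)$, contracting for $a(f)$), it leaves $\FF_{\fin}(\hs)$ invariant, so in particular $\psi\in\bigcap_n D(\phi(f)^n)$ and $\phi(f)^n$ agrees there with the unclosed expression.

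Next I would iterate: since $\phi(f)^{j}\psi$ is supported on sectors $\le N+j$, applying the one-step bound $n$ times yields
\[
\|\phi(f)^n\psi\|\le(\sqrt2\,\|f\|)^n\sqrt{(N+1)(N+2)\cdots(N+n)}\,\|\psi\|=(\sqrt2\,\|f\|)^n\Big(\tfrac{(N+n)!}{N!}\Big)^{1/2}\|\psi\|.
\]
The decisive algebraic point is $\tfrac{(N+n)!}{N!}=\binom{N+n}{n}\,n!\le2^{N+n}\,n!$, so $\big(\tfrac{(N+n)!}{N!}\big)^{1/2}\le2^{(N+n)/2}\sqrt{n!}$, and therefore
\[
\sum_{n=0}^\infty\frac{\|\phi(f)^n\psi\|}{n!}\,|t|^n\le2^{N/2}\,\|\psi\|\sum_{n=0}^\infty\frac{(2\|f\|\,|t|)^n}{\sqrt{n!}}.
\]
The right-hand series converges for every $t\in\C$ by the ratio test, the ratio of consecutive terms being $2\|f\|\,|t|/\sqrt{n+1}\to0$. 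This proves the displayed estimate.

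For the ``in particular'' assertion, $\FF_{\fin}(\hs)$ is dense in $\FF$ because product vectors are total in $\bigotimes^n\hh$ and $\mathcal P_n$ is a bounded surjection onto $\hh^{(n)}$; combined with the invariance $\phi(f)\FF_{\fin}(\hs)\subseteq\FF_{\fin}(\hs)$ and the convergence just shown (valid for all $t$, hence for some $t>0$), every element of $\FF_{\fin}(\hs)$ is an analytic vector for $\phi(f)$ in the sense of \cite{ressim:fou}. I do not anticipate a real obstacle here; the only point needing care is the bookkeeping of which sectors are populated after repeated applications of $\phi(f)$ and taming the resulting factor $\prod_{j=1}^n(N+j)$ so that, after dividing by $n!$, the tail series $\sum_n x^n/\sqrt{n!}$ still converges --- the bound $\binom{N+n}{n}\le2^{N+n}$ is exactly what makes this go through cleanly.
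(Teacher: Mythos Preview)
Your argument is correct and is precisely the standard proof that the paper defers to via its reference to \cite[Proposition~5.2.3]{bratellirobinsion2}: bound $\|\phi(f)\varphi\|$ via the number operator on sectors $\le M$, iterate to get $\|\phi(f)^n\psi\|\le(\sqrt2\,\|f\|)^n\big((N+n)!/N!\big)^{1/2}\|\psi\|$, and then control the factorial quotient so that the series reduces to $\sum_n x^n/\sqrt{n!}$. The paper itself gives no independent argument, so your write-up is a faithful expansion of the cited proof rather than a different route.
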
 
\begin{proof} 
See the proof of \cite[Proposition 5.2.3]{bratellirobinsion2}.
\end{proof} 

\begin{lemma} \label{lem:bratt}  Let $f , g \in \hh$. 
\begin{enumerate}[(a)] 
\item Then $W(f) D(\phi(g)) = D(\phi(g))$ and 
\[
W(f) \phi(g)  W(f)^* = \phi(g) - \Im \langle  f , g \rangle .
\]
\item Then $W(f) D(a^\#(g)) = D(a^\#(g))$ and 
\begin{align*}
W(f)  a(g) W(f)^*  & = a(g)  - \i  2^{-1/2}  \langle{ g ,f }\rangle   , \\  
W(f)  a^*(g) W(f)^* & = a^*(g) + \i   2^{-1/2}  \langle{ f ,g }\rangle .
\end{align*} 
\end{enumerate} 
\end{lemma}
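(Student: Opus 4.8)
The plan is to deduce all three operator identities from a single elementary conjugation fact, and to make the formal Baker--Campbell--Hausdorff manipulation rigorous by means of the analytic vectors supplied by Lemma \ref{lem:anaop}. First I would record the necessary bookkeeping: since $a(g)$ lowers and $a^*(g)$ raises the particle number by one and $\phi(f)=\tfrac{1}{\sqrt{2}}(a(f)+a^*(f))$, the subspace $\FF_{\fin}(\hh)$ is invariant under each of $\phi(f)$, $a(g)$, $a^*(g)$, is contained in their domains, is a core for each of them, and by Lemma \ref{lem:anaop} consists of analytic vectors for the self-adjoint operator $\phi(f)$; moreover $W(f)^*=W(-f)$ since $\phi(-f)=-\phi(f)$.

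Second, I would isolate the following general fact. Let $A$ be self-adjoint, let $\mathcal D\subseteq D(A)$ be a core of analytic vectors for $A$ with $A\mathcal D\subseteq\mathcal D$, and let $B$ be a closed operator with $B\mathcal D\subseteq\mathcal D$ and $[A,B]=d\,\Id$ on $\mathcal D$ for some constant $d\in\C$. Then $e^{\i A}Be^{-\i A}\psi=B\psi+\i d\,\psi$ for every $\psi\in\mathcal D$. The proof is a power-series computation: induction on $BA=AB-d$ gives $BA^n=A^nB-n\,d\,A^{n-1}$, and the bound $\|BA^n\psi\|\le\|A^n(B\psi)\|+n|d|\,\|A^{n-1}\psi\|$ together with the analyticity of $B\psi$ and $\psi$ shows that $\sum_n\tfrac{(-\i t)^n}{n!}BA^n\psi$ converges absolutely; since $B$ is closed this identifies $e^{-\i tA}\psi\in D(B)$ with $Be^{-\i tA}\psi=e^{-\i tA}B\psi+\i t\,d\,e^{-\i tA}\psi$, and applying $e^{\i tA}$ and putting $t=1$ gives the claim.

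Third, I would apply this with $A=\phi(f)$, $\mathcal D=\FF_{\fin}(\hh)$, and $B$ each of $a(g)$, $a^*(g)$, $\phi(g)$, reading off the scalar from \eqref{eq:ccr} and Lemma \ref{lem:comm}: $[\phi(f),a(g)]=\tfrac{1}{\sqrt{2}}[a^*(f),a(g)]=-2^{-1/2}\langle g,f\rangle$, $[\phi(f),a^*(g)]=\tfrac{1}{\sqrt{2}}[a(f),a^*(g)]=2^{-1/2}\langle f,g\rangle$, and $[\phi(f),\phi(g)]=\i\,\Im\langle f,g\rangle$. Substituting into $e^{\i A}Be^{-\i A}\psi=B\psi+\i d\,\psi$ produces, on $\FF_{\fin}(\hh)$, precisely the right-hand sides in (a) and (b); the $\phi(g)$ formula is moreover the linear combination $\tfrac{1}{\sqrt{2}}$ of the $a(g)$ and $a^*(g)$ formulas, which serves as a consistency check on the signs.

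Finally, to upgrade these core identities to the stated domain equalities I would argue, say, for $T:=W(f)\phi(g)W(f)^*$ (the two formulas in (b) being identical in structure): $T$ is closed with domain $W(f)D(\phi(g))$ because $W(f)$ is unitary and $\phi(g)$ is closed, $\phi(g)-\Im\langle f,g\rangle$ is closed with core $\FF_{\fin}(\hh)$, and $T$ agrees with it on $\FF_{\fin}(\hh)$, whence $T\supseteq\phi(g)-\Im\langle f,g\rangle$; running the same inclusion with $f$ replaced by $-f$ and conjugating by $W(f)$ yields the reverse inclusion, hence $T=\phi(g)-\Im\langle f,g\rangle$ and in particular $W(f)D(\phi(g))=D(\phi(g))$. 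The same two-sided argument (using $W(f)^*=W(-f)$) gives $W(f)D(a^\#(g))=D(a^\#(g))$ together with the two operator identities in (b). I expect the only genuinely delicate point to be the domain bookkeeping inside the general fact --- justifying the termwise passage of $B$ through the exponential series --- which is exactly what the analytic-vector estimate of Lemma \ref{lem:anaop} is designed to handle; everything else is routine.
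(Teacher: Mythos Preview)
Your proof is correct and follows essentially the same route as the paper's sketch: both establish the identity on $\FF_{\fin}(\hh)$ via the power-series expansion justified by the analytic-vector bound of Lemma~\ref{lem:anaop}, and then extend by closedness using that $\FF_{\fin}(\hh)$ is a core. You have simply made explicit what the paper leaves implicit, namely the abstract BCH-type fact for $[A,B]=d\cdot\Id$ and the two-sided inclusion (via $f\mapsto -f$) that upgrades the core identity to the full domain equality.
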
 
The proof of the lemma before can be found for example in \cite[Proposition 5.2.4]{bratellirobinsion2} and \cite[Corollary 5.12]{Arai.2018}. For the convenience of the reader we sketch a proof below.
\begin{proof}[Sketch of proof]
(a)  By Lemma \ref{lem:anaop} every   $\psi \in \FF_{\fin}(\hs)$ is analytic for $\phi(f)$.
Thus one can define $\phi(g) W(f)^*$ 
on $\psi$ by a power series expansion, which  yields the identity 
\[
\phi(g) W(f)^* \psi = W(f)^* \phi(g) - \Im \langle f , g \rangle \psi .
\]
Since $\FF_{\fin}(\hs)$ is an operator core for $\phi(g)$,  the claim now  follows since $\phi(g)$ 
is by definition a closed operator. For details we refer the reader to \cite[Proposition 5.2.4]{bratellirobinsion2}. 

Part (b) is shown similarly. 
\end{proof} 
\begin{lemma} \label{lem:dgammaw}    Let $A$ be a self-adjoint operator in $\hh$ and $f \in D(A)$. Then we have  $W(f) D(\dG(A)) \supset D(\dG(A)) \cap D(\phi(\i A f))$ and  
as an identity on the latter 
\begin{equation} \label{eq:wdgammarel} 
W(f)  \dG(A)  W(f)^* = \dG(A) - \phi( \i A f ) + \frac{1}{2}  \langle  f , A f  \rangle  .
\end{equation} 
\end{lemma}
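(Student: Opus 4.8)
The plan is to prove the identity first on the algebraic domain $\FF_{\fin}(D(A))$ by differentiating in the Weyl parameter, and then to extend it to $D(\dG(A))\cap D(\phi(\i A f))$ by a closure argument. Note at the outset that $f\in D(A)$ forces $\i A f\in\hh$, so $\phi(\i A f)$ is a genuine field operator; that $\langle f,A f\rangle\in\R$ since $A=A^*$; and that, as $f\in D(A)$, the operator $\phi(f)$ maps $\FF_{\fin}(D(A))$ into itself while $\dG(A)$ and $\phi(\i A f)$ map it into $\FF_{\fin}(\hh)$, whose vectors are all analytic for $\phi(f)$ by Lemma~\ref{lem:anaop}.

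\emph{Step 1 (core identity).} I fix $\psi\in\FF_{\fin}(D(A))$ and consider the curve $t\mapsto W(tf)^*\psi=\sum_{n\ge 0}\frac{(-\i t)^n}{n!}\phi(f)^n\psi$. Applying the relation $\dG(A)\phi(f)=\phi(f)\dG(A)-\i\phi(\i A f)$ of Lemma~\ref{lem:comdgamma} repeatedly, and using that $[\phi(f),\phi(\i A f)]$ is a scalar (Lemma~\ref{lem:comm}), one obtains an explicit formula for $\dG(A)\phi(f)^n\psi$ as a polynomial of degree $\le n$ in $\phi(f)$ applied to $\psi$, $\dG(A)\psi$ and $\phi(\i A f)\psi$; the analyticity estimate of Lemma~\ref{lem:anaop} then shows that $t\mapsto W(tf)^*\psi$ is a continuously differentiable curve with values in $D(\dG(A))$ in the graph norm. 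Hence $t\mapsto W(tf)\dG(A)W(tf)^*\psi$ is differentiable, and, rewriting Lemma~\ref{lem:comdgamma} as $[\i\phi(f),\dG(A)]=-\phi(\i A f)$ on $\FF_{\fin}(D(A))$,
\[
\frac{d}{dt}\,W(tf)\,\dG(A)\,W(tf)^*\psi=-\,W(tf)\,\phi(\i A f)\,W(tf)^*\psi .
\]
By Lemma~\ref{lem:bratt}(a) together with $\Im\langle f,\i A f\rangle=\langle f,A f\rangle$, the right-hand side equals $-\phi(\i A f)\psi+t\,\langle f,A f\rangle\,\psi$; integrating from $0$ to $1$ gives, for every $\psi\in\FF_{\fin}(D(A))$,
\[
\dG(A)\,W(f)^*\psi=W(f)^*\!\left(\dG(A)-\phi(\i A f)+\tfrac12\langle f,A f\rangle\right)\psi ,
\]
i.e.\ \eqref{eq:wdgammarel} on this domain; in particular $W(f)^*\FF_{\fin}(D(A))\subset D(\dG(A))$.

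\emph{Step 2 (extension).} Set $T:=\dG(A)-\phi(\i A f)+\tfrac12\langle f,A f\rangle$ on $D(T):=D(\dG(A))\cap D(\phi(\i A f))$. The operator $\dG(A)W(f)^*$, with domain $W(f)D(\dG(A))$, is closed (it is $W(f)^*$ composed with the self-adjoint $W(f)\dG(A)W(f)^*$) and, by Step~1, agrees with $W(f)^*T$ on $\FF_{\fin}(D(A))$. Thus, provided $\FF_{\fin}(D(A))$ is a core for $T$, the claim follows: given $\psi\in D(T)$, take $\psi_k\in\FF_{\fin}(D(A))$ with $\psi_k\to\psi$ and $T\psi_k\to T\psi$; then $W(f)^*\psi_k\to W(f)^*\psi$ and $\dG(A)W(f)^*\psi_k=W(f)^*T\psi_k\to W(f)^*T\psi$, so closedness of $\dG(A)$ yields $W(f)^*\psi\in D(\dG(A))$ and $\dG(A)W(f)^*\psi=W(f)^*T\psi$, which gives both $D(\dG(A))\cap D(\phi(\i A f))\subset W(f)D(\dG(A))$ and \eqref{eq:wdgammarel} on it.

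The main obstacle is precisely this last point, the core property of $\FF_{\fin}(D(A))$ for $T$ on $D(\dG(A))\cap D(\phi(\i A f))$. It is harmless whenever $\phi(\i A f)$ is infinitesimally $\dG(A)$-bounded --- e.g.\ when $A\ge\delta>0$, or, as in the field-energy case $A=\omega_m$ with $\i A f\in L^2_{(m)}$, whenever Lemma~\ref{lem:basichfest} applies --- since then $T$ is self-adjoint on $D(\dG(A))$ by Kato--Rellich with $\FF_{\fin}(D(A))$ as a core. For a general self-adjoint $A$ one must work a bit harder: $\psi\in D(\phi(\i A f))$ need not lie in $D(N^{1/2})$, so a plain photon-number truncation need not converge in the $T$-graph norm, and the approximants have to be produced more carefully (e.g.\ by also cutting off the single-particle spectrum of $A$ with $\one_{[-n,n]}(A)$ and estimating the resulting errors). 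Once the core property is established, everything else is the one-line computation of Step~1 together with a routine closure argument.
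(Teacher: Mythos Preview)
Your Step~1 is correct and close in spirit to the paper's, though the mechanics differ slightly: the paper computes $\dG(A)\phi(f)^n$ directly (by differentiating $e^{\i t\dG(A)}\phi(f)^n e^{-\i t\dG(A)}$ at $t=0$, then commuting $\phi(\i Af)$ past the remaining $\phi(f)$'s via Lemma~\ref{lem:comm}) and sums the power series, whereas you differentiate the one-parameter family $t\mapsto W(tf)\dG(A)W(tf)^*\psi$ and integrate. Both routes yield the same core identity on $\FF_{\fin}(D(A))$, and your version is arguably more transparent.

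The interesting comparison is in Step~2. The paper does \emph{exactly} what you call ``plain photon-number truncation'': it sets $\psi_n=1_{N\le n}\psi$ for $\psi\in D(\dG(A))\cap D(\phi(\i Af))$, asserts that $T\psi_n\to T\psi$, and concludes by closedness of $\dG(A)$. There is no separate core argument; the specific approximants are used directly. So your concern that number truncation ``need not converge in the $T$-graph norm'' is precisely about a point the paper treats as routine. Since $\dG(A)$ commutes with $1_{N\le n}$, the only issue is whether $\phi(\i Af)1_{N\le n}\psi\to\phi(\i Af)\psi$, and a component-by-component check shows this reduces to $\|(a(\i Af)\psi)_{(n)}\|\to 0$, which is clear once $\psi\in D(a(\i Af))$ but not obviously for arbitrary $\psi\in D(\phi(\i Af))$. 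The paper does not address this subtlety; your caution is justified in full generality, but note that in every application within the paper (where $A$ is $\omega_m$ or $K_j$ and the relevant vectors lie in $D(\Pf^2+\Hfm)$) the stronger domain condition holds and the truncation argument goes through without difficulty. In short: your extension step is the same as the paper's, only you are more explicit about the hypothesis that makes it work.
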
 
\begin{proof} 
First we note that by definition of $\dG(A)$ we have on $\FF_{\fin}(D(A))$
$$
e^{ \i \dG(A) t } \phi(f)^n = \phi( e^{ \i A t } f )^n e^{  \i \dG(A) t } .
$$
On the same domain we can  differentiate with respect to $t$,
and find at $t = 0$  
\begin{align*}
\i \dG(A) \phi(f)^n  & =  \phi(f)^n \i \dG(A) + 
\sum_{l=0}^{n-1} \phi( f)^l \phi(\i A f ) \phi(f)^{n-1-l} \\ 
  & =  \phi(f)^n \i \dG(A) + n \phi(f)^{n-1} \phi( \i A f )  - \i  \frac{n ( n - 1)}{2} \phi^{n-2}(f)  \langle f , A f \rangle  ,
\end{align*} 
where we used Lemma \ref{lem:comm} for the last identity. Multiplying with $(-\i)^n(n!)^{-1}$ and summation over 
$n \in \N_0$ yields on  $\FF_{\fin}(D(A))  $
$$
\i \dG(A) W(f)^* = \i W(f)^* \dG(A) - \i W(f)^* \phi(\i A f ) +  \i \frac{1}{2} W(f)^* \langle f , A f \rangle . 
$$
Thus, it follows that   \eqref{eq:wdgammarel}  holds  on  $\FF_{\fin}(D(A)) $.
If $\psi \in D(\Gamma(A)) \cap D(\phi(\i A f))$, then $\psi_n = 1_{N \leq n} \psi \in  
\FF_{\fin}(D(A)) $. Now $(\dG(A) -  \phi(\i A f ) + \frac{1}{2} \langle f , A f \rangle ) \psi_n \to ( \dG(A) -  \phi(\i A f ) + \frac{1}{2} \langle f , A f \rangle ) \psi $. Therefore, $\psi \in D(W(f) \dG(A) W(f)^*) = W(f) D( \dG(A))$ and by closedness \eqref{eq:wdgammarel} holds for $\psi$. 
\end{proof}

\section{Some Statements  about  Convex Functions}

\label{sec:B} 

\begin{prop} \label{prop:convexlms} Let $F \: \R^n \to \R$ be a function that satisfies the following conditions: 
\begin{itemize}
\item[(i)]  $F(0) \leq F(x)$, 
\item[(ii)]  $F(x) \leq \frac{x^2}{2} + F(0) $, 
\item[(iii)]  $x \mapsto  \frac{x^2}{2} - F(x)$ is a convex function. 
\end{itemize} 
Then %
we have 
$$
F(x - k) - F(x) \geq \left\{ \begin{array}{ll} -|k||x| + k^2/2 & \text{ , if } |k| \leq |x|, \\ - x^2/2  & \text{ , if } |k| \geq |x| . \end{array} \right.   
$$
\end{prop}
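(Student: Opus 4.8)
The plan is to transfer everything onto a convex function and then exploit convexity by ``extrapolating to a far point''. Set $g(y):=\tfrac{y^2}{2}-F(y)$ and let $\tilde g(y):=g(y)-g(0)$, so that by (iii) the function $\tilde g$ is convex with $\tilde g(0)=0$; a short rewriting shows that (i) is precisely $\tilde g(y)\le \tfrac{y^2}{2}$ for all $y$, and (ii) is precisely $\tilde g(y)\ge 0$ for all $y$. Expanding the square gives the exact identity
\begin{equation*}
F(x-k)-F(x) = -\,x\cdot k+\tfrac12 k^2+\tilde g(x)-\tilde g(x-k),
\end{equation*}
so the asserted inequality is equivalent to
\begin{equation*}
\tilde g(x-k)-\tilde g(x)\ \le\
\begin{cases}
|x|\,|k|-x\cdot k, & |k|\le |x|,\\
\tfrac12 (x-k)^2, & |k|\ge |x|.
\end{cases}
\end{equation*}

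First I would dispose of the (in fact unrestricted) bound $\tilde g(x-k)-\tilde g(x)\le \tfrac12(x-k)^2$, which is immediate from $\tilde g(x-k)\le \tfrac12(x-k)^2$ and $\tilde g(x)\ge0$; this already settles the second case. For the first case, assuming $0<|k|\le|x|$, I would restrict $\tilde g$ to the line through $x$ with direction $k$: the map $\phi(t):=\tilde g(x-tk)$ is convex on $\R$, hence $t\mapsto \bigl(\phi(t)-\phi(0)\bigr)/t$ is nondecreasing on $(0,\infty)$. Choosing the extrapolation parameter $t_\ast:=|x|/|k|\ge1$ and comparing difference quotients at $t=1$ and $t=t_\ast$ yields
\begin{equation*}
\tilde g(x-k)-\tilde g(x)=\phi(1)-\phi(0)\ \le\ \frac{\phi(t_\ast)-\phi(0)}{t_\ast}\ \le\ \frac{\tilde g(x-t_\ast k)}{t_\ast}\ \le\ \frac{(x-t_\ast k)^2}{2t_\ast},
\end{equation*}
using $\phi(0)=\tilde g(x)\ge0$ and then $\tilde g(\cdot)\le\tfrac12(\cdot)^2$. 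Since $(x-t_\ast k)^2=2|x|^2-2\tfrac{|x|}{|k|}\,x\cdot k$, the right-hand side equals exactly $|x|\,|k|-x\cdot k$, which is the claim; the degenerate case $k=0$ is trivial.

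The only genuine obstacle — and it is a mild one — is precisely the choice of $t_\ast$: inserting the two-sided bounds $0\le\tilde g(\cdot)\le\tfrac12(\cdot)^2$ directly at the points $x$ and $x-k$ only reproduces the weaker estimate $-\tfrac12 x^2$, so one really needs convexity to move the estimate out to $x-t_\ast k$, and the value $t_\ast=|x|/|k|$ is pinned down by minimising $t\mapsto (x-tk)^2/(2t)$ over $t\ge1$. As in the proof of Lemma~\ref{eq:convex}(iv), the whole argument takes place in the affine span of $x$ and $x-k$, so nothing beyond one-variable convexity (restriction of a convex function to a line, monotonicity of difference quotients) is needed.
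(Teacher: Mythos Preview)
Your argument is correct. The paper does not provide its own proof of this proposition but simply cites \cite[Appendix A]{LMS06}, so there is nothing to compare against within the paper itself. Your reduction to the convex function $\tilde g$ with the two-sided bound $0\le\tilde g(y)\le\tfrac12 y^2$, followed by the monotonicity of difference quotients along the line $t\mapsto x-tk$ with the extrapolation point $t_\ast=|x|/|k|$, is a clean and self-contained route; the computation $(x-t_\ast k)^2/(2t_\ast)=|x||k|-x\cdot k$ checks out exactly.
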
 
\begin{proof}  A proof  %
is given in \cite[Appendix A]{LMS06}. 
\end{proof} 
 
\begin{lemma}\label{lem:eleconv} Let $g \: (a,b)  \to \IR$ be convex. Then for 
any compact interval $[c,d] \subset (a,b)$, the function $g$ is Lipschitz continuous on $[c,d]$ with Lipschitz constant $K$ bounded  by  $  \max\{ \epsilon^{-1} |g(c)-g(c-\epsilon)|, \epsilon^{-1}  |g(d+\epsilon)-g(\epsilon)|\}$ for any $\epsilon > 0$ such that $[c-\epsilon, d + \epsilon ] \subset (a,b)$.
\end{lemma}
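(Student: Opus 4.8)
The plan is to derive the statement from the \emph{monotonicity of difference quotients} of a convex function (the ``three-slopes'' lemma): if $g$ is convex on $(a,b)$ and $p<q<r$ all lie in $(a,b)$, then
\[
\frac{g(q)-g(p)}{q-p}\;\le\;\frac{g(r)-g(p)}{r-p}\;\le\;\frac{g(r)-g(q)}{r-q}.
\]
I would first record this, obtaining it in one line by writing $q=\lambda p+(1-\lambda)r$ with $\lambda=(r-q)/(r-p)\in(0,1)$, applying the defining convexity inequality $g(q)\le\lambda g(p)+(1-\lambda)g(r)$, and rearranging twice.

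Next, fix $\epsilon>0$ with $[c-\epsilon,d+\epsilon]\subset(a,b)$ and set
\[
K:=\max\Bigl\{\tfrac{1}{\epsilon}\bigl|g(c)-g(c-\epsilon)\bigr|,\ \tfrac{1}{\epsilon}\bigl|g(d+\epsilon)-g(d)\bigr|\Bigr\}.
\]
For arbitrary $s,t\in[c,d]$ with $s<t$, the points $c-\epsilon<c\le s<t\le d<d+\epsilon$ are ordered in $(a,b)$, and I would chain the three-slopes inequality along them to get
\[
\frac{g(c)-g(c-\epsilon)}{\epsilon}\;\le\;\frac{g(s)-g(c-\epsilon)}{s-c+\epsilon}\;\le\;\frac{g(t)-g(s)}{t-s}\;\le\;\frac{g(d+\epsilon)-g(t)}{d+\epsilon-t}\;\le\;\frac{g(d+\epsilon)-g(d)}{\epsilon},
\]
each inequality being an instance of the fact that the difference quotient over an interval is squeezed between those over adjacent intervals sharing an endpoint. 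Thus the difference quotient $\bigl(g(t)-g(s)\bigr)/(t-s)$ lies between $\bigl(g(c)-g(c-\epsilon)\bigr)/\epsilon$ and $\bigl(g(d+\epsilon)-g(d)\bigr)/\epsilon$, so its absolute value is at most $K$; hence $|g(t)-g(s)|\le K|t-s|$. Since $s,t\in[c,d]$ are arbitrary, $g$ is Lipschitz on $[c,d]$ with constant at most $K$, which is precisely the asserted bound (the ``$g(\epsilon)$'' in the statement being evidently a misprint for ``$g(d)$'').

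I do not anticipate a real obstacle: the only care needed is (i) stating the three-slopes lemma cleanly from the definition of convexity, and (ii) bookkeeping the chain so that each application has strictly ordered arguments. The degenerate cases $s=c$ or $t=d$ cause no trouble --- one simply drops the corresponding link from the chain, since then the relevant endpoint difference quotient is already one of the interior ones.
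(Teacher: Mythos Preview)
Your argument is correct and is essentially the same as the paper's: the paper states the monotonicity of difference quotients for convex functions (citing Folland) in the form $\frac{F(t)-F(s)}{t-s}\le\frac{F(t')-F(s')}{t'-s'}$ whenever $s\le s'<t'$ and $s<t\le t'$, and then simply asserts that the claim follows. You have spelled out exactly that deduction via the three-slopes chain, and you correctly identify the typo $g(\epsilon)\to g(d)$.
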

\begin{proof}  If $F$ is convex, then for all $s,t,s',t' \in (a,b)$ such that $s \leq s' < t'$ and $s < t \leq t'$,
$$
\frac{F(t)-F(s)}{t-s} \leq \frac{ F(t') - F(s')}{t'-s'} ,
$$
see for example \cite{folland1999real}. The claim now follows from the above inequality. 
\end{proof}

\begin{lemma} \label{estonderconv}  Let $f \: \R \to \R$ be a convex function with $f(x) = f(-x)$ for all $x \in \R$.
Then $f$ has a global minimum in $0$ and in points $x \geq 0$ where $f$ is differentiable 
we have $f'(x) \geq 0$. 
\end{lemma}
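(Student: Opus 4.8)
The plan is to prove the two assertions separately, each as a direct consequence of convexity combined with the symmetry hypothesis $f(-x)=f(x)$; no machinery beyond the elementary slope inequality already recorded in the appendix is needed.

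First I would establish that $0$ is a global minimum. For an arbitrary $x\in\R$, write $0=\tfrac12 x+\tfrac12(-x)$ and apply the definition of convexity together with $f(-x)=f(x)$:
\[
f(0)\le \tfrac12 f(x)+\tfrac12 f(-x)=f(x).
\]
Hence $f(x)\ge f(0)$ for all $x\in\R$, which is the first claim.

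Next I would treat the derivative bound. If $f$ is differentiable at $x=0$, then since $0$ is a global minimum by the previous step, necessarily $f'(0)=0\ge 0$. For $x>0$ at which $f$ is differentiable, I would invoke the chord-slope monotonicity for convex functions stated in the proof of \cref{lem:eleconv}: for $s\le s'<t'$ and $s<t\le t'$ one has $\frac{f(t)-f(s)}{t-s}\le\frac{f(t')-f(s')}{t'-s'}$. Taking $s=0$, $s'=t=x$, and $t'=x+h$ with $h>0$ gives
\[
\frac{f(x)-f(0)}{x}\le\frac{f(x+h)-f(x)}{h},
\]
and letting $h\downarrow 0$ the right-hand side tends to $f'(x)$, so $f'(x)\ge\frac{f(x)-f(0)}{x}$. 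Since $f(x)\ge f(0)$ by the first part and $x>0$, this yields $f'(x)\ge 0$.

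There is essentially no obstacle here; the only point needing a little care is to separate the case $x=0$, handled via the global-minimum property, from the case $x>0$, handled via the chord-slope inequality, and to observe that this last inequality is exactly the one already used in the appendix, so the argument is self-contained.
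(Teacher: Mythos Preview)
Your proposal is correct and follows essentially the same approach as the paper's own proof: both use $0=\tfrac12 x+\tfrac12(-x)$ with convexity and symmetry to get the global minimum, and both use the chord-slope inequality $0\le \frac{f(x)-f(0)}{x}\le f'(x)$ for $x>0$ together with the separate observation that $x=0$ is handled by the minimality property. The only difference is that you spell out the slope inequality by citing \cref{lem:eleconv}, whereas the paper simply says ``by elementary convexity''.
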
 
\begin{proof}
By convexity and symmetry  $f(0) = f(\frac{1}{2}{x} - \frac{1}{2} x ) \leq \frac{1}{2} ( f(x) + f(-x) ) = f(x)$. 
If $f$ is differentiable at $x > 0$,  then by elementary convexity we conclude that 
$0  \leq \frac{f(x)-f(0)}{x-0}   \leq f'(x)$. The case when $x=0$ follows from the minmality property.
\end{proof}

Let us state the following theorem, which implies in view of the previous lemma 
that every convex function is almost everywhere differentiable. 

\begin{thm}[Rademacher] \label{thm:rade}  Let $U \subset \IR^n$ be open and  $f \: U \to \IR$ be Lipschitz continuous.
Then  $f$ is almost everywhere in $U$ differentiable.
\end{thm}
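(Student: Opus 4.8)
This is the classical theorem of Rademacher, and the plan is to follow its standard proof. \emph{Reduction.} Since differentiability is local and $U$ is a countable union of open balls $B$ with $\overline{B}\subset U$, it suffices to prove the statement on each such $B$; using McShane's formula $g(z):=\inf_{w\in B}\big(f(w)+L\,|z-w|\big)$, with $L$ a Lipschitz constant of $f$, one extends $f|_B$ to a function $g\colon\R^n\to\R$ that is Lipschitz with constant $L$ and agrees with $f$ on $B$, so one may assume $f$ is Lipschitz on all of $\R^n$. \emph{Almost everywhere existence of the gradient.} For fixed $v\in\R^n$ with $|v|=1$ the map $t\mapsto f(x+tv)$ is Lipschitz, hence absolutely continuous, hence differentiable for a.e.\ $t$; thus the directional derivative $D_vf(x)=\lim_{t\to0}t^{-1}(f(x+tv)-f(x))$ exists for all $x$ outside a set $N_v$, which is Borel (write $D_vf$ as a $\limsup$/$\liminf$ over rational $t$ of functions continuous in $x$) and Lebesgue-null by Fubini applied along the lines $x_0+\R v$. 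Taking $v=e_1,\dots,e_n$ shows that $\nabla f$ exists on a set $E$ of full measure, with $|\nabla f|\le L$ there.

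The next step is to show $D_vf=v\cdot\nabla f$ almost everywhere for each fixed $v$. For $\varphi\in C^\infty_c(\R^n)$ a change of variables gives
\[
\int \frac{f(x+tv)-f(x)}{t}\,\varphi(x)\,dx=\int f(x)\,\frac{\varphi(x-tv)-\varphi(x)}{t}\,dx ,
\]
and, since for $|t|\le1$ both integrands are supported in a fixed compact set on which $f$ is bounded and are dominated by constants, letting $t\to0$ and using dominated convergence yields $\int D_vf\,\varphi\,dx=-\int f\,(v\cdot\nabla\varphi)\,dx$. Applying this with $v=e_i$ identifies the weak derivative $\partial_if$ with $D_{e_i}f$, and then the same identity for general $v$ gives $\int D_vf\,\varphi=\int(v\cdot\nabla f)\,\varphi$ for all $\varphi$, hence $D_vf=v\cdot\nabla f$ a.e. Let $M_v$ denote the corresponding null set where this fails.

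Finally I would upgrade pointwise directional differentiability to full differentiability. Fix a countable dense set $\{v_k\}_{k\in\N}\subset S^{n-1}$ and put $A:=E\setminus\bigcup_kM_{v_k}$, which has full measure. For $x_0\in A$ define, for $v\in S^{n-1}$ and $t\ne0$,
\[
Q(v,t):=\frac{f(x_0+tv)-f(x_0)}{t}-v\cdot\nabla f(x_0).
\]
Then $|Q(v,t)-Q(v',t)|\le 2L\,|v-v'|$ uniformly in $t$, and $Q(v_k,t)\to0$ as $t\to0$ for every $k$. Given $\epsilon>0$, choose finitely many $v_1,\dots,v_N$ that are $\epsilon/(4L)$-dense in $S^{n-1}$ and then $t_0>0$ with $|Q(v_k,t)|<\epsilon/2$ for $k\le N$ and $0<|t|<t_0$; for an arbitrary $v\in S^{n-1}$, picking a nearby $v_k$ then gives $|Q(v,t)|<\epsilon$ for $0<|t|<t_0$. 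Writing a point near $x_0$ as $x_0+tv$ with $t=|x-x_0|$, this is precisely the statement that $f(x)-f(x_0)-\nabla f(x_0)\cdot(x-x_0)=o(|x-x_0|)$, i.e.\ $f$ is differentiable at $x_0$; since $\R^n\setminus A$ is null, this finishes the proof.

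The main obstacle is this last step: existence of all directional derivatives at a point is far from implying differentiability there, and what makes the argument work is the \emph{uniform} Lipschitz dependence of $Q(v,t)$ on $v$ coming from the global Lipschitz bound, which lets a finite net of directions control all directions simultaneously. The other slightly delicate point is the middle step, where one must pass through the weak derivatives and combine Fubini with the one-dimensional fact that Lipschitz functions are differentiable almost everywhere.
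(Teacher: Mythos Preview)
Your proof is the standard argument for Rademacher's theorem and is correct. The paper, however, does not give any proof: it merely states the theorem and refers to the literature (``A proof can be found in \cite{evans}''), since the result is only used as a tool in \cref{eq:convex} and \cref{lemma:derivative_converging}. So there is nothing to compare beyond noting that the argument you wrote is essentially the one in the cited reference.

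One very minor remark: after step~2 you claim $|\nabla f|\le L$ on $E$, but at that stage you only know $|\partial_i f|\le L$ for each $i$, which a priori gives $|\nabla f|\le \sqrt n\,L$. The sharper bound $|\nabla f(x_0)|\le L$ does hold at every $x_0\in A$, since there $v_k\cdot\nabla f(x_0)=D_{v_k}f(x_0)$ with $|D_{v_k}f(x_0)|\le L$ for a dense set of directions, and this is all you need for the $2L$ estimate in step~4. So the argument is fine; only the placement of that remark is slightly premature.
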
 
A proof can be found in  \cite{evans}.  
In fact, for convex functions the second derivative exists almost everywhere. This is the statement of 
the so called Alexandrov theorem, \cite{Alex39}.  

\begin{thm}[Alexandrov] \label{alexthm} Let $U \subset \IR^n$ be open and $f : U \to \IR$ convex. Then $f$ 
has a second derivative almost everywhere. 
\end{thm}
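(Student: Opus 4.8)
The plan is to reduce to a local statement and then run the classical measure–theoretic argument (essentially Alexandrov's original one, see also Busemann--Feller), the point being that convexity forces the distributional Hessian of $f$ to be a nonnegative matrix-valued Radon measure whose Lebesgue density plays the role of the pointwise Hessian almost everywhere.

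First, since being twice differentiable at a point is a local property and a countable union of Lebesgue null sets is null, it suffices to prove the claim on an arbitrary ball $B$ with $\overline B\subset U$. On $B$ the function $f$ is Lipschitz by \cref{lem:eleconv}, so by \cref{thm:rade} the gradient $\nabla f$ exists Lebesgue–almost everywhere on $B$ and $\nabla f\in L^\infty(B;\IR^n)$. Convexity of $f$ implies that $t\mapsto f(x+t\xi)$ is convex for every $\xi\in\IR^n$, and hence the distributional second derivatives $\partial_i\partial_j f$ form a symmetric matrix $\mu=(\mu_{ij})$ of signed Radon measures on $B$ satisfying $\sum_{i,j}\xi_i\xi_j\,\mu_{ij}\ge 0$ for all $\xi$; in particular each $\mu_{ii}\ge 0$ and $|\mu|$ is finite on compact subsets of $B$. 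Writing the Lebesgue decomposition $\mu_{ij}=A_{ij}\,dx+\mu^s_{ij}$, the density matrix $A(x)=(A_{ij}(x))$ is symmetric and positive semidefinite for almost every $x$.

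Next, by the Besicovitch differentiation theorem together with the Lebesgue point theorem for $\nabla f$, there is a null set $N\subset B$ such that for every $x\in B\setminus N$ one has simultaneously: $\nabla f(x)$ exists and $x$ is a Lebesgue point of $\nabla f$; the mean value of $|A(\cdot)-A(x)|$ over $B(x,r)$ tends to $0$ as $r\downarrow 0$; and $|\mu^s|(B(x,r))/|B(x,r)|\to 0$. Fix $x\in B\setminus N$; after a translation and subtracting an affine function (which does not change $\mu$) we may assume $x=0$, $f(0)=0$, $\nabla f(0)=0$, and put $A:=A(0)$. The claim is then that $f(y)=\tfrac12\,y^\top A\,y+o(|y|^2)$ as $y\to 0$, which is exactly second differentiability at $0$. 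To see this, mollify: $f_\varepsilon:=f*\rho_\varepsilon$ is smooth and convex with $D^2 f_\varepsilon=\rho_\varepsilon*\mu\ge 0$, so Taylor's formula with integral remainder gives $f_\varepsilon(y)-f_\varepsilon(0)-\nabla f_\varepsilon(0)\cdot y=\int_0^1(1-t)\,y^\top D^2 f_\varepsilon(ty)\,y\,dt$. Using the two density statements above and the nonnegativity of $\mu$ (to bound the remainder by the $\mu$–mass of a slightly dilated ball) one checks that the right–hand side equals $\tfrac12\,y^\top A y+o(|y|^2)$; letting $\varepsilon\downarrow 0$, with $f_\varepsilon\to f$ pointwise and $\nabla f_\varepsilon(0)\to\nabla f(0)=0$ by the Lebesgue–point property, yields the expansion.

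The main obstacle is precisely that last estimate: converting the almost–everywhere density statements into a genuine $o(|y|^2)$ bound on the Taylor remainder, uniformly as $\varepsilon\downarrow 0$ (equivalently, bounding $f(y)-\tfrac12\,y^\top A y$ directly without mollifying). The standard device exploits convexity once more: Jensen's inequality bounds $f(y)$ above by the mean value of $f$ over $B(y,\eta|y|)$, a matching lower bound comes from the supporting–hyperplane inequality at $y$, and the excess of these averages over $\tfrac12\,y^\top A y$ splits into a term of size $O(\eta^2|y|^2)$ from averaging the quadratic and a term controlled by $|\mu-A\,dx|\bigl(B(0,(1+\eta)|y|)\bigr)$, which the density hypotheses render $o(|y|^2)$; sending first $|y|\to 0$ and then $\eta\to 0$ finishes the estimate. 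An alternative route avoids mollification and averaging altogether by passing to the Legendre transform of $x\mapsto f(x)+\tfrac12|x|^2$, a $1$–strongly convex function whose conjugate is of class $C^{1,1}$ and hence has an almost–everywhere differentiable gradient by \cref{thm:rade}; differentiability is then transported back to $f$, the delicate point there being the handling of the points where the transported Hessian is singular. Either way, this single limiting argument is the only non-routine content; everything else is bookkeeping with standard facts about convex functions and Radon measures.
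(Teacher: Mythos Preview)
The paper does not actually prove \cref{alexthm}; it merely states the result and cites Alexandrov's original paper. (A one-dimensional version is established later inside the proof of \cref{lemma:onvex2david}, but the $n$-dimensional statement is taken as a black box.) So there is no ``paper's own proof'' to compare against; your proposal supplies an argument where the paper supplies only a reference.

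As a standalone sketch, your outline follows one of the standard modern routes (essentially the Evans--Gariepy presentation): the distributional Hessian of a convex function is a nonnegative matrix-valued Radon measure, its Lebesgue density furnishes the candidate pointwise Hessian, and the Besicovitch/Lebesgue differentiation theorems identify the good set. You correctly isolate the only genuinely delicate step, namely upgrading the density statements to a uniform $o(|y|^2)$ second-order Taylor remainder, and you indicate the two usual ways around it (the averaging/Jensen sandwich, or the Legendre-transform trick via $f+\tfrac12|\cdot|^2$). Neither of those is carried out in detail, and a reader who has not seen the argument before would need to fill in several nontrivial inequalities---in particular, the passage from mollified Taylor remainders back to $f$ itself needs the monotonicity of $\nabla f$ (equivalently, the subdifferential) rather than just pointwise convergence of $\nabla f_\varepsilon(0)$. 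But the architecture is sound and nothing you wrote is wrong; it is a reasonable roadmap for a full proof, which is more than the paper itself offers.
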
 

In the following two lemmas we study sequences of convex functions which converge pointwise. This 
will be needed to contol properties of the ground state energy as the mass regularization  is removed.

\begin{lemma}
\label{lemma:derivative_converging}
Let $f, (f_n)_{n \in \IN}$ be convex functions from $\IR$ to $\IR$ with $f_n \ton f$ pointwise. Then then there exists a set $D$ with $\R \setminus D$ a Lebesgue 
null set such that on $D$  the functions $f$ and  $f_n$, $n \in \N$,   are differentiable 
 and pointwise $ f_n' \to  f'$.
\end{lemma}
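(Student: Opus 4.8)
The plan is to pass to a cocountable set on which all the functions are simultaneously differentiable, and then to sandwich $f_n'(x)$ between one–sided difference quotients of $f$, using the monotonicity of difference quotients of convex functions together with the pointwise convergence $f_n\to f$.

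First I would recall the standard structural fact that a convex function $g\colon\R\to\R$ is differentiable at every point except at an at most countable set: the one–sided derivatives $g_\pm'$ exist everywhere, satisfy $g_-'(x)\le g_+'(x)\le g_-'(y)$ for $x<y$, so the nonempty intervals $(g_-'(x),g_+'(x))$ are pairwise disjoint and each contains a rational. Let $C$ (resp.\ $C_n$) denote the set of points where $f$ (resp.\ $f_n$) fails to be differentiable, and put $D:=\R\setminus\bigl(C\cup\bigcup_{n\in\N}C_n\bigr)$. Then $\R\setminus D$ is a countable union of countable sets, hence countable, in particular a Lebesgue null set, and by construction $f$ and every $f_n$ is differentiable at each point of $D$.

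Next, fix $x\in D$ and $h>0$. Applying the monotonicity of difference quotients of a convex function — the inequality displayed in the proof of Lemma~\ref{lem:eleconv} — to each $f_n$ gives
\[
\frac{f_n(x)-f_n(x-h)}{h}\ \le\ f_n'(x)\ \le\ \frac{f_n(x+h)-f_n(x)}{h}.
\]
Letting $n\to\infty$ and using $f_n\to f$ pointwise yields
\[
\frac{f(x)-f(x-h)}{h}\ \le\ \liminf_{n\to\infty}f_n'(x)\ \le\ \limsup_{n\to\infty}f_n'(x)\ \le\ \frac{f(x+h)-f(x)}{h}.
\]
Now let $h\downarrow 0$: since $x\in D$ the function $f$ is differentiable at $x$, so the left-hand and right-hand expressions both tend to $f'(x)$, which forces $\liminf_n f_n'(x)=\limsup_n f_n'(x)=f'(x)$, i.e.\ $f_n'(x)\to f'(x)$.

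The argument is entirely elementary; the only place needing a bit of care is the reduction step, namely checking that the union of the non-differentiability sets remains null — but since each such set is countable this is immediate, so there is no genuine obstacle. (One may optionally remark that local uniform boundedness of the $f_n$, which also follows from convexity plus pointwise convergence, gives local uniform convergence of $f_n$ to $f$, but this is not needed for the statement as phrased.)
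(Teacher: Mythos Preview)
Your proof is correct and in fact more elementary than the paper's own argument. The paper obtains almost everywhere differentiability by invoking Rademacher's theorem (via local Lipschitz continuity), whereas you use the sharper one--dimensional fact that a convex function on $\R$ fails to be differentiable only on an at most countable set, so the exceptional set is even cocountable. For the convergence of the derivatives, the paper first upgrades pointwise convergence to local uniform convergence (via equicontinuity), sets $\delta_n=\sup_{[x-1,x+1]}|f-f_n|$, and then uses the difference quotient with the $n$--dependent step $h=\sqrt{\delta_n}$ so that both $h$ and $\delta_n/h$ vanish simultaneously. Your route is cleaner: fix $h>0$, pass to the limit $n\to\infty$ in the sandwich $\frac{f_n(x)-f_n(x-h)}{h}\le f_n'(x)\le \frac{f_n(x+h)-f_n(x)}{h}$ using only pointwise convergence, and then let $h\downarrow 0$. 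This avoids the uniform--convergence step entirely. The paper's approach has the mild advantage that the Rademacher--based differentiability argument would transfer to $\R^d$, but for the one--dimensional statement at hand your argument is both shorter and requires less machinery.
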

\begin{proof}
By \cref{lem:eleconv} we see  that each convex function is locally Lipschitz continuous,  hence 
almost everywhere differentiable by Rademacher's theorem, Theorem \ref{thm:rade}. 
 Hence there exists  a set $D$, where $f_n$, $n \in \IN$, and $f$ are differentiable such that $\IR^d \setminus D$ is a set of Lebesgue measure zero.
 By pointwise convergence and  again  \cref{lem:eleconv} we see that $f_n$ is a family of uniformly equicontinuous functions on any compact interval. Now uniform equicontinuity and pointwise convergences imply uniform convergence
 (see for example \cite[Theorem I.27]{rs1}).   So let  $x \in D$ be arbitary. Then we have that  $\delta_n := \sup_{y \in [x-1,x+1]} 
|f(y)-f_n(y)| $ satisfies  $\delta_n \ton 0$.
Let  $n_0 \in \IN$ be  such that $\delta_n \leq 1$ for all $n \geq n_0$.
 Since $f_n$ is convex and differentiable in $x \in D$, we have
\muu
{
f_n'(x) \leq \frac{f_n(x+\sqrt \delta_n) - f_n(x)}{\sqrt \delta_n}.
}
Therefore, for $n \geq n_0$, we get
\muu
{
 f_n'(x) \leq \frac{f(x+\sqrt \delta_n) - f(x) + 2\delta_n}{\sqrt \delta_n}
\ton  f'(x).
}
This implies $\lim\sup_{n\to\infty} f_n'(x) \leq f'(x)$.
Analogously,  we obtain  $\lim\inf_{n\to\infty} f_n'(x) \geq  f'(x)$. This shows that $ f_n'(x) \ton  f'(x)$.
\end{proof}

\begin{lemma}
\label{lemma:onvex2david}
Let $f $ and $f_n$, $n\in \N$,  be convex functions from $\IR$ to $\IR$ with $f_n \ton f$ pointwise. Then there exists a set $D \subset \IR$ such that $\IR \setminus D$ has Lebesgue
measure zero and the following holds.
\begin{itemize}
\item[(a)] For all $x \in D$ and $n \in \IN$ the function $f_n$ is twice differentiable in $x$ and $f_n''(x) \geq 0$.
\item[(b)] For all $x \in D$ we have $\liminf_{n} f_n''(x) < \infty$.
\end{itemize}
\end{lemma}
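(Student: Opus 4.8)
The plan is to obtain part (a) directly from Alexandrov's theorem, and part (b) by combining the convergence of the first derivatives proved in Lemma~\ref{lemma:derivative_converging} with the elementary integral estimate for monotone functions and Fatou's lemma.

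First I would fix the common exceptional null set. By Theorem~\ref{alexthm} each of the convex functions $f$ and $f_n$, $n\in\N$, is twice differentiable outside a Lebesgue null set; for a convex function the first derivative $f_n'$ exists off a countable set and is nondecreasing there, so the (a.e.-existing) derivative of $f_n'$ equals the second derivative $f_n''$ almost everywhere and is nonnegative there. Intersecting these countably many full-measure sets with the full-measure set $D_1$ from Lemma~\ref{lemma:derivative_converging} (on which $f$ and all the $f_n$ are differentiable and $f_n'(x)\ton f'(x)$) produces a set $A$ with $\R\setminus A$ null on which, simultaneously, all the functions are twice differentiable, $f_n''(x)\ge 0$ for every $n$, and $f_n'(x)\ton f'(x)$. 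Since $D$ will be chosen inside $A$, part (a) is then automatic.

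For part (b), fix $N\in\N$ and choose points $a_N\in A\cap(-\infty,-N)$ and $b_N\in A\cap(N,\infty)$; these exist because $A$ is dense. For each $n$ the function $f_n'$ is nondecreasing on $[a_N,b_N]$ and differentiable at the endpoints, so the standard inequality $\int_{a_N}^{b_N} f_n''(x)\,dx\le f_n'(b_N)-f_n'(a_N)$ holds. Letting $n\to\infty$ and using $a_N,b_N\in A$, the right-hand side tends to $f'(b_N)-f'(a_N)=:C_N<\infty$, hence $\liminf_n\int_{a_N}^{b_N} f_n''(x)\,dx\le C_N$. As $f_n''\ge 0$ on $A$ and $x\mapsto\liminf_n f_n''(x)$ is measurable, Fatou's lemma gives
\[
\int_{a_N}^{b_N}\liminf_{n} f_n''(x)\,dx\ \le\ \liminf_{n}\int_{a_N}^{b_N} f_n''(x)\,dx\ \le\ C_N\ <\ \infty ,
\]
so $\liminf_n f_n''(x)<\infty$ for almost every $x\in[a_N,b_N]$. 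Let $E_N\subseteq[a_N,b_N]$ be the intersection of $A$ with this full-measure subset.

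Finally I would set $D:=\bigcup_{N\in\N}E_N$. Each $B_N:=[a_N,b_N]\setminus E_N$ is null; and if $x\in\R\setminus D$, then $x\notin E_N$ for every $N$, while $x\in[a_N,b_N]$ for all $N>|x|$, so $x\in B_N$ for those $N$. Hence $\R\setminus D\subseteq\bigcup_N B_N$ is a countable union of null sets, hence null. By construction $D\subseteq A$, so (a) holds on $D$, and every $x\in D$ lies in some $E_N$, which gives (b). The one point requiring a little care is the identification of the Alexandrov second derivative with the a.e.-derivative of the monotone function $f_n'$, needed in order to invoke the monotone-function integral inequality; this is standard and I expect it to be the only (minor) obstacle.
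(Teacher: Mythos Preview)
Your proposal is correct and follows essentially the same route as the paper: Alexandrov's theorem for (a), and for (b) the integral inequality $\int_a^b f_n''\le f_n'(b)-f_n'(a)$ for the monotone function $f_n'$, combined with the convergence of $f_n'$ at suitably chosen endpoints (from Lemma~\ref{lemma:derivative_converging}) and Fatou's lemma. The only presentational difference is that the paper derives the integral inequality via the Lebesgue--Radon--Nikodym decomposition of the Stieltjes measure associated to $f_n'$, which simultaneously takes care of the identification of $f_n''$ with the a.e.\ derivative of the monotone function $f_n'$ that you flagged as the one subtle point; otherwise the arguments coincide.
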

\begin{remark}
We note that  (a) follows directly from Alexandrov's theorem, \cref{alexthm}.  However, in the  proof of (b), which we will present below, 
Part   (a) will follow as  an intermediate step. 
\end{remark}
\begin{proof} 
Assume that $g$ is a convex function. Then it is locally Lipschitz continuous by \cref{lem:eleconv}.
Thus it is absolutely continuous and therefore by the fundamental theorem of calculus for Lebesgue integrals (see for example \cite{folland1999real} Theorem 3.35),
$g$ is almost everywhere differentiable, $g'$ is  locally in $L^1$, and for every $x \in \R$ we have
\[
g(x) = \int_0^x g'(\xi) d \xi .
\]
By convexity  $g'$ is  monotone on the set of points where $g$ is differentiable.  By possibly extending $g'$ to a monotone function on $\R$,
we can assume without loss that  $g'$  is a monotone function on $\R$.
Since monotone functions have at most countably many discontinuities (see for example \cite{folland1999real} Theorem 3.23), we can assume furthermore  that $g'$ is right continuous.
Thus, there exists a unique measure $\mu$ on $\R$ such that for all $a,b \in \R$
\[
\mu((a,b]) = g'(b)-g'(a)
\]
(see for example \cite[Theorems 1.16 and 1.18]{folland1999real}). Let $\mu = h d \lambda + \rho$ be its Lebesgue-Radon-Nykodim representation, where $\rho$ is mutually singular to the Lebesgue measure $\lambda$ and
 $h  \geq 0$ is Borel  measurable (see for example \cite[Theorem 3.8]{folland1999real}).  Then  by taking suitable families shrinking nicely to $x$
 we see that (see for example \cite[Theorem 3.22]{folland1999real})  $g'$ is almost everywhere differentiable and
 $g'' = h$. This shows (a). Furthermore,  we find
 \begin{equation} \label{eq:diffder} 
 g'(b)-g'(a) = \mu((a,b])  =  \int_a^b  g''(x) d x  + \rho((a,b]) \geq \int_a^b  g''(x) d x  .
 \end{equation} 
By \cref{lemma:derivative_converging} there exist for each $m \in \N$ two numbers  $a \leq -m ,  m \leq b$ for which   $f_n'(b)$ and $f_n'(a)$ converge as $n \to \infty$. Thus, 
inserting $f_n$ into \eqref{eq:diffder}, and using the Lemma of Fatou, we obtain 
$$
\lim_n ( f_n'(b) - f_n'(a))      \geq \int_a^b \liminf_n f_n''(x)dx . 
$$ 
It follows that  $\liminf_n f_n''(x) < \infty$ almost everywhere.  This shows (b). 
\end{proof}

\renewcommand*{\bibfont}{\footnotesize}
\printbibliography

\end{document}